    \newtheorem{theorem}{Theorem}
    \newtheorem{assumption}{Assumption}
    \newtheorem{lemma}{Lemma}
    \newtheorem{definition}{Definition}
    \newtheorem{proposition}{Proposition}
    \newtheorem{remark}{Remark}
    \newtheorem{corollary}{Corollary}
    \newtheorem{property}{Property}
\newcommand{\Sites}{S}
\newcommand{\Vor}{\text{Vor}}
\newcommand{\D}[2]{D\left(#1 \parallel #2\right)}
\newcommand{\CHS}{{conv}\left\{\Sites\right\}}
\newcommand{\CH}[1]{\text{conv} #1 }
\title{On the Embeddability of Delaunay Triangulations in Anisotropic, Normed, and Bregman Spaces}
\author{Guillermo D. Canas\\Massachusetts Institute of Technology\\guilledc@mit.edu
\and Steven J. Gortler\\Harvard University\\sjg@seas.harvard.edu}
\date{}
\begin{document}
\newcounter{foo}
\maketitle

\begin{abstract}
%We address the fundamental problem of whether a Delaunay triangulation is embedded 
%	(is a single cover with no overlapping elements) in certain non-Euclidean spaces. 
%Given a two-dimensional space endowed with a divergence $D$ that is convex in the first argument 
%	and continuously differentiable in the second, 
%	and satisfies a certain bounded anisotropy condition, 
%	our main result shows that the straight-edge dual of an orphan-free Voronoi diagram, 
%		%of the \emph{second kind} (
%		with sites as the first argument, 
%	is always an embedded triangulation. 

%We address the fundamental problem of whether a Delaunay triangulation is embedded 
%	(is a single cover with no overlapping elements) in  divergence spaces. 
%Whereas ordinary Euclidean Voronoi diagrams have connected regions, edges, and vertices, 
%	%(and 
%	their duals are simple planar graphs, 
%	and their straight-edge duals are always embedded, 
%	the same is not necessarily true of Voronoi diagrams with respect to divergences. 
%Further, the straight-edge duals may be not embedded (have overlapping elements). 
	
Given a two-dimensional space endowed with a divergence function that is convex in the first argument, 
	continuously differentiable in the second, 
	and satisfies suitable regularity conditions at Voronoi vertices, 
we show that orphan-freedom (the absence of disconnected Voronoi regions) is sufficient 
	to ensure that Voronoi edges and vertices are also connected, and 
	that the dual is a simple planar graph. 
We then prove that the straight-edge dual of an orphan-free Voronoi diagram 
		(with sites as the first argument of the divergence) is always an embedded triangulation. 

Among the divergences covered by our proofs are Bregman divergences, 
	anisotropic divergences, 
	as well as all distances derived from strictly convex $\mathcal{C}^1$  norms 
	(including the $L_p$ norms with $1< p < \infty$). 
While Bregman diagrams of the {first kind} are simply affine diagrams, 
	and their duals ({weighted} Delaunay triangulations) are always embedded, 
	we show that duals of orphan-free Bregman diagrams of the \emph{second kind} are always embedded. 	
	
%	(with vertices at the sites and straight edges). 
%Our results also extend to the well-known quadratic anisotropic divergence, 
%	as well as the $L_p$ spaces with $1<p<\infty$. 
	%, and and the more general Csisz\'ar f-divergences, which include the Hellinger and total variation ($L_1$) distances.
%Given a two-dimensional space and a divergence $D$ that is convex in the first argument 
%	and continuously differentiable in the second, 
%	we show that the dual of an orphan-free Voronoi diagram in this space is always an embedded triangulation. 
%Given an anisotropic Voronoi diagram, we address the fundamental question of when its dual is embedded. 
%We show that, by requiring only that the primal be orphan-free (have connected Voronoi regions), 
%its dual is always guaranteed to be an embedded triangulation. 
%We show that duals of anisotropic Voronoi diagrams that are orphan-free (have connected Voronoi regions) %appropriately defined anisotropic Voronoi diagrams 
%are always embedded triangulations. %, under mild conditions on the underlying metric.

\end{abstract}

\newpage

\section{Introduction}

Voronoi diagrams and their dual Delaunay triangulations are fundamental constructions with
numerous associated guarantees, and extensive application in
practice (for a thorough review consult~\cite{Aurenhammer13} and references therein).
At their heart is the use of a distance between points, which in the original
version is taken to be Euclidean. 
This suggests that, by considering 
distances other than Euclidean, 
it may be possible to obtain variants which can be well-suited to
a wider range of applications.  

Attempts in this direction have been met with some success. 
Power diagrams~\cite{power} generalize Euclidean distance by associating 
a {bias-term}
to each 
site. The duals of these diagrams
are  
guaranteed to be embedded triangulations, in any number of
dimensions. 
Although this is a strict generalization of Euclidean distance, it is a somewhat 
limited one. The effect of the bias term is to locally enlarge or shrink the
region associated to each site, loosely-speaking ``equally in every
direction". It allows some freedom in choosing local scale, with no
preference for specific directions. 

%
%Another way to generalize Voronoi diagrams is to endow Euclidean space with a
%continuously varying Riemannian metric, and use its associated (geodesic) distance 
%%consider the geodesic distance under this metric as distance 
%in the definition of the Voronoi diagram. 
%This generalization carries a significant
%amount of freedom, allowing for local scale and ``directionality" to be freely
%specified at each point. 
%
%
%Despite being of potentially great interest, this latter approach has faced several
%obstacles. 
%The dual of such an %orphan-free 
%anisotropic Voronoi diagram is, in general, 
%\emph{not} an embedded triangulation, even if the primal %original diagram 
%is orphan-free %(a Voronoi diagram is orphan-free if 
%(the Voronoi regions are connected), 
%and may produce element inversions and edge crossings. 
%An additional obstacle is that the geodesic distance % with respect to a Riemannian metric
%is extremely expensive to compute in practice, rendering traditional methods
%for constructing Voronoi diagrams impractical. 
%

Two related, and relatively recent generalizations of Voronoi diagrams and Delaunay triangulations have been proposed, 
independently, by Labelle and Shewchuk~\cite{LS}, and Du and Wang~\cite{DW}. 
Although their associated anisotropic Voronoi diagrams are, in general, no longer orphan-free
(i.e.~they may have disconnected Voronoi regions), 
Labelle and Shewchuk show that a set of sites exists 
with an orphan-free diagram, whose dual is embedded, in two dimensions. 
They accomplish this by proposing an iterative site-insertion algorithm
that, for any given metric, constructs one such set of sites. 
Note that this is a property of the output of the algorithm, and not a general condition for obtaining embedded triangulations.

The recent work of~\cite{Bregman} discusses Voronoi diagrams and their duals with respect to Bregman divergences. 
They show that Bregman Voronoi diagrams of the \emph{first kind} are simply power diagrams, 
	whose duals are known to always be embedded~\cite{powerdiag}. 
Bregman diagrams of the \emph{second kind} are power diagrams in the dual (gradient) space, 
	but, prior to this work, no results for them were available in the primal space.

In this paper we discuss properties of Voronoi diagrams and Delaunay triangulations for 
	a general class of divergences, 
	including Bregman, quadratic, and all distances derived from strictly convex $\mathcal{C}^1$ norms. 
We show that, 
given a divergence $D$ that is convex in the first argument and continuously differentiable in the second, 
	and under a \emph{bounded anisotropy} assumption on the divergence, 
 if a set of sites produces an orphan-free 
Voronoi diagram with respect to $D$,
then its dual is always an embedded triangulation 
	(or an embedded polygonal mesh with convex faces in general), 
in two dimensions (theorem~\ref{th:main}). 
This effectively states that, regardless of the sites' positions, if the primal 
is well-behaved, then the dual is also well-behaved. 
Further, in a way that parallels the ordinary Delaunay case, the dual has no
degenerate elements (proposition~\ref{prop:ECB}), its elements (vertices, edges, faces) 
are unique (Cor.~\ref{cor:VorI}), and the dual is guaranteed to cover the convex hull of the sites (theorem~\ref{th:main}).

%The family of divergences that we consider is quite extensive, including Bregman, quadratic, and $L_p$ metrics with $1<p<\infty$. 
%Our results apply to Bregman diagrams of the \emph{second kind}, and are the first general results for these diagrams. 
%Note that, while~\cite{LS} prove a property of the output of an % proposed 
%algorithm, the results in this paper are fundamental properties of the diagrams themselves 
%(independent of how the orphan-freedom of the primal was obtained).

%In practice, we may combine our results with those of~\cite{avd}  
%%(which show that, if the sites form an (asymmetric) $\epsilon$-net, for sufficiently small $\epsilon$, 
%%then their % resulting 
%%anisotropic Voronoi diagram is orphan-free), 
%to conclude that duals of anisotropic Voronoi diagrams of appropriate $\epsilon$-nets 
%are always embedded triangulations (Cor.~\ref{cor:enet}). 
%This may be particularly useful in (asymptotically-optimal) function approximation applications, 
%where we are often interested in 
%constructing the anisotropic Delaunay triangulations of appropriate $\epsilon$-nets~\cite{enets,GruberOQ}, and was the initial motivation for the current work. 
%Finally, we note that, as discussed in Sec.~\ref{sec:implementation}, in practice, 
%algorithms for constructing anisotropic diagrams of the type discussed here may have a more numerical flavor  
%(i.e.\  front-propagation, fast marching methods), in contrast with the more combinatorial nature of other Voronoi diagrams. % (see Sec.~\ref{sec:implementation}). 

\section{Voronoi diagrams %and Delaunay triangulations 
		with respect to divergences}\label{sec:setup}

The class of divergences that we consider in this work are non-negative functions 
	$D:\mathbb{R}^2\times\mathbb{R}^2\rightarrow\mathbb{R}$ 
	which are 	strictly convex in the first argument and continuously differentiable in the second, 
	and such that $\D{x}{x}=0$ for all $x\in\mathbb{R}^2$. 
Following~\cite{Bregman}, we let
\begin{equation}\label{eq:defball}
B_1(p; \rho) = \{v\in\mathbb{R}^2 : \D{v}{p} \le \rho\}, \quad\quad B_2(p;\rho) = \{v\in\mathbb{R}^2 : \D{p}{v} \le \rho\} 
\end{equation}
be, respectively, balls of the \emph{first} and \emph{second kind}, centered at $p\in\mathbb{R}^2$ of radius $\rho$. 
Note that balls of the first kind are necessarily convex since $\D{\cdot}{p}$ is convex. 
We also assume that $D$ satisfies what we term a \emph{bounded anisotropy} condition, 
	defined in assumption~\ref{ass:BAA} below. 

Given a set $\Sites=\{s_1,\dots,s_n\}\subset\mathbb{R}^2$ of 
$n$ distinct sites on the plane, and a divergence $D:\mathbb{R}^2\times\mathbb{R}^2\rightarrow\mathbb{R}$, 
the Voronoi regions 
of the first and second kinds~\cite{Bregman} are:  
\begin{eqnarray}
\label{eq:defvor1}   {\Vor^1_i} = \{p\in\mathbb{R}^2 : \D{p}{s_i} \le \D{p}{s}, \forall s\in \Sites \}, \\
\label{eq:defvor2}   {\Vor^2_i} = \{p\in\mathbb{R}^2 : \D{s_i}{p} \le \D{s}{p}, \forall s\in \Sites \}, 
\end{eqnarray}
respectively, and are indexed by the site its points are closest to.  
Of course, the two kinds of Voronoi diagrams are different because $D$ is in general not symmetric. 
In the sequel, and whenever not otherwise specified, we will assume that balls are of the \emph{first kind} (convex), 
	and Voronoi diagrams, and their dual Delaunay triangulations are of the \emph{second kind}. 
For instance, we will use the convexity of balls (of the first kind) to prove 
	that every face in a Delaunay triangulation (of the second kind) 
	satisfies an \emph{Empty Circum-Ball} property (proposition~\ref{prop:ECB}) 
	that parallels the empty circumcircle property of Euclidean Delaunay triangulations.

Consider the following definition of Voronoi element:
\begin{definition}\label{def:VorI}
	For each subset $I\subseteq\{1,\dots,n\}$, the set $\Vor_I=\cap_{i\in I}\Vor^2_i \setminus \cup_{j\not\in I}\Vor^2_j$ is a Voronoi element of order $|I|$. 
Elements of orders $1$, $2$, and $|I|\ge 3$ are denoted regions, edges, and vertices, respectively. 
\end{definition}
\begin{remark}
The set of all Voronoi elements $\Vor_I$ forms a partition of the plane. 
\end{remark}

\begin{table}[htbp]\caption{Notation}\label{table:notation}
\begin{center}
\begin{tabular}{r c p{11cm} }
\toprule
$\D{\cdot}{\cdot}$ && A non-negative divergence strictly convex in its first argument and continuously differentiable in the second. \\
$D_F(\cdot \parallel \cdot)$ && Bregman divergence (section~\ref{sec:DF}). \\
		%associated to convex function $F:\mathbb{R}^2\rightarrow\mathbb{R}$ (equation~\eqref{eq:defDF}). \\
$D_f(\cdot \parallel \cdot)$ && Csisz\'ar divergence (section~\ref{sec:Df}).\\
		%associated to convex function $f:\mathbb{R}^+\rightarrow\mathbb{R}$ (equation~\eqref{eq:defDf})\\
$D_Q(\cdot \parallel \cdot)$ && Quadratic divergence (seciton~\ref{sec:DQ}).\\
		%associated to metric Q (equation~\eqref{eq:defDQ}). \\
$\gamma$ && Global lower bound on the ratio of eigenvalues 
	of metric $Q$ (quadratic divergence, lemma~\ref{lem:DQgamma}) 
	or of the Hessian of $F$ (Bregman divergence, lemma~\ref{lem:DFgamma}). \\
%$\gamma$ && A lower bound on the ``roundness" of the balls of the first kind (section~\ref{sec:gamma}).\\
$\Sites = \{s_1,\dots,s_n\}$ && Set of $n$ sites. \\
$L_{ij}$ && The supporting line of sites $s_i$, $s_j$. \\
$\CHS$ && Convex hull of $\Sites$. \\
$W=\{w_i\in \Sites : i=1,\dots,m\}$ && Subset of sites on the boundary of $\CHS$, in clock-wise order. \\ % around the boundary of $\CHS$.\\
$B(\cdot\, ;\cdot)$ && Convex ball of the first kind (equation~\eqref{eq:defball}). \\
$\theta_p(v)$ && The ball (of the first kind) $B(v;\D{p}{v})$  centered at $v$ with $p$ in its boundary. \\
$\Vor_i$ && Voronoi region of the second kind corresponding to site $s_i$ (equation~\eqref{eq:defvor2}). \\
$\Vor_I$ && Voronoi element of order $|I|=1$ (Voronoi region), $|I|=2$ (Voronoi edge), or $|I|\ge 3$ (Voronoi vertex).\\
$G=(\Sites,E,F)$ && The straight-edge dual triangulation with vertices at the sites. \\
$B$ && The edges in the topological boundary of $G$ (incident to one face). \\
$\mathcal{B}=(w_i,w_{i\oplus 1})_{i=1}^{|W|}$ && The edges in the boundary of $\CHS$. \\
$\pi$ && Projection from $C(\sigma)$ onto $\partial\CHS$ (section~\ref{sec:boundary}). \\
$\nu_\sigma$ && Projection function onto a circle of radius $\sigma$ (section~\ref{sec:boundary}). \\
$H^{+}_{ij}, H^{-}_{ij}$ && The half-spaces on either side of $L_{ij}$, chosen so $H^{+}_{ij}\cap\Sites=\phi$ (fig.~\ref{fig:pinu}). \\
$C(\sigma)$ && The origin-centered circle of radius $\sigma$ (with respect to the natural metric). \\
\bottomrule
\end{tabular}
\end{center}
\end{table}

The following ``bounded anisotropy" condition is assumed to hold. 
It is written in its most general (but very technical) form below, 
	but it becomes much simpler in particular cases, as shown in Section~\ref{sec:summary}. 
Typically, it can be rewritten as a simple regularity condition on a symmetric positive definite matrix, 
	such that its ratio of minimum to maximum eigenvalues
	(a measure of anisotropy) is globally bounded away from zero.

\begin{figure}[thb]
\centering
\includegraphics[width=2.1in]{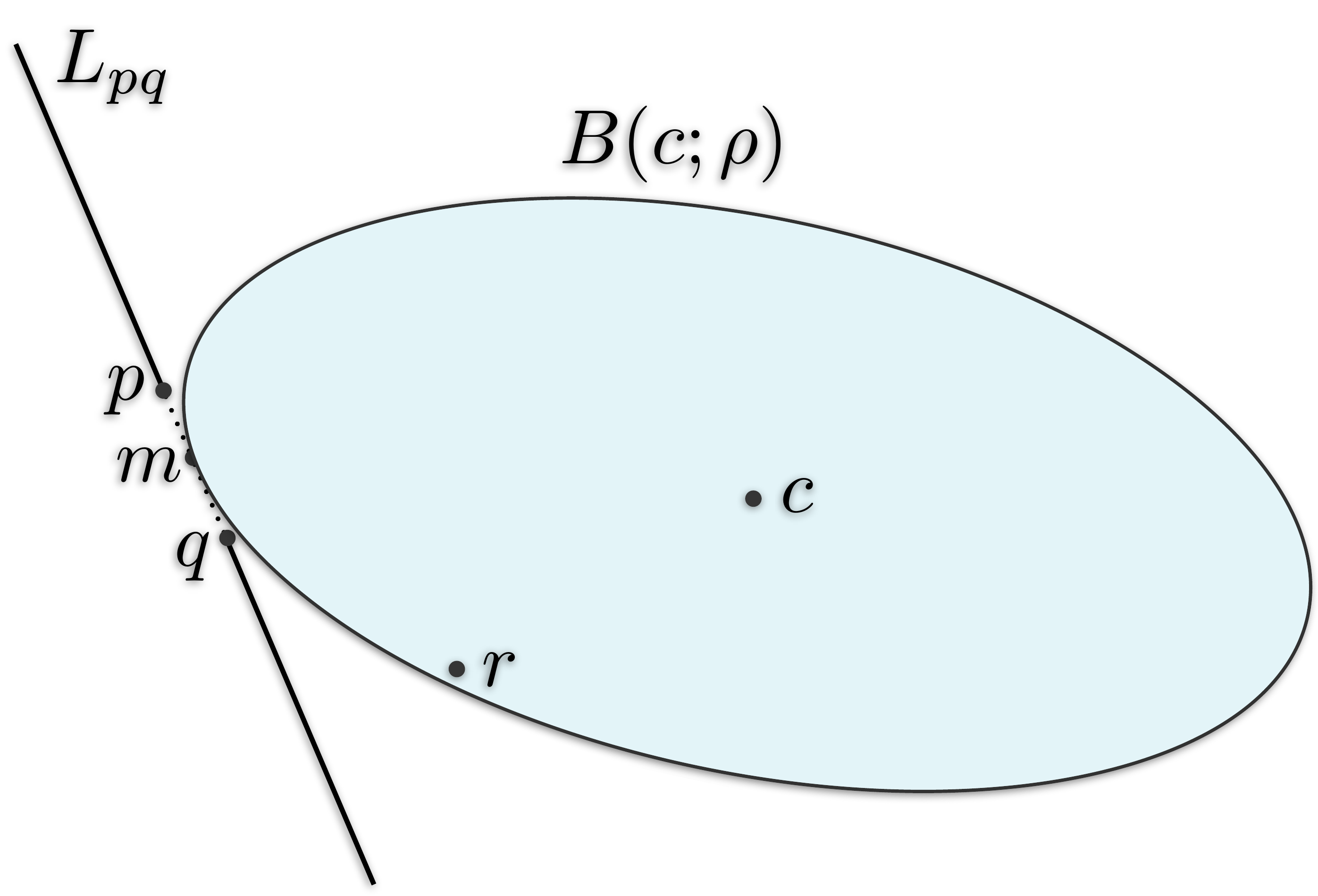}
\caption{The \emph{bounded anisotropy asumption} ensures that balls of the first kind are globally well-behaved. }
\label{fig:gamma}
\end{figure}

\begin{assumption}[Bounded anisotropy]\label{ass:BAA}
For every two points $p,q\in\mathbb{R}^2$ with supporting line $L_{pq}$, and every point $r\not\in L_{pq}$, 
	there is a sufficiently large value $\mu > 0$ such that 
	for every point $c\in\mathbb{R}^2$ lying on the same side of $L_{pq}$ as $r$, 
		such that $\|c\| > \mu$, and 
		whose closest point $m$ in $L_{pq}$ lies in the segment $\overline{pq}$, 
	it is $\D{r}{c} < \D{m}{c}$. 
\end{assumption}
\begin{remark}
Note that the condition $\|c\|<\mu$ depends on the (arbitrary) choice of origin. 
Assumption~\ref{ass:BAA} is, however, independent of this choice. 
\end{remark}
Loosely speaking, this condition ensures that balls of the first kind are 
	not just convex, but also ``sufficiently round". 
For instance, it is satisfied by all the $L_p$ distances with $1<p<\infty$, 
	but not for $p=1,\infty$, since (aside from not being strictly convex) the corresponding balls have ``kinks".

\begin{assumption}[Extremal gradients]\label{ass:EGA}
	For each Voronoi vertex $\Vor\{i_1,\dots,i_m\}$ with $m\ge 3$, 
		the gradients $g_j(p)\equiv\nabla_p \D{s_{i_j}}{p}, j=1,\dots,m$, at $p\in\Vor_I$ are distinct and extremal, 
		i.e.\ they are vertices of the convex hull:  $\CH\{g_1(p),\dots,g_m(p)\}$. 
\end{assumption}

\begin{remark}
	In the ``typical" case that $m=3$, the above simply means that $g_1,g_2,g_3$ are not colinear. 
	Given two distinct gradients $g_1\ne g_2$, requiring $g_3$ not to be colinear only constraints it 
		to be outside a line.  
	If $D$ is the $L_p$ distance  (or any other non-spatially-varying divergence), 
	the extremal gradient assumption can be shown to be always automatically satisfied at Voronoi vertices. 
	Finally, the extremal gradient assumption will be shown to imply that Voronoi vertices are composed of isolated points, 
		and therefore, when satisfied, the assumption only needs to be enforced at a discrete set of points. 
\end{remark}

%\vspace*{0.1in}\noindent{\bf 
\subsection{Orphan-free Voronoi diagrams and dual triangulations}\label{sec:simpleplanar}

As described in the classic survey by Aurenhammer~\cite{Aurenhammer:1991}, 
	planar Voronoi diagrams and Delaunay triangulations are duals in a graph theoretical sense. 
Associated to the ordinary Voronoi diagram is a simple, planar (primal) graph with vertices at 
	points equidistant to three or more sites (Voronoi vertices), 
	and edges composed of line segments 
	equidistant to two sites (Voronoi edges). 
Because edges are always line segments, the graph is simple (has no multi-edges or self-loops), 
	and this construction provides an embedding of the graph, which must therefore be planar.

For Voronoi diagrams defined by divergences, the situation is markedly different. 
The incidence relations between Voronoi elements cannot be so easily established. 
For instance, Voronoi edges may be disconnected and incident to any number of Voronoi vertices. 
For this reason, we begin our proof by constructing an embedding of a primal graph 
	from the incidence relations of the Voronoi diagram (definition~\ref{def:incidence}), 
	in a way that generalizes ordinary Voronoi diagrams, 
	and show that this graph is simple and planar (section~\ref{sec:planar}).
This primal graph is then dualized into a simple, planar graph. 
The dual graph is denoted the Delaunay \emph{triangulation} because, as will be shown, 
	it is composed of convex faces which can be triangulated without breaking any of 
	its important properties, such as embeddability or the \emph{empty circum-ball} property (property~\ref{cor:VorI}).

The rest of the paper makes heavy use of the following trivial lemmas, which we include here for convenience. 
The first follows directly from the properties of $D$, while the second
	is a direct consequence of the strict convexity of $\D{\cdot}{p}$ and the continuity of $D$
	(note that $D$ is globally continuous since it is continuous in the second argument 
	and convex in the first, and therefore it is also continuous in the first argument~\cite{rockafellar1997convex}).

\begin{lemma}\label{lem:site}
Every site $s_i\in\Sites$ is an interior point of its corresponding Voronoi region $\Vor_i$. 
\end{lemma}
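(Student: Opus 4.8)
The plan is to exhibit, for each $i$, an open ball around $s_i$ that is entirely contained in $\Vor_i$. The key observation is that at the point $p=s_i$ itself we have $\D{s_i}{s_i}=0$, whereas for every other site $\D{s_j}{s_i}>0$. This last strict inequality is the only place where strict convexity is used: $\D{\cdot}{s_i}$ is a non-negative function that vanishes at $s_i$ and is strictly convex, so $s_i$ is its \emph{unique} global minimizer; hence $\D{s_j}{s_i}>\D{s_i}{s_i}=0$ for any $s_j\neq s_i$.

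Next I would invoke the global continuity of $D$. As already recorded above, continuity in the second argument together with convexity in the first makes $D$ jointly continuous (see~\cite{rockafellar1997convex}), so for each $j\neq i$ the map $p\mapsto\D{s_j}{p}-\D{s_i}{p}$ is continuous and, by the previous step, takes the strictly positive value $\D{s_j}{s_i}$ at $p=s_i$. Hence it stays positive on some open ball $U_j$ centered at $s_i$. Since there are only finitely many sites, $U:=\bigcap_{j\neq i}U_j$ is again an open neighborhood of $s_i$, and on $U$ we have $\D{s_i}{p}<\D{s_j}{p}$ for every $j\neq i$ (and trivially $\D{s_i}{p}\le\D{s_i}{p}$). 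By the definition~\eqref{eq:defvor2} of $\Vor_i$ this yields $U\subseteq\Vor_i$, so $s_i$ is an interior point of $\Vor_i$, as claimed. (Incidentally, the strict inequalities show moreover that $U$ avoids every other $\Vor^2_j$, so in fact $U\subseteq\Vor_{\{i\}}$, but this is not needed here.)

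I do not anticipate any genuine obstacle in this argument; the only two points requiring a word of justification are the strict positivity $\D{s_j}{s_i}>0$, obtained from uniqueness of the minimizer of a strictly convex function, and the joint continuity of $D$, which is already available from the hypotheses on the divergence.
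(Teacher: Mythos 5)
Your argument is correct, and it is exactly the fleshing-out of what the paper intends: the paper offers no explicit proof, asserting only that the lemma ``follows directly from the properties of $D$,'' and your two ingredients --- strict positivity of $\D{s_j}{s_i}$ for $j\ne i$ via uniqueness of the minimizer of the non-negative strictly convex function $\D{\cdot}{s_i}$, plus continuity of $D$ in its second argument to propagate the strict inequalities to a neighborhood of $s_i$ --- are precisely those properties. Note that you only need continuity in the second argument here (the first arguments are fixed at sites), so the appeal to joint continuity, while valid, is slightly more than required.
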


\begin{lemma}\label{lem:midpoint}
Given two sites $s_i,s_j\in \Sites$ with supporting line $L_{ij}$, 
	all points $p\in L_{ij}$ that are equidistant to $s_i$ and $s_j$ 
	belong to the segment $\overline{s_i s_j}$. 
Furthermore, there is always at least one such point. %equidistant point in $\overline{s_i s_j}$. 
%Given two sites $s_i,s_j\in \Sites$ with supporting line $L_{ij}$, 
%	if a point $p\in L_{ij}$ is equidistant to $s_i,s_j$
%	\emph{(}i.e.~$\D{s_i}{p}=\D{s_j}{p}$\emph{)}, 
%	then $p$ belongs to the segment $\overline{s_i s_j}$. 
%	Furthermore, there is at least one such point. 
\end{lemma}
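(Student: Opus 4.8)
The plan is to reduce both claims to one‑dimensional statements about the behaviour of $D$ along the line $L_{ij}$. Parametrize $L_{ij}$ by arc length, writing $\ell(t)$ for the point at parameter $t$, with $\ell(0)=s_i$ and $\ell(1)=s_j$ after rescaling. Define the scalar function $\varphi(t) = \D{\ell(t)}{s_i} - \D{\ell(t)}{s_j}$. Since $D$ is continuous in its first argument (as noted in the excerpt, continuity in the second argument together with convexity in the first yields joint continuity by~\cite{rockafellar1997convex}), $\varphi$ is continuous on $\mathbb{R}$. A point $p=\ell(t)\in L_{ij}$ is equidistant to $s_i$ and $s_j$ precisely when $\varphi(t)=0$, so both assertions become statements about the zero set of $\varphi$.

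For existence, I would evaluate $\varphi$ at the two endpoints: $\varphi(0) = \D{s_i}{s_i} - \D{s_i}{s_j} = -\D{s_i}{s_j} \le 0$, and $\varphi(1) = \D{s_j}{s_i} - \D{s_j}{s_j} = \D{s_j}{s_i} \ge 0$, using $\D{x}{x}=0$ and non‑negativity of $D$. By the intermediate value theorem there is some $t^\star\in[0,1]$ with $\varphi(t^\star)=0$, which gives a point of equidistance lying on the segment $\overline{s_i s_j}$. (If either endpoint already gives $\varphi=0$, that point serves directly.)

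For the containment claim — that \emph{every} equidistant point on $L_{ij}$ lies in $\overline{s_i s_j}$ — I would use strict convexity of $\D{\cdot}{s_i}$ and $\D{\cdot}{s_j}$. Restricted to the line $L_{ij}$, the maps $t\mapsto \D{\ell(t)}{s_i}$ and $t\mapsto \D{\ell(t)}{s_j}$ are strictly convex functions of the real variable $t$. Their difference $\varphi$ is then a difference of two strictly convex functions, which need not be convex; so the naive hope that $\varphi$ has at most one or two zeros is not immediate, and this is the main obstacle. The way around it: I would argue that $t=0$ is the unique minimizer of $t\mapsto\D{\ell(t)}{s_i}$ along the line and $t=1$ the unique minimizer of $t\mapsto\D{\ell(t)}{s_j}$ along the line (each is the point of the line closest, in the respective $D$‑sense, to its own site, because $\D{s_i}{s_i}=0$ is the global minimum of $\D{\cdot}{s_i}$ and it is attained nowhere else by strict convexity together with non‑negativity). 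Hence for $t<0$ both $\D{\ell(t)}{s_i}$ and $\D{\ell(t)}{s_j}$ are strictly decreasing in $t$; but more usefully, for $t<0$ we can compare directly: moving from $\ell(t)$ toward $s_i$ strictly decreases $\D{\cdot}{s_i}$ while, since $t<0<1$, it also moves toward $s_j$ and strictly decreases $\D{\cdot}{s_j}$ — that does not immediately separate them. Instead I would invoke strict convexity in the cleaner form: for $t<0$, the point $s_i=\ell(0)$ lies strictly between $\ell(t)$ and $s_j=\ell(1)$ on the line, so strict convexity of $\D{\cdot}{s_j}$ gives $\D{s_i}{s_j} < \max\{\D{\ell(t)}{s_j},\, \D{s_j}{s_j}\} = \D{\ell(t)}{s_j}$, i.e.\ $\D{\ell(t)}{s_j} > \D{s_i}{s_j} \ge \D{s_i}{s_i} = 0 = \D{\ell(0)}{s_i}$; and a parallel (in fact simpler) bound shows $\D{\ell(t)}{s_i} < \D{\ell(t)}{s_j}$ is impossible to fail — concretely, convexity of $\D{\cdot}{s_i}$ along the line with its minimum at $t=0$ forces $\D{\ell(t)}{s_i}$ small near $t=0^-$ relative to $\D{\ell(t)}{s_j}$. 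Symmetrically, for $t>1$ one gets $\D{\ell(t)}{s_i} > \D{\ell(t)}{s_j}$. Thus $\varphi(t)<0$ for all $t<0$ and $\varphi(t)>0$ for all $t>1$, so no zero of $\varphi$ can lie outside $[0,1]$; combined with the existence argument this establishes the lemma. The delicate point to get right in the write‑up is exactly this sign analysis of $\varphi$ outside $[0,1]$, since it is the only place where strict convexity (as opposed to mere convexity) is genuinely needed.
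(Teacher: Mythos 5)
There is a genuine gap, and it begins with the choice of which argument of $D$ carries the point $p$. You set $\varphi(t)=\D{\ell(t)}{s_i}-\D{\ell(t)}{s_j}$, i.e.\ you compare distances of the \emph{first} kind (point in the first slot, sites in the second). Under the paper's standing convention the Voronoi diagram is of the \emph{second} kind, so ``$p$ is equidistant to $s_i$ and $s_j$'' means $\D{s_i}{p}=\D{s_j}{p}$: the sites occupy the first (convex) slot and the variable point $p$ the second. This is not cosmetic: the containment claim for your first-kind comparison is false in general. Take $\D{x}{y}=(x-y)^tQ(y)(x-y)$ with $Q(s_i)=100\,I$, $Q(s_j)=I$, smoothly interpolated elsewhere; this is strictly convex in the first argument, $\mathcal{C}^1$ in the second, and isotropic at every point, yet a point $q\in L_{ij}$ at distance $d$ beyond $s_i$ has $\D{q}{s_i}=100\,d^2$ while $\D{q}{s_j}=(d+\|s_i-s_j\|)^2$, so for large $d$ the sign of your $\varphi$ flips and a first-kind-equidistant point exists outside $\overline{s_i s_j}$. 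Consequently no sign analysis of $\varphi$ on $t\notin[0,1]$ can succeed; and indeed the step you yourself flag as ``the main obstacle'' is never closed --- ``forces $\D{\ell(t)}{s_i}$ small near $t=0^-$ relative to $\D{\ell(t)}{s_j}$'' establishes nothing for general $t<0$.

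With the arguments in the order the paper uses, the lemma is the one-line consequence of strict convexity that its preamble advertises (``a direct consequence of the strict convexity of $\D{\cdot}{p}$''). Existence: $\psi(t)=\D{s_i}{\ell(t)}-\D{s_j}{\ell(t)}$ is continuous in $t$ by continuity of $D$ in its second argument, with $\psi(0)=-\D{s_j}{s_i}\le 0$ and $\psi(1)=\D{s_i}{s_j}\ge 0$, so the intermediate value theorem yields a zero in $[0,1]$; this half of your proposal transfers verbatim after swapping the slots. Containment: if $p=\ell(t)$ with $t>1$, then $s_j=\lambda s_i+(1-\lambda)p$ for some $\lambda\in(0,1)$, and the \emph{single} strictly convex function $\D{\cdot}{p}$ (the ball of the first kind centered at $p$) gives
$\D{s_j}{p}<\lambda\,\D{s_i}{p}+(1-\lambda)\,\D{p}{p}=\lambda\,\D{s_i}{p}\le\D{s_i}{p}$,
using $\D{p}{p}=0$; hence $p$ is strictly closer (in the second-kind sense) to $s_j$ and cannot be equidistant. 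The case $t<0$ is symmetric. The structural point your proposal misses is that both distances from $p$ must be measured with the same second argument $p$, so that one fixed convex function is evaluated at the three collinear points $s_i$, $s_j$, $p$; this is exactly what makes the ``difference of two convex functions'' difficulty you identified disappear.
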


\section{Summary of results}\label{sec:summary}

\begin{figure}[ht]
\begin{center}
\subfloat{\includegraphics[height=3.5cm]{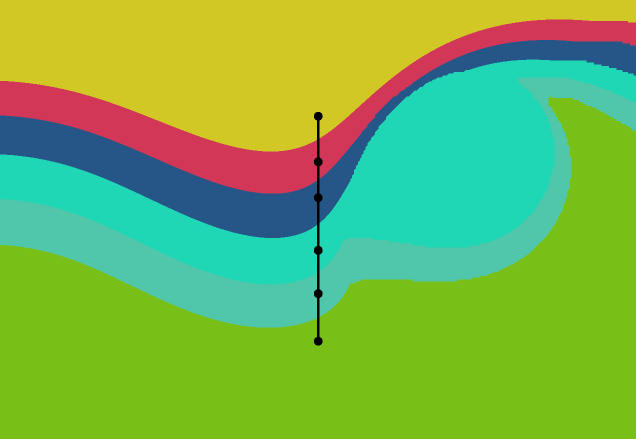}}\quad
\subfloat{\includegraphics[height=3.5cm]{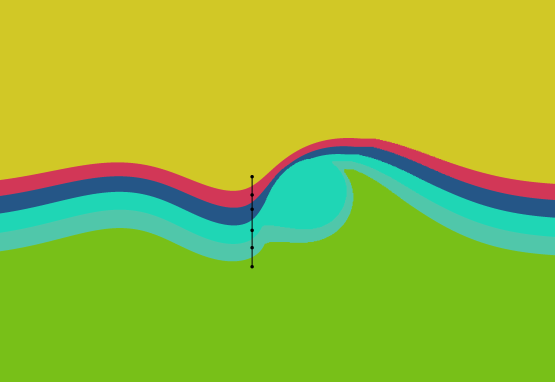}}
\caption{If all sites are colinear, the dual is always a chain connecting consecutive sites along their supporting line.
This structure is independent of the divergence, and doesn't require assumption~\ref{ass:BAA}. }
\label{fig:colinear}
\end{center}
\end{figure}

Consider first the special case that all sites in $\Sites$ are colinear. % (figure~\ref{fig:colinear}). 
The structure of the Voronoi diagram and the Delaunay triangulation is very simple in this case. 
If we order the sites $s_1,\dots,s_n$ sequentially along their supporting line, 
lemma~\ref{lem:midpoint} shows that there must be Delaunay edges between successive sites, 
	while the strict convexity of the balls implies that these are the only edges
	(all points in $\overline{s_i s_{i\oplus 1}}$ are strictly closer to $s_i,s_{i\oplus 1}$ than to any other site), 
	and that there are no Delaunay faces
	(since three colinear points cannot be in the boundary of a strictly convex ball). 
The following proposition does not require assumption~\ref{ass:BAA} nor~\ref{ass:EGA}. 
%The following proposition does not require the bounded anisotropy 
%	condition (assumption~\ref{ass:BAA}), nor the extremal gradient 

\begin{proposition}
{For all divergences} $D$, 
	the Delaunay triangulation of a set of colinear sites is a chain connecting successive sites $s_i,s_{i\oplus 1}$, $i=1,\dots,n-1$ 
	along their supporting line. 
	%(figure~\ref{fig:colinear}). 
%Note that this is true 
%, and does not require the bounded anisotropy 
%	condition (assumption~\ref{ass:BAA}).
\end{proposition}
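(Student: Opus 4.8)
The plan is to recast the combinatorics of the Voronoi elements $\Vor_I$ in terms of balls of the first kind, and then exploit that a strictly convex planar ball meets any line in at most two points. The key observation is a dictionary: a point $p$ lies in $\Vor_I$ precisely when $I=\{i:\D{s_i}{p}=\rho(p)\}$, where $\rho(p):=\min_{s\in\Sites}\D{s}{p}$; equivalently, the sites indexed by $I$ lie on $\partial B(p;\rho(p))$ while $\D{s_j}{p}>\rho(p)$ for every $j\notin I$. Since $\D{\cdot}{p}$ is strictly convex, $B(p;\rho(p))$ is a strictly convex region, so its boundary contains no segment and hence meets the common supporting line $L$ of the sites in at most two points, and any open chord between two of its boundary points lies in its interior.

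First I would rule out everything except edges between consecutive sites. If $\Vor_I\ne\emptyset$ with $|I|\ge 3$, then some $p$ would put three distinct colinear sites on $\partial B(p;\rho(p))$, which is impossible; hence there are no Voronoi vertices, and the dual built from the incidence relations (definition~\ref{def:incidence}) has no faces. If $\Vor_{\{i,j\}}\ne\emptyset$ with $j\ge i+2$, then some $p$ would have $s_i,s_j\in\partial B(p;\rho(p))$, so the open segment $\overline{s_i s_j}$ — and in particular the intermediate site $s_{i+1}$ lying on it — would lie in $\mathrm{int}\,B(p;\rho(p))$, giving $\D{s_{i+1}}{p}<\rho(p)$ and contradicting the definition of $\rho(p)$. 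So the only possible Delaunay edges join consecutive sites.

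Next I would confirm that each consecutive pair really is an edge. Applying lemma~\ref{lem:midpoint} to $s_i,s_{i+1}$ (whose supporting line is $L$) yields a point $p^\ast\in\overline{s_i s_{i+1}}$ with $\rho:=\D{s_i}{p^\ast}=\D{s_{i+1}}{p^\ast}$, and $\rho>0$ since otherwise $p^\ast=s_i=s_{i+1}$, contradicting distinctness of the sites. Then $s_i$ and $s_{i+1}$ are two distinct points of $\partial B(p^\ast;\rho)$ on $L$, so in fact $L\cap B(p^\ast;\rho)=\overline{s_i s_{i+1}}$ exactly; every other site lies on $L$ outside this closed segment and therefore satisfies $\D{s_k}{p^\ast}>\rho$. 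Hence $p^\ast\in\Vor_{\{i,i+1\}}\ne\emptyset$, which contributes the straight Delaunay edge $\overline{s_i s_{i+1}}$. Combined with lemma~\ref{lem:site} (each $\Vor_i\ne\emptyset$, so all $n$ sites are vertices), this shows $G=(\Sites,E,F)$ has $F=\emptyset$ and $E$ equal to the $n-1$ consecutive segments — the claimed chain — and no appeal to assumption~\ref{ass:BAA} or~\ref{ass:EGA} is made anywhere.

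I do not expect a genuine obstacle here: the whole argument reduces to strict convexity of $D$ in its first argument (hence strict convexity of balls of the first kind) together with the two trivial lemmas. The only care required is bookkeeping — correctly identifying $I$ with the minimizing set, dispatching the degenerate case $\rho=0$ (equivalently, $p^\ast$ coinciding with a site), and translating nonempty Voronoi elements into dual edges and faces through definition~\ref{def:incidence}, so that a possibly disconnected $\Vor_{\{i,i+1\}}$ still contributes exactly the single edge $\overline{s_i s_{i+1}}$.
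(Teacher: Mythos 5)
Your proposal is correct and follows essentially the same route as the paper: lemma~\ref{lem:midpoint} supplies the equidistant point in each segment $\overline{s_i s_{i\oplus 1}}$ to certify the consecutive edges, and strict convexity of balls of the first kind rules out faces (three colinear boundary points) and non-consecutive edges (an intermediate site would lie strictly inside the circumscribing ball). Your extra bookkeeping (the $\rho>0$ check and the identification $L\cap B(p^\ast;\rho)=\overline{s_i s_{i\oplus 1}}$) only makes explicit what the paper leaves implicit.
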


With the colinear site case covered, we assume in the remainder 
	that \emph{not all} sites are colinear, 
	and that $D$ satisfies assumptions~\ref{ass:BAA} and~\ref{ass:EGA}. 
	%is a divergence of bounded anisotropy, convex in the first argument and continuously differentiable in the second. 

We begin, in section~\ref{sec:planar}, by constructing a primal graph from the incidence relations between Voronoi elements, 
	and dualize it to obtain a simple, planar graph. 

\begin{theorem}\label{th:simpleplanar}
%||| incidence relation between Vor verts and edges => embedded planar graph \\
%Let $\tilde{G}=(\tilde{V}, \tilde{E})$ be 
The dual of the primal Voronoi graph of an orphan-free Voronoi diagram 
	is a simple, connected, planar graph. 
\end{theorem}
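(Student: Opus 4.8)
The plan is to exhibit the second-kind Voronoi diagram as a genuine finite regular cell decomposition of the sphere $S^2=\mathbb{R}^2\cup\{\infty\}$, identify the primal Voronoi graph $P$ of Definition~\ref{def:incidence} with its $1$-skeleton, and then deduce the theorem by a purely combinatorial dualization. First I would record the elementary facts that the regions $\Vor_i$ are closed, have nonempty interior (Lemma~\ref{lem:site}), and cover $\mathbb{R}^2$ (for each $p$ some site minimizes $\D{\cdot}{p}$ over $\Sites$); together with orphan-freedom this makes $\{\Vor_i\}$ a covering of $S^2$ by $n$ closed \emph{connected} sets with pairwise disjoint interiors. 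The argument then reduces to three structural facts: (i) every Voronoi vertex $\Vor_I$, $|I|\ge 3$, is a single isolated point; (ii) every $\Vor_i$ is a closed topological disk in $S^2$, so that its boundary is a single curve and the skeleton $K=\bigcup_{|I|\ge 2}\Vor_I$ is connected; and (iii) every Voronoi edge $\Vor_{\{i,j\}}$ is a connected arc whose two endpoints are Voronoi vertices or the point $\infty$. Granting (i)--(iii), $K$ underlies a regular CW decomposition of $S^2$ with $0$-cells the Voronoi vertices (and $\infty$), $1$-cells the open Voronoi edges, and $2$-cells the interiors of the $\Vor_i$; the incidence data of Definition~\ref{def:incidence} is precisely the combinatorial data of this $1$-skeleton, realising $P$ as an embedded plane graph, which is therefore planar and connected.

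For (i) I would invoke Assumption~\ref{ass:EGA}. All the functions $q\mapsto \D{s_{i_k}}{q}$, $k=1,\dots,m$, coincide on $\Vor_I$. Were $\Vor_I$ not a single point, there would be a point $p\in\Vor_I$ and a nonconstant sequence in $\Vor_I$ approaching $p$ along some direction $t\ne 0$; since the $\D{s_{i_k}}{\cdot}$ agree on $\Vor_I$, all the directional derivatives $\langle g_k(p),t\rangle$ would be equal, i.e.\ the $m\ge 3$ distinct gradients $g_1(p),\dots,g_m(p)$ would lie on a common affine line --- contradicting that they are vertices of $\CH{\{g_1(p),\dots,g_m(p)\}}$. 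Hence vertices are isolated, and (as remarked after Assumption~\ref{ass:EGA}) the assumption only ever has to be checked on this discrete set.

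The heart of the proof, and the step I expect to be the main obstacle, is (ii) together with (iii). The basic tool is Lemma~\ref{lem:midpoint}: along the supporting line $L_{ij}$ the continuous function $\D{s_i}{\cdot}-\D{s_j}{\cdot}$ vanishes only inside $\overline{s_i s_j}$, and it is positive at $s_j$ (since $\D{s_i}{s_j}>0=\D{s_j}{s_j}$, as $\D{\cdot}{s_j}$ is strictly convex with unique minimiser $s_j$). Consequently, on the open ray of $L_{ij}$ issuing from $s_j$ in the direction away from $s_i$ one has $\D{s_i}{\cdot}>\D{s_j}{\cdot}$, hence that ray misses $\Vor_i$; this already rules out a Jordan curve contained in $\Vor_i$ that encircles $s_j$, since such a curve would have to cross that ray. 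I would refine this idea --- now crucially using Assumption~\ref{ass:BAA}, which controls the diagram outside a large circle (the sets $C(\sigma)$ and the projections $\pi,\nu_\sigma$) so that every unbounded region contributes a single end at $\infty$ and no region ``escapes to infinity around'' another --- to show that no region can enclose a pocket, empty or not; this is exactly the statement that every $\Vor_i$ is a closed disk. Connectedness of $K$ is then standard: the region-adjacency graph of a cell decomposition of $S^2$ by disks is connected (transversal paths in $S^2$), and each adjacency glues the connected boundaries of two regions into the same component of $K$. Finally, since each $\Vor_i$ is a disk, the portion of $\partial\Vor_i\cap\partial\Vor_j$ that carries the minimum for the pair $\{i,j\}$ --- that is, $\Vor_{\{i,j\}}$ --- is an arc with isolated (by (i)) endpoints, which is (iii).

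With $P$ established as a connected embedded (hence planar) plane graph whose face set is $\{\Vor_i\}$, its geometric dual $P^*$ --- one vertex in each $\Vor_i$, one edge transverse to each Voronoi edge, one face per Voronoi vertex --- is planar and connected by construction, the dual of any plane graph being connected. Simplicity is then immediate from the structure. There is no loop: the two sides of a Voronoi edge $\Vor_{\{i,j\}}$ are $\Vor_i$ and $\Vor_j$ with $i\ne j$ (by Definition~\ref{def:VorI}, interior points of a Voronoi edge realise the minimum for exactly the two distinct sites $s_i,s_j$), so no edge of $P^*$ joins a vertex to itself. And there are no parallel edges: two $P^*$-edges between the vertices for $\Vor_i$ and $\Vor_j$ would be two distinct connected components of $\Vor_{\{i,j\}}$, contradicting (iii). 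This proves Theorem~\ref{th:simpleplanar}.
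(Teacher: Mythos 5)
Your overall architecture --- realize the diagram as a cell decomposition of $\mathbb{S}^2$, identify the primal graph with its $1$-skeleton, and dualize --- matches the paper's, and two of your micro-arguments are genuinely nice alternatives: your proof that Voronoi vertices are isolated (equal directional derivatives along a direction of accumulation force the $m\ge 3$ gradients $g_k$ onto a common line, contradicting Assumption~\ref{ass:EGA}) is cleaner than the paper's, and your ``no loops'' argument (the two sides of $\Vor_{\{i,j\}}$ are the distinct faces $\Vor_i$ and $\Vor_j$) replaces the paper's detour through cut edges and boundary cycles.

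The gap is in your step (iii). You claim that once each $\Vor_i$ is a closed disk, the set $\Vor_{\{i,j\}}$ carried by $\partial\Vor_i\cap\partial\Vor_j$ ``is an arc''. That does not follow: two closed disks in a cell decomposition of $\mathbb{S}^2$ can perfectly well meet along two disjoint boundary arcs (picture two large regions separated by a strip pinched in two places, with a third region sitting between the pinches and further regions above and below; every region is a disk, yet the two large regions share two disjoint arcs). Connectedness of Voronoi edges is exactly what rules out multi-edges in the dual, and it is the single most laborious ingredient of the paper's proof (Lemma~\ref{lem:connectededges}): one joins the two putative components by a Jordan curve through $s_i$ and $s_j$ built from Lemma~\ref{lem:regionpath}, shows its interior contains no other site using the convexity of balls of the \emph{first} kind (the triangles $\triangle s_i w s_j$) and the strict convexity of $\D{\cdot}{r}$ along segments $\overline{s_i r}$, and then extracts a connecting path inside $\Vor_{ij}$ from a path homotopy supplied by Carath\'eodory's theorem. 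None of this is subsumed by ``each region is a disk''. Relatedly, your (iii) also silently asserts that every edge has exactly two endpoints that are vertices; the paper needs separate arguments for this (Lemmas~\ref{lem:val_ge2} and~\ref{lem:val_le2}, the latter via a $K_{3,3}$ minor) and for discarding isolated edges whose closure meets no vertex at all. As it stands your proposal establishes planarity, connectedness, and the absence of loops, but not simplicity.
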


\begin{remark}
Note that the differentiability of $D$ with respect to the second argument is only used in (a small neighborhood around) Voronoi vertices 
	(a set of isolated points). 
Everywhere else, it suffices that $D$ is continuous in its second argument. 
\end{remark}

While this dual graph is an embedded planar graph with curved edges, we then show
	that it is also an embedded planar graph with vertices at the sites and straight edges. 
%Our main result establishes the embeddability of the Delaunay triangulation with respect to a divergence $D$. 

\begin{theorem}\label{th:main}
	The straight-edge dual of a primal Voronoi graph
	(obtained from an orphan-free Voronoi diagram of a set of sites $\Sites$)
%	an orphan-free Voronoi diagram of a set of sites $\Sites$ 
%		with respect to a divergence $D$) 
	is embedded with vertices at the sites, 
		has (non-degenerate) strictly convex faces, and covers the convex hull of $\Sites$. 
\end{theorem}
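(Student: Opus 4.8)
The plan is to extend the straight-edge dual drawing to an explicit continuous map from an abstract disk onto $\CHS$ and to prove that this map is a homeomorphism. Let $\Delta$ be the $2$-complex with one vertex per site, one edge per Delaunay edge, and one polygonal $2$-cell per Delaunay face (equivalently, one per Voronoi vertex). Orphan-freeness --- which, via the results behind Theorem~\ref{th:simpleplanar}, makes every Voronoi region, edge and vertex connected --- guarantees that the link of each vertex of $\Delta$ is connected, so $\Delta$ is a compact orientable surface with boundary; since the dual is simple, connected and planar (Theorem~\ref{th:simpleplanar}), Euler's formula gives $|\Sites| - |E| + |F| = 1$, and the only connected orientable surface with boundary and Euler characteristic $1$ is a closed topological disk, whose boundary circle is the boundary cycle $B$. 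To build the map $\Phi : \Delta \to \mathbb{R}^2$ --- the identity on sites, linear on edges --- I first check that each Delaunay face, drawn with straight edges between its sites, is a non-degenerate strictly convex polygon: by the empty circum-ball property (Proposition~\ref{prop:ECB}) the sites around a face lie on the boundary of a common ball of the first kind, which is \emph{strictly} convex because $\D{\cdot}{p}$ is strictly convex, so (no three points on a strictly convex curve being colinear) they are in strictly convex position; and, using that this ball is centred at the dual Voronoi vertex together with Assumption~\ref{ass:EGA} (the gradients $\nabla_p \D{s_{i_j}}{p}$ being distinct extreme points of their hull), the cyclic order of the sites around the face matches their cyclic order along the ball, so the face is a simple strictly convex polygon, over which $\Phi$ extends homeomorphically. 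Note $\Phi(\Delta) \subseteq \CHS$.

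Next I analyse the boundary, showing that $\Phi$ restricts to a homeomorphism $\partial\Delta \to \partial\CHS$: the vertices of $\partial\Delta$ are exactly the hull sites $W$, in hull order, and the consecutive boundary Delaunay edges are exactly the hull edges $\mathcal{B}$. This is the content of Section~\ref{sec:boundary}, carried out with the radial projections $\pi$, $\nu_\sigma$ onto $\partial\CHS$ and onto circles $C(\sigma)$, and it is precisely where the bounded-anisotropy Assumption~\ref{ass:BAA} is used: it forces the second-kind Voronoi diagram on a large enough $C(\sigma)$ to be governed by the nearest hull site, so that $\pi$ and $\nu_\sigma$ identify the ``far-field'' combinatorics of the diagram with that of the convex hull (the half-planes $H^{+}_{ij}, H^{-}_{ij}$ and Lemma~\ref{lem:midpoint} locating the separators). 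With Lemma~\ref{lem:site} this also rules out any interior vertex of $\Delta$ being a hull site.

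Then I show $\Phi$ is a local homeomorphism on the interior of $\Delta$. By the cell structure this reduces to local injectivity at each interior vertex $s_i$, i.e.\ to the statement that the straight incident faces tile a punctured neighbourhood of $s_i$: the incident edges must appear around $s_i$ in the combinatorial cyclic order read off from the Voronoi edges bounding $\Vor_i$ (using $s_i \in \mathrm{int}\,\Vor_i$ by Lemma~\ref{lem:site}), and the wedge angles contributed by the incident faces --- each in $(0,\pi)$ by the strict convexity above --- must sum to exactly $2\pi$; I would control these orders and angles using Lemma~\ref{lem:midpoint} and the empty-ball structure at the adjacent Voronoi vertices. Granting this, $\Phi$ is a proper map from the disk $\Delta$, a local homeomorphism on $\mathrm{int}\,\Delta$ (hence orientation-preserving, as $\mathrm{int}\,\Delta$ is connected), carrying $\partial\Delta$ homeomorphically onto the Jordan curve $\partial\CHS$ and $\mathrm{int}\,\Delta$ into $\CHS$ --- no interior point can map to $\partial\CHS$, else a local homeomorphism there would put points of $\Phi(\Delta)$ outside $\CHS$. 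A standard degree/winding-number count then shows that every point of the interior of $\CHS$ has exactly one preimage, so $\Phi$ is a continuous bijection from the compact disk $\Delta$ onto $\CHS$, hence a homeomorphism; in particular the straight-edge dual is embedded (so no two edges cross), its faces are the non-degenerate strictly convex polygons above, and it covers $\CHS$.

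I expect the boundary step to be the main obstacle: pinning down the Voronoi diagram far from the sites and matching its combinatorics to the convex hull is exactly what Assumption~\ref{ass:BAA} exists for, and it is where anisotropic diagrams (and the $L_1$, $L_\infty$ distances) genuinely fail. The secondary difficulty is the angle-summation at interior vertices, the step where the global ``no orphans'' hypothesis must be turned into purely local control of the straight edges.
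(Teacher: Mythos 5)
Your outline follows the paper up to and including the boundary step (you defer to the machinery of Section~\ref{sec:boundary}, which is legitimate since that is exactly how the paper establishes Corollary~\ref{cor:boundary}), and your treatment of individual faces via the empty circum-ball property matches Proposition~\ref{prop:ECB}. The genuine gap is in the interior step. You reduce embeddability to the claim that, at each interior vertex $s_i$, the straight-edge faces incident to $s_i$ tile a punctured neighbourhood of $s_i$ --- i.e.\ that the incident edges occur around $s_i$ in the cyclic order inherited from the Voronoi edges on $\partial\Vor_i$ and that the wedge angles sum to exactly $2\pi$ --- and you propose to ``control these orders and angles using Lemma~\ref{lem:midpoint} and the empty-ball structure at the adjacent Voronoi vertices.'' This is precisely the statement that $G$ has no fold-overs in the link of $s_i$, which is the crux of the whole theorem, and it does not follow from the local data you invoke. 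Lemma~\ref{lem:midpoint} only locates the intersection of $\Vor_{ij}$ with the supporting line $L_{ij}$, and the empty circum-ball of a single face constrains only that face's own vertices: nothing local prevents the direction vectors $s_j-s_i$ from appearing around $s_i$ in an order different from the combinatorial order of the Voronoi edges around $\partial\Vor_i$, nor from the incident wedges overlapping (total turning $>2\pi$ with a fold). Indeed the paper's own discussion of Figure~\ref{fig:3dbreaks} shows that the ECB property by itself is compatible with folded configurations in $\mathbb{R}^3$, which is strong evidence that no purely local argument at a vertex can close this step; the two-dimensional proof needs a global input.

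The paper supplies that global input with discrete one-forms rather than a local link analysis: it constructs the non-vanishing one-forms $\xi^n$, shows that an edge fold-over forces some interior vertex to have negative index (Lemma~\ref{lem:index-1}), that the discrete Poincar\'e--Hopf identity (Theorem~\ref{lem:ph}) together with the convex simple boundary then forces some interior vertex of index $1$ (Lemma~\ref{lem:non-negative}), and that such a vertex would have to lie inside a face it is not incident to, violating the ECB property (Lemma~\ref{lem:index1}); the absence of fold-overs plus the simple convex boundary then yields the single-cover and embedding conclusions by essentially the face-count argument you sketch at the end (Lemma~\ref{lem:interior}). If you want to keep your local-homeomorphism-plus-degree framing, you would still need to import this index-theoretic argument (or an equivalent global one) to justify the angle-sum claim; as written, the step you label ``Granting this'' is exactly the content of Lemmas~\ref{lem:non-negative}--\ref{lem:ef} and cannot be granted.
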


As described in Section~\ref{sec:simpleplanar}, lemmas~\ref{lem:regionSC} and~\ref{lem:SCedges} 
	can be used in conjunction with theorem~\ref{th:main} to 
	conclude that orphan-freedom 
	is a sufficient condition 
	for the well-behavedeness of not just the 
%	to ensure not just the well-behavedeness of the 
	dual, but also of the primal Voronoi diagram. 
Note that this excludes isolated Voronoi edges (those not incident to any Voronoi vertex), 
	which are shown to be contained in Voronoi regions, 
	and are considered part of their containing regions (section~\ref{sec:propedges}).

\begin{corollary}\label{cor:VorI}
	All the elements of an orphan-free Voronoi diagram are connected, 
	with the exception of isolated Voronoi edges. 
%	If the regions $\Vor_i$ of a Voronoi diagram of $\Sites$ with respect to $D$ are connected, 
%		then all the Voronoi elements $\Vor_I$ %=\cap_{i\in I}\Vor_i$, $I\subseteq\{1,\dots,n\}$ 
%		of the diagram are connected. 
\end{corollary}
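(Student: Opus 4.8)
The plan is to handle the three types of Voronoi element in turn, disposing of regions and non-isolated edges quickly so that the only real content concerns Voronoi vertices. Connectedness of each region $\Vor_i$ is nothing but the orphan-freedom hypothesis, so there is nothing to prove for order-$1$ elements; Lemma~\ref{lem:regionSC} in fact upgrades this to a structural statement (each region is star-shaped about its site, hence simply connected), which is what lets one speak of \emph{the} face of the primal Voronoi graph carried by $\Vor_i$. For an order-$2$ element $\Vor_{\{i,j\}}$ that is not isolated, I would invoke Lemma~\ref{lem:SCedges}: once its two flanking regions are known to be connected, a Voronoi edge that meets at least one Voronoi vertex is forced to be a single arc. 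Isolated Voronoi edges are deliberately excluded from the statement; by section~\ref{sec:propedges} each such edge lies in the interior of a Voronoi region and never appears as an edge of the primal graph, so the graph-theoretic machinery places no constraint on it.

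It remains to treat a Voronoi vertex $\Vor_I$ with $|I|\ge 3$. Assumption~\ref{ass:EGA} guarantees that such an element consists of isolated points, so ``$\Vor_I$ is connected'' is equivalent to ``$\Vor_I$ is a single point'', and it is this that I would establish using the embedding of Theorem~\ref{th:main}. In the construction of section~\ref{sec:planar}, every connected component $C$ of $\Vor_I$ gives rise to a node of the primal Voronoi graph; because the diagram is orphan-free, $C$ is incident to exactly one component of each region $\Vor_i$ --- namely $\Vor_i$ itself --- for every $i\in I$, so the face of the straight-edge dual $G$ dual to $C$ has vertex set precisely $\{s_i : i\in I\}$. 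By Theorem~\ref{th:main} every face of $G$ is a strictly convex polygon, hence equals $\CH{\{s_i : i\in I\}}$; two distinct components of $\Vor_I$ would therefore produce two distinct faces of $G$ occupying the very same planar region, contradicting the fact (Theorems~\ref{th:simpleplanar} and~\ref{th:main}) that $G$ is an embedded planar graph with pairwise disjoint faces. Thus $\Vor_I$ has a single component and is connected.

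The step I expect to be the main obstacle is the incidence bookkeeping in the vertex case: one must verify that a component $C$ of $\Vor_I$ is genuinely adjacent to \emph{all} of the regions $\Vor_i$, $i\in I$ (so that its dual face acquires the full vertex set $\{s_i : i\in I\}$ rather than a proper subset), and that the primal-to-dual correspondence of section~\ref{sec:planar} does identify that face's vertex set with $\{s_i : i\in I\}$. This is precisely where Assumption~\ref{ass:EGA} earns its keep, since the extremality of the gradients at $C$ is what pins down which regions meet there; everything else reduces to a direct appeal to Lemmas~\ref{lem:regionSC} and~\ref{lem:SCedges} and to Theorem~\ref{th:main}.
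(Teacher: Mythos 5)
Your proposal is correct and follows essentially the same route the paper intends: regions are connected by the orphan-freedom hypothesis (upgraded by Lemma~\ref{lem:regionSC}), non-isolated edges by Lemmas~\ref{lem:connectededges} and~\ref{lem:SCedges}, and each Voronoi vertex is a single point because two components would yield duplicate faces of the straight-edge dual, contradicting the embedding and single-cover property of Theorem~\ref{th:main} --- exactly the paper's remark that faces have multiplicity one ``by virtue of $G$ being embedded.'' The incidence bookkeeping you flag as the main obstacle is precisely what Lemma~\ref{lem:vertexincidence} supplies.
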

\begin{remark}
	Isolated edges are connected components of a Voronoi edge 
	which are incident to a single Voronoi region. 
	Since they do not affect the construction of the primal Voronoi graph, 
%	We show that removing 
%	Since they do not modify the incidence relations of the diagram, 
	they can be safely discarded, as shown in section~\ref{sec:propedges}.
\end{remark}

Perhaps the most fundamental property of the diagrams that we use in the proofs is that 
	every dual face has an ``empty" circumscribing \emph{convex} ball. 
This empty circum-ball (ECB) property
	is analogous to the empty circumcircle property of ordinary Voronoi diagrams:
\begin{proposition}[Empty Circum-Ball property]\label{prop:ECB}
	For every dual face with vertices $s_{i_1},\dots,s_{i_k}$ there is a convex ball %$B(c_{i_1,\dots,i_k}; \rho)$
		that circumscribes $s_{i_1},\dots,s_{i_k}$ and contains no site in its interior. 
\end{proposition}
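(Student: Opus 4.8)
The plan is to read the circumscribing ball directly off the Voronoi vertex dual to the given face. By the primal/dual construction of Section~\ref{sec:planar} (definition~\ref{def:incidence}, theorems~\ref{th:simpleplanar} and~\ref{th:main}), every face of the dual triangulation corresponds to a Voronoi vertex $\Vor_I$, and the vertices of that face are exactly the sites $\{s_i : i\in I\}$. So, given a dual face with vertices $s_{i_1},\dots,s_{i_k}$, I would set $I=\{i_1,\dots,i_k\}$ and pick any point $p\in\Vor_I$; by definition~\ref{def:VorI} this set is nonempty, and by assumption~\ref{ass:EGA} it consists of isolated points.

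Next I would unwind the definition of $\Vor_I$ at $p$. Since $p\in\Vor^2_{i_j}$ for every $j$, the values $\D{s_{i_1}}{p},\dots,\D{s_{i_k}}{p}$ all coincide; call their common value $\rho$, and note $\rho=\D{s_{i_j}}{p}\le\D{s}{p}$ for every $s\in\Sites$, so $\rho$ is the minimum of $\D{\cdot}{p}$ over sites. For $k\notin I$ we have $p\notin\Vor^2_k$, so $\D{s_k}{p}$ cannot be this minimum, forcing $\D{s_k}{p}>\rho$. Also $\rho>0$: by lemma~\ref{lem:site}, $p$ (a Voronoi vertex) is not a site, and since $\D{\cdot}{p}$ is strictly convex and attains the value $0$ at $p$, it is positive everywhere else, in particular at each $s_{i_j}$.

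I would then take the desired ball to be the first-kind ball $B(p;\rho)=\{v:\D{v}{p}\le\rho\}$ centered at $p$ --- in the paper's notation, $\theta_{s_{i_1}}(p)$ --- which is convex because all first-kind balls are. Each face vertex $s_{i_j}$ has $\D{s_{i_j}}{p}=\rho$ and so lies on $B(p;\rho)$ (on its boundary, since strict convexity of $\D{\cdot}{p}$ forces this function to exceed $\rho$ just beyond $s_{i_j}$ along the ray from $p$); thus $B(p;\rho)$ circumscribes $s_{i_1},\dots,s_{i_k}$. Each other site $s_k$, $k\notin I$, has $\D{s_k}{p}>\rho$ and hence lies outside the closed ball $B(p;\rho)$, in particular not in its interior. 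Should the convex dual faces be triangulated afterwards, any resulting triangle has its three vertices among $s_{i_1},\dots,s_{i_k}$, so the very same ball $B(p;\rho)$ is an empty circum-ball for it.

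There is no substantial analytic obstacle: the statement is essentially a restatement of the definitions of a second-kind Voronoi region and a first-kind ball. The only point requiring care --- and the place where the earlier machinery is genuinely used --- is the bookkeeping that identifies the vertex set of a dual face with the index set $I$ of its associated Voronoi vertex, which rests on the explicit construction of the primal graph and its dual in Section~\ref{sec:planar}.
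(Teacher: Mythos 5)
Your proposal is correct and is essentially identical to the paper's own argument: both take any point $c$ of the Voronoi element $\Vor_{\{i_1,\dots,i_k\}}$ dual to the face and use the first-kind ball $B(c;\D{s_{i_1}}{c})$, whose convexity and emptiness follow directly from the definitions of first-kind balls and second-kind Voronoi regions. Your extra remarks (positivity of the radius, reuse of the same witness ball after fan-triangulating the face) match what the paper states in Section~\ref{sec:dual}.
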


Indeed, since to every dual face $f$ with vertices $s_{i_1},\dots,s_{i_k}$ ($k\ge 3$) corresponds 
	a Voronoi element $\Vor_{\{i_1,\dots,i_k\}}$, 
	any point $c\in\Vor_{\{i_1,\dots,i_k\}}$ serves as center of an empty circumscribing ball of $f$. 
To see that this ball must be ``empty", note that 
no site $s'$ %\not\in\{s_{i_1},\dots,s_{i_k}\}$ 
	can be strictly inside the circumscribing ball (certainly not $s_{i_1},\dots,s_{i_k}$, since they are in the boundary), 
	or $c$ would be closer to $s'$ than to $s_{i_1},\dots,s_{i_k}$, 
	and therefore it would not be $c\in\Vor_{\{i_1,\dots,i_k\}}$. 

Notice that, although we consider Voronoi diagrams of the second kind, 
	it is the convexity of balls of the \emph{first kind} that establishes the ECB condition. 
The ECB property is, in general, not satisfied by Delaunay triangulations of the first kind. 

\begin{figure}[htbp]
\begin{center}
\includegraphics[width=2.5in]{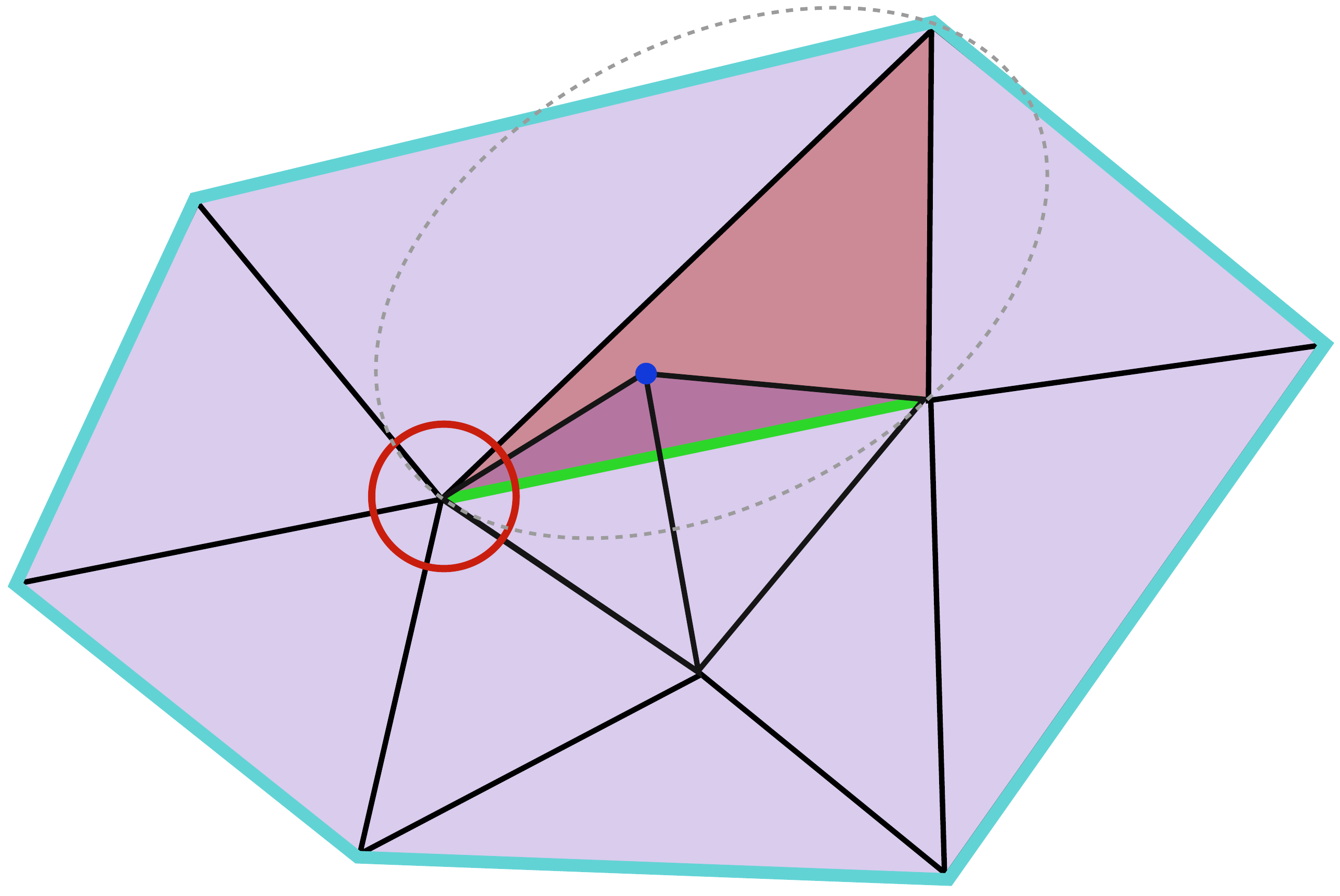}
\caption{
	We prove that the Delaunay triangulation is embedded (theorem~\ref{th:main}) 
	by showing that its boundary is simple and convex (corollary~\ref{cor:boundary}), 
		and its interior is a ``flat sheet": it has no edge fold-overs (green edge) (lemma~\ref{lem:ef}).  
	We use a discrete version of the Poincar\'e-Hopf index theorem (lemma~\ref{lem:ph}) to prove that  
		an edge fold-over would create a ``wrinkle" (circled) somewhere in the triangulation (lemma~\ref{lem:index-1}), 
		which in turn would force some vertex (blue) to ``invade" a face (red) (lemma~\ref{lem:non-negative}), 
		breaking the face's empty circum-ball (grey, dotted) condition (proposition~\ref{prop:ECB}).}
\label{fig:outline}
\end{center}
\end{figure}

After establishing that a Voronoi diagram can be associated with an embedded planar primal graph 
	which can be dualized into a planar dual graph (section~\ref{sec:planar}), 
	the rest of the paper is concerned with the proof of our main claim (theorem~\ref{th:main}), 
	whose structure is outlined in figure~\ref{fig:outline}. 
The proof of embeddability of the straight-edge dual is divided in two parts. 
In the first part (section~\ref{sec:boundary}), we use the bounded anisotropy assumption (assumption~\ref{ass:BAA})
	to show that the 
	\empty{topological} boundary of the straight-edge dual Delaunay triangulation 
	(the set of edges shared by only one face) coincides with the boundary of the convex hull of the sites, 
	and therefore is a simple, closed polygonal chain, a fact necessary for the second part of the proof to proceed. 
	Section~\ref{sec:boundary} is the more technical part of the proof; at its heart it is an application of Brouwer's fixed point theorem. 
In section~\ref{sec:interior}, we use the theory of discrete one-forms~\cite{1form} 
	to show that the Delaunay triangulation has no fold-overs (is a ``flat sheet") 
	and is therefore a single-cover of the convex hull of $\Sites$. 
Note that these two results, along with the ECB property, mirror similar properties of ordinary Delaunay triangulations.

The above results can be particularized to a number of existing divergences and metrics. 
We briefly discuss next a few of them, as well as simple conditions for assumption~\ref{ass:BAA} to hold
	for some of them (with proofs in Appendix A). % for which our results apply. 

\subsection{Bregman divergences}\label{sec:DF}
Given a strictly convex, everywhere differentiable function $F:\mathbb{R}^2\rightarrow\mathbb{R}$, 
	the Bregman divergence
\begin{equation}\label{eq:defDF}
 D_F(p \parallel q) \equiv F(p) - F(q) - \langle p-q, \nabla F(q)\rangle 
\end{equation}
	is the (non-negative) difference between $F(p)$ and the first-order Taylor approximation of $F(p)$ around $q$
	(the first order Lagrange remainder). 
Bregman divergences are widely used in statistics and 
	include the Kullback-Leibler divergence. %~\cite{Bregman1967,DBLP:journals/jmlr/BanerjeeMDG05}. 
By the (strict) convexity of $F$, and the definition of $D_F$ it 
	it is clear that, whenever $F$ is twice continuously differentiable, 
	$D_F$ is (strictly) convex in the first argument and continuously differentiable in the second. 
%Since $D_F(p\parallel q)$ is strictly convex in $p$ and continuous in $q$, we can apply the result of this work to 
%%and therefore satisfies the conditions of our work, 
%	conclude that, %we show that, 
%	so long as the primal has connected regions, 
%%Therefore it was not known whether Bregman Delaunay triangulations of the second kind are embedded. 
%%Our results confirm that, so long as the dual can be constructed (the Voronoi diagram is orphan-free), 
%	Bregman Delaunay triangulations of the second kind are always embedded. 
%Thus, even though Bregman Delaunay triangulations of the second kind are embedded \emph{in the gradient space}, 
%	it was not known whether they are embedded in the original 
%Clearly, $D_F(p\parallel q)$ is strictly convex in $p$, and continuous in $q$, and therefore all our results apply to Bregman divergences. %satisfies the conditions above. 

%As described in~\cite{Bregman}, 
From the definition of $D_F$, it is clear that Bregman Voronoi diagrams of the first kind are composed of regions 
	\[ \Vor^1_{F,i} %= \{ p\in\mathbb{R}^2 : D_F(p \parallel s_i) \le D_F(p \parallel s), \forall s\in \Sites \} 
			    = \{ p\in\mathbb{R}^2 : \langle p, \nabla F(s) - \nabla F(s_i)\rangle \le 
			    						F(s_i)-\langle s_i, \nabla F(s_i)\rangle + 
									F(s) - \langle s, \nabla F(s)\rangle , \forall s\in \Sites \}, \]
which are intersections of half-spaces of the form $\{p\in\mathbb{R}^2 : \langle p, a\rangle \le b\}$.
%Since every region $\Vor^1_{F,i}$ is the intersection of half-spaces of the form $\{p\in\mathbb{R}^2 : \langle p, a\rangle \le b\}$, 
Furthermore, 
	Bregman Voronoi diagrams of the first kind are simply power diagrams~\cite{Bregman}, 
	and thus their dual Delaunay triangulations of the first kind are always embedded~\cite{powerdiag,DMG}. 
%	which are known to be always orphan-free (convex cells) and their duals are embedded~\cite{powerdiag}. 

On the other hand, Bregman diagrams of the \emph{second} kind can be shown to be affine diagrams 
	only in the dual (gradient) space~\cite{Bregman}. 
In the original space, the cells $\Vor^2_{F,i}$ are not simple intersections of half-spaces and, in general, they have curved boundaries. 
Prior to this work, no guarantees concerning Bregman Delaunay triangulations of the second kind were available. 
%As pointed out in~\cite{Bregman}, Bregman diagrams of the second kind are 
%	affine diagrams in a dual (gradient) space, but their cells have curved boundaries in the original space, 
%	and it was not known whether Bregman triangulations of the second kind are embedded. 
%
%However, noting that 

\begin{lemma}[Bounded anisotropy for Bregman divergences]\label{lem:DFgamma}
If $F\in\mathcal{C}^2$ and there is $\gamma > 0$ such that the Hessian of $F$ has ratio of eigenvalues bounded by $\lambda_{\text{min}}/\lambda_{\text{max}}\ge \gamma$,
	then assumption~\ref{ass:BAA} holds. 
\end{lemma}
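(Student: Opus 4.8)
The plan is to reduce Assumption~\ref{ass:BAA} to the single statement that $\|\nabla F\|$ blows up at infinity, which is where the eigenvalue hypothesis does all the work.

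First I would unfold the definitions. Fix $p,q$ with supporting line $L_{pq}$, fix $r\notin L_{pq}$, set $\beta=\mathrm{dist}(r,L_{pq})>0$, and let $u$ be the unit normal of $L_{pq}$ on the side of $r$. For an admissible $c$ (on that side, with closest point $m$ on $L_{pq}$ lying in $\overline{pq}$), expanding~\eqref{eq:defDF} yields the identity
\[
D_F(r\parallel c)-D_F(m\parallel c)=D_F(r\parallel m)-\bigl\langle r-m,\ \nabla F(c)-\nabla F(m)\bigr\rangle ,
\]
so the assertion $D_F(r\parallel c)<D_F(m\parallel c)$ is equivalent to $\langle r-m,\ \nabla F(c)-\nabla F(m)\rangle>D_F(r\parallel m)$. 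The right-hand side is benign: as $m$ ranges over the compact segment $\overline{pq}$, continuity of $D_F$ bounds it by a constant $K=K(p,q,r)$ independent of $c$.

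Next I would simplify the left-hand side using that $m$ is the point of $L_{pq}$ closest to $c$ in the natural metric induced by $D$. The first-order optimality condition is precisely $\langle\tau,\ \nabla F(m)-\nabla F(c)\rangle=0$ for the tangent direction $\tau$ of $L_{pq}$, so $g(c):=\nabla F(c)-\nabla F(m)$ is orthogonal to $L_{pq}$; and since $D_F(\,\cdot\parallel c)$ strictly decreases along the segment from $m$ to $c$, $g(c)$ points to the side of $c$, i.e.\ $g(c)=\lambda(c)\,u$ with $\lambda(c)>0$. Using $m\in L_{pq}$ and $\mathrm{dist}(r,L_{pq})=\beta$ this collapses the inner product:
\[
\bigl\langle r-m,\ g(c)\bigr\rangle=\lambda(c)\,\langle r-m,\ u\rangle=\lambda(c)\,\beta .
\]
(Should ``closest point'' be read in the plain Euclidean sense, $g(c)$ is only nearly parallel to $u$; the eigenvalue bound — applied now to the averaged Hessian $\int_0^1\nabla^2F(m+t(c-m))\,dt$, which also has eigenvalue ratio $\ge\gamma$ — is then what is needed to control the angular defect.) Since $m$ stays in the bounded set $\overline{pq}$, $\|\nabla F(m)\|$ is bounded there, so $\lambda(c)\ge\|\nabla F(c)\|-O(1)$, and everything reduces to showing $\|\nabla F(x)\|\to\infty$ as $\|x\|\to\infty$: then, for $\|c\|$ large enough, $\lambda(c)\beta>K$ and we are done, with $\mu=\mu(p,q,r)$ chosen accordingly.

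I expect this last reduction to be the only genuinely delicate step. Along any ray $\tau\mapsto a+\tau w$ (unit $w$) the restriction $\phi(\tau)=F(a+\tau w)$ is strictly convex with $\phi''(\tau)=w^{\top}\nabla^2F(a+\tau w)\,w\ge\lambda_{\min}\ge\gamma\lambda_{\max}$, so $\phi'(\tau)\to+\infty$ provided $\lambda_{\max}$ fails to be integrable along the ray — and the ratio bound is exactly what prevents $\lambda_{\max}$ from decaying that fast. This is transparent in the radial model $F=G(\|x\|)$, where the Hessian eigenvalues are $G''$ and $G'/r$ and the constraint $rG''(r)\ge\gamma G'(r)$ forces $G'(r)\gtrsim r^{\gamma}$, hence $\lambda_{\max}\ge G'(r)/r\gtrsim r^{\gamma-1}$, non-integrable at infinity. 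Promoting this from the radial model to a general $C^2$ strictly convex $F$ with globally $\gamma$-bounded Hessian ratio is the main obstacle; once $\|\nabla F\|\to\infty$ is established, feeding it back through the reduction above completes the proof.
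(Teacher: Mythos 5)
Your reduction is sound and in fact mirrors the first half of the paper's own argument: the identity $D_F(r\parallel c)-D_F(m\parallel c)=D_F(r\parallel m)-\langle r-m,\nabla F(c)-\nabla F(m)\rangle$, the observation that optimality of $m$ forces $\nabla F(c)-\nabla F(m)$ to be orthogonal to $L_{pq}$ (the paper phrases this as tangency of the ball $B(c;\rho)$ to $L_{pq}$ at $m$, so the divergence reading of ``closest point'' is the intended one and your parenthetical fallback is not needed), and the compactness bound on $D_F(r\parallel m)$ over $m\in\overline{pq}$ are all correct and match what the paper does in coordinates.

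The genuine gap is in the second half. You reduce everything to the claim that $\|\nabla F(x)\|\to\infty$ as $\|x\|\to\infty$ under the eigenvalue-ratio hypothesis, verify it only in the radial model $F=G(\|x\|)$, and then explicitly concede that ``promoting this \ldots to a general $C^2$ strictly convex $F$ \ldots is the main obstacle.'' That obstacle is the entire content of the lemma in your formulation, and it is not a routine promotion: ray-by-ray, strict convexity alone gives $\phi''>0$ but no quantitative lower bound, and the pointwise ratio $\lambda_{\min}(x)/\lambda_{\max}(x)\ge\gamma$ does not by itself control the decay of $\lambda_{\max}$ along a single line (your radial computation uses the special coupling $\lambda_{\max}=G'/r$, which has no analogue for general $F$). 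You also need the divergence to be uniform over directions of $c$, which a per-ray argument does not immediately give. The paper avoids this global statement entirely: it lower-bounds $\|\nabla F(c)-\nabla F(m)\|\ge\lambda_{\text{min}}\|c-m\|$ by integrating the Hessian along the segment $\overline{mc}$ and then divides by $\lambda_{\text{max}}$ using the ratio bound, so the quantity that blows up is $\|c-m\|$ itself rather than $\|\nabla F(c)\|$. (That step does tacitly treat $\lambda_{\text{min}}$ and $\lambda_{\text{max}}$ as globally comparable constants, but it is the intended closing move.) To complete your proof you must either prove the $\|\nabla F\|\to\infty$ claim for general pinched-Hessian convex $F$, or replace that detour with the direct segment-integral estimate.
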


\subsection{Quadratic divergences}\label{sec:DQ}

As is well known, the approximation efficiency of a 
	piecewise-linear function supported on a triangulation can be greatly improved by 
	adapting the shape and orientation of its elements to the target function~\cite{triangle,DAzevedo,DBLP:conf/imr/CanasG06}. 
An effective way to construct such anisotropic triangulations is to dualize a Voronoi diagram 
	derived from an anisotropic divergence~\cite{LS,DW}. 

By considering a $\mathcal{C}^1$ metric 
	(in coordinates: a function $Q:\mathbb{R}^2\rightarrow\mathbb{R}^{2\times 2}$ that is symmetric, positive definite), 
	we define the quadratic divergence as:
\begin{equation}\label{eq:defDQ} 
	D_Q(p \parallel q) \equiv \left[ (p-q)^t Q(q) (p-q) \right]^{1/2}, 
\end{equation}
which is clearly strictly convex in the first argument and continuously differentiable in the second. 
Voronoi diagrams and Delaunay triangulations with respect to $D_Q$, of the first and seconds kinds, have been considered in the literature. 
The diagram and the dual triangulation of the \emph{first kind} were proposed by Labelle and Shewchuk~\cite{LS}, 
	while those of the second kind were discussed by Du and Wang~\cite{DW}. 
While the work of Du and Wang does not provide theoretical guarantees, 
	that of Labelle and Shewchuk provides an algorithm that is guaranteed to output \emph{a} set of sites for which the 
	Voronoi diagram of the \emph{first kind} is orphan-free, and whose corresponding Delaunay triangulation is embedded. 
%Note, however, that this is not a general property of the diagrams, but of the output of a specific algorithm. 
%	and thus, prior to this work, there were no general conditions for Delaunay triangulations, of either kind, to be embedded. 
%The work of~\cite{avd} shows that certain $\varepsilon$-nets are guaranteed to produce orphan-free Voronoi diagrams, 
%	of the first and second kinds. 

%Since $D_Q$ is clearly strictly convex in the first argument and continuously differentiable in the second, 
%	the results of this paper imply that, whenever a Voronoi diagram of the second kind 
%	with respect to $D_Q$ can be dualized (it is orphan-free), 
%	its dual Delaunay triangulations of the second kind is always embedded. 
%	, with respect to $D_Q$, 
%	are always embedded whenever the corresponding Voronoi diagram (of the second kind) can be dualized (it is orphan-free). 
%In this sense Delaunay triangulations of the second kind are always embedded, 
%	since the presence of orphans would preclude the construction of the dual. 

%
%consider a continuous metric (in coordinates: $Q:\mathbb{R}^2\rightarrow\mathbb{R}^{2\times 2}$, 
%symmetric, positive definite) 
%over two-dimensional Euclidean space and,
%By defining a continuous symmetric positive definite 
%
%Apart from the computational expense of computing distance in 
%adapting the shape and orientation of elements in a triangulation can have a very large effect 
%	in the approximation efficiency 

\begin{lemma}[Bounded anisotropy for quadratic divergences]\label{lem:DQgamma}
If there is $\gamma > 0$ such that $Q$ has ratio of eigenvalues bounded by $\lambda_{\text{min}}/\lambda_{\text{max}}\ge \gamma$,
	then assumption~\ref{ass:BAA} holds. 
\end{lemma}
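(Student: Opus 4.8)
The plan is to reduce Assumption~\ref{ass:BAA} for $D_Q$ to an elementary planar Euclidean fact by means of a point-dependent linear change of coordinates. Fix $p,q$ and a point $r\notin L_{pq}$, and choose coordinates so that $L_{pq}$ is the $x$-axis and $r$ (hence every admissible $c$) lies in the open upper half-plane. For a candidate $c$, since $Q(c)$ is symmetric positive definite I would pick a linear map $\Phi_c$ with $\Phi_c^{\top}\Phi_c=Q(c)$; then $D_Q(x\parallel c)=\|\Phi_c(x-c)\|$ in the ordinary Euclidean norm, so in $\Phi_c$-coordinates $D_Q(\cdot\parallel c)$ is just Euclidean distance to $\Phi_c c$. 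Under $\Phi_c$ the line $L_{pq}$ becomes a line $\tilde L$, the segment $\overline{pq}$ a sub-segment of $\tilde L$, and $m$ --- the point of $L_{pq}$ closest to $c$ in the divergence, i.e.\ the $Q(c)$-orthogonal projection of $c$ onto $L_{pq}$ --- becomes the ordinary perpendicular foot of $\Phi_c c$ on $\tilde L$. Writing $\hat c=\Phi_c c$, $\hat m=\Phi_c m$, $\hat r=\Phi_c r$, and setting $h_c=\operatorname{dist}_{Q(c)}(c,L_{pq})$, $h_r=\operatorname{dist}_{Q(c)}(r,L_{pq})>0$, and $\delta$ the Euclidean distance between $\hat m$ and the $\Phi_c$-image of the $Q(c)$-foot of $r$, the one-line expansion $\|\hat c-\hat m\|^2-\|\hat c-\hat r\|^2=2h_ch_r-h_r^2-\delta^2$ in $\Phi_c$-coordinates shows that the required inequality $D_Q(r\parallel c)<D_Q(m\parallel c)$ is equivalent to
\[
h_c\;>\;\frac{h_r}{2}+\frac{\delta^2}{2h_r}.
\]

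Next I would estimate these three quantities using only the ratio bound $\gamma$. Since both sides of the inequality, as well as the projection $m$, scale consistently when $Q(c)$ is multiplied by a positive scalar, for the purpose of these estimates one may assume $\lambda_{\max}(Q(c))=1$, so that all eigenvalues of $Q(c)$ lie in $[\gamma,1]$ and $\sqrt{\gamma}\,\|u\|\le\|\Phi_c u\|\le\|u\|$ for all $u$. This gives at once $\sqrt{\gamma}\,|r_2|\le h_r\le|r_2|$ and $h_c\ge\sqrt{\gamma}\operatorname{dist}_E(c,L_{pq})=\sqrt\gamma\,c_2$. Writing out the normal equation for the $Q(c)$-projection shows the Euclidean abscissa of $m$ differs from that of $c$ by at most a $\gamma$-dependent multiple of $c_2$ (because $|Q_{12}|/Q_{11}$ is bounded in terms of $\gamma$ for a matrix with eigenvalues in $[\gamma,1]$); since $m$ lies in the bounded segment $\overline{pq}$, this forces $\|c\|\le C_0+C_1 c_2$ with $C_0,C_1$ depending only on $p,q,\gamma$, hence $c_2\to\infty$ as $\|c\|\to\infty$. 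The same normal-equation bound applied to $r$ shows its $Q(c)$-foot stays within a fixed Euclidean distance of $r$, so $\delta$ is bounded by a constant $M=M(p,q,r,\gamma)$.

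Combining these, the right-hand side $\tfrac{h_r}{2}+\tfrac{\delta^2}{2h_r}$ is at most a constant $K=K(p,q,r,\gamma)$ (using $h_r\ge\sqrt\gamma|r_2|$ to bound $\delta^2/(2h_r)$), whereas $h_c\ge\sqrt\gamma\,c_2\to\infty$; choosing $\mu$ so that $\|c\|>\mu$ forces $c_2>K/\sqrt\gamma$ then yields the claim, and this $\mu$ depends only on $p,q,r,\gamma$, not on $c$, as Assumption~\ref{ass:BAA} requires.

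The hard part is the middle step. The metric $Q(\cdot)$ varies in space and its eigenvalues may degenerate to $0$ or blow up at infinity, so it is \emph{not} a priori true that ``$c$ large in Euclidean norm'' implies ``$c$ far from $L_{pq}$ in the $Q(c)$-metric'', which is exactly what the displayed inequality needs. The resolution is the interplay exploited above: pointwise rescaling removes the scale ambiguity; the hypothesis that $c$'s projection onto $L_{pq}$ lands in the bounded segment $\overline{pq}$ prevents $c$ from running off to infinity nearly parallel to $L_{pq}$; and the bound $\gamma>0$ on the eigenvalue ratio controls the residual distortion of $\Phi_c$ uniformly, so that every factor of $\lambda_{\max}(Q(c))$ cancels between $h_c$ and the error terms. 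Note that $\gamma>0$ enters only through these uniform bounds; the $\mathcal C^1$ regularity of $Q$ is not used in the argument.
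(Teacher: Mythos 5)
Your proof is correct, and it takes a genuinely different route from the paper's. The paper disposes of this lemma in two lines by setting, \emph{for each fixed} $c$, $F(\cdot)\equiv(\cdot-c)^tQ(c)(\cdot-c)/2$ so that the Hessian of $F$ is $Q(c)$, and then invoking the computation in the proof of lemma~\ref{lem:DFgamma} (a Taylor/integration argument along the axes with $\lambda_{\min}$ and $\lambda_{\max}$ bounding the second-order terms). You instead exploit the exact quadratic structure: factoring $Q(c)=\Phi_c^{\top}\Phi_c$ turns $\D{\cdot}{c}$ into a Euclidean distance in $\Phi_c$-coordinates, so the target inequality becomes the exact identity $\|\hat c-\hat m\|^2-\|\hat c-\hat r\|^2=2h_ch_r-h_r^2-\delta^2$ with no remainder term, and the eigenvalue-ratio hypothesis enters only through uniform bounds on the distortion of $\Phi_c$ (the singular-value sandwich $\sqrt{\gamma}\,\|u\|\le\|\Phi_c u\|\le\|u\|$ after normalizing $\lambda_{\max}=1$, and the bound $|Q_{12}|/Q_{11}\le 1/\sqrt{\gamma}$ controlling the skew of the $Q(c)$-projection). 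Your version buys two things the paper leaves implicit: the scale-invariance observation that legitimizes normalizing $\lambda_{\max}(Q(c))=1$, and an explicit verification that the threshold $\mu$ can be chosen uniformly over all admissible $c$ (using that $m\in\overline{pq}$ prevents $c$ from escaping to infinity nearly parallel to $L_{pq}$) --- a uniformity the paper's bound $\|c\|\ge\max\{\|p\|,\|q\|\}+\gamma^{-1}(\delta_x^2/2+\delta_y^2/2+\delta_x\delta_y)/\delta_x$ only delivers after one notes that $\delta_x,\delta_y$ depend on $c$ through $m$ but are uniformly bounded since $m$ ranges over a compact segment. The paper's reduction is shorter but inherits whatever care is needed in the Bregman computation; yours is self-contained, exact, and specific to the quadratic case.
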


Note that the above condition on the bounded anisotropy of $Q$ may commonly hold in practice, 
for instance if the metric is sampled on a compact domain and continuously extended to the plane by reusing sampled values only. 

In the case of quadratic divergences, there already exists sufficient conditions to generate orphan-free Voronoi diagrams. 
In particular, it has been shown that if $\sigma$ is a bound on a certain measure of variation of $Q$, 
	then any (asymmetric) $\epsilon$-net with respect to $D_Q$ that satisfies $\epsilon\sigma \le 0.098$ 
	(corresponding to a roughly $10\%$ variation of eigenvalues between Voronoi-adjacent sites)
	is guaranteed to be orphan-free~\cite{avd}.

\subsection{Normed spaces}\label{sec:Lp}

Our results also cover all normed spaces with a continuously differentiable, strictly convex norm, 
	including the $L_p$ spaces, 
	but excluding the cases $p=1$ and $p=\infty$. 
%	which have non-strictly convex balls
%	(and do not satisfy the bounded anisotropy condition). 
%In particular:
% the bounded anisotropy condition of assumption~\ref{ass:BAA}  is satisfied whenever $1<p<\infty$. 
\begin{lemma}[Bounded anisotropy for normed spaces]%$L_p$ metrics]
	\label{lem:Lpgamma}
	Distances derived from strictly convex $\mathcal{C}^1$ norms satisfy assumption~\ref{ass:BAA}.
%	The $L_p$ distances, with $1<p<\infty$, satisfy assumption~\ref{ass:BAA}. 
\end{lemma}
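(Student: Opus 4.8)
The plan is to reduce Assumption~\ref{ass:BAA} to the geometric fact that, for a $\mathcal{C}^1$ strictly convex norm, a very large first-kind ball tangent to $L_{pq}$ approximates, near the point of tangency, the half-plane bounded by $L_{pq}$ on the $r$-side, so it must eventually engulf the fixed point $r$. Fix coordinates with $L_{pq}$ the $x$-axis, $\overline{pq}=[a_1,a_2]\times\{0\}$, and $r=(r_x,r_y)$ in the open upper half-plane; here $\D{x}{c}=\|x-c\|$ for a strictly convex $\mathcal{C}^1$ norm $\|\cdot\|$, and (reading ``closest point'' in the natural, i.e.\ norm, metric) $m$ is the unique point of $L_{pq}$ nearest $c$ in $\|\cdot\|$. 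Then $\theta_m(c)=B(c;\D{m}{c})$ touches $L_{pq}$ only at $m$ (strict convexity), and since $\|\cdot\|$ is $\mathcal{C}^1$ the boundary of $\theta_m(c)$ has a unique tangent line there, which is therefore $L_{pq}$ itself; in particular $\theta_m(c)$ lies in the closed upper half-plane. Since $\D{r}{c}<\D{m}{c}$ is equivalent to $r\in\operatorname{int}\theta_m(c)$, it suffices to exhibit $\mu$ so that this inclusion holds for all admissible $c$ with $\|c\|>\mu$.

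Next I would parametrize $\theta_m(c)$ by the unit ball $U=\{v:\|v\|\le 1\}$. Let $v^{\ast}=(v^{\ast}_x,-h)$ be the lowest point of $U$ (unique by strict convexity, with $h>0$ depending only on $\|\cdot\|$ and the direction of $L_{pq}$); by $\mathcal{C}^1$-ness the supporting line of $U$ at $v^{\ast}$ is horizontal, forcing the supporting functional $\nabla\|\cdot\|(v^{\ast})=(0,-1/h)$. Writing $\rho=\D{m}{c}=\|m-c\|$, so $\theta_m(c)=c+\rho\,U$, tangency to the $x$-axis at $m$ from above forces $(m-c)/\rho=v^{\ast}$, hence $c=m-\rho v^{\ast}$ and $\rho=c_y/h$; as $m\in\overline{pq}$ stays bounded, $\|c\|$ and $\rho$ then differ by a bounded amount, so $\|c\|\to\infty$ over admissible centers is the same as $\rho\to\infty$. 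Now $r-c=\rho v^{\ast}+w$ with $w=(r_x-m_x,\,r_y)$, which ranges over the compact set $[r_x-a_2,r_x-a_1]\times\{r_y\}$ as $m$ varies, and the target inclusion becomes $\|\rho v^{\ast}+w\|<\rho$, i.e.\ $\|v^{\ast}+w/\rho\|<1$.

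Finally I would expand the norm at $v^{\ast}\neq 0$, where it is differentiable:
\[
\|v^{\ast}+w/\rho\|=1+\frac1\rho\big\langle\nabla\|\cdot\|(v^{\ast}),\,w\big\rangle+o(1/\rho)=1-\frac{r_y}{h\rho}+o(1/\rho),
\]
where $\big\langle(0,-1/h),(r_x-m_x,r_y)\big\rangle=-r_y/h$ is \emph{strictly} negative precisely because $r$ lies off $L_{pq}$, on its upper side. Hence $\|v^{\ast}+w/\rho\|<1$ once $\rho$ (equivalently $\|c\|$) is large, and the threshold can be taken uniform over admissible $c$: the remainder is controlled uniformly using continuity of $\nabla\|\cdot\|$ on a compact neighborhood of $v^{\ast}$ and boundedness of $w$, and, crucially, the leading coefficient $-r_y/h$ does not depend on $w$ (hence not on $c$) at all. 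Converting the threshold on $\rho$ into a bound $\mu$ on $\|c\|$ via $c=m-\rho v^{\ast}$ completes the argument.

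The step I expect to be the main obstacle is the first reduction: ensuring that near $m$ the ball $\theta_m(c)$ approximates exactly the half-plane $\{y\ge 0\}$ rather than some tilted half-plane that would leave out points arbitrarily close to $L_{pq}$. This is precisely where both hypotheses are essential---strict convexity pins the contact with $L_{pq}$ to the single point $m$, while $\mathcal{C}^1$-smoothness makes $L_{pq}$ the genuine tangent line there and also licenses the first-order expansion of the norm---and it explains why $L_1$ and $L_\infty$, whose balls have flat facets and corners, are excluded. Granting this, the quantitative bound and the uniformity of $\mu$ follow routinely from compactness of $\overline{pq}$.
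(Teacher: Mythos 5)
Your proof is correct, and it rests on the same underlying geometric idea as the paper's: a large first-kind ball tangent to $L_{pq}$ at $m$ flattens out, near $m$, to the half-plane containing $r$, so it must eventually swallow $r$; strict convexity pins the contact with $L_{pq}$ to the single point $m$, and $\mathcal{C}^1$-smoothness makes $L_{pq}$ the genuine tangent line there. The execution, however, is genuinely different. The paper changes coordinates by a linear map adapted to the direction $\pi_{m,c}=(c-m)/\|c-m\|_K$, writes the sphere near the contact point as a graph $(f(y),y)$ with $f(0)=f'(0)=0$, argues by rescaling that the dilated curve eventually passes below $T(r)$, and then must maximize the resulting thresholds over the compact family of pairs $(m,\pi_{m,c})$. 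You instead exploit the homothety structure $\theta_m(c)=c+\rho U$ and observe that tangency from above forces the contact point to be the unique lowest point $v^{\ast}$ of the unit ball, so every admissible center has the form $c=m-\rho v^{\ast}$: the contact direction is \emph{fixed}, determined only by the norm and the direction of $L_{pq}$ (a fact the paper's proof does not notice, since its maximization over directions actually ranges over a singleton). This reduces the lemma to a single first-order expansion of the norm at $v^{\ast}$, whose strictly negative leading coefficient $-r_y/h$ is independent of $c$, with the only remaining compactness being the boundedness of $m\in\overline{pq}$ (to control the remainder uniformly and to convert the threshold on $\rho$ into one on $\|c\|$). Both arguments invoke the two hypotheses in exactly the same places, and your diagnosis of why $L_1$ and $L_\infty$ fail matches the paper's; your route is the more economical of the two.
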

%The proof is identical to that of lemma~\ref{lem:DFgamma} (in Appendix A), 
%	but replacing $D_F(\cdot\parallel c)$ %and $F$ 
%	with the $L_p$ distance $D_p(\cdot, c)$. 
%Note that the proof becomes considerably simpler because the term $\nabla F(c)$ becomes $\partial_x D_p(x,c)|_{x=c}=0$. 

\subsection{Csisz\'ar f-divergences}\label{sec:Df}

Given a convex real function $f$ with $f(1)=0$ and two measures $\rho,\mu$ over a probability space $\Omega$, 
	Csisz\'ar's f-divergence~\cite{CsiszarTutorial} is
\begin{equation}\label{eq:defDf}
	D_f(\rho\parallel \mu) \equiv \int_{\Omega}d\mu\, f\left(\frac{d\rho}{d\mu}\right) 
\end{equation}
where $\rho$ is absolutely continuous with respect to $\mu$, 
	and therefore has a Radon-Nikodym derivative $d\rho/d\mu$. 

If $f$ is strictly convex, then the f-divergence is strictly convex in the first argument and continuously differentiable in the second 
(in this case it is also jointly convex). 
For instance, the strictly convex function $f:x\mapsto \left(\sqrt x - 1\right)^2$ generates the Hellinger distance. 
%Aside from any normed space with strictly convex balls, 
%	such as the $L_p$ spaces with $1<p<\infty$, 
%	other divergences that satisfy the conditions of this work are Csiszar f-divergences~\cite{CsiszarTutorial}, 
%	such as the Hellinger  distance. 
F-divergences are functions of measures, and thus often in practice restricted to the probability simplex. 

\begin{remark}The limitation of our work to two dimensions implies that results for f-divergences 
	are limited to probability measures supported on just three atoms. 
Their applicability is thus somewhat limited, and  are only included for completeness. 
\end{remark}

\section{Primal Voronoi diagram and dual Delaunay triangulation}\label{sec:planar}
%|||
%** recall that not all sites are colinear. 

%** incidence relations and open/closeness 
%** all incidence relations can be established in the immediate vicinity of closed elements only, but we do not (explore?) prove or use this. 
%** conditions: EGC (check sub gradients); path-connected connected V_ij pieces. 
%** EGC => incidence relations 
%** S^2 => incidence relations extend to vertex at infinity.  
%	(from now on all is on S^2: unbounded edges incident to v_infty, 
%%		it will be shown that there are no two-sided unbounded since not all sites are colinear and two-sided unbounded (loop) 
%%		dualize into a cut edge, and )
%** def of Vor diagram: isolated V_ij ignored (part of containing V_i), 
%** Lem: V_ij connected, V_ij simple connected (V_I not connected)
%** V_i (as defined ignoring isolated V_ij) connected, simply connected => there's non-crossing path from v_i to any point in \partial V_i
%** Lem: V_ij cannot have 3+ incident vertices (main result)
%** V_ij never incident to only one vertex
%** => primal is planar, no multi-edges (later it will be shown with no loops (maybe?))
%	=> dual is planar, no multi-edges (if we could show no loops (no primal cut edges) => simple)

In this section we use the definition of Voronoi diagram (definition~\ref{def:VorI})
	to construct an embedded simple planar %\emph{primal} 
	graph whose incidence relations match 
	those of the Voronoi diagram. 
We then dualize this graph to obtain an embedded simple planar %\emph{dual} 
	graph 
	with vertices at the sites and curved edges. 
Section~\ref{sec:dual} will then show that the dual graph is also embedded when replacing curved edges by straight segments. 
Recall that we have assumed that not all sites are colinear (the colinear case is described in section~\ref{sec:summary}).

\subsection{Assumptions}\label{sec:assumptions}

%\noindent 
We begin by making the following two technical assumptions. 
%They first two are technical assumptions, % that we expect should be true in any practical application, 
%	while the third  imposes a (as will be shown somewhat mild) 
%	restriction on the cases in which the results of this paper apply.}

\vspace*{0.1in}\noindent{\bf Path-connectedness.}
Assume that all connected components of Voronoi elements are also path-connected. 
In fact, given the assumption below, as well as assumptions~\ref{ass:BAA} and~\ref{ass:EGA}, 
	we only need to further assume that connected components of Voronoi \emph{edges} are path-connected. 
%	to ensure that all connected components of Voronoi elements are path-connected. 
Indeed, Voronoi regions are open and Voronoi vertices will be shown to be composed of isolated points, 
	and therefore their connected components are automatically path-connected~\cite[p.\ 158]{munkres2000topology}.

\vspace*{0.1in}\noindent{\bf Boundaries of Voronoi regions.}
Further assume that the boundary of bounded, simply-connected Voronoi regions are simple, closed (Jordan) curves. 
For unbounded regions $U$, we assume that they can be first mapped 
	through a continuous transformation $T:U\rightarrow U'$
	onto a bounded set $U'$, for instance through an appropriate M{\"o}bius transformation. %from the plane to the punctured plane, 
	%with one discontinuity outside $U$, and another discontinuity on the boundary of $U'$ 
	%whose preimage is the point at infinity. 
%For unbounded regions, we assume that each has a simple, closed curve as boundary in the Riemann sphere. 
%	$T:\mathbb{R}^2\cup\{\infty\}\rightarrow\mathbb{R}^2\cup\{\infty\}$
%	(e.g.\ an appropriate M\"obius transformation) into a bounded region whose boundary is a simple closed boundary. 
%Since we show that (appropriately mapped into a bounded set) Voronoi regions are simply connected, 
Bounded simply-connected sets whose boundary is a Jordan curve 
	are those that are uniformly connected \emph{im kleinen}~\cite{Moore1918}\footnote{
	A space $M$ is uniformly connected \emph{im kleinen} if for every $\varepsilon>0$ there is $\delta_\varepsilon>0$ 
	such that for every pair of points $p,q\in M$ with $\|p-q\|_2<\delta_\varepsilon$
		there is a connected subset $V\subseteq M$ with $p,q\in V$ and $V\subseteq B_2(p;\varepsilon)$. 
	}.

% ** caratheodorys thm maps R -> D => non-crossing paths v_i to x\in\partial R

\subsection{Properties of Voronoi elements}\label{sec:properties}

Before constructing an appropriate primal graph from the connectivity relations of the Voronoi diagram, 
	we first establish some relevant properties of the diagram's elements. 

We say that Voronoi element $\Vor_I$ is incident to Voronoi element $\Vor_J$ 
	(denoted $\Vor_I\rightsquigarrow\Vor_J$)
	if their closures overlap  
	and $\overline{\Vor_I}\cap\overline{\Vor_J}\subseteq\overline{\Vor_J}$. 

%We say that two Voronoi elements are incident if their closure overlap 
%	and \textcolor{red}{their common intersection is contained in either element}.
From this incidence relation we build a primal Voronoi graph, whose dual is the Delaunay triangulation with respect to $D$. 
Since ``planar graphs, and graphs embeddable on the sphere are one and the same''~\cite[p.\ 247]{bondy2008graph}, 
	we consider incidence relations %between Voronoi elements 
	on the Riemann sphere (by stereographically projecting the plane onto $\mathbb{S}^2$), 
	where the added vertex at infinity is defined to be incident to unbounded elements on the plane. 
Geometric constructions will, however, typically be carried out on the plane for convenience. 
%Incidence is with respect to the standard topology. \textcolor{red}{||| incidence?}

%
%\subsection{Incidence relations between Voronoi elements}
%%\vspace*{0.1in}
%\noindent
%%{\bf Incidence relations}. 
%Since ``planar graphs, and graphs embeddable on the sphere are one and the same''~\cite[p.\ 247]{bondy2008graph}, 
%	we begin by stereographically projecting the plane to the unit sphere $\mathbb{S}^2$ 
%	and considering incidence relations between Voronoi elements in $\mathbb{S}^2$, 
%	where by incidence we mean incidence with respect to the standard topology. 
%Specifically, unbounded Voronoi elements on the plane are said to be incident to the vertex at infinity (the north pole of $\mathbb{S}^2$). 
%An unbounded edge incident to two Voronoi vertices 
%in $\mathbb{S}^2$, there is a vertex unbounded elements are said to be incident 

\subsubsection{Incident elements}\label{sec:incidence}

\begin{figure}[htbp]
   \centering
   	%\subfloat[]{
	\includegraphics[width=2.5in]{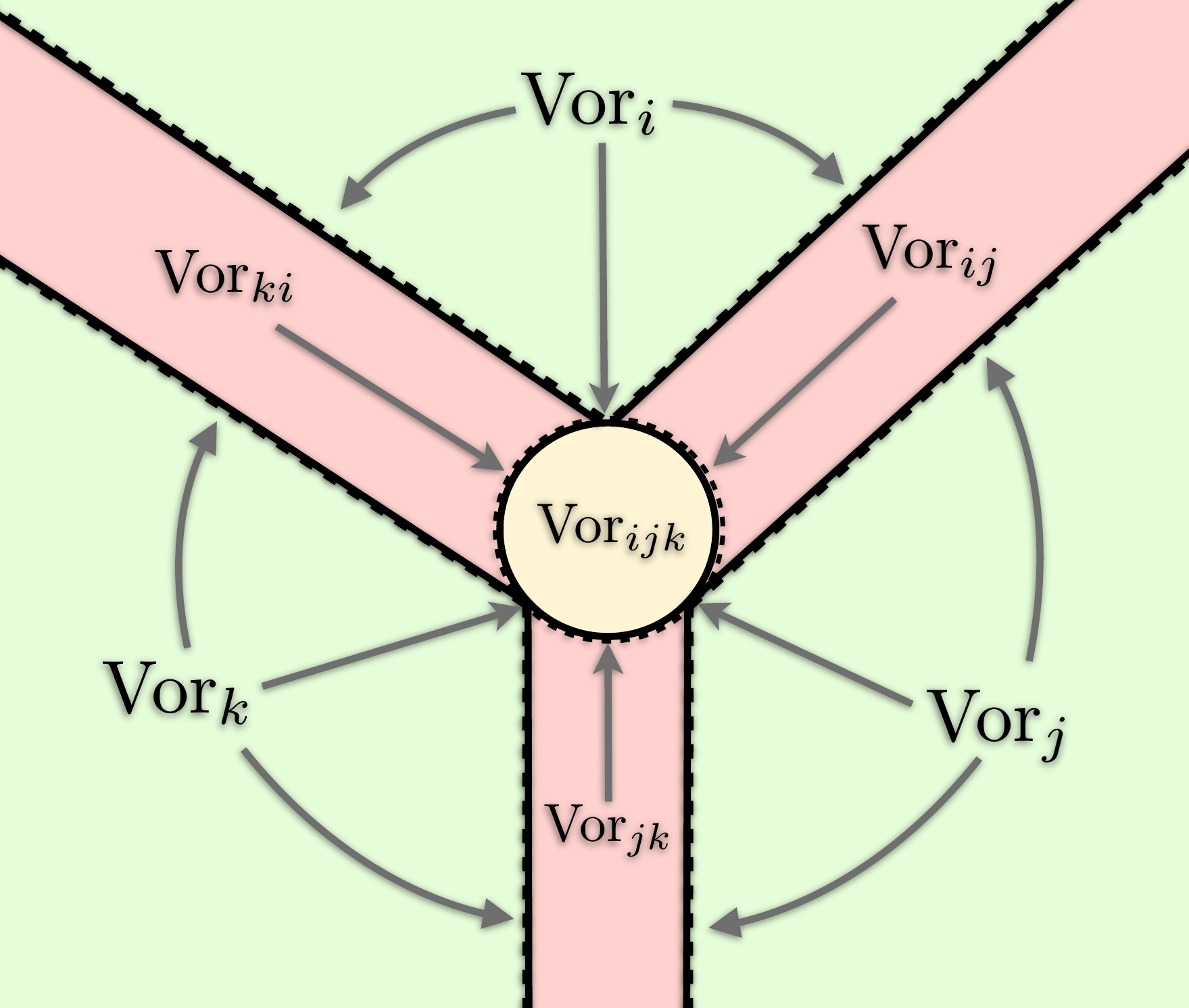}
	%}\quad\quad\quad
   \caption{A portion of a Voronoi diagram, with highlighted incidence relations between Voronoi elements. 
   		The incidence relation (definition~\ref{def:incidence}) forms a directed acyclic graph. }
   \label{fig:incidence}
\end{figure}

Consider the following definition of incidence between Voronoi regions (or between connected components of Voronoi regions):

\begin{definition}\label{def:incidence}
Given $I,J\subseteq S$, we say that $\Vor_I$ is incident to $\Vor_J$ (written $\Vor_I \rightsquigarrow \Vor_J$) iff 
	%the sets $\overline{\Vor_I}$ and $\overline{\Vor_J}$ overlap 
	$\overline{\Vor_I}\cap\overline{\Vor_J}\ne\phi$ 
	\emph{and} $I\subset J$. 
\end{definition}

\begin{remark}
By orphan-freedom, and lemma~\ref{lem:connectededges}, both Voronoi regions and edges are connected
	(except for isolated edges, which are defined in section~\ref{sec:propedges}).
For simplicity, in the sequel we refer to connected components of Voronoi vertices simply as ``Voronoi vertices",
	except for the statement of lemma~\ref{lem:vertexincidence}, which makes this distinction explicit. 
\end{remark}

Note that this definition and the one in section~\ref{sec:properties} are equivalent since, 
	for distinct sets $I\ne J$, and 	by the continuity of $D$, 
	$\overline{\Vor_I}\cap\overline{\Vor_J}\subseteq\overline{\Vor_J}$ implies $I\subset J$
	(and viceversa).

%If two Voronoi elements  $\Vor_I$ and $\Vor_J$ are incident then, 
%	by the definition of Voronoi element and the continuity of $D$, 
%	their common boundary must be in $\Vor_{I\cup J}$. 
%Additionally, by the definition of incident Voronoi elements, 
%	it holds $\overline{\Vor_I}\cap\overline{\Vor_J}\subset\overline\Vor_J$, 
%	and therefore $I\cup J\subset J$ or, equivalently, $I\subset J$. 

%For  $\Vor_I,\Vor_J$ to not be separated by their common boundary it must be 
%	$\Vor_I=\Vor_{I\cup J}$ or $\Vor_J=\Vor_{I\cup J}$, 
%	and therefore either $I \subset J$ or $J\subset I$. 
	%, or else $\Vor_{I\cup J}$ would not belong 
%Figure~\ref{fig:incidence} depicts part of a Voronoi diagram, where the dotted lines are ``open" parts of the boundary, 
%	and solid lines are ``closed". 
%If two elements $\Vor_I$ and $\Vor_J$, with $I,J\subset \Sites$, are incident then it must be $I\subseteq J$ or $J\subseteq I$
%	since, by the continuity of $D$, 
%	 every point in their common boundary %between $\Vor_I$,$\Vor_J$ 
%	 is in $\Vor_{I\cup J}$, and in order for their common boundary to not separate them it must be 
%	$I\subseteq J$ or $J\subseteq I$. 
%
%\begin{definition}\label{def:incidence}
%Given $I,J\subseteq S$, we say that $\Vor_I$ is incident to $\Vor_J$ ($\Vor_I \rightsquigarrow \Vor_J$) iff 
%	the sets $\overline{\Vor_I}$ and $\overline{\Vor_J}$ overlap \emph{and} $I\subset J$. 
%\end{definition}

Given the following substitution rules:
\begin{eqnarray*}
	A,B\rightsquigarrow C ~&\Rightarrow~  A\rightsquigarrow C \text{ and } B\rightsquigarrow C \\
	A \rightsquigarrow B,C ~&\Rightarrow~ A\rightsquigarrow B \text{ and } A\rightsquigarrow C, 
\end{eqnarray*}
%Letting $A,B\rightsquigarrow C$ denote $A\rightsquigarrow C$ and $B\rightsquigarrow C$, 
%	and $A \rightsquigarrow B,C$ denote $A\rightsquigarrow B$ and $A\rightsquigarrow C$, 
	the following are the incidence relations depicted in figure~\ref{fig:incidence}: 
\begin{eqnarray*}
	\Vor_i,\Vor_j &\rightsquigarrow& \Vor_{ij},\Vor_{ijk} \\
	\Vor_j,\Vor_k &\rightsquigarrow& \Vor_{jk},\Vor_{ijk} \\
	\Vor_k,\Vor_i &\rightsquigarrow& \Vor_{ki},\Vor_{ijk} \\
	\Vor_{ij},\Vor_{jk},\Vor_{ki} &\rightsquigarrow& \Vor_{ijk}, 
\end{eqnarray*}
where we often write $\Vor_{ij}$ instead of $\Vor_{\{i,j\}}$ for simplicity. 

%The following properties follow From the properties of strict set containment, 

\begin{property}\label{prop:boundaryincidence}
	All points in the boundary of a Voronoi element $\Vor_I$ belong to either $\Vor_I$, 
		or to an element %$\Vor_J$ to 
		that $\Vor_I$ is incident to. % ($\Vor_I\rightsquigarrow\Vor_J$). 
\end{property}
\begin{proof}
	Let $p\in\partial\Vor_I$, and $J$ be the set of sites that $p$ is equidistant to. 
	Since $p\in\partial\Vor_I$, by the continuity of $D$, $p$ is equidistant to all sites in $I$, 
		and therefore $I\subseteq J$. 
	The property follows from the definition of incidence. 
\end{proof}

\begin{property}
	From the properties of strict set containment, 
		it follows that the incidence relation $\rightsquigarrow$ forms a directed acyclic graph  
	(a cycle $\Vor_I\rightsquigarrow\Vor_J\rightsquigarrow\Vor_K\rightsquigarrow\Vor_I$ would imply $I\subset I$, a contradiction). 
\end{property}

%\begin{proposition}\label{prop:boundaryincidence}
%	Every point in the boundary of a Voronoi element $\Vor_I$ belongs to $\Vor_I$, or to an element $\Vor_J$ with $\Vor_I\rightsquigarrow\Vor_J$. 
%\end{proposition}
%\begin{proof}
%	Recall from definition~\ref{def:VorI} that a Voronoi element $\Vor_I$ is composed of points equidistant \emph{only} to sites in $I$. 
%	Let $\Vor_I$ be a Voronoi element with $I=\{i_1,\dots,i_m\}$ and $p\in\partial\Vor_I$. 
%	If $p\notin\Vor_I$ then, since Voronoi elements form a partition of the plane (section~\ref{sec:simpleplanar}), 
%		it must be $p\in\Vor_J$ with $I\ne J$.
%		
%	Let $J =\{j_1,\dots,j_{m'}\}$. 
%	By the continuity of $D$, all points in $\overline\Vor_I$ are equidistant to sites in $I$
%		(and likewise all points in $\overline\Vor_J$ are equidistant to sites in $J$)
%		and therefore $p\in \overline\Vor_I\cap\overline\Vor_J$ must be equidistant to sites in $I\cup J$. 
%	Since $p\in\Vor_J$ then, by definition~\ref{def:VorI}, $p$ is exclusively equidistant to sites in $J$, 
%		and therefore it must be $I\cup J\subseteq J$, from which it follows that $\Vor_I\rightsquigarrow\Vor_J$. 
%\end{proof}
	
From property~\ref{prop:boundaryincidence} it follows that closed Voronoi elements 
	are those with zero out-degree in the incidence graph (e.g.\ $\Vor_{ijk}$ in figure~\ref{fig:incidence}), 
and that open Voronoi elements (i.e.\ Voronoi regions) are those with zero in-degree 
	(e.g.\ $\Vor_i,\Vor_j,\Vor_k$ in figure~\ref{fig:incidence}). 
%In particular, all points in the boundary of Voronoi regions (which always have zero in-degree) belong to Voronoi edges or vertices. 

%\begin{remark}\label{rem:partialVori}
%Note that all points in the boundary of a Voronoi region $\Vor_i$ 
%	belong to a Voronoi edge or vertex to which $\Vor_i$ is incident. 
%By the definition of Voronoi element %, and the continuity of $D$, 
%	it is clear that all points in $\overline{\Vor_i}$ are closest to $s_i$ (and possibly to other sites), 
%	while those in $\Vor_i$ are strictly closest to $s_i$. 
%Every point $p\in\partial\Vor_i$ must be equidistant to sites $K=s_i,s_j,\dots$, 
%	with $|K| \ge 2$ ($\Vor_K$ is a Voronoi edge or vertex), 
%	or else, by the continuity of $D$, $p$ would be in the interior of $\Vor_i$. 
%Since $s_i\subset K$, it follows that $\Vor_i \rightsquigarrow \Vor_K$. 
%\end{remark}

%
%Given $\Vor_I \rightsquigarrow \Vor_J$, 
%	the common boundary between $\Vor_I$ and $\Vor_J$ belongs to $\Vor_{I\cup J}=\Vor_J$, 
%	and therefore Voronoi elements with zero in-degree in the incidence graph are open
%	($\Vor_i,\Vor_j,\Vor_k$ in figure~\ref{fig:incidence}), 
%	while those with zero out-degree are closed ($\Vor_{ijk}$ in figure~\ref{fig:incidence}). 

\subsubsection{Properties of Voronoi vertices}

The main properties at Voronoi vertices are derived from the two assumptions in section~\ref{sec:setup}. 
Assumptions~\ref{ass:BAA} and~\ref{ass:EGA} are useful when deriving properties of the vertex at infinity, 
	and bounded vertices (all other vertices), respectively. 

Given the set negated gradients $g_1,\dots,g_m$ at a bounded vertex point (eq.~\ref{eq:neggrad}), 
	by assumption~\ref{ass:EGA} they are distinct vertices of their convex hull. 
It is then possible to define ``outward" vectors $d_1,\dots,d_m$ (eq.~\ref{eq:dkdef}) 
	such that eq.~\ref{eq:extremal1} holds. 
This is because, for each $k=1,\dots,m$, 
	eq.~\ref{eq:extremal1} simply requires all gradients other than $g_k$
	to be below the (red dotted) line orthogonal to $d_k$ passing through $g_k$
	(as shown in fig.~\ref{fig:EGA.a} for $d_1$), 
	which is possible because $g_1,\dots,g_m$ are the distinct vertices of $\CH\{g_1,\dots,g_m\}$.
	
Figure~\ref{fig:EGA.b} shows that eq.~\ref{eq:extremal2} holds for the same reason as above. 
Given two gradients that are adjacent vertices of $\CH\{g_1,\dots,g_m\}$
	(for instance $g_1,g_2$), 
	eq.~\ref{eq:extremal2} (in this case with $k=1, k\oplus 1=2$)
	is possible whenever all gradients different from $g_1,g_2$
	are simultaneously below two lines, both passing through $g_1$ and orthogonal to $d_1$ and $d_2$
	(the gray area). 
This holds because the outward directions $d_k$ can be chosen to form an obtuse angle with 
	both segments $g_k,g_{k\oplus 1}$ and $g_k,g_{k\ominus 1}$. 
The same argument applies to eq.~\ref{eq:extremal3}.

\begin{figure}[htbp]
   \centering
	\subfloat[]{\label{fig:EGA.a}\includegraphics[width=2.2in]{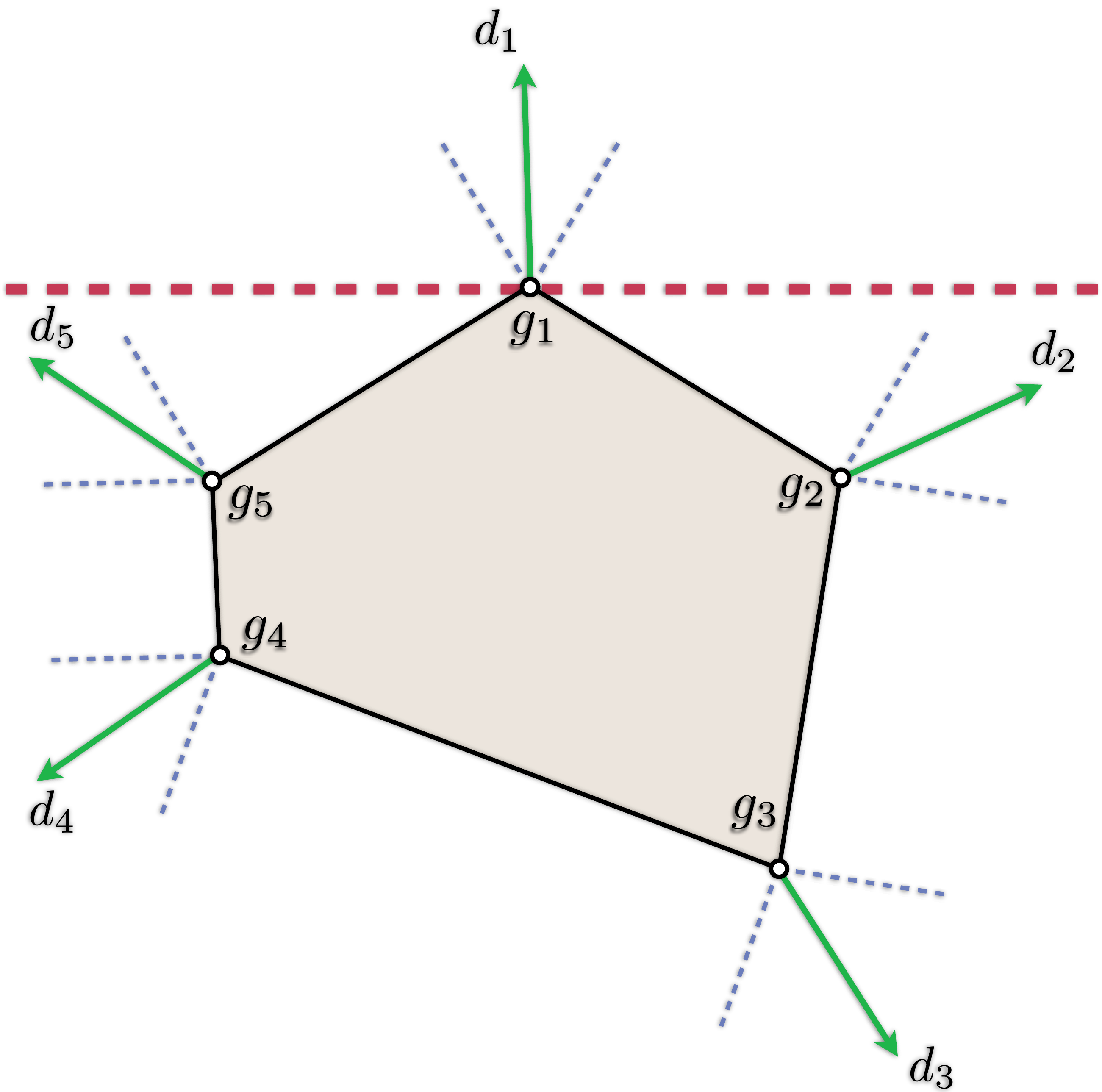}}\quad\quad\quad
	\subfloat[]{\label{fig:EGA.b}\includegraphics[width=2.4in]{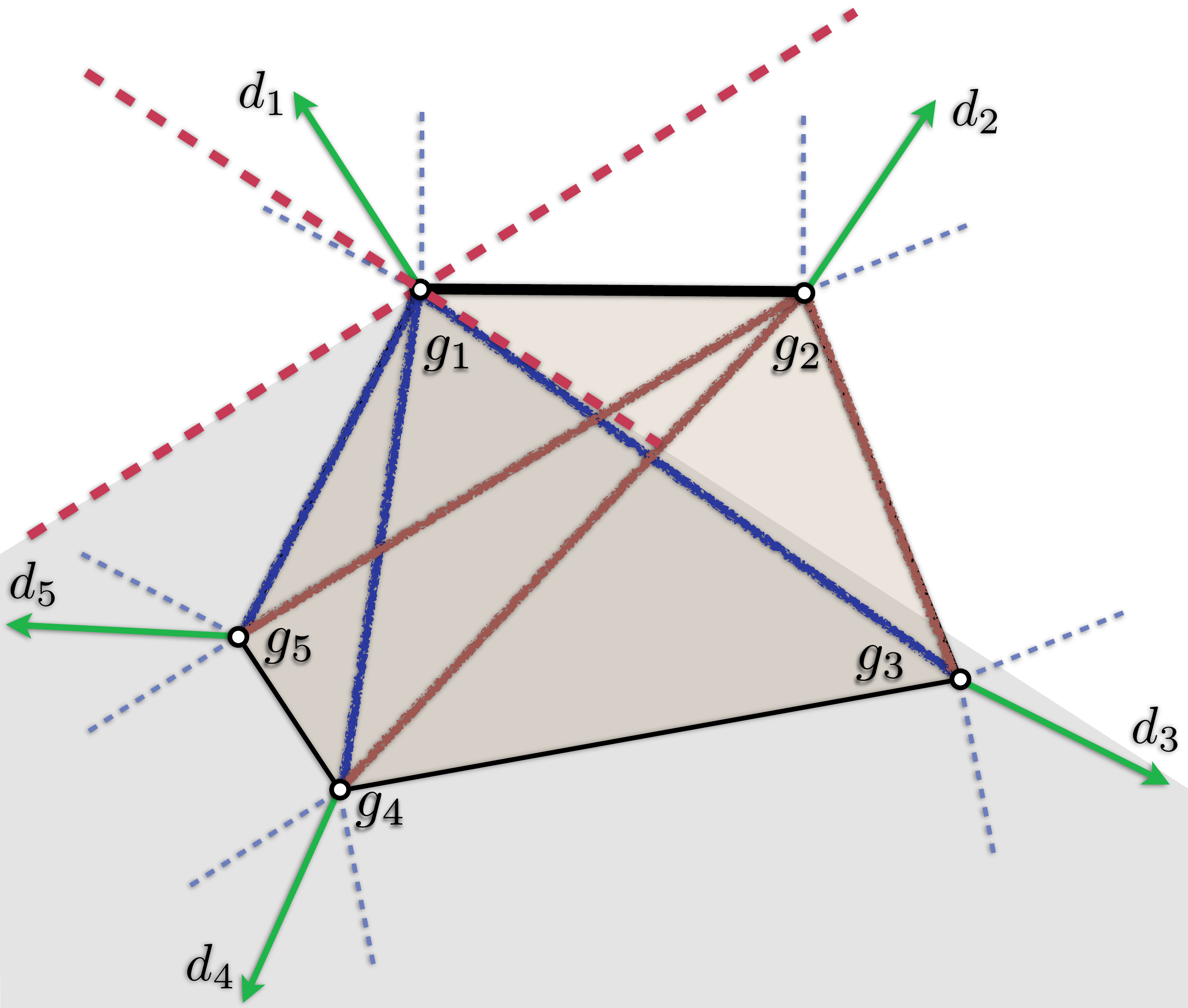}}
   \caption{Diagrams used in the proof of lemma~\ref{lem:vertexincidence}. 
   		Assumption~\ref{ass:EGA} ensures that for all $k$ there is a vector $d_k$ 
				with $\left<d_k,s_k-s_j\right> > 0, j\ne k$ (a), 
			and such that all intermediate direction vectors $d$ between $d_k$ and $d_{k\oplus 1}$ 
				satisfy $\left<d,s_k-s_j\right> >0, j\ne k,k\oplus 1$ (blue lines)
				and $\left<d,s_{k\oplus 1}-s_j\right> >0, j\ne k,k\oplus 1$ (red lines) (b). }
   
\end{figure}

\begin{lemma}[Incidence at Voronoi vertices]\label{lem:vertexincidence}
	A Voronoi vertex $\Vor_I$ %with $I=\{i_1,i_2,\dots,i_m\}$ 
		is a collection of discrete points, 
		at each of which there is an ordered set of indices $i_1,\dots,i_m$ 
		such that $I=\{i_i,\dots,i_m\}$ and the following incidence relations hold:
	\begin{eqnarray*}
		\Vor_{i_1},\dots,\Vor_{i_m} &\rightsquigarrow& \Vor_I   ~~~~~~~~~~~~~ \text{(region-vertex incidence)}\\
		\Vor_{\{i_1,i_2\}},\Vor_{\{i_2,i_3\}},\dots,\Vor_{\{i_m,i_1\}} &\rightsquigarrow& \Vor_I ~~~~~~~~~~~~~ \text{(edge-vertex incidence)}.
	\end{eqnarray*}
	Additionally, if an edge $\Vor_{jk}$ is incident to a vertex $\Vor_I$, then $\Vor_j,\Vor_k \rightsquigarrow \Vor_{jk}$.\\
	If $\Vor_I$ is the vertex at infinity ($\Vor_\infty$), then $i_1,\dots,i_m$ 
		are the indices of the sites in the boundary of the convex hull $\CHS$, 
		in either clockwise or counter-clockwise order. % around $\partial\CHS$. 
\end{lemma}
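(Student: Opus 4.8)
The plan is to treat bounded Voronoi vertices and the vertex at infinity separately, leaning on Assumption~\ref{ass:EGA} for the former and Assumption~\ref{ass:BAA} for the latter; throughout, incidences are read in the sense of Definition~\ref{def:incidence}. Fix a point $p$ of a bounded vertex $\Vor_I$, $I=\{i_1,\dots,i_m\}$, and write $g_1,\dots,g_m$ for the negated gradients $-\nabla_p\D{s_{i_j}}{p}$ (eq.~\ref{eq:neggrad}); by Assumption~\ref{ass:EGA} these are distinct vertices of $\CH\{g_1,\dots,g_m\}$, which we index counter-clockwise. \emph{Discreteness:} for $a\neq b$ the map $q\mapsto\D{s_{i_a}}{q}-\D{s_{i_b}}{q}$ is $\mathcal{C}^1$ with gradient $g_b-g_a\neq 0$ at $p$, so by the implicit function theorem its zero set is a $\mathcal{C}^1$ curve near $p$; since no three of the $g_j$ are colinear, the curves $\{\D{s_{i_1}}{\cdot}=\D{s_{i_2}}{\cdot}\}$ and $\{\D{s_{i_2}}{\cdot}=\D{s_{i_3}}{\cdot}\}$ meet transversally at $p$, so $\{p\}$ is isolated in their intersection, which locally contains $\Vor_I$ (all the $\D{s_{i_j}}{\cdot}$ coincide on $\Vor_I$); hence $\Vor_I$ is a discrete point set. \emph{Region--vertex incidence:} choosing the outward vectors $d_1,\dots,d_m$ of eq.~\ref{eq:dkdef} so that eq.~\ref{eq:extremal1} holds, i.e.\ $\langle d_k,g_k-g_j\rangle>0$ for $j\neq k$, the difference $\D{s_{i_k}}{\cdot}-\D{s_{i_j}}{\cdot}$ vanishes at $p$ with strictly negative derivative along $d_k$ for each $j\in I\setminus\{k\}$, while $s_{i_k}$ remains strictly closer than every site outside $I$ by continuity of $D$ (as $p\in\Vor_I$); thus $p+td_k\in\Vor_{i_k}$ for small $t>0$, so $p\in\overline{\Vor_{i_k}}$ and, since $\{i_k\}\subset I$, $\Vor_{i_k}\rightsquigarrow\Vor_I$.

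For \emph{edge--vertex incidence}, take consecutive $k,k\oplus 1$ and let $\Gamma=\{\D{s_{i_k}}{\cdot}=\D{s_{i_{k\oplus 1}}}{\cdot}\}$, a $\mathcal{C}^1$ curve through $p$ with tangent orthogonal to $g_{k\oplus 1}-g_k$. Since $\langle d_k,g_{k\oplus 1}-g_k\rangle<0<\langle d_{k\oplus 1},g_{k\oplus 1}-g_k\rangle$, and the $d_j$ are picked (as in the discussion preceding the lemma — obtuse to the segments $\overline{g_j g_{j\oplus 1}}$ and $\overline{g_j g_{j\ominus 1}}$) so that consecutive ones span an angle below $\pi$, one of the two unit tangents of $\Gamma$ at $p$, call it $d$, lies strictly between $d_k$ and $d_{k\oplus 1}$ and therefore satisfies eqs.~\ref{eq:extremal2}--\ref{eq:extremal3}; moving along $\Gamma$ from $p$ in direction $d$ makes $s_{i_k}$ and $s_{i_{k\oplus 1}}$ strictly closer than every other $I$-site (and, by continuity, than every non-$I$-site) while keeping them mutually equidistant, so such points lie in $\Vor_{\{i_k,i_{k\oplus 1}\}}$, giving $p\in\overline{\Vor_{\{i_k,i_{k\oplus 1}\}}}$ and hence $\Vor_{\{i_k,i_{k\oplus 1}\}}\rightsquigarrow\Vor_I$. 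For the \emph{additional} claim, suppose $\Vor_{jk}\rightsquigarrow\Vor_I$; then $\{j,k\}\subset I$ and some $p\in\Vor_I$ lies in $\overline{\Vor_{jk}}$. Since $j\neq k$, Assumption~\ref{ass:EGA} gives $\nabla_p\D{s_j}{p}\neq\nabla_p\D{s_k}{p}$, so by continuity of $\nabla D$ in its second argument there is a ball $U\ni p$ on which $q\mapsto\D{s_j}{q}-\D{s_k}{q}$ has nonvanishing gradient, and its zero set splits $U$ into a region where $s_j$ is strictly closer than $s_k$ and one where $s_k$ is strictly closer; shrinking $U$ so that $s_j,s_k$ also stay strictly closer than every site outside $\{j,k\}$, any $q\in\Vor_{jk}\cap U$ (one exists because $p\in\overline{\Vor_{jk}}$) has points of $\Vor_j$ and of $\Vor_k$ arbitrarily near it, whence $\Vor_j,\Vor_k\rightsquigarrow\Vor_{jk}$.

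It remains to treat $\Vor_\infty$. Since not all sites are colinear, $\CHS$ is a polygon with $m=|W|\geq 3$ vertices, and on the sphere $\Vor_\infty$ is a single point to which an element is incident exactly when it is unbounded. First I would show that an \emph{interior} site $s_i$ has a bounded region: writing $s_i=\sum_\ell\lambda_\ell w_\ell$ as a convex combination of hull vertices, convexity of $\D{\cdot}{c}$ gives $\D{s_i}{c}\leq\max_\ell\D{w_\ell}{c}$, and for $c$ far out ``facing'' a hull edge $\overline{w_a w_b}$ — $c$ on the outer side of $L_{ab}$ with its foot $m\in\overline{w_a w_b}$ — Assumption~\ref{ass:BAA} (with $p=w_a$, $q=w_b$, $r=s_i$) gives $\D{s_i}{c}<\D{m}{c}\leq\max(\D{w_a}{c},\D{w_b}{c})$, so $c\notin\Vor_i$; the directions missed by these edge-strips, the outward normal cones at hull vertices, are handled by a connectedness argument along a large circle $C(\sigma)$ together with Lemma~\ref{lem:midpoint}, which pins the bisector of consecutive hull vertices $w_k,w_{k\oplus 1}$ to the segment $\overline{w_k w_{k\oplus 1}}$. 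Consequently the regions incident to $\Vor_\infty$ are exactly $\Vor_{w_1},\dots,\Vor_{w_m}$, and tracking the Voronoi partition once around $C(\sigma)$ — again via Lemma~\ref{lem:midpoint} and the same edge--vertex reasoning, now with the $d_k$ replaced by the outward normals of $\partial\CHS$ — shows they are met in the cyclic order of $W$ along $\partial\CHS$, consecutive ones separated by the unbounded edges $\Vor_{\{w_k,w_{k\oplus 1}\}}$ (each incident to $\Vor_{w_k}$ and $\Vor_{w_{k\oplus 1}}$ by the same local argument), which is the asserted ordering.

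The bounded-vertex parts are comparatively routine once Assumption~\ref{ass:EGA} supplies the distinct, extremal gradients. I expect the main obstacle to be the vertex at infinity: Assumption~\ref{ass:BAA} only controls $D$ along directions essentially perpendicular to a prescribed segment, so covering a full punctured neighbourhood of $\infty$ — the outward normal cones at hull vertices, which fall between the edge-strips, being the delicate case — while simultaneously matching the cyclic order of the unbounded regions to $\partial\CHS$ forces one to exploit that the ``nearly perpendicular'' directions furnished by Assumption~\ref{ass:BAA} rotate monotonically as the chosen segment turns around the hull.
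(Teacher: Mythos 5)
Your treatment of bounded vertices is essentially the paper's argument and is sound: the same extremal directions $d_k$ of eq.~\ref{eq:dkdef}, the same first-order expansion giving region--vertex incidence, and a tangent-direction variant of the paper's intermediate-value step for edge--vertex incidence. Your transversality argument for discreteness (two bisector curves with independent normals $g_2-g_1$, $g_3-g_2$ meeting only at $p$) is a clean alternative to the paper's ``every point of a small circle lies in a region or edge'' argument. One small slip: in the ``additional claim'' you cannot shrink a ball $U$ around $p\in\Vor_I$ so that $s_j,s_k$ are strictly closer than every site outside $\{j,k\}$ — at $p$ all sites of $I\supsetneq\{j,k\}$ are tied; the ball must be taken around a point $q\in\Vor_{jk}$ near $p$ instead, after which your argument goes through.

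The genuine gap is in the vertex at infinity, and it is not merely that the sketch is incomplete — the one concrete step you do write down is invalid. You apply Assumption~\ref{ass:BAA} with $p=w_a$, $q=w_b$, $r=s_i$ to a point $c$ on the \emph{outer} side of $L_{ab}$; but the assumption requires $c$ to lie on the \emph{same} side of $L_{pq}$ as $r$, and $s_i$ is interior to the hull, so the hypothesis fails (indeed no site lies on the outer side, which is exactly why $L_{ab}$ cannot serve as the line there). Moreover, even granting the inequality, $\D{s_i}{c}<\max(\D{w_a}{c},\D{w_b}{c})$ says the interior site \emph{beats} one of the hull sites, which is the wrong direction for concluding $c\notin\Vor_i$. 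The paper's Lemma~\ref{lem:VW} resolves the outer-side case with a different construction: it takes the convex ball $B(c;\D{v}{c})$ around $c$ through the interior site $v$, passes a supporting line $l_v$ of that ball through $v$ (so $v$ is the $D$-closest point of $l_v$ to $c$), observes that one of $w_a,w_b$ must lie on the far side of $l_v$, and applies Assumption~\ref{ass:BAA} with $r$ equal to that hull site — yielding $\D{w_a}{c}<\D{v}{c}$, the inequality actually needed. Separately, the cyclic edge--vertex incidence at $\Vor_\infty$ requires that \emph{every} consecutive hull pair $w_k,w_{k\oplus 1}$ has an unbounded bisector; your ``connectedness argument along $C(\sigma)$ plus Lemma~\ref{lem:midpoint}'' does not rule out that some $\Vor_{w_{k\oplus 1}}$ is bounded and $\Vor_{w_k},\Vor_{w_{k\oplus 2}}$ become adjacent at infinity. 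The paper establishes this via the Brouwer fixed-point argument of Lemmas~\ref{lem:Sn} and~\ref{lem:hard} (the maps $\pi$ and $\nu_\sigma$, with Lemma~\ref{lem:contrad} supplying the contradiction), and no substitute for that mechanism appears in your proposal.
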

\begin{proof}
\noindent{\bf [Bounded vertices, $\Vor_I$]}.
Let $\Vor_I$ be a Voronoi vertex not at the point at infinity and 
	$v$ be a point in $\Vor_I$. 
By the extremal gradient assumption (assumption~\ref{ass:EGA}), 
	the negated gradients 
	\begin{equation}\label{eq:neggrad}
	 g_k\equiv  -\nabla_p \D{s_{i_k}}{p}\big|_v ~,~~ i_k \in I 
	\end{equation}
are distinct vertices of their convex hull. 
Let $i_1,\dots,i_m$ be the indices in $I$ ordered (for instance clockwise) 
	around $\partial \CH\{g_1,\dots,g_m\}$, %the convex hull $\CH\{-g_1,\dots,-g_m\}$, 
	as shown in figure~\ref{fig:EGA.a}.

Since $g_k$, with $k=1,\dots,m$ are distinct vertices of their convex hull, 
	it is easy to show that there are direction (unit) vectors $d_k$, with $k=1,\dots,m$, 
	such that for all $k$ it holds:
	\begin{equation} \label{eq:extremal1}
	 \left< g_k - g_j, d_k \right> > 0 ~,~~ j\in\{1,\dots,m\}, j \ne k.
	\end{equation}
For instance 
	\begin{equation}\label{eq:dkdef}
	 d_k\equiv \frac{(g_k-g_{k\oplus 1})+(g_k-g_{k\ominus 1})}{\|g_k-g_{k\oplus 1}+g_k-g_{k\ominus 1}\|}. 
	\end{equation}

%Let $g_k\equiv \nabla_p D{s_k}{p}|_{\Vor_I}$
By the multivariate version of Taylor's theorem~\cite[p.\ 68]{konigsberger2006analysis}, 
	for each $k,j$, and $p\in\mathbb{R}^2$, we may write:
	\[ -\D{s_k}{p} + \D{s_j}{p} = \left<g_k - g_j, p-v\right> + o\left(\|p-v\|\right). \]
%where the last term satisfies $\lim_{\|p-v\|\rightarrow 0} o\left(\|p-v\|\right) / \|p-v\| = 0$. 
For each $k=1,\dots,m$, and $j=1,\dots,m$ with $j\ne k$, let $p-v = \mu d_k$, with $\mu > 0$, 
	and let $\alpha_{k,j} \equiv \left<g_k - g_j, d_k\right> / 2 > 0$. 
It then follows that:
	\[ \left[-\D{s_k}{v+\mu d_k} + \D{s_j}{v+\mu d_k}\right]/\mu = 2\alpha_{k,j} + f(\mu), \]
where $\lim_{\mu\rightarrow 0} f(\mu)=0$. 
Note that, crucially, $f$ depends on $\mu$ but not on the direction $d_k$.

Since $f(\mu)\rightarrow 0$ with $\mu\rightarrow 0$, 
	we can pick constants $\varepsilon_{k,j}>0$ sufficiently small so that 
	for all $\mu < \varepsilon_{k,j}$ it holds $|f(\mu)| < \alpha_{k,j}$, 
	and therefore $\left[-\D{s_{i_k}}{v+\mu d_k} + \D{s_{i_j}}{v+\mu d_k}\right]/\mu  > \alpha_{k,j}$. 
Let $\varepsilon >0$ be the minimum of all $\varepsilon_{k,j}$, with $j,k=1,\dots,m$, and $j\ne k$.

Since $v$ is strictly closest to sites $s_{i_1},\dots,s_{i_m}$, 
	let $\delta$ be small enough so all points $p\in\mathbb{R}^2$ with $\|p-v\|<\delta$ 
	are closest only to sites in $s_{i_1},\dots,s_{i_m}$ (which is possible since $D$ is continuous). 
Consider the set of points in a small circle of radius $0 < \mu < \min\{\delta,\varepsilon\}$ around $v$. 
From the above, we have that at the point $v + \mu d_k$, it holds:
	\[ \left[-\D{s_k}{v+\mu d_k} + \D{s_j}{v+\mu d_k}\right]/\mu > \alpha_{k,j} > 0 ~,~~ j\in\{1,\dots,m\}, j \ne k, \]
from which it follows that $v+\mu d_k$ is strictly closer to $s_{i_k}$ than to any other site. 
Since this is true for all $k=1,\dots,m$ and for all sufficiently small $0<\mu<\min\{\delta,\varepsilon\}$, the incidence relations
	\[ \Vor_{i_1},\dots,\Vor_{i_m} \rightsquigarrow \Vor_I \]
follow. 

Because $g_1,\dots,g_m$ are vertices of $\CH\{g_1,\dots,g_m\}$, 
	it is clear, as shown in figure~\ref{fig:EGA.b}, 
	that for each $k=1,\dots,m$ there are constants $\beta_{k,j},\beta_{k\oplus 1,j} > 0$, with $j\ne k$ and $j\ne k\oplus 1$, 
	such that, for every unit vector $d_{k,k\oplus 1}$ intermediate between $d_k$ and $d_{k\oplus 1}$, 
%	there are constants $\beta_{k,j},\beta_{k\oplus 1,j} > 0$, with $j\ne k$ and $j\ne k\oplus 1$, 
	it holds:
	\begin{align}
		 \label{eq:extremal2} \left< g_k - g_j, d_{k,k\oplus 1} \right> &> 2\beta_{k,j} > 0 ~,~~~~~ j\ne k, j \ne k\oplus 1 \\
		 \label{eq:extremal3} \left< g_{k\oplus 1} - g_j, d_{k,k\oplus 1} \right> &> 2\beta_{k \oplus 1,j} > 0 ~,~~ j\ne k, j \ne k\oplus 1. 
	\end{align}
Let $\xi_k > 0$ be small enough such that for all $0 < \mu < \xi$, 
	it holds $f(\mu) < \min\{\min_j \beta_{k,j}, \min_j \beta_{k\oplus 1,j}\}$. 
Let $\xi\equiv \min_k \xi_k$, 
	then for all $0 < \mu < \min\{\delta,\varepsilon,\xi\}$, 
	and every point $v+\mu d_k$ it holds:
	\begin{align*}
		-\D{s_k}{v+\mu d_{k,k\oplus 1}} + \D{s_j}{v+\mu d_{k,k\oplus 1}} &> \beta_{k,j} > 0 ~,~~~~~  j\ne k, j \ne k\oplus 1 \\
		-\D{s_{k\oplus 1}}{v+\mu d_{k,k\oplus 1}} + \D{s_j}{v+\mu d_{k,k\oplus 1}} &> \beta_{k\oplus 1,j} > 0 ~,~~~~~  j\ne k, j \ne k\oplus 1, 
	\end{align*}
	and therefore $v+\mu d_{k,k\oplus 1}$ is closest to either $s_k,s_{k\oplus 1}$, or to both. 
%and therefore $v+\mu d_k$ is strictly closest to $s_k$ than to any other site, 
%	while every point $v+\mu d_{k,k\oplus 1}$ is strictly closest to either $s_k,s_{k\oplus 1}$, or to both. 	
For each such $\mu$, and for each  $k=1,\dots,m$, 
	by the intermediate value theorem, 
	there is a direction vector $d$ between $d_k,d_{k\oplus 1}$ such that $v+\mu d$ 
	is in $\Vor_{k,k\oplus 1}$. %strictly closest to both $s_k,s_{k\oplus 1}$ 
%	(equally close to $s_k,s_{k\oplus 1}$, and closer to $s_k,s_{k\oplus 1}$ than to any other site), 
%	and therefore $v+\mu d\in\Vor_{k,k\oplus 1}$. 
Note that, by the above construction, for every such sufficiently small $\mu$, 
	$\Vor_{k,k\oplus 1}$, with $k=1,\dots,m$, are the only Voronoi edges inside the ball of radius $\mu$ around $v$. 
From this it directly follows that: 
\begin{enumerate}
	\item since all points $v+\mu d$, with unit vector $d$ and sufficiently small $\mu$ 
			have been shown to be in a Voronoi region or edge, 
			$v$ is an isolated point of $\Vor_I$;
		since $v$ is a generic point of $\Vor_I$, it follows that $\Vor_I$ is composed of isolated points;
	\item it holds $\Vor_{\{i_1,i_2\}},\Vor_{\{i_2,i_3\}},\dots,\Vor_{\{i_m,i_1\}} \rightsquigarrow \Vor_I $; and 
	\item if a Voronoi edge $\Vor_{jk}$ is incident to $\Vor_I$, then $\Vor_j,\Vor_k\rightsquigarrow \Vor_{jk}$, 
		since the only edges incident to $\Vor_I$ are $\Vor_{i_k,i_{k\oplus 1}}$, with $k=1,\dots,m$. 
\end{enumerate}
%A similar argument as above, using Taylor's theorem, and the extremal gradient assumption, 
%	shows that every direction vector between directions 

\vspace*{0.1in}\noindent{\bf [Vertex at infinity, $\Vor_\infty$]}. 
Incidence to the vertex at infinity is dealt with in section~\ref{sec:boundary}, 
	where lemma~\ref{lem:VW} shows that the only unbounded elements 
	are of the form $\Vor_I$ where all $i_k\in I$ are vertices of $\CHS$, 
	%those closest to sites 	in the boundary of the convex hull $\CHS$, 
	and lemmas~\ref{boundary_easy} and~\ref{lem:hard} show that, 
	if $s_{i_1},\dots,s_{i_m}$ are the vertices on the boundary of $\CHS$ (whether on an edge or vertex of $\partial\CHS$), 
	ordered around $\partial\CHS$, 
	then $\Vor_{i_1},\dots,\Vor_{i_m}$ and $\Vor_{i_1,i_2},\dots,\Vor_{i_m,i_1}$ are the only unbounded elements
	(and therefore incident to $\Vor_\infty$). 
In this sense we can say that the vertex at infinity $\Vor_\infty$ is the Voronoi vertex $\Vor_{i_1,\dots,i_m}$. 
%\textcolor{red}{
The proofs in section~\ref{sec:boundary} show that points $p$ in any circle of sufficiently large radius are incident 
	only to sites in $s_{i_1},\dots,s_{i_m}$, 
	that $p$ cannot be incident to more than two sites simultaneously (lemma~\ref{lem:contrad}), 
	and therefore $p$ cannot belong to a Voronoi vertex, 
	and finally that $p$ can \emph{only} be simultaneously closest to two consecutive sites of the form $s_{i_k},s_{i_{k\oplus 1}}$
	(page~\pageref{text:boundary}). 
Note that the relevant proofs of section~\ref{text:boundary} use the bounded anisotropy assumption (assumption~\ref{ass:BAA}), 
	but do not use any result from this section. 
\end{proof}

From the proof of lemma~\ref{lem:vertexincidence}, 
	it is clear that 
	the bounded anisotropy assumption (assumption~\ref{ass:BAA}) 
	is constructed so that lemma~\ref{lem:vertexincidence} holds for the vertex at infinity, 
	while the extremal gradient assumption (assumption~\ref{ass:EGA}) 
	is meant to ensure that  lemma~\ref{lem:vertexincidence} holds for regular (bounded) vertices.
%
%\begin{lemma}[Incidence at the vertex at infinity]\label{lem:infvertexincidence}
%	There is an ordered set of indices $i_1,\dots,i_m$ with $m\ge 3$ such that 
%	\begin{eqnarray*}
%		\Vor_{i_1},\dots,\Vor_{i_m} &\rightsquigarrow& \Vor_\infty   ~~~~~~~~~~~~~ \text{(region incidence)}\\
%		\Vor_{\{i_1,i_2\}},\Vor_{\{i_2,i_3\}},\dots,\Vor_{\{i_m,i_1\}} &\rightsquigarrow& \Vor_\infty.  ~~~~~~~~~~~~ \text{(edge incidence)}.
%	\end{eqnarray*}
%	Additionally, if an edge $\Vor_{jk}$ is incident to $\Vor_\infty$, then $\Vor_j,\Vor_k \rightsquigarrow \Vor_{jk}$.
%\end{lemma}
%\begin{proof}
%\end{proof}
%

\subsubsection{Properties of Voronoi edges}\label{sec:propedges}

We begin by considering (isolated) Voronoi edges that are bounded and not incident to any Voronoi vertex. 
Since, as will be shown in lemma~\ref{lem:SCedges}, 
	Voronoi edges are simply connected, 
	it is easy to see that for any Voronoi edge $\Vor_{ij}$
		that is not incident to any bounded Voronoi vertex,
	it can only be $\Vor_i\rightsquigarrow\Vor_{ij}$ or $\Vor_j\rightsquigarrow\Vor_{ij}$, 
	and $\Vor_{ij}$ cannot be involved in any other incidence relation. 
%	the only incidence relation it may be involved in is
%	 $\Vor_i\rightsquigarrow\Vor_{ij}$ or
%		 must be incident to a single Voronoi region 
%		(with loss of generality $\Vor_i$). 
To see this, first note that an isolated component of $\Vor_{ij}$ has, 
	by definition, zero out-degree, and therefore it is closed. 
Because $\Vor_{ij}$ is not incident to the vertex at infinity, it is bounded. 
Since $\Vor_{ij}\rightsquigarrow\Vor_{kl}$ implies 
	that their common boundary belongs to vertex $\Vor_{ijkl}$ (where it may be $k=l$), 
	$\Vor_{ij}$ is not incident to any Voronoi edge. 
$\Vor_{ij}$ cannot be incident to a region $\Vor_k$ with $k\notin\{i,j\}$, 
	or else their common boundary would belong to vertex $\Vor_{ijk}$. 
Finally, we show that it cannot be 
	both $\Vor_i\rightsquigarrow\Vor_{ij}$ and $\Vor_j\rightsquigarrow\Vor_{ij}$. 
Because $\Vor_{ij}$ is closed, simply connected, and bounded, 
	by the continuity of $D$, 
	we can consider a sufficiently small $\varepsilon > 0$ 
	such that every $\varepsilon$-offset of its outer boundary cannot be 
	closest to any site $s_k$ with $k\notin\{i,j\}$. 
If $\Vor_i,\Vor_j\rightsquigarrow\Vor_{ij}$, then 
	there must be $0 < \mu < \varepsilon$ 
	such that the $\mu$-offset $\nu_\mu$ of $\Vor_{ij}$'s outer boundary 
	has at least one point closest to $s_i$, and one point closest to $s_j$, 
and therefore, by continuity of $D$, at least one point equally close to $s_i,s_j$. 
Since all points in $\nu_\mu$ are closest to $s_i,s_j$ only, then $\nu_\mu$ has 
	been shown to have a point in $\Vor_{ij}$, 
	contradicting the fact that $\nu_\mu$ is a $\mu$-offset of $\Vor_{ij}$'s outer boundary, 
	and therefore outside $\Vor_{ij}$. 
%Consider $\Vor_{ij}$ in the unit sphere $\mathbb{S}^2$. 
%Because $\Vor_{ij}$ is closed and simply connected, we can consider a sufficiently small 

Let $\Vor_{ij}$ be an bounded isolated Voronoi edge 
	such that $\Vor_i\rightsquigarrow\Vor_{ij}$. 
Because they are not incident to any Voronoi vertex, 
	bounded isolated edges will not be considered part of the primal Voronoi graph.
For simplicity, we consider all points of an isolated edge $\Vor_{ij}$ to be part of its containing
	Voronoi region (say $\Vor_i$), 
	and therefore to be (by definition) strictly closer to $s_i$ than to any other site. 
This is not just a simplification (which does not affect the final Voronoi graph), 
	but will allow us to prove that Voronoi regions are simply connected. 
	
%We finally note that, 
%	while Voronoi regions and edges will be shown to be connected in this section, 
%%	Voronoi vertices are not shown in this section to be connected. 
%	Voronoi vertices are only shown to be connected after the main proof of section~\ref{sec:dual}. 

%	(they will be shown to be connected only after the main proof of section~\ref{sec:dual}). 
% the remainder of this section 
%In the remainder of the section, refere
%At this point, we make a slight modification to our notation, 
%	where $\Vor_I$ denotes a \emph{connected component} of a Voronoi vertex, 
%	and use this definition of Voronoi vertex for the remainder of the section. 
%Only after the main proof of section~\ref{sec:dual}, Voronoi vertices will be shown to be connected, 
%	and therefore this change in notation will turn out to not be such. 

% boundary i and j => all points closest to i and j only
% boundary i or j => all points closest to i or j only
% boundary i => all points closest to i, or isolated edges ij

We begin by proving the following technical lemma. 
%in which define points in an isolated Voronoi edge to be closest to 

%\begin{lemma}\label{lem:IntRegion}
%Let the boundary $\gamma$ of $R\subset\mathbb{R}^2$ be a simple, closed path. 
%If all points of $\gamma$ are closest to sites in $\{s_{i_1},\dots,s_{i_m}\}$, % than to any other sites, 
%then $R\cap\Vor_k=\phi$ %does have any point that belongs to a Voronoi region $\Vor_k$ 
%with $k\notin\{i_1,\dots,i_m\}$. 
%\end{lemma}
%\begin{proof}
%Let $p\in R\cap\Vor_k$ with $k\notin\{i_1,\dots,i_m\}$. 
%We first show that it must be $\Vor_k\subset R$. 
%Assume otherwise. Since Voronoi regions are path connected, 
%	let $\Gamma\subset\Vor_k$ be a path from $p$ to a point in $\Vor_k\setminus R$. 
%By the Jordan curve theorem, $\Gamma$ must cross $\gamma$. 
%Since all points in $\Gamma$ are closest to $s_k$, 
%	and all points in $\gamma$ are closest to sites in $\{s_{i_1},\dots,s_{i_m}\}$, 
%	this raises a contradiction. 
%
%Since, in an orphan-free diagram, Voronoi regions are connected then, by lemma~\ref{lem:site}, 
%	it holds $s_k\in\Vor_k\subset R$. %, then $s_k\in R$. 
%Consider the ray $r$ starting at $s_k$ in the direction $s_k-s_{i_1}$. 
%Since $r$ is unbounded and $R$ is bounded, part of $r$ is inside $R$ and part outside. 
%By the Jordan curve theorem, $r$ must cross $R$ at a point $q\in\Vor_{i_1}$. 
%Since $s_k\in\Vor_k$ and $q\in\Vor_{i_1}$, 
%	there is a point along $r$ that is equidistant to $s_i$ and $s_k$,
%	contradicting lemma~\ref{lem:midpoint}. 
%	
%\textcolor{red}{FIGURE}
%\end{proof}

\begin{figure}[htbp]
   \centering
   	\subfloat[]{\label{fig:RSC.a}\includegraphics[width=2.5in]{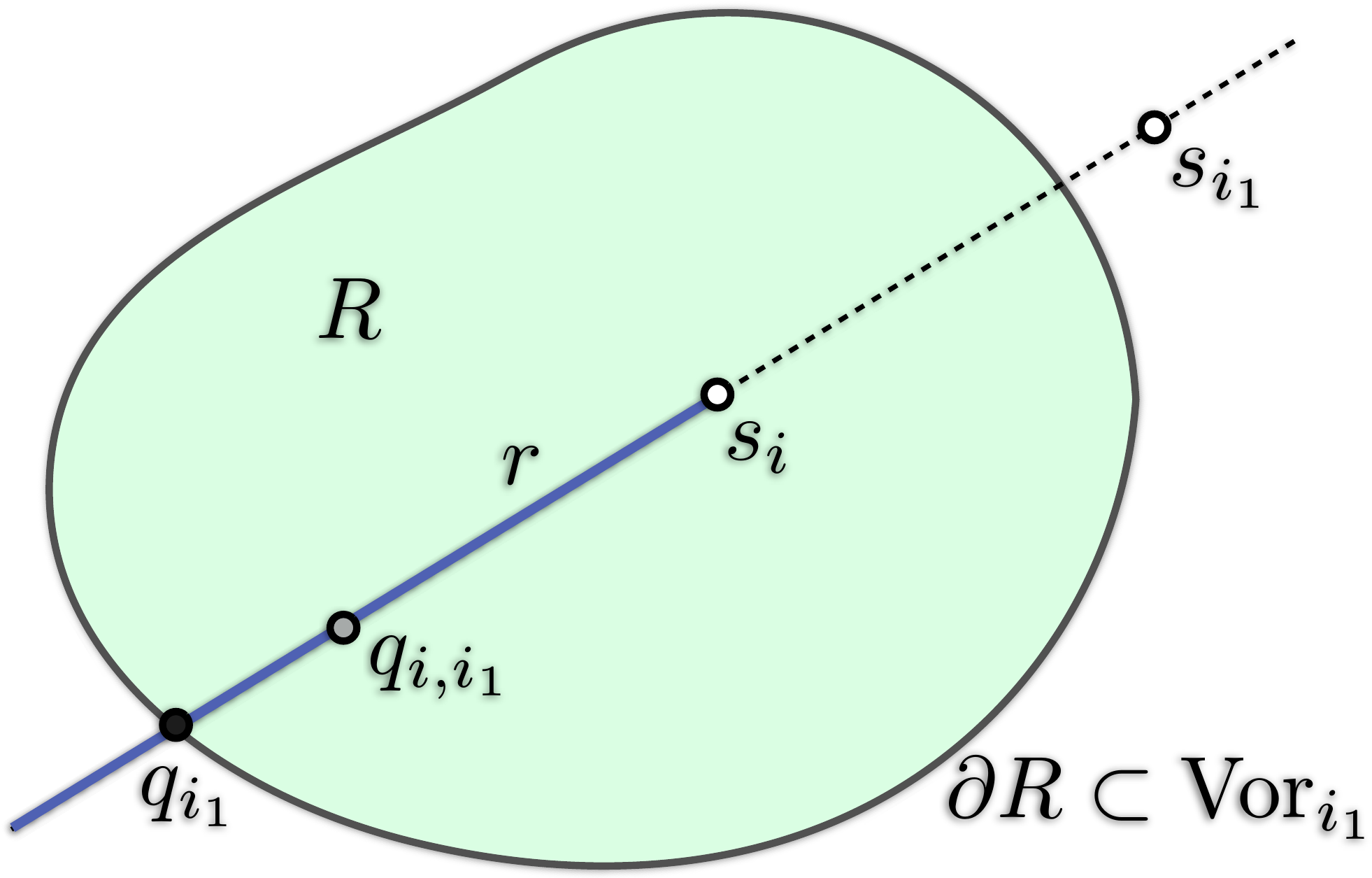}}
	\quad\quad\quad
	\subfloat[]{\label{fig:RSC.b}\includegraphics[width=2.5in]{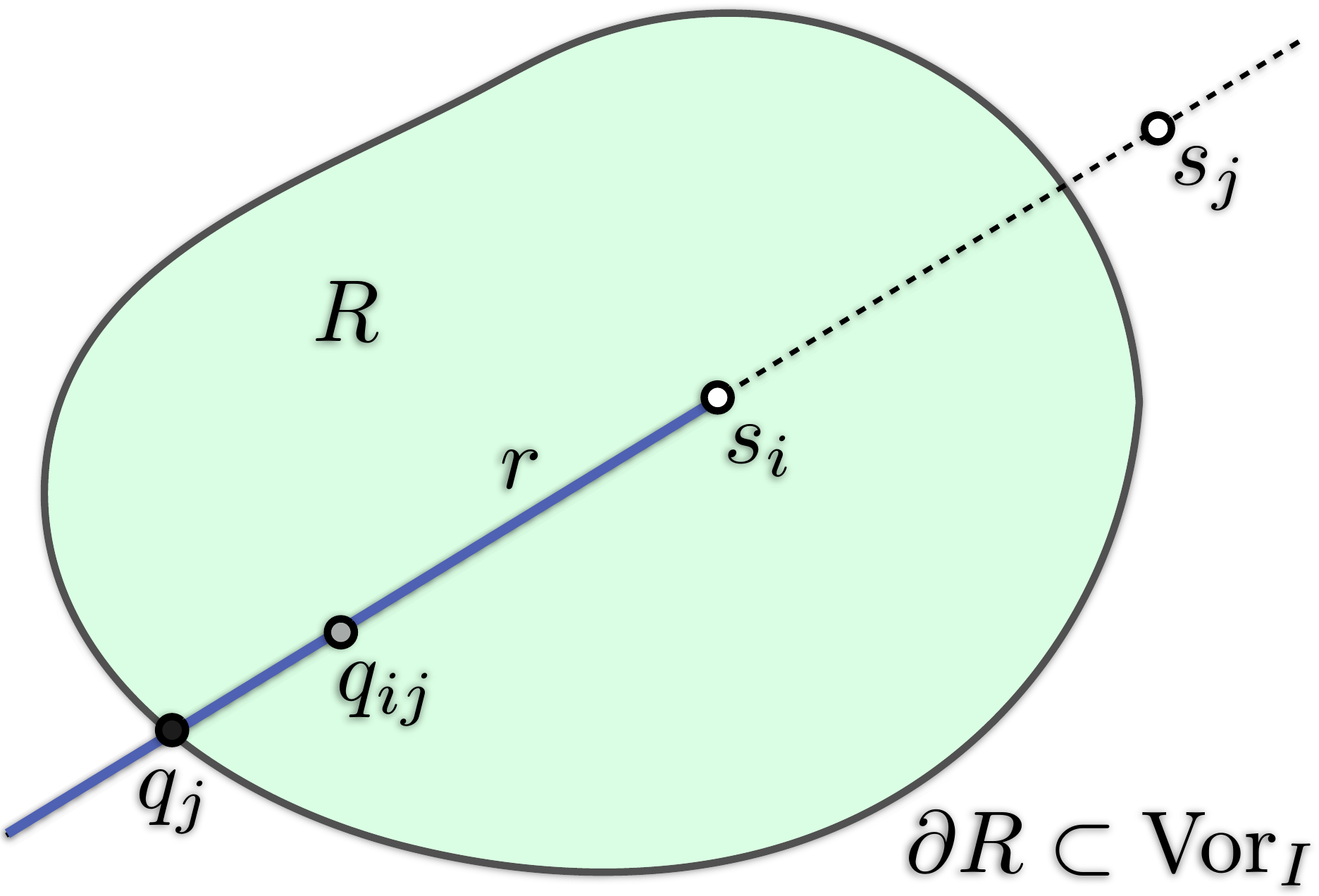}}
   \caption{Diagrams used in the proof of lemma~\ref{lem:RSC}. }
   \label{fig:RSC}
\end{figure}

\begin{lemma}\label{lem:RSC}
	Let the boundary $\partial R$ of $R\subset\mathbb{R}^2$ be a simple, closed path, 
		and $\Vor_I$ be a Voronoi element of an orphan-free diagram. 
	If $\partial R\subseteq\Vor_I$, %with $I=i_1,\dots,i_m$, 
		then $R\subseteq\Vor_I$. 		
\end{lemma}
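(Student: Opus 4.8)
The plan is to argue by contradiction, using orphan-freedom to produce a connected Voronoi region sitting inside $R$ and the Jordan curve theorem to trap it, against the fact (Lemma~\ref{lem:vertexincidence}) that the only unbounded Voronoi regions are those of sites on $\partial\CHS$.

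\emph{Reductions.} If $|I|\ge 3$ then, by Lemma~\ref{lem:vertexincidence}, $\Vor_I$ is a discrete set of points and cannot contain the simple closed path $\partial R$, so the hypothesis is vacuous. If $I=\{1,\dots,n\}$ then $\Vor_I$ is the whole plane (when $n=1$), whence $R\subseteq\Vor_I$ trivially, or is again a vertex/edge handled as above; so assume $|I|\in\{1,2\}$ and that some site $s_j$ has $j\notin I$. We take $R$ to be the closed Jordan region bounded by $\partial R$, so that $\mathbb{R}^2\setminus\partial R$ has a bounded component $\mathrm{int}\,R$ and an unbounded component $O$.

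\emph{Trapping a region.} Suppose, for contradiction, that some $p_0\in R$ has $p_0\notin\Vor_I$; since $\partial R\subseteq\Vor_I$, necessarily $p_0\in\mathrm{int}\,R$. Write $A(q)$ for the set of sites closest to $q$, so $\Vor_I=\{q:A(q)=I\}$; then $A(p_0)\ne I$, hence either some $j\notin I$ satisfies $p_0\in\Vor^2_j$ — set $C:=\Vor^2_j$ — or $A(p_0)$ is a single index $i\in I$ (possible only when $|I|=2$), in which case $p_0$ lies in the open strict region $\Vor_{\{i\}}$ and we set $C:=\Vor_{\{i\}}$. In either case $C$ is connected (by orphan-freedom), contains a site ($s_j$, resp.\ $s_i$), and is disjoint from $\partial R$: every point of $\partial R\subseteq\Vor_I$ is strictly closest to all and only the sites in $I$, hence lies neither in $\Vor^2_j$ for $j\notin I$, nor in $\Vor_{\{i\}}$ (whose points are strictly closest to $s_i$ alone, so would force $|I|=1$). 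By the Jordan curve theorem the connected set $C$ lies in the component of $\mathbb{R}^2\setminus\partial R$ containing $p_0$, i.e.\ $C\subseteq\mathrm{int}\,R$; in particular $C$ is bounded, and so is the entire (connected, and likewise $\partial R$-avoiding) Voronoi region of the site $C$ contains, whenever that site is not in $I$.

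\emph{Closing the argument, and the main obstacle.} The remaining task — turning ``a whole Voronoi region is trapped strictly inside the bounded region $R$'' into a contradiction — is the crux. The intended mechanism is that this confinement propagates: any region $\Vor^2_k$ with $k\notin I$ that meets $\mathrm{int}\,R$ is, by the same connectivity-plus-Jordan argument, contained in $\mathrm{int}\,R$. Since not all sites are colinear, $\partial\CHS$ has at least three vertices, so at least one, $s_w$, has $w\notin I$; its region is unbounded by Lemma~\ref{lem:vertexincidence} and avoids $\partial R$, hence cannot lie in the bounded set $\mathrm{int}\,R$ and must lie in $O$. One then derives a contradiction from the coexistence of a site trapped in $\mathrm{int}\,R$ and the site $s_w$ in $O$ separated by the ``thin wall'' $\partial R\subseteq\Vor_I$, using the connectedness of the element(s) indexed by $I$ (for $|I|=1$, the connected set $\Vor_i^2\supseteq\partial R$ would have to avoid a hole around the trapped region yet encircle it; for $|I|=2$ one argues analogously with the edge $\Vor_{ij}$, first upgrading the trapped strict region to a trapped site not in $I$). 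I expect this final step — proving that no complete Voronoi region can be confined to such a hole — to be the main difficulty; the production of the trapped connected set $C$ and the application of the Jordan curve theorem are routine.
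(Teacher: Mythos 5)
Your proposal has a genuine gap, and you have identified it yourself: you never actually derive a contradiction from having a site trapped in $\mathrm{int}\,R$. The trapping step (orphan-freedom gives a connected region $C$ disjoint from $\partial R\subseteq\Vor_I$, so the Jordan curve theorem confines $C$, and hence a site, to $\mathrm{int}\,R$) matches the paper and is fine. But the closing step you sketch --- playing the trapped site against an unbounded region of a convex-hull site $s_w$ and the ``thin wall'' $\partial R$ --- does not work as stated: connectedness of $\Vor_I$ does not prevent it from encircling a hole, and the natural fix (simple connectivity of Voronoi regions, lemma~\ref{lem:regionSC}) is itself proved \emph{from} lemma~\ref{lem:RSC}, so invoking it would be circular.

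The missing idea is a direct site-exclusion argument via lemma~\ref{lem:midpoint}. Suppose a site $s_i$ lies in $R$, and pick $j\in I$ with $j\ne i$ (possible unless $m=1$ and $i=i_1$, the one permitted case). Shoot the ray from $s_i$ in the direction $s_i-s_j$; being unbounded, it must cross $\gamma=\partial R\subseteq\Vor_I$ at some $q$, which is at least as close to $s_j$ as to $s_i$, while $s_i$ itself is strictly closest to $s_i$. By continuity of $D$ there is a point of the ray equidistant to $s_i$ and $s_j$; but every point of that ray lies on the supporting line $L_{ij}$ \emph{outside} the segment $\overline{s_i s_j}$, contradicting lemma~\ref{lem:midpoint}. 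This excludes all sites (other than $s_{i_1}$ when $|I|=1$) from $R$ without any appeal to unboundedness or to the convex hull, and then the rest of the paper's proof cascades exactly along the lines you set up: no foreign Voronoi region can meet $R$ (else it would be trapped and its site would be in $R$), hence no foreign vertex (its incident regions would meet $R$), hence no foreign edge (its incident vertex would lie in $R$). Also note a smaller point: your reduction dismissing $|I|\ge 3$ is harmless but not how the paper proceeds; the lemma is indeed only ever applied with $|I|\in\{1,2\}$, and for $|I|=2$ the paper's argument excludes \emph{every} site from $R$, including those indexed by $I$, which is what lets it also exclude the strict regions $\Vor_{\{i\}}$, $i\in I$, that you flag as problematic.
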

\begin{proof}
Let $I=i_1,\dots,i_m$, and $\gamma\equiv\partial R$. 
We begin by showing that $R$ does not contain any site $s_i$ whenever $m>1$ or $i\ne i_1$. 
%$R\cap S=\phi$ if $m>1$, and that $R$ does not contain any site $s_j\ne s_{i_1}$ if $m=1$. 

Let $m=1$, and $s_i\in R$ with $i\ne i_1$, as in figure~\ref{fig:RSC.a}. 
Let $r$ be the ray starting from $s_i$ in the direction of $s_i-s_{i_1}$ 
	(note that $s_{i_1}$ may be inside or outside $R$). 
Since $r$ is unbounded and $R$ is bounded, then part of $r$ is outside $R$ and, 
	by the Jordan curve theorem, it must intersect $\gamma$ at some point $q_{i_1}$. 
Since $\gamma\subset\Vor_{i_1}$, 
	$q_{i_1}$ is closest to $s_{i_1}$, while $s_i$ is closest to $s_i$
		(since $\D{s_i}{s_i}=0$ and $\D{\cdot}{s_i}$ is non-negative and convex). 
By the continuity of $D$, there is an intermediate point $q_{i,i_1}$ between $s_i$ and $q_{i_1}$ 
	that is equidistant to $s_{i_1}$ and $s_i$, contradicting lemma~\ref{lem:midpoint}. 

Let $m>1$, and let $s_i$ be any site (figure~\ref{fig:RSC.b}). 
Pick $j\ne i$ among $j\in\{i_1,\dots,i_m\}$, 
	which is always possible because $m>1$. 
The argument is identical in this case, except that, 
	because $\gamma\subset\Vor_I$, then $q_j\in\gamma$ is closest and equidistant to $\{i_1,\dots,i_m\}$, 
	and therefore closer to $s_i$ than to $s_j$, and the same argument holds. 

%\vspace*{0.05in}\noindent{\bf [$\Vor_i$, with $m>1$ or $i\ne i_1$]}. 
\vspace*{0.05in}\noindent{\bf [Voronoi regions]}.
We now prove that no point $p\in R$ belongs to a Voronoi region $\Vor_i$ if $m>1$ or $i\ne i_1$. 
Let $p\in R$ belong to $\Vor_i$, with $m>1$ or $i\ne i_1$, we show that this leads to a contradiction. 

We first show that $\Vor_i\subset R$. 
Assume otherwise. Since $\Vor_i$ is open and connected (by the orphan-freedom property), it is path connected. 
Let $\Gamma\subset\Vor_i$ be a simple path from $p$ to a point $q\in\Vor_i$ outside $R$. 
By the Jordan curve theorem, $\Gamma\subset\Vor_i$ intersects $\gamma\subset\Vor_I$, 
	which leads to a contradiction whenever $m>1$ or $i\ne i_1$. 

Since $\Vor_i\subset R$ and, 
	by lemma~\ref{lem:site}, 
	$s_i\in\Vor_i$, then $s_i\in R$, contradicting the fact that $R$ does not contain any site $s_i$ if $m>1$ or $i\ne i_1$. 

\vspace*{0.05in}\noindent{\bf [Voronoi vertices]}.
If $R$ contains a point $p$ that belongs to a Voronoi vertex $\Vor_J$ with $J\ne I$, 
	then $p$ must be in the interior of $R$, since its boundary $\gamma$ is in $\Vor_I$. 
By lemma~\ref{lem:vertexincidence}, $p$ is incident to $\Vor_{j_1},\dots,\Vor_{j_k}$, 
	where $J=j_1,\dots,j_k$ and $k\ge 3$. 
Since $p$ is in the interior of $R$, then there are points 
	$p_{j_i},\dots,p_{j_k}\in R$ that belong to $\Vor_{j_1},\dots,\Vor_{j_k}$, respectively. 
If $m>1$, then this contradicts the fact that $R$ does not have any point in a Voronoi region. 
If $m=1$, since $k\ge 3$, then one of $j_1,\dots,j_k$ must be different from $i_1$, 
	contradicting the fact that $R$ does not have any point in a Voronoi region different from $\Vor_{i_1}$. 

\vspace*{0.05in}\noindent{\bf [Voronoi edges]}.
Let $\Vor'_{ij}$ be a connected component of a Voronoi edge, with $\{i,j\}\ne I$. 
If some point $p\in\Vor'_{ij}$ is in $R$, 
	then $\Vor'_{ij}\subset R$, or else since, by the assumption in section~\ref{sec:assumptions}, 
	$\Vor'_{ij}$ is path connected, there would be a path $\Gamma\subset\Vor'_{ij}$ connecting 
	$p$ to a point of $\Vor'_{ij}$ outside $R$. 
By the Jordan curve theorem $\Gamma\subset\Vor'_{ij}$ would intersect $\gamma\subset\Vor_I$, a contradiction. 

Since we have already discarded isolated Voronoi edges that are not incident to any Voronoi vertex, 
	a Voronoi edge is always incident to a Voronoi vertex and, 
	since $\Vor'_{ij}$ is in the interior of $R$, then its incident Voronoi vertex is in $R$, 
	a contradiction. 

Finally, since we have shown that there cannot be any Voronoi vertices, edges, or regions $\Vor_J$
	with $J\ne I$ in $R$, then it must be $R\subset\Vor_I$. 
\end{proof}

\begin{figure}[htbp]
   \centering
	\subfloat[]{\label{fig:connectededges.a}\includegraphics[width=2.7in]{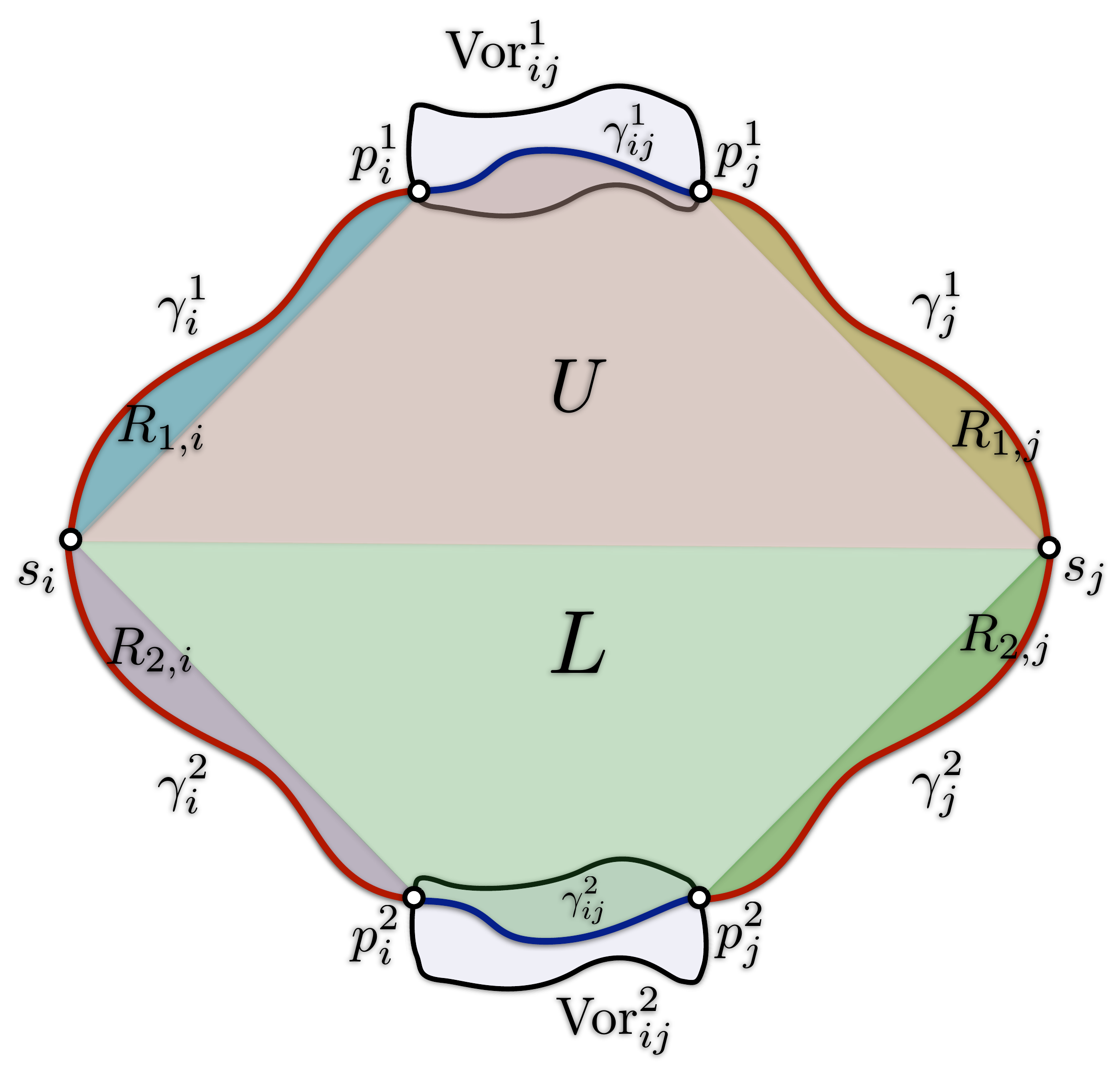}}
	\quad\quad\quad
	\subfloat[]{\label{fig:connectededges.b}\includegraphics[width=2.7in]{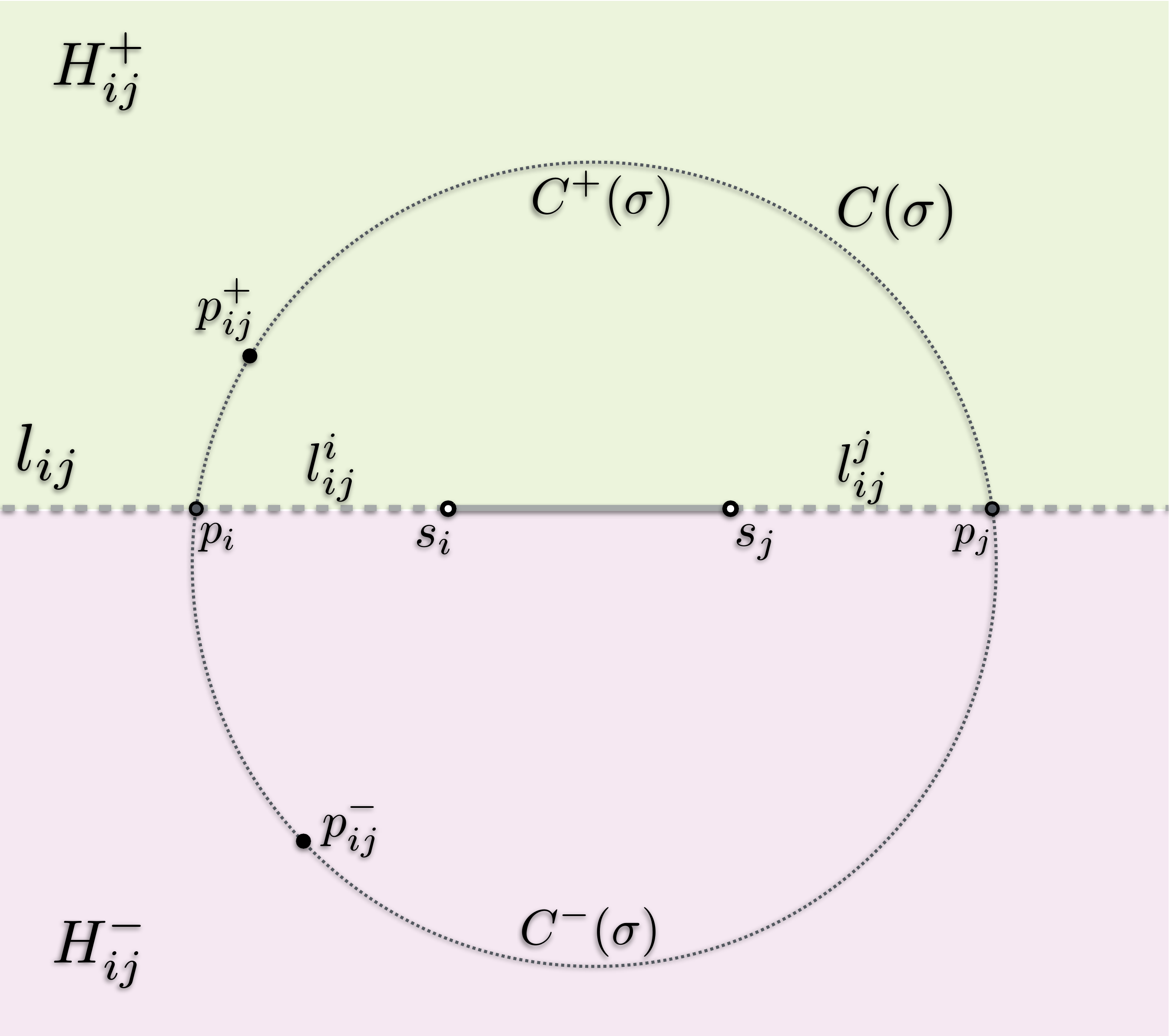}}
   \caption{Diagrams used in the proof of lemmas~\ref{lem:connectededges} (a), 
   			and~\ref{lem:val_ge2} (b).}
%   \label{fig:connectededges}
\end{figure}

\begin{lemma}\label{lem:connectededges}
	Voronoi edges of an orphan-free diagram are connected. 
\end{lemma}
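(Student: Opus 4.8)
The plan is to argue by contradiction. Recall that, following the convention of Section~\ref{sec:propedges}, ``the Voronoi edge $\Vor_{ij}$'' now denotes the union of those connected components of $\Vor^2_i\cap\Vor^2_j\setminus\bigcup_{\ell\notin\{i,j\}}\Vor^2_\ell$ that are incident to a Voronoi vertex (the isolated components having been absorbed into regions), and I want to show this set is connected. So suppose it has two distinct components $A$ and $B$. Using Lemma~\ref{lem:vertexincidence}---at a Voronoi vertex the incident edges form a cyclic list of \emph{distinct} index pairs, and an edge $\Vor_{ij}$ incident to a vertex satisfies $\Vor_i,\Vor_j\rightsquigarrow\Vor_{ij}$---together with the continuity of $D$, one checks that $A,B\subseteq\overline{\Vor_i}\cap\overline{\Vor_j}$. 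Fix $a$ in the relative interior of $A$ and $b$ in the relative interior of $B$, chosen to be ``regular'' bisector points, i.e.\ points every neighbourhood of which meets both $\Vor_i$ and $\Vor_j$.

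The first step is to build a separating Jordan curve. By orphan-freedom, $\Vor_i$ is open and connected, hence path-connected, and by the regularity assumptions of Section~\ref{sec:assumptions} one can join $a$ to $b$ by a simple arc $P_i$ avoiding $s_i$ whose relative interior lies in $\Vor_i$; symmetrically, a simple arc $P_j$ joins $b$ to $a$, avoids $s_j$, and has relative interior in $\Vor_j$ (first mapping an unbounded region to a bounded one by the M\"obius transformation of Section~\ref{sec:assumptions} if necessary). Since $\Vor_i\cap\Vor_j=\emptyset$, the concatenation $\gamma=P_i\cup P_j$ is a Jordan curve, and crucially
\[
\gamma\ \subseteq\ \Vor_i\cup\Vor_j\cup\Vor_{ij},
\]
a set that, by Lemma~\ref{lem:site}, contains no site $s_\ell$ with $\ell\notin\{i,j\}$, and that contains no Voronoi vertex (its points are weakly closest only to $\{s_i\}$, $\{s_j\}$ or $\{s_i,s_j\}$). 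Let $R$ be the bounded component of $\mathbb{R}^2\setminus\gamma$.

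Next I would locate the remaining Voronoi elements relative to $\gamma$. The arc $P_i$ separates $\Vor_i$ into two pieces lying on opposite sides of $\gamma$, so that $a$ and $b$ cut $\partial\Vor_i$ into two sub-arcs whose relative interiors lie one inside $R$ and one in the exterior. Because $A$ and $B$ cannot be arcs of $\partial\Vor_i$ meeting at a common vertex (Lemma~\ref{lem:vertexincidence}, again: consecutive incident edges are distinct pairs), each of these two sub-arcs contains, between $A$ and $B$, an arc of some edge $\Vor_{i\ell}$ with $\ell\notin\{i,j\}$. Hence some region $\Vor_\ell$ ($\ell\notin\{i,j\}$) meets $R$, and being connected (orphan-freedom) and disjoint from $\gamma$ it must lie entirely in $R$, so $s_\ell\in R$ by Lemma~\ref{lem:site}; symmetrically, another region $\Vor_{\ell'}$ ($\ell'\notin\{i,j\}$) lies entirely in the exterior of $\gamma$.

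The contradiction must then be extracted from this ``trapping'' picture, and this is the step I expect to be the main obstacle. The idea is to combine it with Lemma~\ref{lem:midpoint} and Lemma~\ref{lem:RSC}: the supporting segment $\overline{s_is_j}$ runs from $s_i\in\Vor_i$ to $s_j\in\Vor_j$ and must cross the bisector of $s_i,s_j$; tracking where $s_i$, $s_j$ and this crossing sit relative to $\gamma$, and using Lemma~\ref{lem:RSC} to pin a trapped region (or its closure) inside $R$ while some structure incident to it is forced into the exterior, one should be driven either to a point of $L_{ij}$ equidistant from $s_i$ and $s_j$ but outside $\overline{s_is_j}$ (contradicting Lemma~\ref{lem:midpoint}), or to a region invading another region across a closed curve contained in a single Voronoi element (contradicting Lemma~\ref{lem:RSC}). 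Pinning down precisely which of these occurs, and treating the case where $A$, $B$, $\Vor_i$ or $\Vor_j$ is unbounded---where the ``vertex at infinity'' of Lemma~\ref{lem:vertexincidence} and the bounded-anisotropy assumption enter---is where the real work lies; everything preceding it is the Jordan curve theorem combined with orphan-freedom and the incidence structure already available from Lemmas~\ref{lem:site} and~\ref{lem:vertexincidence}.
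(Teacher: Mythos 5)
There is a genuine gap, and you flag it yourself: the decisive step --- extracting a contradiction from the ``trapping'' picture --- is not carried out, and it is not clear that it can be. Your configuration (a Jordan curve $\gamma=P_i\cup P_j$ through one point of each component, with some site $s_\ell$ forced inside $R$ and some $s_{\ell'}$ forced outside) is not by itself contradictory: nothing in lemma~\ref{lem:midpoint} or lemma~\ref{lem:RSC} forbids a third site from lying inside a Jordan curve that passes through $\Vor_i$, $\Vor_{ij}$ and $\Vor_j$, because such a curve is \emph{not} contained in a single Voronoi element (so lemma~\ref{lem:RSC} does not apply), and the equidistant point of $\overline{s_is_j}$ guaranteed by lemma~\ref{lem:midpoint} need not lie in $\Vor_{ij}$ at all (it may be closer to $s_\ell$). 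So the proposal stops exactly where the proof has to begin.

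The paper avoids this dead end by building the Jordan curve the other way around: it runs \emph{through the sites} $s_i$, $s_j$ and \emph{through both components}, concatenating paths $s_i\to p^1_i\to p^1_j\to s_j$ (via $\Vor_i$, the first component, $\Vor_j$) and back through the second component. Two ingredients you do not use then make the interior $Int$ of that curve controllable. First, the convexity of balls of the \emph{first} kind shows that for $w\in\Vor_{ij}$ the triangle $\triangle s_i\,w\,s_j$ contains no site other than $s_i,s_j$, and strict convexity of $\D{\cdot}{r}$ shows that segments $\overline{s_i r}$ with $r\in\overline{\Vor_i}$ contain no other site; decomposing $Int$ into such triangles and fans proves $Int$ contains no site besides $s_i,s_j$, and hence (via orphan-freedom and lemma~\ref{lem:site}) that every point of $Int$ is closest only to $s_i$ and/or $s_j$. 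Second, a Carath\'eodory homeomorphism of $\overline{Int}$ onto the disk yields a path homotopy $F(\cdot,\alpha)$ from the first half of $\gamma$ to the second, each intermediate path running from $s_i$ to $s_j$ inside $\overline{Int}$; the intermediate value theorem then produces on each such path a point equidistant to $s_i,s_j$, which must lie in $\Vor_{ij}$, and continuity in $\alpha$ traces a path in $\Vor_{ij}$ from one component to the other --- the contradiction. If you want to salvage your approach, you would need a replacement for this sweep argument; as written, the separation statement you reach (one extra site inside $\gamma$, one outside) is consistent with a perfectly legitimate connected diagram and yields no contradiction.
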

\begin{proof}
% *** connected
% path-con
% SC
%
%We first show that Voronoi edges are connected. 
%By the assumption in section~\ref{sec:assumptions}, connected edges are path connected. 
%We then show that edges are simply connected. 
%

Let $\Vor^1_{ij},\Vor^2_{ij}$ be two disconnected pieces of a Voronoi edge $\Vor_{ij}$,
	as shown in figure~\ref{fig:connectededges.a}.
Since we have discarded (bounded) isolated edges, we assume that  
	$\Vor_{ij}$ is incident to at least one vertex, and therefore
	by lemma~\ref{lem:vertexincidence},
	 it is $\Vor_i,\Vor_j\rightsquigarrow\Vor^1_{ij}$ and $\Vor_i,\Vor_j\rightsquigarrow\Vor^2_{ij}$.

Since $\Vor_i,\Vor_j$ are incident to both $\Vor^1_{ij},\Vor^2_{ij}$, 
	the boundaries of $\Vor_i$ and $\Vor_{ij}^1$ overlap
	(and likewise $\partial\Vor_i \cap \partial\Vor_{ij}^2, \partial\Vor_j \cap \partial\Vor_{ij}^1, \partial\Vor_j \cap \partial\Vor_{ij}^2 \ne \phi$). 
Let $p^1_i\in\partial\Vor_i\cap\partial\Vor^1_{ij}$ 
	be a point in the common boundary between $\Vor_i$ and $\Vor^1_{ij}$, 
	and $p^1_j\in\partial\Vor_j\cap\partial\Vor^1_{ij}$ be a point in the common boundary between $\Vor_j$ and $\Vor^1_{ij}$, 
	and define the points $p^2_i,p^2_j$ analogously.
Since $\Vor^1_{ij},\Vor^2_{ij}$ are disjoint, 
	it holds $p^1_i\ne p^2_i$ and $p^1_j\ne p^2_j$, 
	and therefore by lemma~\ref{lem:regionpath}
		 there are non-crossing simple paths $\gamma^1_i,\gamma^2_i\subset\Vor_i$ 
		 	from $s_i\in\Sites$ to $p^1_i,p^2_i$, respectively, 	
		 and non-crossing simple paths $\gamma^1_j,\gamma^2_j\subset\Vor_j$ 
		 	from $s_j\in\Sites$ to $p_{j,1},p_{j,2}$, respectively. 
Additionally, since by the assumption in section~\ref{sec:assumptions} $\Vor^1_{ij},\Vor^2_{ij}$ are path connected, 
	there are simple paths $\gamma^1_{ij}\subset\Vor^1_{ij}$ and $\gamma^2_{ij}\subset\Vor^2_{ij}$
	connecting $p^1_i$ to $p^1_j$, and $p^2_i$ to $p^2_j$, respectively. 
Let $\gamma^1$ be the concatenation of paths $\gamma^1_i,\gamma^1_{ij},\gamma^1_j$, 
	and $\gamma^2$ be the concatenation of paths $\gamma^2_i,\gamma^2_{ij},\gamma^2_j$. 
By construction, and since $\Vor^1_{ij},\Vor^2_{ij}$ are disjoint, 
	the simple paths $\gamma^1,\gamma^2$ only meet at their endpoints $s_i,s_j$.

Let $\gamma$ be the simple closed curve resulting from concatenating $\gamma^1,\gamma^2$. 
By the Jordan curve theorem, $\gamma$ divides the plane into an interior ($Int$) and exterior regions, bounded by $\gamma$. 
We first show that $Int$ does not contain any sites (other than $s_i,s_j$). % in its boundary). 

%We divide $Int$ into the quadrilateral $Q$, and its complement $C$ in $Int$, \textcolor{ref}{as in figure~\ref{fig:Int}}. 
%We first show that the triangle $\triangle{s_i,s_j,p^1_{ij}
\vspace*{0.1in}\noindent{\bf [$Int$ contains no sites]}.
We first divide $Int$ in three parts, as shown in figure~\ref{fig:connectededges.a}: 
\begin{enumerate}
	\item the region $U$ bounded by $\overline{s_i p^1_i}$, $\gamma^1_{ij}$, and $\overline{p^1_j s_j}$,
	\item the region $L$ bounded by $\overline{s_i p^2_i}$, $\gamma^2_{ij}$, and $\overline{p^2_j s_j}$, 
	\item and $R\equiv Int \setminus \left(U \cup L\right)$.
\end{enumerate}
We begin by observing that if $w_{ij}\in \Vor_{ij}$, then the triangle $\triangle{s_i w_{ij} s_j}$ 
	cannot contain any site (other than $s_i,s_j$)
	because 1) $w_{ij}$ is closest and equidistant to $s_i,s_j$,
	and 2) the ball of the first kind $\theta_{w_{ij}}(s_i)$ 
	centered at $w_{ij}$ with $s_i,s_j$ in its boundary (see table~\ref{table:notation}) 
	is convex and therefore contains $\triangle{s_i w_{ij} s_j}$. 
Since the sides of $\triangle{s_i w_{ij} s_j}$ are line segments, and $\theta_{w_{ij}}(s_i)$ is strictly convex, 
	the only points of $\triangle{s_i w_{ij} s_j}$ touching the boundary of $\partial\theta_{w_{ij}}(s_i)$
	are $s_i,s_j$, and therefore a site at any other point in $\triangle{s_i w_{ij} s_j}$ would be strictly closer to $w_{ij}$ than $s_i,s_j$, 
	a contradiction.

Since $U$ can be written as the union of triangles with vertices $s_i,w_{ij},s_j$ with $w_{ij}\in\gamma^1_{ij}\subset\Vor^1_{ij}\subset\Vor_{ij}$, 
	then $U$ does not contain any site other than $s_i,s_j$. 
An analogous argument proves that $L$ does not contain any site other than $s_i,s_j$.

We split the remaining region $R$ into four parts $R_{1,i},R_{1,j},R_{2,i},R_{2,j}$. %, and apply the same argument to each. 
Let $R_{1,i}$ be the part of $R$ bounded by the segment $\overline{s_i p^1_i}$ and the curve $\gamma^1_i$. 
Let $R'_{1,i} \equiv \displaystyle{\cup_{r\in \gamma^1_i} \overline{s_i r}}$ be the union of segments connecting $s_i$ to points in $\gamma^1_i$. 
Clearly, it is $R_{1,i}\subset R'_{1,i}$. We show that $R'_{1,i}$ cannot contain any site other than $s_i$, 
	and thus the same is true of $R_{1,i}$. %The same argument can be applied to the remaining parts

Let $z\in R'_{1,i}$ be a site, and let $r\in\gamma^1$ be the point such that $z\in\overline{s_i r}$. 
Because $r\in\gamma^1_{i}\subset\overline{\Vor_i}$, $r$ is closest and equidistant to $s_i$ (and possibly also to $s_j$), 
	that is: $\D{s_i}{r} \le \D{s_k}{r}$ for all $k=1,\dots,n$. 
Since $z\in\overline{s_i r}$ and $z\ne s_i$, we can write $z=\lambda s_i + (1-\lambda) r$, with $0 \le \lambda < 1$, 
	and therefore by the strict convexity of $\D{\cdot}{r}$ it holds:
	\[ \D{z}{r} = \D{\lambda s_i + (1-\lambda) r}{r} < \lambda\D{s_i}{r} + (1-\lambda)\D{r}{r} = \lambda\D{s_i}{r} < \D{s_i}{r}, \]
where the last equality follows from $\D{r}{r}=0$, and the last inequality follows from the non-negativity of $D$. 
This shows that the site $z$ is \emph{strictly} closer to $r$ than $s_i$, a contradiction. 
Therefore there are no sites in $R'_{1,i}$, and thus no sites in $R_{1,i}\subseteq R'_{1,i}$ either. 
Applying an identical argument to $R_{1,j},R_{2,i},R_{2,j}$ shows that $R$ cannot contain any sites other than $s_i,s_j$.

\vspace*{0.05in}\noindent{\bf [Points in $Int$ can only be closest to $s_i$ and/or $s_j$]}.
We begin by showing that there is no point $p\in Int$ that is 
	\emph{strictly} closer to a site $s_k\notin \{s_i,s_j\}$ than to any other site ($p\in\Vor_k$). 
If $p\in Int$ is closest to $s_k\notin\{s_i,s_j\}$, then we first show that $\Vor_k$ is wholly contained in $Int$. 
Assume otherwise, and pick a point $q\in\Vor_k$ outside $Int$. 
%By the Jordan curve theorem, there cannot be a part of $\Vor_k$ outside $Int$. 
%Indeed, 
%Indeed, given such $\Vor_k$, pick a point $p\in\Vor_k$ in $Int$, and one $q\in\Vor_k$ outside $Int$. 
Since Voronoi regions are path-connected, let $\Gamma_{pq}\subset\Vor_k$ be a path connecting $p,q$. 
By the Jordan curve theorem, $\Gamma_{pq}$ %such a path connecting a point in $Int$ with a point outside $Int$ 
 	crosses the boundary $\gamma\subset \Vor_i\cup \Vor_j\cup \Vor_{ij}$, 
	contradicting the fact that $\Gamma_{pq}\subset\Vor_k$. 
Since $\Vor_k$ is completely inside $Int$ then, by lemma~\ref{lem:site}, it is $s_k\in\Vor_k\subset Int$, 
	contradicting the fact the $Int$ contains no sites other than $s_i,s_j$, %a contradiction, 
	and therefore $\Vor_k\cap Int=\phi$ with $k\notin\{i,j\}$. 
%	and therefore no point $p\in Int$ can be \emph{strictly} closer to a site $s_k\notin \{s_i,s_j\}$ than to any other site. 
%Equivalently, for every $s_k\notin\{s_i,s_j\}$, it holds $\Vor_k\cap Int=\phi$. 

We now show that no point $p\in Int$ can be closest to $s_k\notin\{s_i,s_j\}$, even if it is also simultaneously closest to 
	$s_i$ and/or $s_j$. 
Since $p$ is closest to $s_k$, 
	and the boundary of $Int$ is $\gamma\subset \Vor_i\cup\Vor_j\cup\Vor_{ij}$, then $p$ belongs to the interior of $Int$. 
By definition, $p$ belongs to a Voronoi edge or vertex. If it belongs to a Voronoi vertex and is closest to $s_k\in\Sites$ then, 
	by lemma~\ref{lem:vertexincidence}, and since Voronoi vertices are composed of isolated points, 
	$p$ is incident to $\Vor_k$, a contradiction since $\Vor_k\cap Int=\phi$ whenever $k\notin\{i,j\}$. 
Therefore $Int$ does not contain any Voronoi vertices.

Finally, we show that no point $p\in Int$ can be closest to a site $s_k\notin\{s_i,s_j\}$ 
	and belong to a Voronoi edge $\Vor_E$. 
Since $p$ is in the interior of $Int$, the connected component $\Vor'_E$ of $\Vor_E$ that $p$ belongs to 
	must be fully contained in $Int$, 
	or else by the Jordan curve theorem $\Vor'_E$ would be separated by the boundary $\gamma\subset \Vor_i\cup\Vor_j\cup\Vor_{ij}$ of $Int$. 
Since we have discarded connected components of Voronoi edges not incident to any Voronoi vertex, 
	then $\Vor'_E$ is incident to some vertex $\Vor_I$. 
Since $\Vor'_E$ is in the interior of $Int$, then $\Vor_I$ must be contained in $Int$. 
As we have shown above, $Int$ does not contain any Voronoi vertex, and therefore $p$ cannot be closest to $s_k\notin\{s_i,s_j\}$.

\vspace*{0.1in}\noindent{\bf [$\Vor_{ij}$ is connected]}.
Finally, we show that there is a path in $\Vor_{ij}$ connecting $\Vor^1_{ij}$ to $\Vor^2_{ij}$, 
	and therefore $\Vor_{ij}$ is connected. 
Recall that all points in $Int$ can only be closest to $s_i$ and/or $s_j$, 
	that $\gamma^1,\gamma^2$ are simple paths from $s_i$ to $s_j$, 
	and that, by construction, they do not meet except at their endpoints. 
Clearly, $\gamma^1,\gamma^2$ are path homotopic~\cite[p.\ 323]{munkres2000topology}, 
	for instance via the straight-line homotopy.

We begin by constructing a path homotopy $F$ between $\gamma^1$ and $\gamma^2$
	(a continuous function $F:[0,1]\times [0,1]\rightarrow \mathbb{R}^2$ 
		such that $F(\cdot,0)=\gamma^1(\cdot)$ and $F(\cdot,1)=\gamma^2(\cdot)$)
	contained in $Int$. 
Since $\gamma$ is a Jordan curve, and $Int$ is simply connected, by Carath\'eodory's theorem~\cite{conformal}, 
	there is a homeomorphism $h$ from $\overline{Int}$ to the closed unit disk $D_2$ that maps $\gamma$ to the unit circle. 
Since $\gamma^1,\gamma^2\subset\gamma$ and $D_2$ is convex, the straight-line homotopy $F'$ 
	between $h(\gamma^1)$  and $h(\gamma^2)$ is contained in $D_2$. 
We can now inversely map this homotopy through $h^{-1}$ to obtain a path homotopy $F=h^{-1}\circ F'$ 
	between $\gamma^1$ and $\gamma^2$ which is contained in $Int$ 
	(i.e.\ $F(\cdot,\alpha)\subset\overline{Int}$ with $0\le\alpha\le 1$).

Since every path $F(\cdot,\alpha)$ with $0\le\alpha\le 1$ starts at $s_i$ and ends at $s_j$, 
	and $D$ is continuous, there is $0< t_\alpha < 1$ such that $F(t_\alpha,\alpha)\in\overline{Int}$ is equidistant to $s_i,s_j$. 
Since we have shown above that all points in $\overline{Int}$ are closest to $s_i$ and/or $s_j$, 
	then $F(t_\alpha,\alpha)\in\Vor_{ij}$ for $0\le\alpha\le 1$. 
By the continuity of $D$ and $F$, is it possible to choose $t_\alpha$ to be continuous with $\alpha$, 
	and such that the path $\Phi:[0,1]\rightarrow\mathbb{R}^2$ 
	with $\Phi(\alpha)=F(t_\alpha,\alpha)$ is $\Phi([0,1])\subset\overline{Int}\cap\Vor_{ij}$. 
Since the path $\Phi$ is defined to start at $\Vor^1_{ij}$ and end at $\Vor^2_{ij}$, then 
	$\Vor^1_{ij}$ and $\Vor^2_{ij}$ are connected, 
	and therefore $\Vor_{ij}$ must be connected. 
\end{proof}

\begin{lemma}\label{lem:SCedges}
	Voronoi edges of orphan-free diagrams are simply connected. 
\end{lemma}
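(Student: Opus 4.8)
The plan is to read off simple connectivity of a Voronoi edge $\Vor_{ij}$ from its connectivity (Lemma~\ref{lem:connectededges}) together with Lemma~\ref{lem:RSC}. Lemma~\ref{lem:RSC} says that any simple closed curve contained in a Voronoi element encloses a region that is itself contained in that element; taking $I=\{i,j\}$, this means that every Jordan curve lying inside $\Vor_{ij}$ bounds a closed topological disk lying inside $\Vor_{ij}$, and is therefore null-homotopic in $\Vor_{ij}$. A connected, suitably regular planar set in which every Jordan curve is null-homotopic is simply connected, and that is the whole claim.

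In more detail, I would proceed as follows. Isolated Voronoi edges have been absorbed into their containing regions (Section~\ref{sec:propedges}), so assume $\Vor_{ij}$ is not isolated. By Lemma~\ref{lem:connectededges} it is connected, and by the path-connectedness hypothesis of Section~\ref{sec:assumptions} it is path-connected. Let $\gamma\subseteq\Vor_{ij}$ be any Jordan curve; by the Jordan curve theorem it bounds a bounded region $R$ with $\partial R=\gamma\subseteq\Vor_{\{i,j\}}$, so Lemma~\ref{lem:RSC} yields $R\subseteq\Vor_{ij}$, whence $\overline R=R\cup\gamma\subseteq\Vor_{ij}$ is a closed disk and $\gamma$ is null-homotopic in $\Vor_{ij}$. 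Since this holds for \emph{every} Jordan curve in $\Vor_{ij}$, and $\Vor_{ij}$ is a path-connected, locally path-connected subset of the plane (the local connectedness being exactly what the regularity assumptions of Section~\ref{sec:assumptions} guarantee), it follows by standard plane topology that $\pi_1(\Vor_{ij})=0$: any loop can be replaced by a homotopic loop that is a finite union of simple closed sub-loops (one obtained between each pair of successive returns to a self-intersection) together with connecting arcs, and each sub-loop is null-homotopic by the above.

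The step I expect to be the main obstacle is this last, point-set-topological reduction from arbitrary loops to Jordan curves: it is the only place the argument is not purely geometric, and it genuinely relies on $\Vor_{ij}$ being well-behaved --- precisely the content of the path-connectedness and ``uniformly connected \emph{im kleinen}'' hypotheses of Section~\ref{sec:assumptions}. Two remarks dispose of potential worries. One need not separately prove that $\Vor_{ij}$ is one-dimensional: a planar set fails to be simply connected exactly when it carries a non-contractible Jordan curve, which is exactly what Lemma~\ref{lem:RSC} forbids; should one want it anyway, strict convexity of $D$ in its first argument shows the bisector $\{\,p:\D{s_i}{p}=\D{s_j}{p}\,\}$ has empty interior (for Bregman divergences $\nabla_p D_F(s_i\parallel p)=\mathrm{Hess}\,F(p)\,(s_i-p)$, so equal gradients force $s_i=s_j$), so edges are genuine curves. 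And isolated edges need no separate treatment, having already been merged into their containing Voronoi regions.
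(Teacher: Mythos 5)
Your proposal is correct and follows essentially the same route as the paper: the paper's proof likewise takes a non-contractible simple closed path $\gamma\subset\Vor_{ij}$, applies the Jordan curve theorem to obtain the bounded region $R$, and invokes Lemma~\ref{lem:RSC} to conclude $R\subset\Vor_{ij}$, hence $\gamma$ is contractible. Your additional care in reducing arbitrary loops to Jordan curves is a point the paper leaves implicit, but it does not change the substance of the argument.
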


\begin{proof}

Recall that, by the assumption in section~\ref{sec:assumptions}, 
	connected Voronoi edges are also path connected.
%	we can now consider whether they are simply connected. 

Let $\Vor_{ij}$ be a Voronoi edge, and $\gamma\subset\Vor_{ij}$ be a simple path not contractible to a point. 
By the Jordan curve theorem, $\gamma$ divides the plane into an exterior (unbounded), 
	and an interior (bounded) region $R$. 
By lemma~\ref{lem:RSC}, $R\subset\Vor_{ij}$, and therefore $\gamma$ is contractible to a point. 

%%The proof is trivial given the proof of lemma~\ref{lem:connectededges}. 
%Assume that an edge $\Vor_{ij}$ is not simply connected, 
%	and therefore has a closed simple path $\gamma\subset\Vor_{ij}$ that is not contractible to a point. 
%By the Jordan curve theorem the path $\gamma$ separates the plane into an exterior and an interior region $Int$, 
%	which cannot be contained in $\Vor_{ij}$, 
%	or else a contraction of $\gamma$ to a point could be easily constructed
%	(for instance, and since $\gamma$ is simple, mapping $Int$ to the closed unit disk and contracting its boundary to the origin). 
%
%||||||
%
%
%
%Since Voronoi regions, vertices, and edges partition the plane,
%	there cannot be $p\notin \Vor_{ij}$ in $Int$, and therefore $\Vor_{ij}$ is simply connected.
\end{proof}

\subsubsection{Properties of Voronoi regions}

\begin{lemma}\label{lem:regionSC}
Voronoi regions of orphan-free diagrams are simply connected.
\end{lemma}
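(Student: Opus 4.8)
\textbf{Proof plan for Lemma~\ref{lem:regionSC}.}
The plan is to mirror the argument already used for Voronoi edges (Lemma~\ref{lem:SCedges}), now that all the technical groundwork is in place. By the orphan-freedom assumption a Voronoi region $\Vor_i$ is connected, and by the path-connectedness assumption of section~\ref{sec:assumptions} (together with the fact that $\Vor_i$ is open) it is also path-connected; moreover we may work with a bounded homeomorphic image of $\Vor_i$ if the region is unbounded, as arranged in section~\ref{sec:assumptions}, so that the Jordan curve theorem applies. To show simple connectedness I would take an arbitrary simple closed curve $\gamma\subset\Vor_i$ and prove it bounds a disk contained in $\Vor_i$; contractibility of every such loop then gives simple connectedness.

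First I would invoke the Jordan curve theorem to split $\mathbb{R}^2$ into the unbounded exterior of $\gamma$ and a bounded interior region $R$ with $\partial R=\gamma\subseteq\Vor_i$. Then I would apply Lemma~\ref{lem:RSC} directly with $I=\{i\}$: since $\partial R\subseteq\Vor_{\{i\}}$, the lemma yields $R\subseteq\Vor_i$. Hence $\gamma$ together with its interior lies entirely inside $\Vor_i$, so $\gamma$ is contractible to a point within $\Vor_i$. Since $\gamma$ was arbitrary, $\Vor_i$ is simply connected.

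The one point that needs a word of care — and which is the reason this lemma comes after the edge lemmas rather than before — is that Lemma~\ref{lem:RSC} is stated for Voronoi \emph{elements} $\Vor_I$ in general, and its proof in the case $m=1$ relies on the region-, vertex-, and edge-exclusion arguments that in turn use orphan-freedom, Lemma~\ref{lem:vertexincidence}, and the fact that isolated edges have already been absorbed into their containing regions (section~\ref{sec:propedges}). So the substantive work is really all discharged inside Lemma~\ref{lem:RSC}; here I only need to check its hypotheses are met, namely that $\gamma=\partial R$ is a simple closed path and that the diagram is orphan-free. I expect no genuine obstacle: the main subtlety is simply making sure the unbounded case is handled by first transporting $\Vor_i$ to a bounded set via the continuous transformation posited in section~\ref{sec:assumptions}, so that ``interior region'' and the Jordan curve theorem are meaningful, and noting that simple connectedness is preserved by that homeomorphism.
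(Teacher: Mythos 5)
Your proposal is correct and follows essentially the same route as the paper: take a simple closed loop in $\Vor_i$, apply the Jordan curve theorem to get its interior $R$, and invoke Lemma~\ref{lem:RSC} with $I=\{i\}$ to conclude $R\subseteq\Vor_i$, hence the loop is contractible. The paper phrases it as a contradiction and you phrase it directly, but the substance is identical.
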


\begin{proof}
Let $\Vor_i$ be a Vornoi region, which must be connected since the diagram is orphan-free. 
Since $\Vor_i$ is open, it is path connected~\cite[p.\ 158]{munkres2000topology}.

%The proof is trivial given the proof of lemma~\ref{lem:connectededges}. 
Assume that $\Vor_i$ is not simply connected, 
	and therefore has a closed simple path $\gamma\subset\Vor_i$ that is not contractible to a point. 
By the Jordan curve theorem the path $\gamma$ separates the plane into an exterior and an interior region $R$. 
By lemma~\ref{lem:RSC}, $R\subset\Vor_{i}$, and therefore $\gamma$ is contractible to a point. 
\end{proof}

\begin{lemma}\label{lem:regionpath}
For every Voronoi region $\Vor_i$ of an orphan-free Voronoi diagram, 
%In an orphan-free Voronoi diagram, for every Voronoi region $\Vor_i$, 
	there is a collection of simple paths %$\gamma_{s_i,p}$ 
	connecting the site $s_i$ to each point in the boundary of $\Vor_i$, 
	 such that:
%	each starting at the site $s_i$ and ending at a point %$p\in\partial V_i$ 
%	in the boundary of $V_i$ such that:
	\begin{enumerate}
	\item all paths are contained in $\overline{\Vor_i}$,
	\item paths intersect the boundary $\partial\Vor_i$ only at the final endpoint, and 
	\item two paths meet only at the starting point $s_i$.  
	\end{enumerate}
\end{lemma}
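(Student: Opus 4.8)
The plan is to construct the desired collection of paths by a ``radial star'' from $s_i$, exploiting the simple-connectedness of $\Vor_i$ (lemma~\ref{lem:regionSC}) together with the assumption from section~\ref{sec:assumptions} that the boundary of a bounded, simply-connected Voronoi region is a Jordan curve. First I would reduce to the bounded case: if $\Vor_i$ is unbounded, apply the continuous transformation $T:\Vor_i\to\Vor_i'$ (e.g.\ a M\"obius transformation, as in section~\ref{sec:assumptions}) onto a bounded set, build the paths in the image, and pull them back through $T^{-1}$, which preserves being simple, being contained in $\overline{\Vor_i}$, and the intersection/meeting conditions. So assume $\Vor_i$ is bounded with $\gamma=\partial\Vor_i$ a simple closed curve.

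Next I would invoke Carath\'eodory's theorem (already used in the proof of lemma~\ref{lem:connectededges}): there is a homeomorphism $h:\overline{\Vor_i}\to D_2$ from the closure onto the closed unit disk, carrying $\gamma$ to the unit circle. Since $s_i$ is an interior point of $\Vor_i$ (lemma~\ref{lem:site}), post-compose $h$ with a M\"obius automorphism of $D_2$ so that $h(s_i)=0$. Now for each boundary point $q\in\gamma$ define the path $\gamma_q$ to be the $h^{-1}$-image of the straight radial segment in $D_2$ from $0$ to $h(q)$. Then: each $\gamma_q$ is simple (image of a simple arc under a homeomorphism); $\gamma_q\subseteq h^{-1}(D_2)=\overline{\Vor_i}$, giving property~1; $\gamma_q$ meets $\gamma$ only at $q$, since the open radial segment lies in the open disk $D_2\setminus\partial D_2 = \Vor_i$, giving property~2; and two paths $\gamma_q,\gamma_{q'}$ with $q\ne q'$ correspond to two distinct radii of $D_2$, which meet only at $0$, hence $\gamma_q\cap\gamma_{q'}=\{s_i\}$, giving property~3.

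The step I expect to be the main (really the only) obstacle is making sure the hypotheses of Carath\'eodory's theorem are genuinely in force, i.e.\ that $\overline{\Vor_i}$ is a closed Jordan domain: this is where simple-connectedness (lemma~\ref{lem:regionSC}) and the ``Boundaries of Voronoi regions'' assumption of section~\ref{sec:assumptions} (the boundary is a Jordan curve) are both needed, and where the unbounded case must be routed through the transformation $T$ before the theorem applies. Everything after that is a routine transfer of the trivial radial-star structure on the disk back through the homeomorphism, so no further calculation is required.
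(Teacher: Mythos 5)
Your proposal is correct and follows essentially the same route as the paper's proof: both invoke lemma~\ref{lem:regionSC} and the boundary assumption of section~\ref{sec:assumptions} to apply Carath\'eodory's theorem, map $\overline{\Vor_i}$ homeomorphically to the closed disk, and pull back a star of straight segments from the image of $s_i$ to the boundary circle. Your extra M\"obius normalization sending $h(s_i)$ to the origin and the explicit handling of the unbounded case are harmless refinements of what the paper does implicitly.
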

\begin{proof}
By the assumption in section~\ref{sec:assumptions}, the boundary of Voronoi regions are simple closed paths. 
Since a Voronoi region $\Vor_i$ is also simply connected (lemma~\ref{lem:regionSC}), we may use Carath\'eodory's theorem~\cite{conformal}
	to map $\overline{\Vor_i}$ to the closed unit disk $D_2$ through a homeomorphism $h$  
	that maps the boundary $\partial\Vor_i$ to the unit circle. 
Since, by lemma~\ref{lem:site}, $s_i$ is an interior point of $\Vor_i$, 
	then $s'_i\equiv h(s_i)$ is an interior point of $D_2$. 
We now simply construct a set of straight paths %(which only meet than at the starting point)
	from $s'_i$ to each point in the unit circle. 
These paths are contained in $D_2$, and meet only at the starting point. 
We map them back through $h^{-1}$ to obtain the desired set of paths. 
\end{proof}

\subsection{Voronoi edges are incident to two and only two Voronoi vertices}

\begin{lemma}\label{lem:val_ge2}
No Voronoi edge is incident to just one Voronoi vertex.
\end{lemma}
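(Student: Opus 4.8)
The idea is to identify the Voronoi vertices incident to $\Vor_{ij}$ with the ``topological boundary'' $\overline{\Vor_{ij}}\setminus\Vor_{ij}$ of the edge (viewed on the Riemann sphere), and then to show this set cannot be a single point. First I would note that, since isolated edges have been discarded, $\Vor_{ij}$ is incident to at least one Voronoi vertex, so Lemma~\ref{lem:vertexincidence} gives $\Vor_i,\Vor_j\rightsquigarrow\Vor_{ij}$. Next, any $p\in\overline{\Vor_{ij}}\setminus\Vor_{ij}$ is a limit of points of $\Vor_{ij}$, so by continuity of $D$ it is closest to and equidistant with $s_i,s_j$; since $p\notin\Vor_{ij}$, Definition~\ref{def:VorI} forces $p\in\Vor^2_l$ for some $l\notin\{i,j\}$, whence $\D{s_l}{p}\le\D{s_i}{p}=\D{s_j}{p}\le\D{s_l}{p}$, so $p$ is closest to and equidistant with $s_i,s_j,s_l$ --- i.e.\ $p$ lies in a Voronoi vertex $\Vor_J$ with $\{i,j,l\}\subseteq J$ (on the sphere one allows $\Vor_\infty$ here as well). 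Conversely every vertex incident to $\Vor_{ij}$ meets $\overline{\Vor_{ij}}$ and is not contained in $\Vor_{ij}$, so the incident vertices are exactly the points of $\overline{\Vor_{ij}}\setminus\Vor_{ij}$; this set is nonempty, since otherwise $\Vor_{ij}$ would be closed with zero out-degree, i.e.\ an isolated edge.

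Now suppose for contradiction that $\overline{\Vor_{ij}}\setminus\Vor_{ij}=\{v\}$ with $v\in\Vor_I$, $|I|=m\ge3$, so $\overline{\Vor_{ij}}=\Vor_{ij}\cup\{v\}$ is a continuum on the sphere. Writing $i_1,\dots,i_m$ for the cyclic order of $I$ around $v$, Lemma~\ref{lem:vertexincidence} tells us the only Voronoi edges meeting a small neighborhood of $v$ are $\Vor_{i_1i_2},\dots,\Vor_{i_mi_1}$, and these $m\ge3$ unordered pairs are pairwise distinct; since $\{i,j\}$ is one of them, $\Vor_{ij}$ reaches $v$ along a single germ. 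I would then observe that every point of $\Vor_{ij}$ lies on the simple closed curve $\partial\Vor_i$ (by Lemma~\ref{lem:regionSC} and the assumption of Section~\ref{sec:assumptions}), with a neighborhood in which $\partial\Vor_i$ coincides with $\Vor_{ij}$; hence $\Vor_{ij}$ is a nonempty connected relatively open subset of a circle, so it is the whole circle or an open arc. It cannot be the whole circle, for then $v$, a limit of points of $\Vor_{ij}\subseteq\partial\Vor_i$, would lie in the closed set $\partial\Vor_i=\Vor_{ij}$ --- impossible, as $v$ is a vertex. It cannot be a circle with a single point removed, for that point would be $v$ and then $\partial\Vor_i$ near $v$ would consist of two $\Vor_{ij}$-germs, forcing the two indices $i_{k-1},i_{k+1}$ flanking the sector occupied by $\Vor_i$ to coincide --- impossible by distinctness of the $i_k$ with $m\ge3$. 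Hence $\Vor_{ij}$ is a circle with a nontrivial closed arc removed, so $\overline{\Vor_{ij}}\setminus\Vor_{ij}$ contains the two distinct endpoints of that arc, contradicting $\overline{\Vor_{ij}}\setminus\Vor_{ij}=\{v\}$. Therefore $\Vor_{ij}$ is incident to at least two distinct Voronoi vertices.

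The step I expect to be the real obstacle is the passage from the local picture (the bisector of $s_i,s_j$ is a $\mathcal{C}^1$ curve away from third sites, reaching each Voronoi vertex along a single germ) to the global claim that a connected Voronoi edge is a genuine arc of the Jordan curve $\partial\Vor_i$ with two well-defined endpoints --- in particular ruling out that $\Vor_{ij}$ ``dead-ends'' at an interior critical point of $p\mapsto\D{s_i}{p}-\D{s_j}{p}$, or that $\partial\Vor_i$ fails to decompose cleanly into edge-arcs and isolated vertices. This is exactly where the regularity assumptions of Section~\ref{sec:assumptions} and the vertex structure of Lemma~\ref{lem:vertexincidence} have to be combined. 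An alternative packaging, matching Figure~\ref{fig:connectededges.b}, follows the single germ out of $v$: simple connectedness of $\Vor_{ij}$ (Lemma~\ref{lem:SCedges}) forbids it from returning to $v$, so the edge must escape to infinity, exhibiting $\Vor_\infty$ as the required second incident vertex.
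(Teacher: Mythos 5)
Your proposal is essentially sound but reaches the contradiction by a genuinely different route than the paper. Both arguments rest on the same structural fact --- that the Jordan curve $\partial\Vor_i$ decomposes into finitely many isolated vertex points alternating with open arcs, each arc lying in a single Voronoi edge --- which the paper also asserts (from property~\ref{prop:boundaryincidence}, the isolation of vertices in lemma~\ref{lem:vertexincidence}, and relative openness of the edge trace). From there the paths diverge. You rule out the residual case ($\partial\Vor_i$ equal to one $\Vor_{ij}$-arc plus the single point $v$) purely combinatorially: the two germs of $\partial\Vor_i$ at $v$ would both have to be $\Vor_{ij}$, forcing $i_{k\ominus 1}=i_{k\oplus 1}$ in the cyclic order of lemma~\ref{lem:vertexincidence}, impossible for $m\ge 3$; the same argument works at $\Vor_\infty$ using the cyclic order of hull sites from section~\ref{sec:boundary} (where $|W|\ge 3$ since the sites are not all colinear). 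The paper instead splits into two metrically flavored cases: for a bounded $\Vor_I$ it shoots the ray from $s_i$ in direction $s_i-s_j$ through $\partial\Vor_i$ and contradicts lemma~\ref{lem:midpoint}; for $\Vor_\infty$ it shows both $\Vor_{ij}\cap H^{+}_{ij}$ and $\Vor_{ij}\cap H^{-}_{ij}$ would be unbounded and contradicts lemma~\ref{lem:contrad} (bounded anisotropy). Your version is more uniform and uses less of the divergence structure at this point; the paper's version avoids having to say anything about germ multiplicity at $v$.

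The one step you should repair is the assertion that $\Vor_{ij}\subseteq\partial\Vor_i$, which you correctly flag as the obstacle: nothing proved so far excludes a point of $\Vor_{ij}$ at which $\D{s_i}{\cdot}-\D{s_j}{\cdot}$ has a critical point and which is not a limit of points of $\Vor_i$ (assumption~\ref{ass:EGA} controls gradients only at Voronoi vertices). But you do not need this inclusion. Run the trichotomy on the trace $A=\Vor_{ij}\cap\partial\Vor_i$ instead of on $\Vor_{ij}$: $A$ is nonempty (by the region--edge incidence near $v$ from lemma~\ref{lem:vertexincidence}) and relatively open in $\partial\Vor_i$, and every relative boundary point of $A$ lies in $\overline{\Vor_{ij}}\setminus\Vor_{ij}=\{v\}$. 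A nonempty relatively open subset of a circle whose relative boundary is contained in a single point is either the whole circle or the circle minus that point, and your two contradictions (lemma~\ref{lem:RSC} applied to $\partial\Vor_i\subseteq\Vor_{ij}$, which would put $s_i$ in $\Vor_{ij}$; and the two-germ argument at $v$) then go through verbatim, with no need for connectedness of $A$ or for $\Vor_{ij}\subseteq\partial\Vor_i$. Your closing ``alternative packaging'' is too loose as stated --- simple connectedness of $\Vor_{ij}$ does not by itself forbid a path from returning near $v$ --- so I would drop it in favor of the repaired main argument.
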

\begin{proof}
Let $\Vor_{ij}$ be a Voronoi edge incident to just one Voronoi vertex $\Vor_I$. 
By lemma~\ref{lem:vertexincidence}, it is $\Vor_i\rightsquigarrow\Vor_{ij}$, 
	and therefore $\Vor_{ij}$ has a common boundary with $\Vor_i$. 
%Since, by the continuity of $D$ and the definition of Voronoi element, 
%	all points in the boundary of an element 
%All points in the boundary $\partial\Vor_i$ s
%The boundary of $\Vor_i$ is a Jordan curve, and all points in $\partial\Vor_i$ belong to a Voronoi edge or vertex
%	(since $\Vor_i$ can only be incident to elements $\Vor_J$ with $\{i\}\subset J$, and therefore $|J|>1$). 
Recall from property~\ref{prop:boundaryincidence} that the boundary $\partial\Vor_i$ belongs 
	to Voronoi edges and vertices to which $\Vor_i$ is incident. 
Since, by lemma~\ref{lem:vertexincidence}, Voronoi vertices are isolated points, 
	and two Voronoi edges $\Vor_{ij},\Vor_{kl}$ can only meet at a Voronoi vertex $\Vor_I$ (with $\{i,j,k,l\}\subset I$), 
	we can enumerate an alternating sequence of Voronoi edges and vertices 
	$\left[\dots,\Vor_{ij},\Vor_I,\Vor_{kl},\Vor_K,\dots\right]$ in clockwise order around $\partial\Vor_i$, 
	in which every edge is incident to the previous and next vertices in the sequence. 
Therefore, a Voronoi edge can only be incident to one Voronoi vertex if the sequence is $\left[\Vor_{ij},\Vor_I\right]$. 

If $\Vor_I$ is not the vertex at infinity, then we can show that the above is not possible 
	with an argument identical to the proof of lemma~\ref{lem:RSC} (figure~\ref{fig:RSC}). 
Note that $\Vor_{ij}\rightsquigarrow\Vor_I$ implies $\{i,j\}\subset I$, 
	and therefore all points in $\partial\Vor_i$ are equidistant to $s_i,s_j$. 
Let $\gamma\equiv\partial\Vor_i$, and consider the ray $r$ from $s_i$ in the direction $s_i-s_j$ 
	which, since $r$ is unbounded and $\Vor_i$ is bounded (since it is not incident to $\Vor_\infty$), 
	it must cross $\gamma$ at some point $q$. 
Since $q\in\gamma$, $q$ is equidistant to $s_i,s_j$, contradicting lemma~\ref{lem:midpoint}. 

If $\Vor_\infty$ is the vertex at infinity, then $\Vor_{ij}$ is not incident to any Voronoi vertex, and is unbounded. 
Therefore, $\Vor_{ij}$ does not cross any Voronoi edge, 
	or else $\Vor_{ij}$ would be incident to their intersection point (a Voronoi vertex). 
Recall from lemma~\ref{lem:midpoint} 
	that 	$\Vor_{ij}$ can never intersect the supporting line $L_{ij}$ of $s_i,s_j$ outside the segment $\overline{s_i,s_j}$. 
Let $L^i_{ij}$ ($L^j_{ij}$) be the ray starting at $s_i$ ($s_j$) with direction $s_i-s_j$ ($s_j-s_i$), 
%the part of $L_{ij}$ outside $\overline{s_i,s_j}$ that is closest to $s_i$ ($s_j$), 
	as shown in figure~\ref{fig:connectededges.b}. 
It can be easily shown that every point in $L^i_{ij}$ ($L^j_{ij}$) is strictly closer to $s_i$ ($s_j$) than to $s_j$ ($s_i$). 
%	regardless of the choice of origin, 
Since, regardless of the choice of origin, 
	\emph{every} origin-centered circle $C(\sigma)$ of sufficiently large radius $\sigma$ intersects $L_{ij}$ at 
	exactly one point $p_i$ in $L^i_{ij}$, and one point $p_j$ in $L^j_{ij}$,
	the following holds. 
Let $L_{ij}$ divide $\mathbb{R}^2$ into two half spaces $H^{+}_{ij},H^{-}_{ij}$, 
	and let $C^{+}(\sigma)\equiv C(\sigma)\cap H^{+}_{ij}$ and
		    $C^{-}(\sigma)\equiv C(\sigma)\cap H^{-}_{ij}$. 
Since $p_i$ ($p_j$) is closer to $s_i$ ($s_j$) than to $s_j$ ($s_i$), 
	and $p_i,p_j$ are the endpoints of $C^{+}(\sigma), C^{-}(\sigma)$, 
	by the continuity of $D$, 
	there are points $p^{+}_{ij}\in C^{+}_{ij}$ and $p^{-}_{ij}\in C^{-}_{ij}$ equidistant to $s_i,s_j$. 
Since $\Vor_{ij}$ does not intersect any Voronoi element, then $p^{+}_{ij},p^{-}_{ij}$ are also closest to $s_i,s_j$. 
Because this holds for all sufficiently large $\sigma$, 
	then both $\Vor_{ij}\cap H^{+}_{ij}$ and $\Vor_{ij}\cap H^{-}_{ij}$ are unbounded, 
	contradicting lemma~\ref{lem:contrad}, 
	which states that every point $p^{-}_{ij}\in H^{-}_{ij}$ that is sufficiently far from the origin
	and equidistant to $s_i,s_j$ (and therefore its closest point in $L_{ij}$ lies in $\overline{s_i,s_j}$) 
	is closer to a site in $S\setminus\{s_i,s_j\}$ than to $s_i,s_j$. 
\end{proof}

\begin{figure}[htbp]
   \centering
   	\subfloat[]{\label{fig:closureedge.a}\includegraphics[width=2.7in]{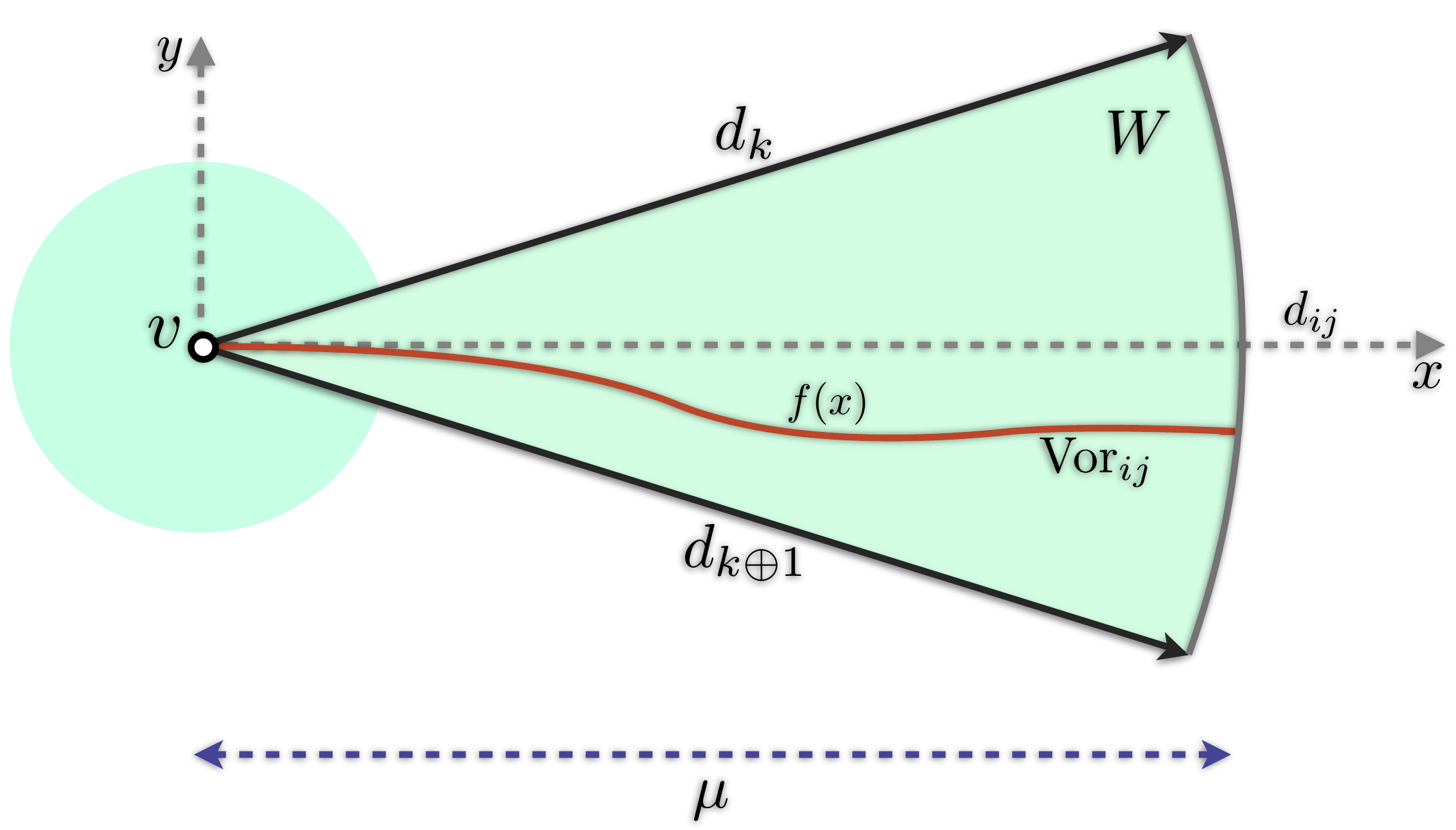}}\quad\quad
	\subfloat[]{\label{fig:closureedge.b}\includegraphics[width=2.7in]{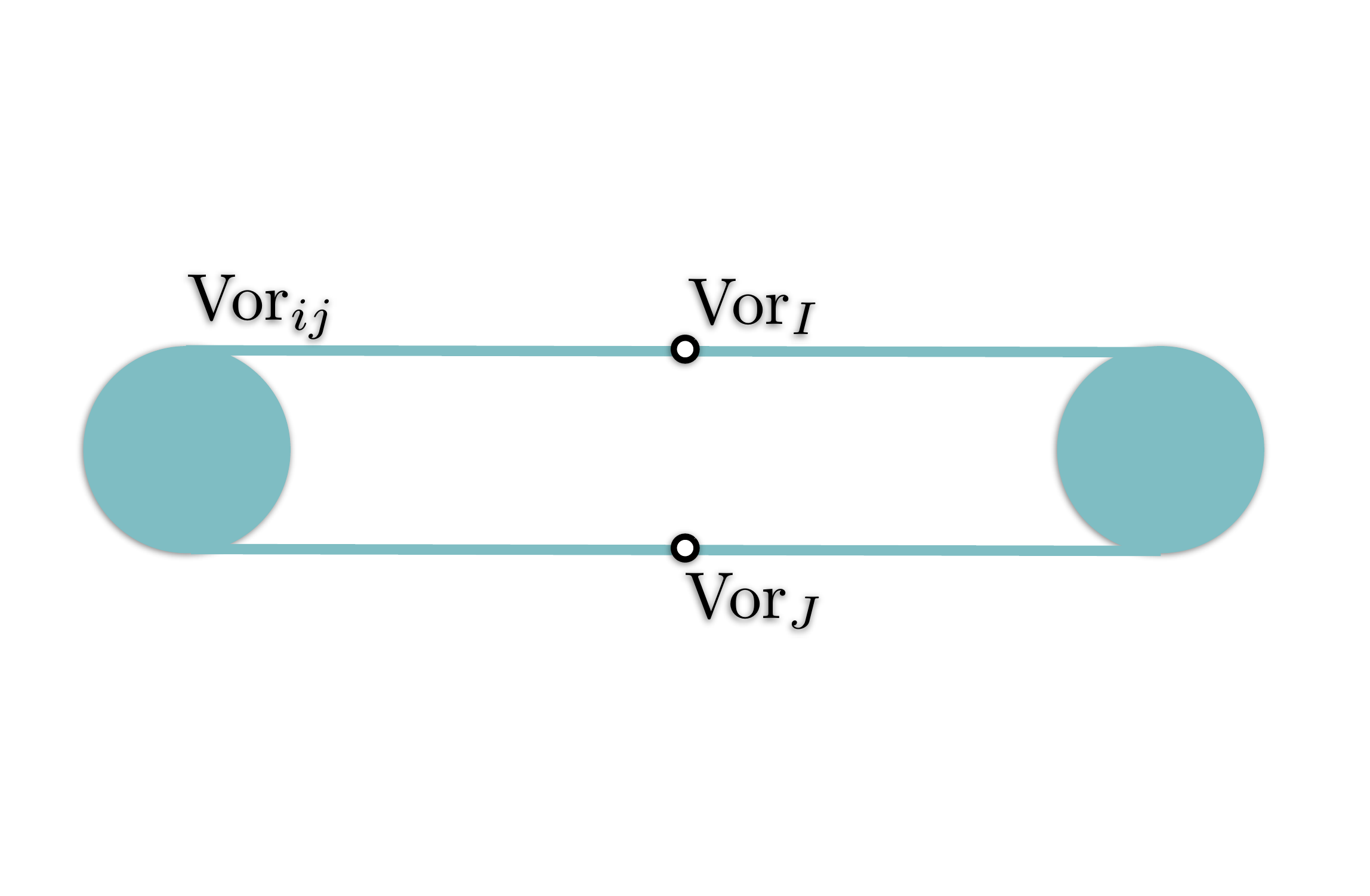}}
	%\quad\quad\quad\quad
%	\subfloat[]{\includegraphics[width=2.3in]{conedges2.pdf}}
   \caption{From assumption~\ref{ass:EGA}, it follows that every Voronoi edge $\Vor_{ij}$, 
   			in the close vicinity of a Voronoi vertex can be written as the graph of a function $f(x)$ with an endpoint at $v$ (a). 
		Figure (b) shows a hypothetical Voronoi edge $\Vor_{ij}$ that breaks assumption~\ref{ass:EGA}, 
			for which lemma~\ref{lem:closureedge} does not hold. }
   \label{fig:closureedge}
\end{figure}

\begin{lemma}\label{lem:closureedge}
Let $\Vor_{ij}$ be a Voronoi edge. 
For every $r\in\Vor_{ij}$ and $q\in\overline{\Vor_{ij}}$ there is a simple path $\gamma:[0,1]\rightarrow\overline{\Vor_{ij}}$ 
	such that $\gamma(0)= r$, $\gamma(1)= q$, and $\gamma((0,1))\subset\Vor_{ij}$. 
%For every pair of points $p,q\in\overline{\Vor_{ij}}$, 
%	there is a simple path $\gamma:[0,1]\rightarrow\overline{\Vor_{ij}}$ [0,1]\rightarrow\overline{\Vor_{ij}}$ [0,1]\rightarrow\overline{\Vor_{ij}}$ 
%	such that $\gamma\left((0,1)\right)\subset\Vor_{ij}$. 
%
%
%For every Voronoi edge $\Vor_{ij}$ %that is incident to Voronoi vertices $\Vor_{I_1},\dots,\Vor_{I_m}$, 
%	and for every pair of points $p,q\in\overline{\Vor_{ij}}$, 
%	
%	
%	there is a simple path in $\overline{\Vor_{ij}}$ connecting $p,q$ 
%	that does not intersect any Voronoi vertex, 
%	except possibly at the endpoints.
% $\Vor_{I_1},\dots,\Vor_{I_m}$. 
%Additionally, for any two Voronoi vertices in $\overline{\Vor_{ij}}$ there is a path in 
\end{lemma}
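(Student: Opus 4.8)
The plan is to distinguish two cases according to whether $q$ lies in the relatively open edge $\Vor_{ij}$ or is one of the Voronoi vertices in its closure. First I would record the structure of $\overline{\Vor_{ij}}$: by property~\ref{prop:boundaryincidence} every point of $\partial\Vor_{ij}$ lies in an element to which $\Vor_{ij}$ is incident, and $\Vor_{ij}\rightsquigarrow\Vor_J$ forces $\{i,j\}\subsetneq J$ (so $|J|\ge 3$), while any point equidistant to exactly $s_i,s_j$ and closest to them already belongs to $\Vor_{ij}$. Hence $\overline{\Vor_{ij}}=\Vor_{ij}\sqcup V$, where $V$ is the (nonempty, since isolated edges were discarded in section~\ref{sec:propedges}) set of Voronoi vertex points incident to $\Vor_{ij}$, each of which is isolated by lemma~\ref{lem:vertexincidence}. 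If $q\in\Vor_{ij}$, then since $\Vor_{ij}$ is path-connected (the path-connectedness assumption of section~\ref{sec:assumptions}) there is a path in $\Vor_{ij}$ from $r$ to $q$, and the standard fact that the image of a path between two points of a Hausdorff space contains a simple arc joining them yields a simple arc $\gamma\subseteq\Vor_{ij}$ with $\gamma(0)=r$, $\gamma(1)=q$; this arc already satisfies every requirement. So the real work is the case $q=v\in V$.

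For $q=v$, the key step is a local description of $\Vor_{ij}$ near $v$. Write $h\equiv\D{s_i}{\cdot}-\D{s_j}{\cdot}$, which is $\mathcal{C}^1$ because $D$ is continuously differentiable in its second argument; then $h(v)=0$ and, with $g_k$ the negated gradients at $v$ as in the proof of lemma~\ref{lem:vertexincidence}, $\nabla h(v)=g_j-g_i\neq 0$ since $g_i\neq g_j$ by the extremal gradient assumption (assumption~\ref{ass:EGA}). By the implicit function theorem there is a neighborhood $N$ of $v$ in which $\{h=0\}\cap N$ is a simple $\mathcal{C}^1$ arc through $v$, made of two half-arcs meeting at $v$. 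Shrinking $N$ so that, by continuity of $D$, only the sites incident to $v$ are relevant there (there is at least one index $k$ beyond $i,j$, since $|I|\ge 3$ at a vertex, and no three of the $g_\ell$ are colinear, being distinct vertices of their convex hull), the Taylor/sector analysis from the proof of lemma~\ref{lem:vertexincidence} shows that exactly one of the two half-arcs also satisfies $\D{s_i}{\cdot}\le\D{s_k}{\cdot}$ for the remaining indices $k$ and hence coincides with $\Vor_{ij}\cap N$ --- this is the picture of figure~\ref{fig:closureedge}(a), figure~\ref{fig:closureedge}(b) illustrating why assumption~\ref{ass:EGA} is needed. Thus there is a simple arc $\alpha\colon[0,1]\to\overline{\Vor_{ij}}$ with $\alpha(0)=v$ and $\alpha((0,1])\subset\Vor_{ij}$. (If $\Vor_{ij}$ is unbounded and $q=\Vor_\infty$, one first transports the situation to the Riemann sphere, or applies the M\"obius transformation of section~\ref{sec:assumptions}, reducing to the bounded case.)

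Finally I would concatenate. Applying the first case to the interior point $\alpha(1)\in\Vor_{ij}$ gives a simple arc $\beta\colon[0,1]\to\Vor_{ij}$ with $\beta(0)=r$, $\beta(1)=\alpha(1)$. Let $t_0=\min\{\,t:\beta(t)\in\alpha([0,1])\,\}$ (the set is closed and contains $1$), say $\beta(t_0)=\alpha(s_0)$ with $s_0\in(0,1]$, and let $\gamma$ follow $\beta$ on $[0,t_0]$ and then $\alpha$ backwards from $\alpha(s_0)$ to $\alpha(0)=v$. By the choice of $t_0$ the piece $\beta([0,t_0))$ misses $\alpha([0,1])$, while $\alpha([0,s_0])$ is a simple arc contained in $\Vor_{ij}\cup\{v\}$; hence $\gamma$ is a simple path from $r$ to $v$ with $\gamma((0,1))\subset\Vor_{ij}$, as required. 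I expect the main obstacle to be the local structure step of the second paragraph: without assumption~\ref{ass:EGA} the edge could a priori accumulate at $v$ without forming an arc there, and it is precisely that assumption --- through the nonvanishing of $\nabla h(v)$, the implicit function theorem, and the sector selection --- that excludes this; the remaining arc extraction and concatenation are routine point-set topology.
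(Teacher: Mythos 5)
Your proposal follows essentially the same route as the paper for the main case: the same split on whether $q$ lies in $\Vor_{ij}$ or is a vertex in its closure, and for a bounded vertex $v$ the same key mechanism --- assumption~\ref{ass:EGA} gives $g_i\neq g_j$, so the implicit function theorem describes $\{\D{s_i}{\cdot}=\D{s_j}{\cdot}\}$ locally as a graph $y=f(x)$, and the wedge/sector analysis inherited from the proof of lemma~\ref{lem:vertexincidence} isolates the half-arc that actually lies in $\Vor_{ij}$ and terminates at $v$. Your concatenation step is in fact tighter than the paper's: the first-hitting-time trimming ($t_0=\min\{t:\beta(t)\in\alpha([0,1])\}$) cleanly guarantees simplicity, whereas the paper only asserts that the two pieces ``do not cross'' because $v$ is chosen closer to $q$ than $r$ is.

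The one place your argument is thinner than the paper's is the vertex at infinity. You dispose of it by ``transporting to the Riemann sphere \ldots reducing to the bounded case,'' but that reduction does not carry the local analysis with it: the implicit-function-theorem step at $v$ uses differentiability of $D$ at $v$, and there is no divergence defined at the point at infinity to which the same argument could be applied after a M\"obius or stereographic map. The paper instead gives a separate direct construction for this case: it picks points $r_k\in\Vor_{ij}$ at Euclidean distance $k$ from $r$ (possible because $\Vor_{ij}$ is unbounded and path-connected), concatenates paths between consecutive $r_k$, reparametrizes over $[0,1)$, observes continuity at the north pole on the sphere, and then trims self-intersections to obtain a simple arc. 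You should either reproduce something like that construction or make explicit what property replaces the local arc structure at $\Vor_\infty$; as written, this subcase is a gap, albeit a contained one that does not affect the bounded-vertex argument.
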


\begin{proof}
\noindent{{\bf [Case $q\in\Vor_{ij}$]}}. 
Recall that connected components Voronoi edges are assumed to be path-connected (section~\ref{sec:assumptions}). 
Since Voronoi edges are connected (lemma~\ref{lem:connectededges}), they are path-connected. 
Therefore, if $q\in\Vor_{ij}$, there is always a path $\gamma:[0,1]\rightarrow\Vor_{ij}$ connecting $r,q$. \\

\noindent{{\bf [Case $q\in\partial\Vor_{ij}\setminus\Vor_{ij}$]}}. 
In this case, by property~\ref{prop:boundaryincidence}, $q$ must belong to a Voronoi element of higher order than $\Vor_{ij}$
	(a Voronoi vertex $\Vor_I$), to which $\Vor_{ij}$ is incident (with $\Vor_{ij}\rightsquigarrow\Vor_I$). 
Since, by lemma~\ref{lem:vertexincidence}, Voronoi vertices are composed of isolated points, 
	then $q$ is a connected component of $\Vor_I$ (possibly the vertex at infinity). 
Consider separately whether $q$ is the vertex at infinity. \\

\noindent{{\bf [Case $q\in\partial\Vor_{ij}\setminus\Vor_{ij}$ and $q$ is not the vertex at infinity]}}.
%By lemma~\ref{lem:vertexincidence}, 
%	Voronoi vertices are isolated points, 
%	and therefore the closure of a Voronoi edge $\Vor_{ij}$ contains its incident Voronoi vertices. 
%We wish to establish that $\Vor_{ij}$ is not only path connected, 
%	but that its closure is as well, and that the set of Voronoi vertices in $\overline{\Vor_{ij}}$ 
%	does not separate it. 
%Let $\Vor_{ij}$ be incident to $\Vor_I$. %an isolated point $v$ belonging to vertex $\Vor_I$. %
%By lemma~\ref{lem:vertexincidence},
%	$\Vor_I$ is a collection of isolated points. 
%	Let $v\in\Vor_I$ be \emph{a} point of $\Vor_I$ incident to $\Vor_{ij}$
%	(note that $\Vor_{ij}$ may be incident to the same $\Vor_I$ at multiple points). 
%
%
Recall that the proof of lemma~\ref{lem:vertexincidence}
	defines an ordering of $I=i_1,\dots,i_m$, and a set of associated direction vectors $d_1,\dots,d_m$. 
%Let $I=\{i_1,\dots,i_m\}$, 
Let $g_k=\nabla_p\D{s_i}{p}\big|_{p=q}$, with $k=1,\dots,m$, 
	and let $d_{ij}$ be the unit vector orthogonal to $g_i-g_j$
%	\[ \nabla_p\D{s_i}{p}\big|_{p=q} - \nabla_p\D{s_j}{p}\big|_{p=q} \]
	in the direction outgoing from $\CH\{-g_1,\dots,-g_m\}$
	(which exists since, by assumption~\ref{ass:EGA}, it is $g_i\neq g_j$). 
We assume, without loss of generality, that the coordinate representation of $d_{ij}$ 
	is $[\left(d_{ij}\right)_x, \left(d_{ij}\right)_y] = [1,0]$. 
Since $D\in\mathcal{C}^1$ and %, by assumption~\ref{ass:EGA}, it holds 
	$g_i\ne g_j$, by the implicit function theorem, 
	there is an open $L_2$ ball $B_2(q;\xi)$ around $q$ in which the implicit equation $ \D{s_i}{p} = \D{s_j}{p} $
	can be written as $y=f(x)$, with $f'(0)=0$, 
	as shown in figure~\ref{fig:closureedge.a}. 
	
Since $\Vor_{ij}$ is incident to $\Vor_I$ at $q$, 
	there is $k\in\{1,\dots,m\}$ such that $i=i_k, j=i_{k\oplus 1}$. 
Choose $0 < \mu < \xi$ to be sufficiently small for the conditions of the proof of lemma~\ref{lem:vertexincidence} to apply
	(in particular $\mu < \min\{\delta,\varepsilon,\xi\}$, as defined in the proof). 
%%Let $0 < \mu < \min\{\xi,\delta,\varepsilon,\xi\}$ be sufficiently small 
%	to be in as in the proof of lemma~\ref{lem:vertexincidence} 
Let $W$ be a circular wedge contained in the $L_2$ ball $B_2(q;\mu)$, 
	and bounded by the rays $q+\mu d_k$ and $q+\mu d_{k\oplus 1}$ which, 
	aside from $q$, only contains points strictly closer to $\{s_i,s_j\}$ than to all other sites.  
From the definition of $d_k,d_{k\oplus 1}$ it is clear that the segment $\overline{q, q+\mu d_{ij}}$ 
	is contained in $W$.

Since $\mu<\xi$, and inside $W$ all points with the exception of $q$ are closest only to $s_i,s_j$, 
	the implicit equation $ \D{s_i}{p} = \D{s_j}{p} $ represents the set of points in $W\cap \Vor_{ij}$. 
Since $ \D{s_i}{p} = \D{s_j}{p} $ can be written in coordinates as $y=f(x)$ inside $W$, 
	it is clear that, inside $W$, $\Vor_{ij}$ is a simple curve, %$\alpha$ 
	and that this is the only part of $\Vor_{ij}$ incident to $q$. 
	%($\Vor_{ij}\cap B_2(q;\mu)\subset W$). 

Given $r\in\Vor_{ij}$, find any point $v\in W\cap\Vor_{ij}$ that is closer to $q$ than $r$. 
Because $v\in\Vor_{ij}$, there is a simple path $\gamma_1\subset\Vor_{ij}$ connecting $r$ to $v$ and, 
	because $v$ is in $W$, there is also a simple path 
	$\gamma_2\subset\Vor_{ij}$ from $v$ to $q$
	(part of the curve $y=f(x)$ of figure~\ref{fig:closureedge}). 
Finally, because $v$ is closer to $q$ than $r$ is, 
	the paths $\gamma_1$ and $\gamma_2$ do not cross, and therefore the 
	concatenation of $\gamma_1$ and $\gamma_2$ 
	meets the requirements of the lemma.

%Constructing a path from $r$ to $q$ in $\Vor_{ij}$
%We proceed as follows in order to construct a simple path from $r$ to $q$ satisfying the conditions of the lemma
%	(refer to figure~\ref{fig:closureedge.a}$, where 
%|||	
%	consider any point $p\in\alpha$ other than 

%
%We may now prove the lemma. 
%Let $p,q\in\overline{\Vor_{ij}}$ with $p,q\ne q$ be connected by a simple path $\gamma$. 
%Since $p,q\ne q$, we can shrink the ball $B_2(q;\mu)$ sufficiently as to leave $p,q$ outside.
%Since the curve $y=f(x)$ inside $W$ is the only part of $\Vor_{ij}$ connected to $q$, 
%	any path connecting $q$ to a point outside $B_2(q;\mu)\supset W$ must go twice through 
%	the portion of the curve $y=f(x)$ inside $W$, 
%	and therefore cannot be simple. \\

%
%If $\gamma$ passes through $v$ but doesn't have $v$ as a endpoint 
% passing through $v\in\Vor_I$, 
%	and therefore violating the conditions of the lemma. We now show that we can ``reroute"
%	the path so that it does not pass through $v$. 
%Since $\gamma$ is simple, $p,q\ne v$, 
%	and therefore we can shrink the ball $B_2(v;\mu)$ as to leave $p,q$ outside.
%Since $\gamma$ passes through $v$, it must connect $v$ 
%Let $\mu d_{ij}$ be the only vector in $W$ of length $\mu$ and orthogonal to 
%	$\nabla_p\D{s_i}{p}\big|_{p=v} - \nabla_p\D{s_j}{p}\big|_{p=v}$. 
%	
%Since $D\in\mathcal{C}^1$, by the implicit function theorem, 	

\noindent{{\bf [Case $q\in\partial\Vor_{ij}\setminus\Vor_{ij}$ and $q$ is the vertex at infinity]}}.
Since $\Vor_{ij}\rightsquigarrow\Vor_{\infty}$ then, by definition, $\Vor_{ij}$ is unbounded. 
Let $r_0\equiv r$ and, for each $k\in\mathbb{N}$, let $r_k\in\Vor_{ij}$ be at distance $\|r-r_k\|_2=k$. 
One can always find such a sequence of points because $\Vor_{ij}$ is unbounded and path-connected
	(if there is no $r_k\in\Vor_{ij}$ at distance $\|r-r_k\|_2=k$ then the circle with center at $r$ and radius $k$ would disconnect $\Vor_{ij}$). 
Let $\gamma'_k:[0,1]\rightarrow\Vor_{ij}$ be paths connecting $r_{k-1}$ to $r_k$, 
	and $\gamma':\mathbb{R}^+\rightarrow\Vor_{ij}$ be the concatenation of $\gamma'_1,\gamma'_2,\dots$, 
	where $\gamma'(k+t)\equiv\gamma'_k(t)$, with $k\in\mathbb{N}$ and $t\in[0,1]$. 

Define $\gamma:[0,1]\rightarrow\Vor_{ij}\cup\Vor_\infty$ as $\gamma(t) \equiv \gamma'(1/(1-t))$. 
Consider $\gamma$ on the Riemann sphere, transformed through a stereographic projection. 
Since $\gamma'$ is continuous and $\gamma$ has an accumulation point at the point at infinity (north pole on the sphere), 
	it is continuous on the sphere. 
If $\gamma$ is not simple, it can be appropriately cut and reparametrized until it is
%	and $\gamma_k$ be concatenation of all paths $\gamma'_1,\dots,\gamma'_k$, 
%	reparameterized on the unit interval. 
%If $\gamma_k$ isn't simple, cut and reparametrize until it is 
	(i.e.\ by tracing the path and, upon arrival to a point $c$ where the path crosses itself, 
		cutting out the next portion up to the highest $t$ for which $\gamma(t)=c$, 
		and proceeding this way to the end of the path).

\end{proof}

Note that for lemma~\ref{lem:closureedge} to hold it is crucial that edges $\Vor_{ij}$ are incident to vertices $\Vor_I$ 
	as a curve arriving at $v\in\Vor_I$ from a single direction, 
	as illustrated in figure~\ref{fig:closureedge.a}.  
To see that assumption~\ref{ass:EGA} is required, 
	consider figure~\ref{fig:closureedge.b}, 
	which depicts an edge $\Vor_{ij}$ incident to two vertices $\Vor_I,\Vor_J$
	which do not satisfy assumption~\ref{ass:EGA}, 
	in which \emph{every} path connecting the two disks passes through either $\Vor_I$ or $\Vor_J$, 
	and therefore for which lemma~\ref{lem:closureedge} does not hold.

\begin{figure}[htbp]
   \centering
	\subfloat[]{\label{fig:tree.a}\includegraphics[width=3.0in]{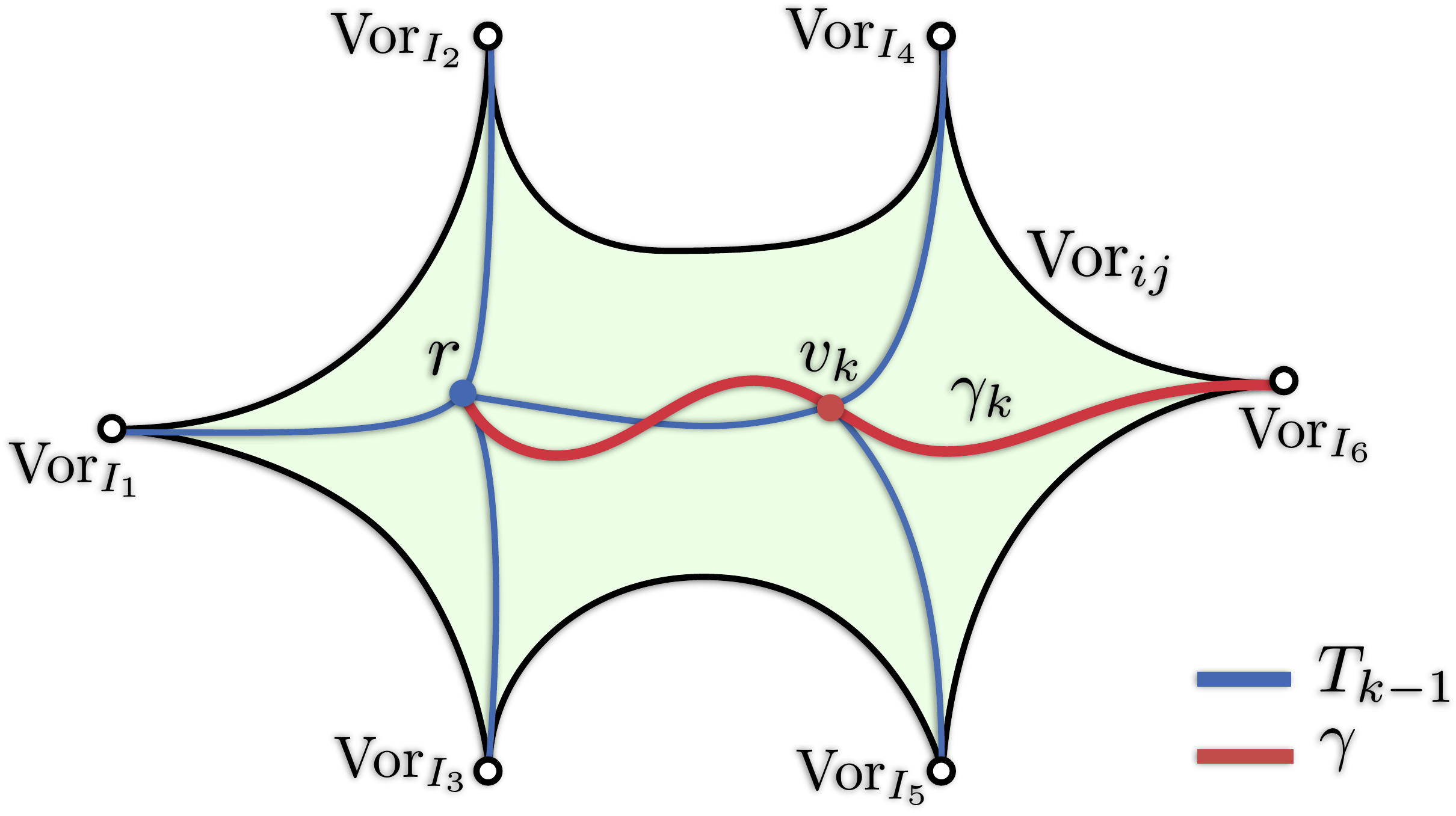}}
	\quad
	\subfloat[]{\label{fig:tree.b}\includegraphics[width=3.0in]{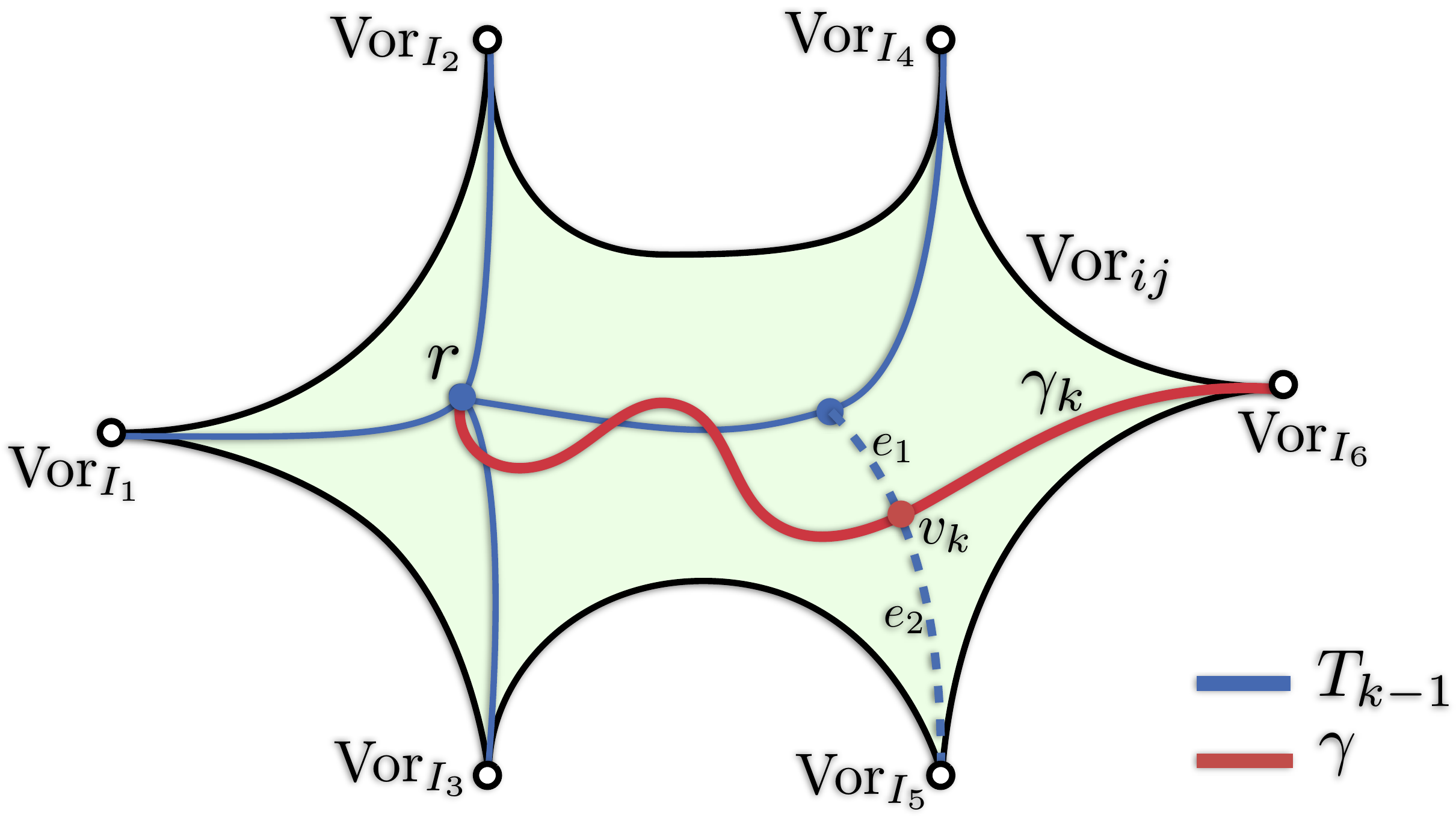}}
   \caption{The construction of a tree (blue) inside an edge $\Vor_{ij}$ (green region), with root $r$ and leafs 
   			at its incident Voronoi vertices $\Vor_{I_1},\dots,\Vor_{I_m}$.}
   \label{fig:tree}
\end{figure}

\begin{lemma}\label{lem:tree}
In an orphan-free diagram, 
	for every Voronoi edge $\Vor_{ij}$ that is incident to Voronoi vertices $\Vor_{I_1},\dots,\Vor_{I_m}$, 
	there is an embedded tree graph in $\overline{\Vor_{ij}}$ whose leafs are $\Vor_{I_1},\dots,\Vor_{I_m}$.
\end{lemma}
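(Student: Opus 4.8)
The plan is to build the tree incrementally by rooting it at $r\in\Vor_{ij}$ and growing a path out to each incident Voronoi vertex $\Vor_{I_k}$, then merging these paths into a single embedded tree. By Lemma~\ref{lem:closureedge}, for each $k=1,\dots,m$ there is a simple path $\gamma_k:[0,1]\rightarrow\overline{\Vor_{ij}}$ with $\gamma_k(0)=r$, $\gamma_k(1)\in\Vor_{I_k}$, and $\gamma_k((0,1))\subset\Vor_{ij}$. These paths all share the endpoint $r$ and all their other endpoints are distinct (since $\Vor_{I_1},\dots,\Vor_{I_m}$ are distinct isolated points by Lemma~\ref{lem:vertexincidence}), but they may cross each other arbitrarily in the open region $\Vor_{ij}$. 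The task is to resolve these crossings so that what remains is a tree that is still embedded in the plane and still has all $m$ vertices as leaves.

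The cleanest way to do this is to appeal to the simple connectedness of $\Vor_{ij}$ (Lemma~\ref{lem:SCedges}), map $\overline{\Vor_{ij}}$ to the closed unit disk $D_2$ by a homeomorphism $h$ via Carath\'eodory's theorem (exactly as in the proof of Lemma~\ref{lem:regionpath}; if $\Vor_{ij}$ is unbounded, first compactify on the Riemann sphere and arrange $h$ to send the vertex at infinity to a boundary point of $D_2$), and carry out the construction inside the disk. Inside $D_2$, $h(r)$ is an interior point and $h(\gamma_k(1))$ are distinct points; because $D_2$ is convex one can replace each $h\circ\gamma_k$ by a straight segment from $h(r)$ to $h(\gamma_k(1))$, obtaining $m$ segments that meet pairwise only at $h(r)$. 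This is already an embedded ``star'' tree with the required leaves, and pulling it back through $h^{-1}$ yields an embedded tree in $\overline{\Vor_{ij}}$ with root $h^{-1}(h(r))=r$, interior contained in $\Vor_{ij}$, and leaves exactly $\Vor_{I_1},\dots,\Vor_{I_m}$. One should note the mild degenerate case $m=1$, where the tree is a single path (Lemma~\ref{lem:closureedge} again), and $m=0$ cannot occur here since by Lemma~\ref{lem:val_ge2} every non-isolated edge is incident to at least two vertices and isolated edges have been discarded.

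The main obstacle is the boundary behavior: the leaves $h(\gamma_k(1))$ lie on $\partial D_2$, so the straight segments from the interior point $h(r)$ to them are contained in $D_2$ but touch $\partial D_2$ only at their endpoints, which is fine — but one must check that pulling back gives paths whose interiors stay in $\Vor_{ij}$ and do not accidentally re-touch $\partial\Vor_{ij}$. This follows because $h$ is a homeomorphism of the closed disk sending $\partial D_2$ to $\partial\Vor_{ij}$, so a point maps to the boundary iff its preimage does. A secondary subtlety is the unbounded case: Carath\'eodory's theorem requires the boundary to be a Jordan curve, which holds here only after the sphere compactification and the path-extension argument of Lemma~\ref{lem:closureedge}; I would simply invoke that machinery rather than reprove it. With these points addressed, the embedded tree is produced directly and the lemma follows.
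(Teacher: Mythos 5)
Your overall plan (root at a point $r\in\Vor_{ij}$, connect to each incident vertex by a path from lemma~\ref{lem:closureedge}, then resolve crossings) starts in the right place, but the crossing-resolution step has a genuine gap. You propose to invoke Carath\'eodory's theorem to map $\overline{\Vor_{ij}}$ homeomorphically onto the closed unit disk and then draw straight segments there. That machinery applies to a bounded, open, simply connected \emph{domain} whose boundary is a Jordan curve --- which is why the paper only uses it for Voronoi \emph{regions} (lemma~\ref{lem:regionpath}), and why the Jordan-boundary assumption of section~\ref{sec:assumptions} is stated only for regions. A Voronoi edge $\Vor_{ij}$ is generically a one-dimensional set (locally the graph $y=f(x)$ of the implicit equation $\D{s_i}{p}=\D{s_j}{p}$, as in lemma~\ref{lem:closureedge}); it has empty interior and is not homeomorphic to a closed disk, so the homeomorphism $h$ you rely on does not exist. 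Worse, at this stage of the argument $\Vor_{ij}$ is only known to be connected, path-connected, and simply connected --- it could a priori be a branching, tree-like curve --- and the whole point of the lemma is to extract a tree from such a set without assuming it is a disk. Your star-of-segments construction therefore cannot be carried out.

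The paper's proof sidesteps this entirely by working combinatorially inside $\overline{\Vor_{ij}}$: it builds the tree by induction on the number of leaves, and when adding the $k$-th leaf it takes a path $\gamma$ from the root to $\Vor_{I_k}$ (avoiding Voronoi vertices except at its endpoint, via lemma~\ref{lem:closureedge}), locates the \emph{last} parameter $t_k=\max\{t:\gamma(t)\in T_{k-1}\}$ at which $\gamma$ meets the already-built tree, and grafts only the terminal arc $\gamma([t_k,1])$ onto $T_{k-1}$ (splitting an edge of $T_{k-1}$ at $\gamma(t_k)$ if necessary). This uses nothing beyond path-connectivity and the closedness of $T_{k-1}$, so it is valid whether $\Vor_{ij}$ is a curve, a branched curve, or a two-dimensional region. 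If you want to repair your argument, replace the conformal-mapping step with this last-intersection grafting step; the rest of your setup (distinctness of the leaf points, the $m=1$ base case) is fine.
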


\begin{proof}

Unless otherwise specified, we assume in this proof that all paths are simple, contained in $\overline{\Vor_{ij}}$, 
	parametrized over the unit interval $[0,1]$, 
	and that, using lemma~\ref{lem:closureedge},
	there is a path connecting any two points in $\overline{\Vor_{ij}}$ that does not intersect a Voronoi vertex
	(expect perhaps at the endpoints). 
We use throughout the fact that Voronoi edges are path connected %, as well as lemma~\ref{lem:closureedge}.
	(lemma~\ref{lem:SCedges} and section~\ref{sec:assumptions}).

If $m=1$, pick a point $r\in\Vor_{ij}$ as root and, 
	using lemma~\ref{lem:regionpath},
	 consider a simple path $\gamma_{r,1}\subset\overline{\Vor_{ij}}$ connecting $r$ to $\Vor_{I_1}$, 
	then the tree with vertex set $V=\{r,\Vor_{I_1}\}$, and edge set $E=\{\gamma_{r,1}\}$ meets the requirements of the lemma.

%
%If $m=2$, consider a simple path $\gamma_2\subset\overline{\Vor_{ij}}$ connecting $\Vor_{I_1},\Vor_{I_2}$. 
%Pick any point $r\in\gamma_2$ (other than the endpoints) as root, and consider 
%

%We prove the lemma by induction. 
For each $k\ge 2$, assume that there is an embedded tree graph $T_{k-1}\subset\overline{\Vor_{ij}}$ 
	with $\Vor_{I_1},\dots,\Vor_{I_{k-1}}$ as leafs. 
We construct a new embedded tree $T_k$ as follows (figure~\ref{fig:tree}). %that meets the requirements of the lemma. 
Let $r\in\Vor_{ij}$ be the root of $T_{k-1}$, 
	and let $\gamma$ be a simple path connecting $r$ to $\Vor_{I_k}$ which, 
	making use of lemma~\ref{lem:closureedge}, is chosen such that it does not intersect 
	any Voronoi vertex (other than the final endpoint). 
Let 
	\[ t_k \equiv \max \{t\in [0,1] ~:~ \gamma(t)\in T_{k-1} \}, \]
	which always exists because $T_{k-1}$ is closed 
	and $\gamma(0)=r\in T_{k-1}$. 
Let $v_k\equiv \gamma(t_k)$ be the ``last" point along $\gamma$ that belongs to $T_{k-1}$.
Because $\gamma(1)=\Vor_{I_k}\notin T_{k-1}$ 
	then it must be $t_k <1$. 
Additionally, $v_k$ cannot be a Voronoi vertex, 
	since $\gamma$ doesn't intersect Voronoi vertices except at the final endpoint $\Vor_{I_k}$.

Let $\gamma_k$ be the path $\{\gamma(t) ~:~ t\in [t_k,1]\}$, 
	that is, the part of $\gamma$ from $v_k$ to $\Vor_{I_k}$. 
We construct a new tree graph $T_k\subset\overline{\Vor_{ij}}$ as follows. 
Begin by setting $T_k$ equal to $T_{k-1}$. 
We then insert a new vertex $\Vor_{I_k}$ into $T_k$. 
%We now proceed 
Next, we proceed differently depending on whether $v_k\in T_{k-1}$ 
	is a vertex, or it belongs to an edge of $T_{k-1}$
(note that, since $v_k$ is not a Voronoi vertex, it cannot be a leaf vertex of $T_{k-1}$). 
%We consider the cases that $v_k$ is an internal vertex of $T_{k-1}$, or belongs to an edge of $T_{k-1}$. 

If $v_k$ is an internal vertex of $T_{k-1}$, 
	as in figure~\ref{fig:tree.a},
	then we add a new edge $\gamma_k$ to $T_k$ connecting vertices $v_k$ and $\Vor_{I_k}$. 
Since, by construction, $\gamma_k$ does not cross any edge in $T_k$, the tree graph remains embedded.

If, on the other hand, $v_k$ belongs to an edge $e$ of $T_{k-1}$ connecting vertices $v_1,v_2$, 
	as shown in figure~\ref{fig:tree.b}, 
	then:
	\begin{enumerate}
	\item we insert a new (internal) vertex $v_k$ into $T_k$;
	\item we split $e$ into two edges: 
		$e_1$ and $e_2$, connecting $v_1,v_k$, and $v_k,v_2$, respectively; 
	\item we insert a new edge $\gamma_k$ connecting vertices $v_k$ and $\Vor_{I_k}$. 
	\end{enumerate}
Note that 
	the edge $e$ is split into two edges that represent the same set of points, 
	and therefore, since $\gamma_k$ didn't cross any edges of $T_{k-1}$, 
	then $\gamma_k$ does not cross any edge of $T_k$. 
%	and therefore they don't cross any other edge in $T_k$. 
Hence, since $T_{k-1}$ is an embedded tree graph, 
	the new tree $T_k$ is also embedded
	 and has $\Vor_{I_1},\dots,\Vor_{I_k}$ as leafs. 

The lemma follows by induction on $m$.  
\end{proof}

\begin{figure}[!h]
   \centering
\includegraphics[width=3.5in]{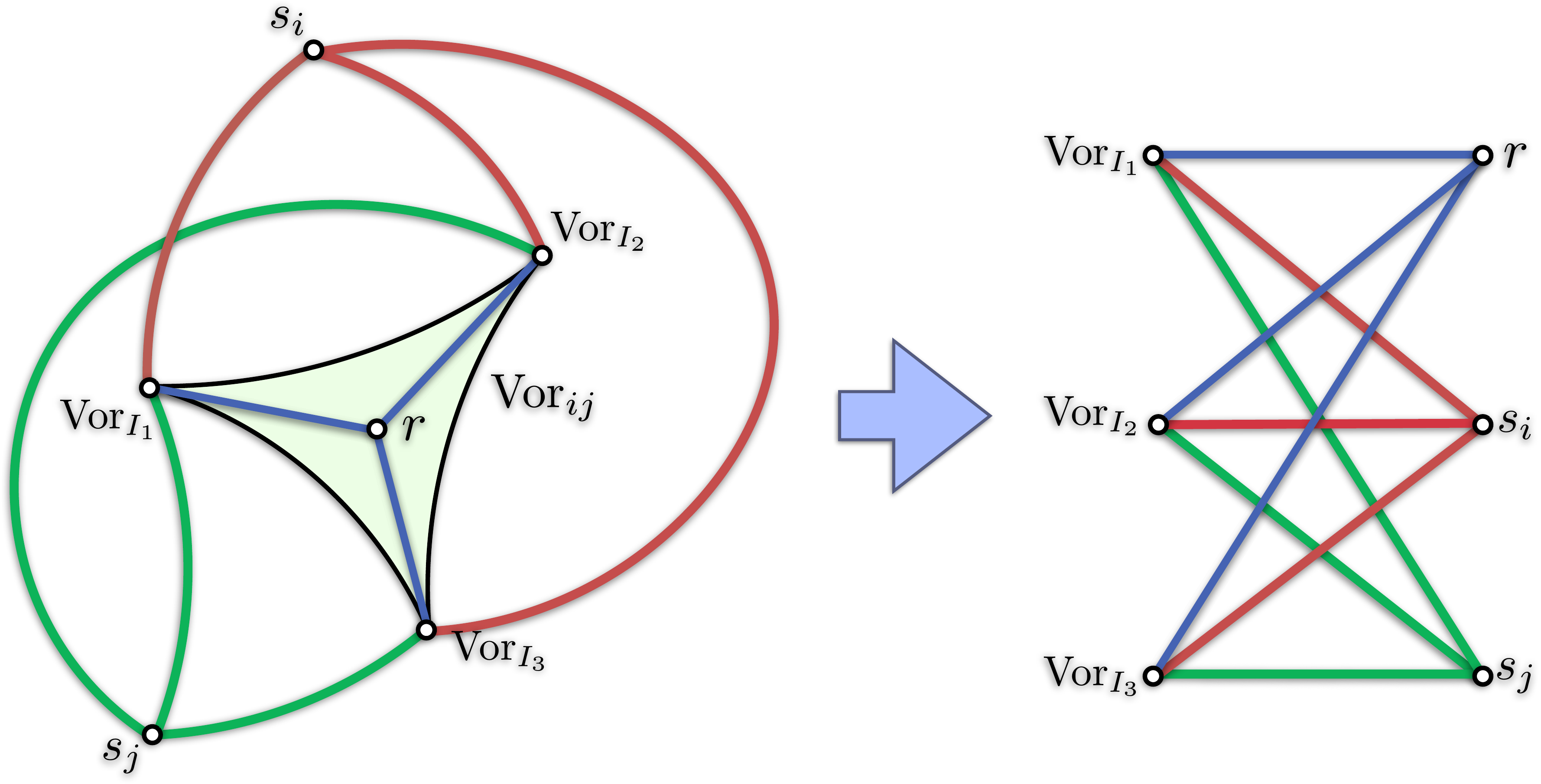}
   \caption{By assuming that a Voronoi edge $\Vor_{ij}$ is incident to three Voronoi vertices $\Vor_{I_1},\Vor_{I_2},\Vor_{I_3}$, 
   		we can construct a planar embedding of the non-planar graph $K_{3,3}$, a contradiction. 
		The more general figure~\ref{fig:planarity.a} 
			 further illustrates the proof of lemma~\ref{lem:val_le2}. }
%		%A more general form of this argument 
%		show that Voronoi edges are incident to two vertices. }
   \label{fig:K33}
\end{figure}

%We now combine lemma~\ref{lem:val_ge2} 
The final lemma of this section can be used in conjunction with lemma~\ref{lem:val_ge2} 
	to establish that Voronoi edges are incident to exactly two Voronoi vertices. 
We sketch here the argument that shows that a Voronoi edge $\Vor_{ij}$ cannot 
	be incident to three vertices $\Vor_{I_1},\Vor_{I_2},\Vor_{I_3}$ (figure~\ref{fig:K33}). 
The general case in the proof of lemma~\ref{lem:val_le2} follows a similar argument. 
We first use lemma~\ref{lem:tree} to build a tree inside $\Vor_{ij}$ with leafs at $\Vor_{I_1},\Vor_{I_2},\Vor_{I_3}$, 
	and show that it can be collapsed into a star-graph with a vertex $r\in\Vor_{ij}$, and non-crossing edges 
		$(r,\Vor_{I_1}), (r,\Vor_{I_2}), (r,\Vor_{I_3})$, as shown in the figure. 
The incidence rules of lemma~\ref{lem:vertexincidence}, as well as lemma~\ref{lem:regionpath} allows us 
	to construct six non-crossing edges from $s_i$ and $s_j$, to $\Vor_{I_1},\Vor_{I_2},\Vor_{I_3}$, respectively. 
We have just constructed an embedding of a graph which can be easily shown to be the non-planar graph $K_{3,3}$, 
	thereby reaching a contradiction.

\begin{figure}[!h]
   \centering
	\subfloat[]{\label{fig:planarity.a}\includegraphics[width=2.8in]{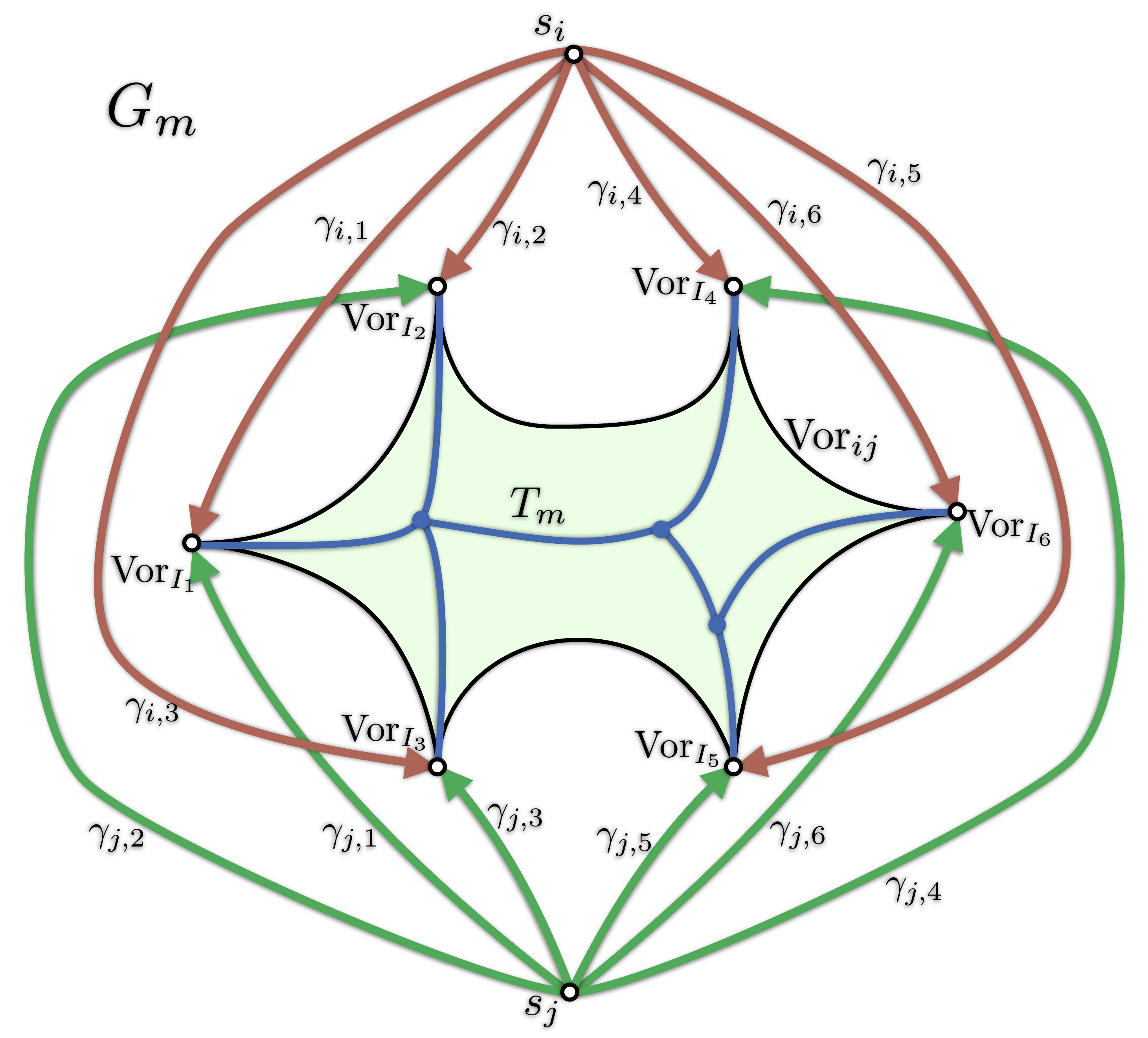}}
	\quad\quad
	\subfloat[]{\label{fig:planarity.b}\includegraphics[width=2.8in]{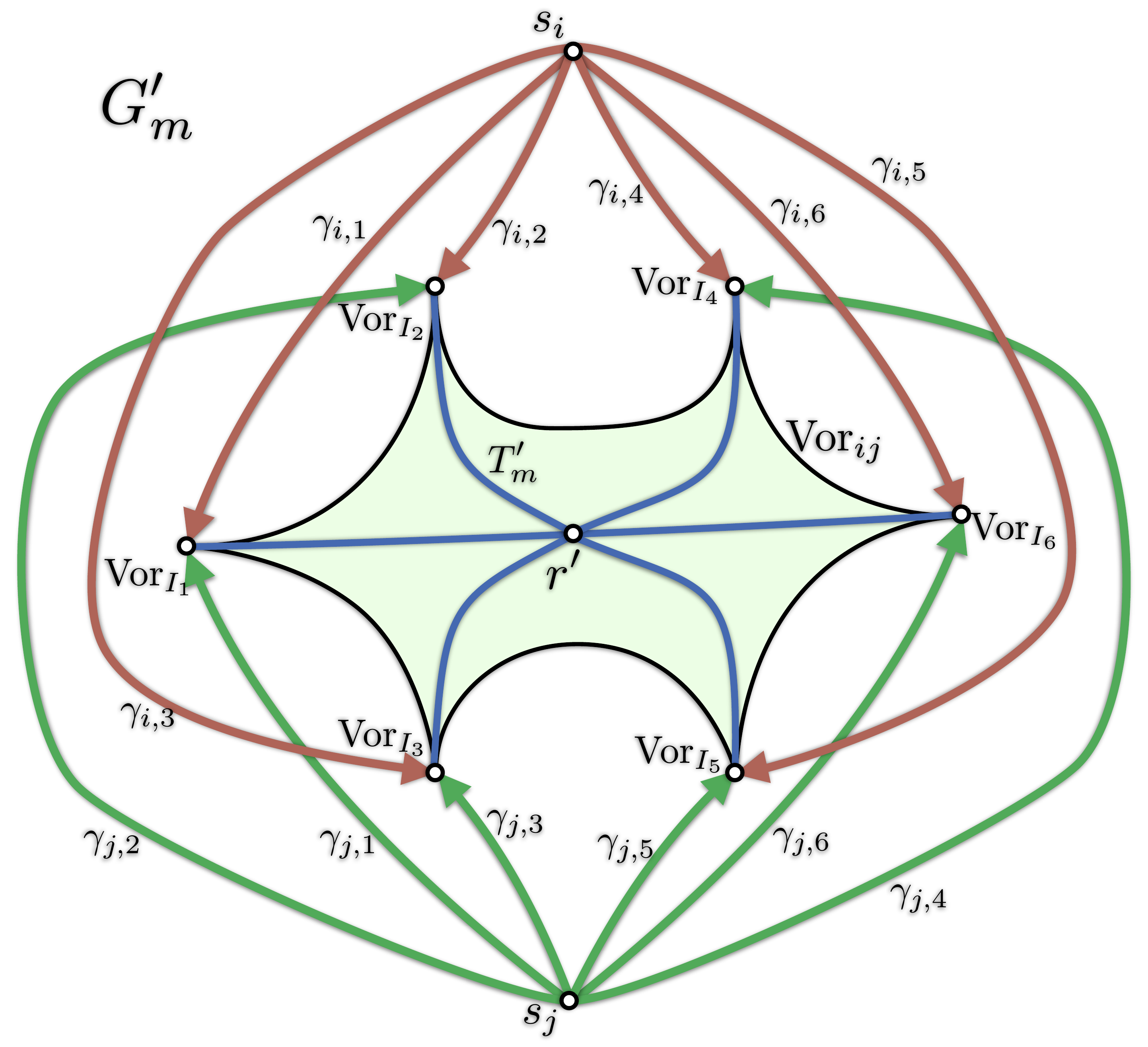}}
   \caption{Every Voronoi edge $\Vor_{ij}$ (green region) incident to $m$ Voronoi vertices 
   			allows the construction of an embedded planar graph $G_m$ connecting 
   			a tree $T_m$ inside $\Vor_{ij}$, to the sites $s_i,s_j$ (a). 
		This graph has a minor $G'_m$ obtained from $G_m$ by contracting edges of $T_m$. 
		$G'_m$ can be shown not to be planar for $m>2$, and therefore Voronoi edges are incident to no more than two Voronoi vertices. 
		 }
   \label{fig:planarity}
\end{figure}

\begin{lemma}\label{lem:val_le2}
Voronoi edges of an orphan-free diagram are incident to no more than two Voronoi vertices.
\end{lemma}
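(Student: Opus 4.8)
The plan is to assume for contradiction that some Voronoi edge $\Vor_{ij}$ is incident to $m\ge 3$ distinct Voronoi vertices $\Vor_{I_1},\dots,\Vor_{I_m}$, and to derive from this a plane embedding of the non-planar graph $K_{3,3}$. The whole construction is carried out on the Riemann sphere (via stereographic projection), so that a possible vertex at infinity among the $\Vor_{I_k}$ is just an ordinary vertex and plane embeddability coincides with sphere embeddability. First I would invoke Lemma~\ref{lem:tree} to obtain an embedded tree $T_m\subset\overline{\Vor_{ij}}$ whose leaves are exactly $\Vor_{I_1},\dots,\Vor_{I_m}$. Since $\Vor_{ij}\rightsquigarrow\Vor_{I_k}$ forces $\{i,j\}\subseteq I_k$, the region--vertex incidences $\Vor_i,\Vor_j\rightsquigarrow\Vor_{I_k}$ hold by Lemma~\ref{lem:vertexincidence}, and because each $\Vor_{I_k}$ is a single point (Lemma~\ref{lem:vertexincidence}) that point lies on $\partial\Vor_i$ and on $\partial\Vor_j$. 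Applying Lemma~\ref{lem:regionpath} inside $\Vor_i$ produces simple paths $\alpha_1,\dots,\alpha_m\subseteq\overline{\Vor_i}$ from $s_i$ to $\Vor_{I_1},\dots,\Vor_{I_m}$ that pairwise meet only at $s_i$ and touch $\partial\Vor_i$ only at their terminal endpoints; symmetrically, Lemma~\ref{lem:regionpath} inside $\Vor_j$ yields paths $\beta_1,\dots,\beta_m\subseteq\overline{\Vor_j}$ from $s_j$ with the same properties.

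Next I would form the graph $G_m$ with vertex set $\{s_i,s_j\}\cup V(T_m)$ and edge set $E(T_m)\cup\{\alpha_1,\dots,\alpha_m\}\cup\{\beta_1,\dots,\beta_m\}$, and verify that it is genuinely embedded. The point is that arcs from different families are interior-disjoint: $T_m$ lies in $\overline{\Vor_{ij}}$, the $\alpha_k$ lie in $\overline{\Vor_i}$, and the $\beta_k$ lie in $\overline{\Vor_j}$; the open Voronoi elements $\Vor_i,\Vor_j,\Vor_{ij}$ are pairwise disjoint, so (e.g.) $\overline{\Vor_i}\cap\Vor_j=\overline{\Vor_i}\cap\Vor_{ij}=\emptyset$; the $\alpha_k$ lie in $\Vor_i$ except at their endpoint $\Vor_{I_k}$, the $\beta_k$ lie in $\Vor_j$ except at $\Vor_{I_k}$, and $T_m$ meets the leaves $\Vor_{I_k}$ only at its leaves; finally $s_i,s_j$ are interior points of their regions (Lemma~\ref{lem:site}) and so lie in no other closed element. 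Hence the only points shared by distinct arcs of $G_m$ are the common vertices $s_i,s_j,\Vor_{I_1},\dots,\Vor_{I_m}$, so $G_m$ is an embedded plane graph. Now the internal vertices of $T_m$ span a connected subtree, so contracting all of its internal edges (an operation preserving plane embeddability) collapses $T_m$ to a single vertex $r$ that remains joined by a pendant edge to each leaf $\Vor_{I_k}$, while the $\alpha_k,\beta_k$ survive unchanged; the resulting embedded graph $G'_m$ therefore has each of $r,s_i,s_j$ adjacent to all of $\Vor_{I_1},\dots,\Vor_{I_m}$. As $m\ge 3$ and $r,s_i,s_j,\Vor_{I_1},\Vor_{I_2},\Vor_{I_3}$ are six distinct vertices, $G'_m$ contains $K_{3,3}$ as a subgraph, contradicting its non-planarity; hence $m\le 2$, and together with Lemma~\ref{lem:val_ge2} every Voronoi edge is incident to exactly two Voronoi vertices.

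The step I expect to be the main obstacle is precisely the verification that $G_m$ is a bona fide plane graph --- that no arc from one of the three families crosses or touches an arc from another except at the six declared common vertices. This is where the structural lemmas do the work: the pairwise disjointness of the open Voronoi elements, the ``meets the boundary only at the final endpoint'' and ``two paths meet only at $s_i$'' clauses of Lemma~\ref{lem:regionpath}, and the fact from Lemma~\ref{lem:vertexincidence} that each $\Vor_{I_k}$ is a single isolated point, so that ``$\Vor_{ij}$ incident to $\Vor_{I_k}$'' genuinely means the tree-arc and the spokes $\alpha_k,\beta_k$ all land at one common point rather than merely coming close. A minor additional point is to stage everything on $\mathbb{S}^2$, so the case in which one $\Vor_{I_k}$ is the vertex at infinity requires no separate treatment. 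The remaining ingredients --- building $T_m$ and the spokes, contracting the internal edges of $T_m$, and quoting the non-planarity of $K_{3,3}$ --- are routine.
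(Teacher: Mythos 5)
Your proposal is correct and follows essentially the same route as the paper: build the tree $T_m$ inside $\overline{\Vor_{ij}}$ via lemma~\ref{lem:tree}, attach the non-crossing spokes from $s_i$ and $s_j$ via lemma~\ref{lem:regionpath}, contract the internal tree edges to obtain the minor $G'_m$, and derive a contradiction with planarity. The only (immaterial) difference is the last step: the paper concludes $m\le 2$ by an Euler-formula count using girth $\ge 4$, whereas you exhibit $K_{3,3}$ directly as a subgraph of $G'_m$ --- a variant the paper itself notes ($G'_3$ is $K_{3,3}$).
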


\begin{proof}
	Let $\Vor_{ij}$ be a Voronoi edge incident to Voronoi vertices $\Vor_{I_1},\dots,\Vor_{I_m}$. 
%	We show that it must be $m \le 2$. 
%
%
%
	Since $\Vor_{ij} \rightsquigarrow \Vor_{I_1},\dots,\Vor_{I_m}$, 
		and Voronoi vertices are of higher order ($|I_k| > 2$) than Voronoi edges, 
		by the definition of incidence (definition~\ref{def:incidence}), it is $\{i,j\} \subset \Vor_{I_k}$, with $k=1,\dots,m$. 
	We prove the lemma on the sphere $\mathbb{S}^2$, where any of the Voronoi vertices may be the vertex at infinity. 
	Note also that some of the sets $I_k$ with $k=1,\dots,m$ may be equal, 
	since Voronoi vertices have not yet been shown to be connected.

By lemma~\ref{lem:vertexincidence}, 
	the vertices $\Vor_{I_k}$ are isolated points
	(possibly the point at infinity), 
	and $\Vor_i,\Vor_j\rightsquigarrow \Vor_{I_1},\Vor_{I_2},\dots,\Vor_{I_m}$. 
We begin by assuming that $m>2$, and build an embedded planar graph $G_m$ (figure~\ref{fig:planarity.a}). 
We then show that $G_m$ can only be planar if $m\le 2$, reaching a contradiction.

By lemma~\ref{lem:tree}, there is an embedded tree graph $T_m\subset\overline{\Vor_{ij}}$
	with $\Vor_{I_1},\dots,\Vor_{I_m}$ as leafs. 
We begin by setting $G_m$ equal to $T_m$. 
We then insert the vertices $s_i$ and $s_j$ in $G_m$ (as shown in figure~\ref{fig:planarity.a}). 
Since $\Vor_i,\Vor_j \rightsquigarrow \Vor_{I_1},\Vor_{I_2},\dots,\Vor_{I_m}$, 
	by lemma~\ref{lem:regionpath}, 
	there are non-crossing paths $\gamma_{i,k}\subset\overline{\Vor_i}$, with $k=1,\dots,m$, 
	connecting $s_i$ to $\Vor_{I_k}$ %,\Vor_{I_2},\dots,\Vor_{I_m}$, 
	and non-crossing paths $\gamma_{j,k}\subset\overline{\Vor_j}$, with $k=1,\dots,m$, 
	connecting $s_j$ to $\Vor_{I_k}$. %,\Vor_{I_2},\dots,\Vor_{I_m}$. 
We insert the above paths $\gamma_{i,k},\gamma_{j,k}$, $k=1,\dots,m$, as edges of $G_m$. 
Aside from all paths $\gamma_{i,k}$ ($\gamma_{j,k}$) only crossing
	 at their starting point,
%	note that, except for their final endpoint, 
	all paths $\gamma_{i,k}$ ($\gamma_{j,k}$) are, by lemma~\ref{lem:regionpath},
	 contained (except for their final endpoint) 
	in the interior of $\Vor_i$ ($\Vor_j$), 
	and therefore they can only cross an edge of $T_m\subset\overline{\Vor_{ij}}$ at an endpoint. 
$G_m$ is therefore embedded in $\mathbb{S}^2$, and so it is a planar graph.

Recall that the minors of a graph are obtained by erasing vertices, erasing edges, or contracting edges, 
	and that minors of planar graphs are themselves planar~\cite[p.\ 269]{bondy2008graph}. 
We now construct an appropriate minor $G'_m$ of the planar graph $G_m$, 
	shown in figure~\ref{fig:planarity.b}, 
	and prove that it is non-planar whenever $m > 2$, creating a contradiction. 
%	contradicting the fact that $G_m$ is a planar graph. 
%Note that, when constructing $G'_m$, we 

Clearly, every tree $T_m$ satisfying the conditions of lemma~\ref{lem:tree} has a minor $T'_m$ 
	directly connecting the root to each leaf $\Vor_{I_k}$, $k=1,\dots,m$ (see figure~\ref{fig:planarity.b}), 
	which is obtained by successively contracting every edge of $T_m$ that connects two internal vertices. 
We apply the same sequence of edge contractions to obtain $G'_m$ from $G_m$, as shown in figure~\ref{fig:planarity}.

Let $r'$ be the root of $T'_m$, and $\gamma_{r',k}$ be edges from $r'$ to $\Vor_{I_k}$, with $k=1,\dots,m$. 
The minor $G'_m$ has vertex set
	\[ V=\{s_i,s_j,r', ~\Vor_{I_1},\dots,\Vor_{I_m}\},\]
and edge set
	\[ E=\{ \gamma_{i,1},\dots,\gamma_{i,m}, ~ \gamma_{j,1},\dots,\gamma_{j,m}, ~ \gamma_{r',1},\dots,\gamma_{r',m}\},\]
%where $\gamma_{r',k}$ are the edges connecting the root $r'$ to each of the leaf vertices of $T'_m$, with $k=1,\dots,m$. 
and therefore $G'_m$ has $v=m+3$ vertices and $e=3m$ edges.  
Since (as is easily verified) every cycle in $G'_m$ has length four or more, and $G'_m$ is planar, 
	then it holds $2e \ge 4f$, where $f$ is the number of faces. 
Using Euler's identity for planar graphs, $v-e+f=2$~\cite{bondy2008graph}, 
	and the fact that $2e \ge 4f$, $v=m+3$, and $e=3m$, it follows that $m \le 2$, 
	and therefore $G'_m$ is not planar whenever $m> 2$
(for instance, $G'_3$ is the utility graph $K_{3,3}$). 
%	
%	, a contradiction. 
%For $m > 2$ the above construction is a planar embedding of a non-planar graph, a contradiction. 

Since $m> 2$ leads to a contradiction, it follows that every Voronoi edge is incident to at most two Voronoi vertices.
\end{proof}

\subsection{Primal Voronoi graph and dual Delaunay triangulation}\label{primaldual}

We use the results in this section to construct a graph from the 
	incidence relations of an orphan-free Voronoi diagram, 
	and dualize it into a planar embedded graph. 

Let the \emph{primal Voronoi graph} $\tilde{P}=(\tilde{P}_V, \tilde{P}_E)$
	of an orphan-free Voronoi diagram be defined as follows. 
The vertices $\tilde{P}_V$ are the connected components of Voronoi vertices. 
Since, by lemma~\ref{lem:vertexincidence}, 
	Voronoi vertices are composed of isolated points, then $\tilde{P}_V$ is a collection of isolated points. 
By lemmas~\ref{lem:val_ge2} and~\ref{lem:val_le2}, 
	Voronoi edges that are incident to some Voronoi vertex are incident to exactly two Voronoi vertices. 
For each Voronoi edge $\Vor_{ij}$ incident to some Voronoi vertex, 
	we include in $\tilde{P}_E$ an edge connecting the vertices in $\tilde{P}_V$ 
	corresponding to the connected components of Voronoi vertices that $\Vor_{ij}$ is incident to. 
By lemma~\ref{lem:closureedge}, for each such Voronoi edge $\Vor_{ij}$ there is a simple path in $\Vor_{ij}$
	connecting the two Voronoi vertices incident to $\Vor_{ij}$, and therefore $\tilde{P}$ is an embedded planar graph.

\begin{figure}[htbp]
   \centering
	\subfloat[]{\label{fig:planar.a}\includegraphics[width=2.3in]{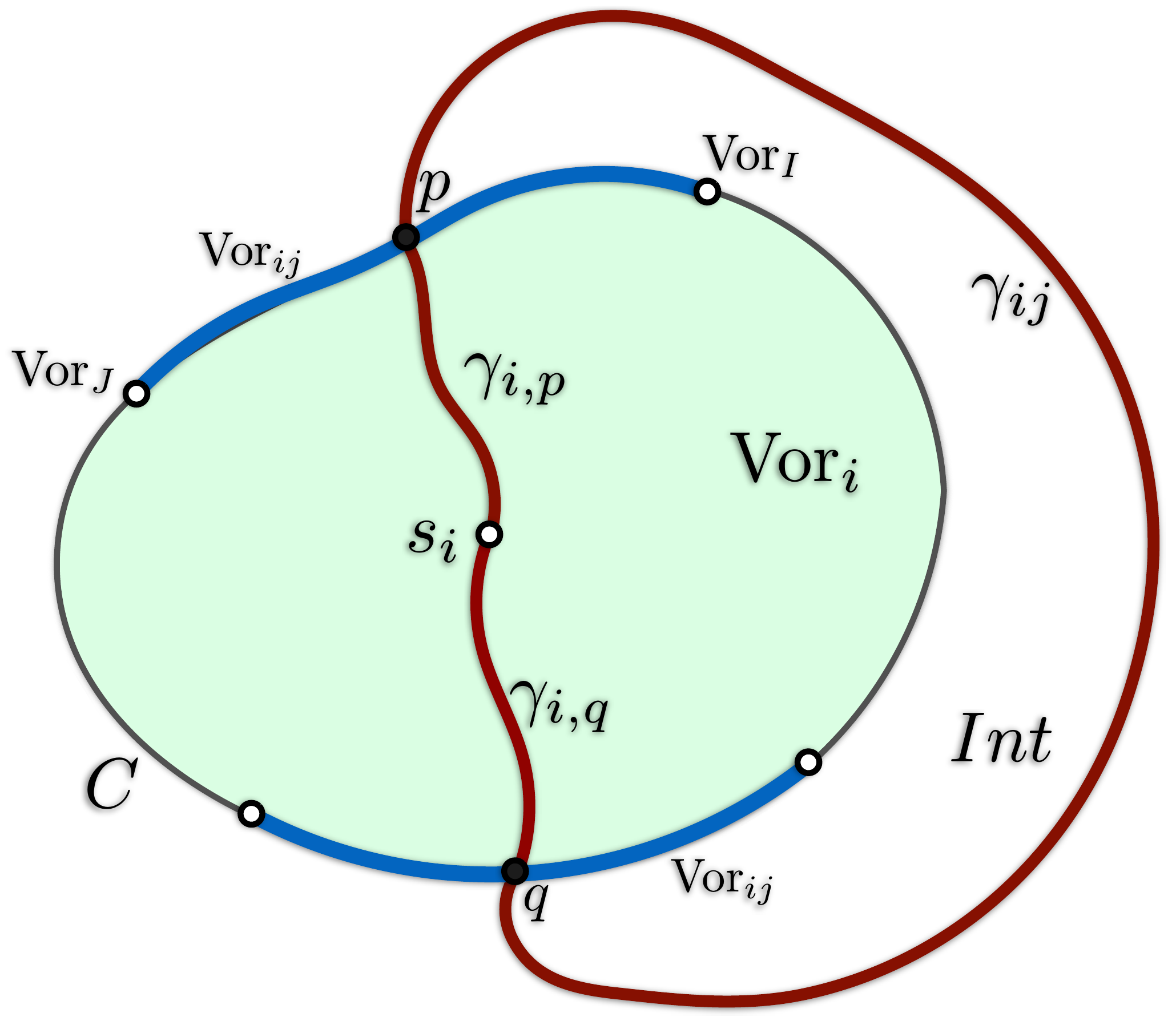}}
	\quad\quad
	\subfloat[]{\label{fig:planar.b}\includegraphics[width=2.3in]{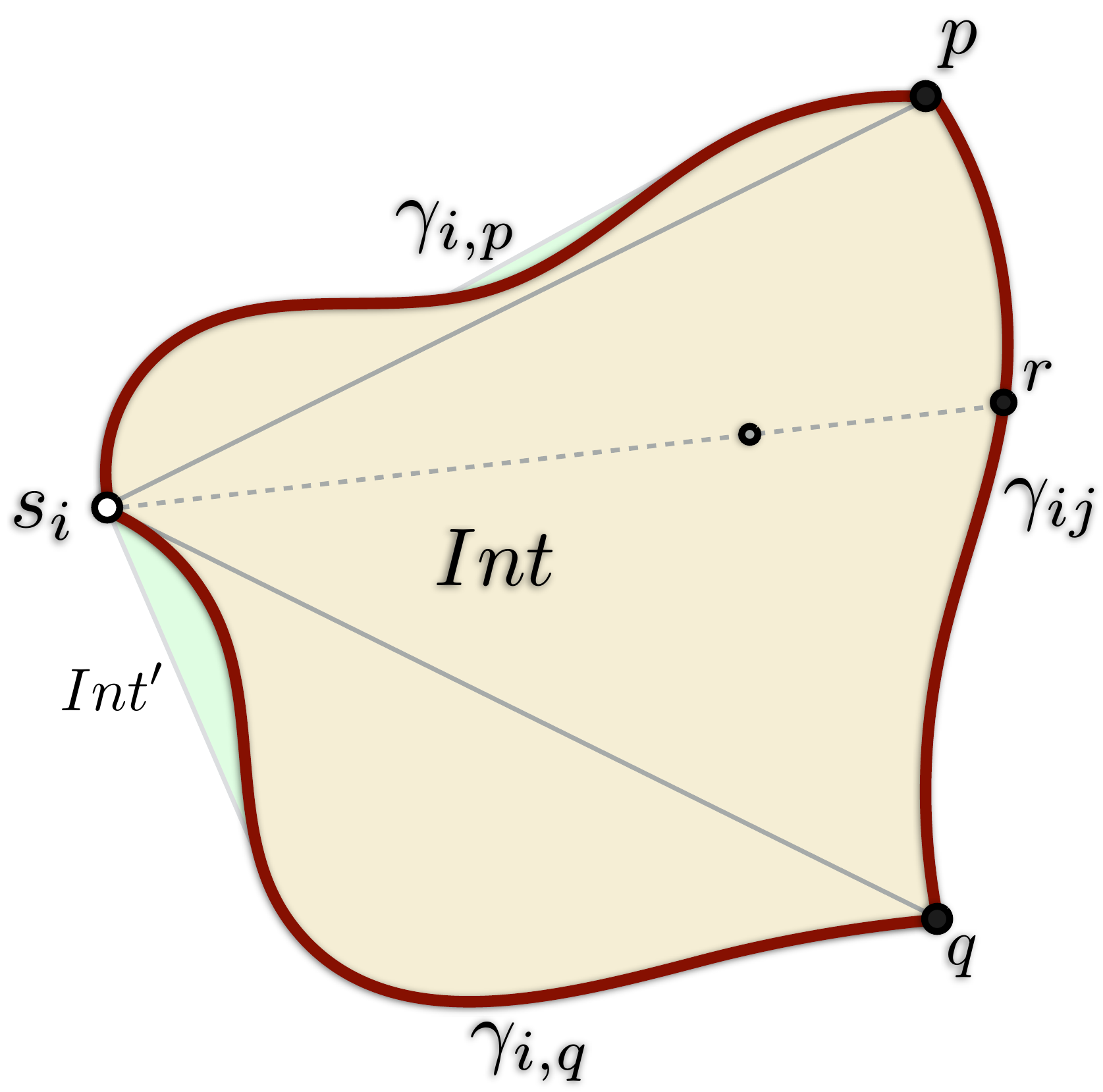}}
   \caption{Diagrams used in the proof of theorem~\ref{th:simpleplanar}. }
%   \label{fig:planar}
\end{figure}

\vspace*{0.1in}\noindent{\bf Theorem~\ref{th:simpleplanar}}. \emph{
%||| incidence relation between Vor verts and edges => embedded planar graph \\
Let $\tilde{G}=(\tilde{V}, \tilde{E})$ be the dual of the primal Voronoi graph corresponding to an orphan-free Voronoi diagram, % $\tilde{P}$,  
	then $\tilde{G}$ is a simple, connected, planar graph. 
}
\begin{proof}
The dual graph $\tilde{G}$ is constructed by dualizing $\tilde{P}$ 
	and using the natural embedding described in~\cite[p.\ 252]{bondy2008graph}, 
	in which dual vertices are placed inside primal faces (at the sites in this case), 
	and dual edges cross once their corresponding primal edges. 
From this construction, $\tilde{G}$ is an embedded planar graph~\cite[p.\ 252]{bondy2008graph}, 
	and is connected by virtue of being the dual of a planar graph~\cite[p.\ 253]{bondy2008graph}. 

We show that $\tilde{G}$ is simple (edges have multiplicity one, and there are no loops: edges incident to the same vertex). 
Edges of $\tilde{G}$ are one-to-one with edges of $\tilde{P}$. 
In turn, edges of $\tilde{P}$ correspond to Voronoi edges, and these are, 
	by lemma~\ref{lem:connectededges}, 
	connected. 
Therefore the edges of $\tilde{G}$ have multiplicity one. 

Since loops and cut edges (those whose removal disconnects the graph) 
	are duals of each other~\cite[p.\ 252]{bondy2008graph}, 
	we now show that $\tilde{P}$ has no cut edges, and therefore $\tilde{G}$ has no loops. 

By~\cite[p.\ 86]{bondy2008graph}, 
	an edge of $\tilde{P}$ is a cut edge iff it belongs to no cycle of $\tilde{P}$. 
To every edge of $\tilde{P}$ corresponds an Voronoi edge $\Vor_{ij}$ that is incident to two Voronoi vertices. 
By lemma~\ref{lem:vertexincidence}, 
	$\Vor_{ij}$ is incident to at least one Voronoi region $\Vor_i$. 
We next show that the Voronoi elements in the boundary of every Voronoi region $\Vor_i$ form a cycle, 
	and therefore $\Vor_{ij}$ belongs to a cycle, so it cannot be a cut edge. 

Clearly, the boundary $\partial \Vor_i$ of $\Vor_i$ is composed of Voronoi edges and Voronoi vertices, 
	since $\Vor_i\rightsquigarrow\Vor_j$ is not possible because $\{i\}\not\subset\{j\}$ (see section~\ref{sec:incidence}). 
Let $C=\left[\Vor_{i,j_1},\Vor_{i,j_1,j_2}, \Vor_{i,j_2}, \dots\right]$ be the sequence of elements around the boundary of $\Vor_i$, 
	with $\Vor_{ij}\in C$. 
	We show that $C$ is a cycle. 

% ||| distinction between Voronoi vertex, and connected component of Voronoi vertex
\vspace*{0.05in}\noindent{\bf [$C$ has no repeated Voronoi vertices]}. 
By the assumption of section~\ref{sec:assumptions}, 
	Voronoi regions have boundaries that are simple closed curves (in $\mathbb{S}^2$). 
Note that, because vertices are isolated points, there are no repeated vertices in $C$ since %$\partial\Vor_i$ 
	the boundary of $\Vor_i$ is a simple curve. 
%We now show that a Voronoi edge $\Vor_{ij}$ cannot appear twice in $\partial\Vor_i$. 

\vspace*{0.05in}\noindent{\bf [$C$ has no repeated Voronoi edges]}. \\
Let $\Vor_{ij}$ appear twice in $C$ as $\left[\dots, \Vor_I, \Vor_{ij}, \Vor_J, \dots, \Vor_{ij}, \dots\right]$, 
where $\Vor_I,\Vor_J$ are Voronoi vertices, as in figure~\ref{fig:planar.a}. 
%If a Voronoi edge $\Vor_{ij}$ appears twice in $\partial\Vor_i$, 
Let $p,q$ be two points in each of the two common boundaries between $\Vor_i$ and $\Vor_{ij}$. 
By lemma~\ref{lem:regionpath}, there are simple paths $\gamma_{i,p},\gamma_{i,q}\subset\Vor_i$ from $s_i$ to $p,q$, respectively, 
	which only meet at the initial endpoint (figure~\ref{fig:planar.a}). 
Since $\Vor_{ij}$ is simply connected, we can consider a simple path $\gamma_{ij}\subset\Vor_{ij}$ connecting $p,q$. 
Let $\gamma$ be the simple closed path obtained by concatenating $\gamma_{i,p},\gamma_{ij},\gamma_{i,q}$ 
	which, by the Jordan curve theorem divides the plane into a bounded region $Int$, and an unbounded region. 
Since it must be $\Vor_I\in Int$ or $\Vor_J\in Int$, 
	assume without loss of generality that $\Vor_I\in Int$, and note that it cannot be $\Vor_I=\Vor_\infty\in Int$, 
	since $Int$ is bounded. 
%By lemma~\ref{lem:IInt}, $\Vor_I\in Int$ is not possible, and therefore $\Vor_{ij}$ is part of a cycle $C$. 
We show that $\Vor_I\in Int$ is not possible, 
	and therefore that $C$ % is a cycle containing $\Vor_{ij}$, since it 
	has no repeated elements.

Let $\Vor_I\in Int$ and let $k\in I$ be $k\ne i,j$, which always exists because $|I|\ge 3$. 
By lemma~\ref{lem:vertexincidence}, there is a point $p\in\Vor_k\cap Int$. 
Since $\Vor_k$ is path connected, and the boundary of $Int$ is $\gamma\subset\Vor_{i}\cup\Vor_{ij}$, then $\Vor_k\subset Int$, 
	and therefore $s_k\in Int$. 
We show that $Int$ cannot contain any sites other than $s_i$, reaching a contradiction.  

Recall that the boundary $\gamma$ of $Int$ is the concatenation of 
	$\gamma_{i,p}\subset\Vor_i$, $\gamma_{ij}\subset\Vor_{ij}$, and $\gamma_{i,q}\subset\Vor_i$, 
	and that $s_i\in\gamma$, as in figure~\ref{fig:planar.b}. 
Let $Int'$ be the union of segments from $s_i$ to every point in $\gamma$:
\[ Int'\equiv \displaystyle{\left(\cup_{r\in\gamma_{i,p}} \overline{s_i, r}\right) \cup
				     \left(\cup_{r\in\gamma_{ij}} \overline{s_i, r}\right) \cup
				     \left(\cup_{r\in\gamma_{i,q}}\overline{s_i,r}\right)}. \]
Since it is clearly $Int\subset Int'$, it suffices to show that $Int'$ does not contain any site $s_k$ different from $s_i$.
Every segment of the form $\overline{s_i,r}$ 
	with $r\in\gamma_{i,p}\subset\Vor_i$ or $r\in\gamma_{i,q}\subset\Vor_i$ cannot contain a site $s_k$ 
	or else, by the convexity of $D$, $r$ would be closer to $s_k$ than to $s_i$. 
Similarly, every segment of the form $\overline{s_i,r}$ with 
	$r\in\gamma_{ij}$ cannot contain a site $s_k$, or else by the convexity of $D$, $r$ would be closer to $s_k$ than to $s_i,s_j$.

Since every Voronoi edge $\Vor_{ij}$ is part of a cycle, it cannot be a cut edge, and therefore its dual has no loops. 
\end{proof}

%%
%%
%%\begin{figure}[htbp]
%%   \centering
%%   	\subfloat[]{\includegraphics[width=2.5in]{figle2.JPG}}
%%   \caption{}
%%   \label{fig:le2}
%%\end{figure}
%%
%%
%%
%%\begin{lemma} Voronoi edges are incident to no more than two Voronoi vertices. 
%%\end{lemma}
%%\begin{proof}
%%\end{proof}

\section{Embeddability of the Delaunay triangulation}\label{sec:dual}

%\vspace*{0.1in}\noindent{\bf Theorem~\ref{th:simpleplanar}}. \emph{
%Let $\tilde{G}=(\tilde{V}, \tilde{E})$ be the dual of the primal Voronoi graph corresponding to an orphan-free Voronoi diagram, % $\tilde{P}$,  
%	then $\tilde{G}$ is a simple, connected, planar graph. 

Let $\tilde{G}=(\tilde{V}, \tilde{E})$ be the dual of the primal Voronoi graph corresponding to an orphan-free Voronoi diagram, 
	as defined in section~\ref{sec:planar}. 
By theorem~\ref{th:simpleplanar}, $\tilde{G}$ is simple and planar with vertices at the sites. 
Let $G=(\Sites,E,F)$ be the planar graph obtained by replacing curved edges by straight segments. 
Recall from section~\ref{sec:planar} that, 
	while Voronoi regions and edges are connected, Voronoi vertices may have multiple connected components, 
	and therefore $G$ can have duplicate faces in $F$. 
We only show after this section that faces have multiplicity one by virtue of $G$ being embedded.

\vspace*{0.08in}\noindent{\bf Faces with more than three vertices}. 
Every face $f\in F$ is dual to a Voronoi element $\Vor_I$ of order $|I|=k\ge 3$, 
	to which corresponds (proposition~\ref{prop:ECB}) a convex ball $B(c;\rho)$, with $c\in\Vor_I$, 
	that circumscribes the sites $(s_i)_{i\in I}$ incident to $f$. 
Due to the planarity of $G$, we can assume the sites $(s_i)_{i\in I}$ to be ordered around $f$. 
In order to find whether a point $p\in\mathbb{R}^2$ belongs to $f$, 
	we simply triangulate $f$ in a fan arrangement: 
		$\tau_1=\{s_{i_1},s_{i_2},s_{i_k}\}; \tau_2=\{s_{i_2},s_{i_3},s_{i_k}\}; \dots$, 
	and consider that $p\in f$ iff it lies in any of the resulting $\tau_j$. 
%We will need in the next section the ability to find whether any given point $p\in\mathbb{R}^2$ belongs to $f$ or not. 
%Given the sequence $(s_i)_{i\in I}$, arranged in clockwise order along the boundary of $B(c;\rho)$, 
%we simply triangulate $f$ in a fan arrangement: $\tau_1=\{s_{i_1},s_{i_2},s_{i_k}\}; \tau_2=\{s_{i_2},s_{i_3},s_{i_k}\}; \dots$, 
%	and consider $p\in f$ if it lies in any resulting $\tau_j$. 
%Crucially,
Note that this arrangement does not interfere with the original edges in $E$ (other than creating new ones), 
	all new edges are incident to two faces (they are not in the topological boundary of $G$), 
	and most importantly, 
%	%that just as the original face $f$, 
	every $\tau_j$, with $j=1,\dots,k-2$ satisfies the empty circum-ball property with the same {witness} ball $B(c;\rho)$ 
	as $f$. 
We assume in the sequel that $G$ has been triangulated in this way. 
The fact that this triangulated $G$ will be shown to be embedded will imply that every face $f$ is in fact convex. 
%	and that the ordering $(s_i)_{i\in I}$ must therefore be clock or counter-clock-wise. 
%We proceed %with this substitution, 
%	with the understanding that, once this triangulated $G$ is shown to be embedded, 
%	it implies that the original is also embedded, 
%	and all original faces with valence higher than three must then be convex 
%	(since their vertices are in the boundary of a convex ball, and their clockwise fan-arrangement triangulation is embedded). 

%We begin by defining $G$ to be the straight-edge drawing of %the graph
%$\tilde{G}$ with vertices at the sites. 
%For the moment, we assume that every Voronoi vertex is equidistant to no more than three sites, and therefore that all faces in $G$ are triangles 
%%Later, in Appendix G, 
%(we extend the results to the general case in Appendix G). 
%We associate to $G$ %is drawn with straight edges
%%we mean thaht it 
% a mapping (from the vertices, edges, and faces
%of $G$ to $\mathbb{R}^2$) for which, because all faces of $G$ are triangles, it is well defined whether a point in
%$\mathbb{R}^2$ belongs to any given face of $G$, a fact that will be
%used in the proofs of section~\ref{sec:interior}. 
%This mapping will be shown in section~\ref{sec:interior} to be an embedding. 
%In the sequel, it is assumed that $G$ encompasses both the mesh
%structure, and the mapping. 
For convenience in the remainder of this section 
we name $W=\{w_i\in \Sites : i=1,\dots,m\}$ the sites that are part of the boundary of
the convex hull $\CHS$, and order them %linearly (if they are
%colinear, fig. 4a), or 
in clock-wise order around $\CHS$.

\subsection{Boundary}\label{sec:boundary}

In this section, we assume that the divergence $D$ satisfies the bounded anisotropy assumption~\ref{ass:BAA},
 and conclude that the boundary of the dual triangulation of an orphan-free diagram is the same as the boundary of the convex hull of the sites (and in particular it is simple and closed). 

%If $R_p\Lambda_p R_p^t$ is the eigendecomposition of $Q_p$ at $p\in\mathbb{R}^2$, with $\Lambda_p = diag[\lambda_2(p),\lambda_1(p)]$, $\lambda_2(p)\ge\lambda_1(p)>0$, and $R_p$ orthonormal, then we assume that there is some bound $\gamma$ on the anisotropy of $Q$, such that: $1\le \lambda_2(p)/\lambda_1(p) < \gamma^2$ for $p\in\mathbb{R}^2$. 
%The constant $\gamma$ may be arbitrary, so long as it exists. 
%\vspace*{0.07in}\noindent{\bf Topological boundary}. 
The vertices in the \emph{topological boundary} of $G$ are those 
	whose corresponding primal regions are unbounded, 
while topological boundary edges are those connecting topological boundary vertices. 
%Note that this is a \emph{topological} property of $G$, rather than a geometric one (boundary elements of $G$ may, in principle, not lie in the boundary of the convex hull of the sites). 
For convenience, we call $B\subseteq E$ the set of topological boundary edges of $G$. 

%
%The convex hull $\CHS$ of $\Sites$ is the minimal (w.r.t.\ set
%containment) convex set that contains $\Sites$. 
%(if they are not colinear, fig. 4b). 
The boundary $\mathcal{B}$ of the convex hull is a simple circular chain 
%$\mathcal{B} = \{(w_i, w_{i + 1}) : i=1,\dots,m-1\}$ (colinear case), or    
$\mathcal{B} = \{(w_i, w_{i\oplus 1}) : i=1,\dots,m\}$. % (not colinear case).
We prove that %, in both cases, 
it is $B=\mathcal{B}$ 
(loosely speaking: the topological, and geometric boundaries of $G$ are the same and coincide with the boundary of $\CHS$), 
which implies that $G$ covers
the convex hull of the sites, and its topological boundary edges form 
%either a simple polygonal chain ($\Sites$ colinear), or 
a simple, closed polygonal chain. % ($\Sites$ not colinear). 
All the proofs of this section are in Appendix B. %in this section are relegated to Appendix E. 

\begin{lemma}[$B\subseteq\mathcal{B}$]\label{boundary_easy}
 To every topological boundary edge of $G$ corresponds a segment in the boundary of $\CHS$.
\end{lemma}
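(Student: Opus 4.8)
The plan is to reduce the statement to the bounded anisotropy assumption through the convexity of balls of the first kind. Let $e=(s_i,s_j)$ be a topological boundary edge of $G$, i.e.\ an edge incident to a single face of $G$; dually (through the construction of section~\ref{sec:planar}) it corresponds to a Voronoi edge $\Vor_{ij}$ one of whose two incident Voronoi vertices is the vertex at infinity $\Vor_\infty$, so $\Vor_{ij}$ is unbounded. It suffices to prove that the supporting line $L_{ij}$ of $s_i,s_j$ supports $\CHS$, since then $\overline{s_is_j}\subseteq L_{ij}\cap\CHS\subseteq\partial\CHS$. I would argue by contradiction: if $L_{ij}$ does not support $\CHS$, then sites lie strictly on both open sides of $L_{ij}$, and since $\Vor_{ij}$ is connected (lemma~\ref{lem:connectededges}) and unbounded while $\Vor_{ij}\cap L_{ij}\subseteq\overline{s_is_j}$ is bounded (lemma~\ref{lem:midpoint}), some unbounded portion of $\Vor_{ij}$ runs to infinity through one open half-plane $H$ of $L_{ij}$; fix a site $s_k$ lying strictly in that same $H$.

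The crucial observation is that for every $c\in\Vor_{ij}$ the ball of the first kind $B(c;\rho)$ with $\rho=\D{s_i}{c}=\D{s_j}{c}$ is convex, hence contains $\overline{s_is_j}$ with $s_i,s_j$ on its boundary, and contains no site in its interior (otherwise $c$ would be strictly closer to that site than to $s_i,s_j$, contradicting $c\in\Vor_{ij}$). Assumption~\ref{ass:BAA} is exactly what makes these balls, as $c$ recedes to infinity in $H$, eventually contain any fixed point of $H$: applying it with $p=s_i$, $q=s_j$, $r=s_k$ yields $\mu>0$ such that any $c\in H$ with $\|c\|>\mu$ whose nearest point $m$ on $L_{ij}$ lies in $\overline{s_is_j}$ satisfies $\D{s_k}{c}<\D{m}{c}\le\D{s_i}{c}=\rho$, the second inequality by convexity of $D$ in its first argument ($m$ being a convex combination of $s_i,s_j$); hence $s_k\in\operatorname{int}B(c;\rho)$, a contradiction. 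Thus the argument reduces to exhibiting, along the unbounded branch of $\Vor_{ij}$ in $H$, a point $c$ with $\|c\|>\mu$ whose nearest point on $L_{ij}$ lies in $\overline{s_is_j}$.

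To do this I would use that $\Vor_{ij}$ cannot meet either of the two rays of $L_{ij}$ exterior to $\overline{s_is_j}$ (on the ray emanating from $s_i$ away from $s_j$ every point is strictly closer to $s_i$ than to $s_j$, a one-line consequence of the convexity of $D$ in its first argument, exactly as in the proof of lemma~\ref{lem:val_ge2}, and symmetrically at $s_j$), so that the unbounded branch of $\Vor_{ij}$ in $H$, controlled by the bound on anisotropy, cannot escape to infinity laterally past $s_i$ or $s_j$ and must have points of arbitrarily large norm inside the perpendicular slab over $\overline{s_is_j}$; a connectedness/intermediate-value argument on the arcs $C(\sigma)\cap H$ for large $\sigma$ then pins down the required points. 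I expect this last step --- rigorously trapping the unbounded branch of $\Vor_{ij}$ in the slab over $\overline{s_is_j}$, which is where assumption~\ref{ass:BAA} really does its work --- to be the main obstacle; once it is in place, the contradiction above closes at once.
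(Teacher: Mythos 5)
Your overall route is the one the paper takes: the paper factors the key step into lemma~\ref{lem:halfspace} (if $\Vor_{ij}\cap H^{+}_{ij}$ is unbounded then $H^{+}_{ij}$ contains no site), proved by applying assumption~\ref{ass:BAA} to a far point of the unbounded branch of $\Vor_{ij}$, exactly as you propose. But there is a genuine gap, and it is the one you yourself flag: assumption~\ref{ass:BAA} only applies to a point $c$ \emph{whose closest point $m$ on $L_{ij}$ lies in $\overline{s_is_j}$}, so you must exhibit a point of the unbounded branch of $\Vor_{ij}\cap H$ with $\|c\|>\mu$ satisfying that projection condition. The heuristic you offer does not deliver it: knowing that $\Vor_{ij}$ meets $L_{ij}$ only inside $\overline{s_is_j}$ (lemma~\ref{lem:midpoint}) does not trap the unbounded branch in the perpendicular slab over $\overline{s_is_j}$, since the branch could a priori recede to infinity at a shallow angle to $L_{ij}$, staying strictly inside $H$ and never touching $L_{ij}$, while its projection escapes far past $s_i$ or $s_j$. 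The intermediate-value argument on the arcs $C(\sigma)\cap H$ therefore has nothing to bite on as stated, and some further input is needed (compare the proof of lemma~\ref{lem:VW}, where points of the empty half-plane projecting outside the segment are handled by a separate tangent-line construction before assumption~\ref{ass:BAA} is invoked). To be fair, the paper's own proof of lemma~\ref{lem:halfspace} applies assumption~\ref{ass:BAA} without verifying this hypothesis either, so your write-up is more candid about where the difficulty sits; but as submitted the argument is incomplete at precisely the point you identify.

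A second, minor, omission: showing $\overline{s_is_j}\subseteq\partial\CHS$ is not by itself enough, because a segment of $\mathcal{B}$ joins \emph{consecutive} hull sites; you must also rule out a third site $w$ lying strictly between $s_i$ and $s_j$. This follows in one line from the strict convexity of $\D{\cdot}{c}$ for any $c\in\Vor_{ij}$ (such a $w$ would satisfy $\D{w}{c}<\D{s_i}{c}=\D{s_j}{c}$, so $c$ could not be in $\Vor_{ij}$) --- i.e.\ exactly the empty-circumball observation you already make --- but it needs to be stated; the paper devotes the final paragraph of its proof to it.
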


We now turn to the converse claim: that to every segment 
in $\mathcal{B}$ corresponds one in $B$. %a topological boundary edge of $G$. %The proof is by contradiction, and uses the following basic lemmas. 
%Before we stablish the main argument of the proof, 
Since $B$ is the set of boundary edges of $G$, whose corresponding primal edges 
are unbounded, the claim is equivalent to proving that, to every segment
in $\mathcal{B}$ %, the boundary of the convex hull 
corresponds a boundary edge $(w_i,w_j)\in E$ of $G$ whose corresponding primal edge $\Vor_{ij}$ is unbounded. 

The proof proceeds as follows. First, assume without loss of generality  that the origin is in the interior of $\CHS$. 
Let  $C(\sigma)=\{x\in\mathbb{R}^2 : \|x\|=\sigma\}$ be an origin-centered circle 
	of radius $\sigma$ large enough so that lemmas~\ref{lem:VW} and~\ref{lem:contrad} hold in $C(\sigma)$. 
We define two  functions: 
\begin{eqnarray}\label{eq:pinu}
	&\nu_\sigma &: \partial\CHS\rightarrow C(\sigma), ~~~~~~~~~~~~~\nu_\sigma: r\mapsto \sigma \cdot r / \|r\|, \\
	&\pi &: C(\sigma)\rightarrow\partial\CHS,  %\pi: p\mapsto \underset{r\in\CHS}{\operatorname{\mathbf{argmin\ }}} \D{r}{p} \\
\end{eqnarray}
$\nu_\sigma$  simply projects points in the boundary of $\CHS$ out to their closest point in $C(\sigma)$ 
(using the natural metric; note that $\sigma$ can always be chosen large enough so this projection is unique). 
%$\pi$ simply projects every point in $C$ to its closest in $\CHS$ (with respect to the divergence $D$), 
%and $\nu_\sigma$ projects a point back to $C(\sigma)$. 
%Because $D$ has convex balls of the first kind, the projection $\pi$ is well defined (unique). 
$\pi$ is constructed as follows.

\begin{figure}[htbp]
   \centering
   	\includegraphics[width=2.7in]{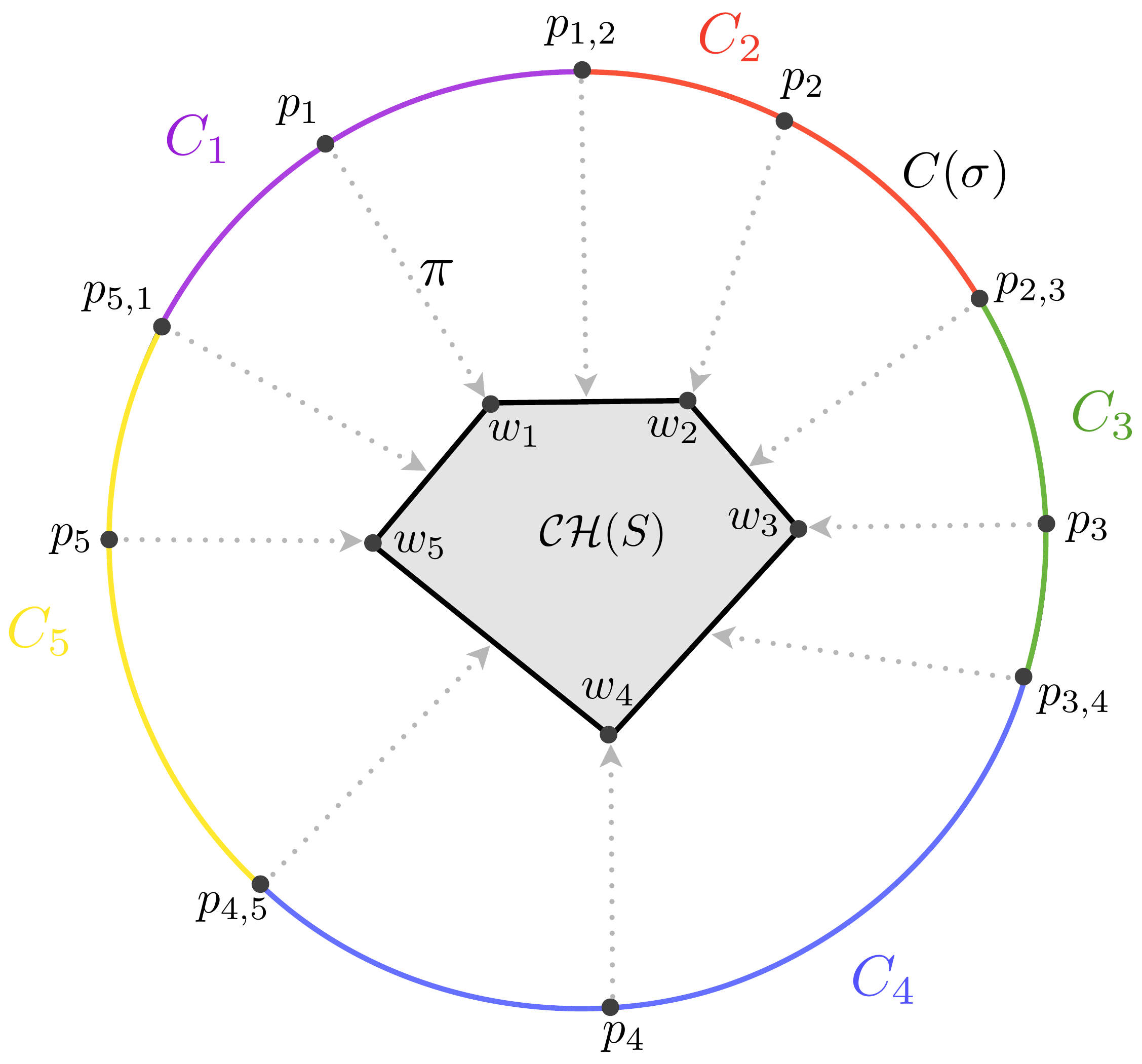} % requires the graphicx package
%	\subfloat[]{\includegraphics[width=2in]{normals2.eps}}
   \caption{The construction of the projection function $\pi:C(\sigma)\rightarrow \partial\CHS$. 
   Note that in this case the region-to-site-index function is simply $j(i)=i$, but this cannot be assumed in general. }
   \label{fig:pi}
\end{figure}

\refstepcounter{foo}\thefoo\label{text:boundary}
Consider the situation illustrated in figure~\ref{fig:pi}. 
By lemma~\ref{lem:VW}, all points in $C(\sigma)$ are closer to $W$ than to any interior site $\Sites\setminus W$. 
%Consider figure~\ref{fig:pi}, where $C(\sigma)$ is split into 
We split $C(\sigma)$ into 
	a sequence $(C_j)$ of \emph{connected} parts closest 
	to the same boundary site $w_{i(j)}$ % (region $C_j\subseteq C(\sigma)$ closest to site $w_{i(j)}\in W$). 
	(the function $i(\cdot)$ is used to map part indices to the index of their closest site). 
By the convexity of balls, 
	adjacent regions \emph{must} be closest to (circularly) consecutive sites in $W$ 
	(e.g.~if regions $C_1,C_2$ had $i(1)=1$ and $i(2)=3$, 
	 by the continuity of $D$, the point $p$ where $C_1,C_2$ meet would be closest to $w_1,w_3$; 
	 however, since the sites $w_i$ are in cyclic order around $\partial\CHS$, 
	 %and balls of the first kind are \emph{strictly} convex, then 
	 $p$ would be closer to $w_2$ than to $w_1,w_3$, a contradiction). 
Pick one point $p_j$ for each region $C_j$, and let $\pi(p_j) \equiv w_{i(j)}$. 
For each pair of consecutive regions $C_j, C_{j\oplus 1}$ meeting at $p_{j,j\oplus 1}$, 
	let $\pi(p_{j,j\oplus 1}) \equiv (w_{i(j)} + w_{i(j\oplus 1)})/2$ (the midpoint of two consecutive boundary sites). 
The remaining values of $\pi$ are filled using simple linear interpolation. 
By construction, the following holds:
\begin{property}\label{prop:pi}
$i\cdot$ $\pi:C(\sigma)\rightarrow\partial\CHS$ is continuous. \\
\hspace*{0.91in}$ii\cdot$ Given $p\in C(\sigma)$ and consecutive boundary sites $w_i, w_{i\oplus 1}$, 
	then $p\in\Vor_{i, i\oplus 1}$ iff $\pi(p)=(w_i + w_{i\oplus 1})/2$. 
\end{property}

By the convexity of $\CHS$, $\nu_\sigma$ is continuous in $\partial\CHS$. 
Note that, because $\CHS$ is assumed to contain the origin then, 
	as shown in figure~\ref{fig:pinu}, $\nu_\sigma$ projects 
every point $\pi(p)\in (w_i,w_j)$ lying on a segment of $\partial\CHS$, 
\emph{outwards} from the convex hull (and on the \emph{empty} side of
$(w_i,w_j)$); that is, so that $\nu_\sigma(\pi(p))\in H^{+}_{ij}\cap C(\sigma)$
(i.e.~$\nu_\sigma(\pi(p))$ is in the empty half-space of $(w_i,w_j)$). 

\begin{figure}[htbp]
   \centering
   	\includegraphics[width=2.7in]{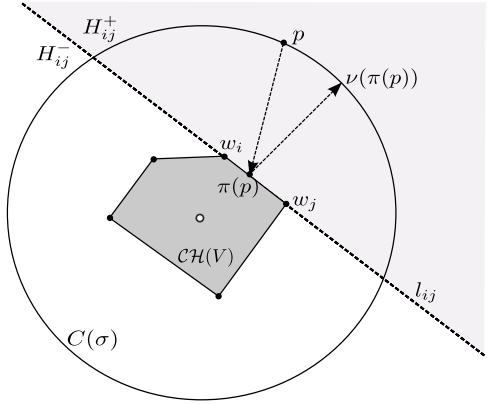} % requires the graphicx package
%	\subfloat[]{\includegraphics[width=2in]{normals2.eps}}
   \caption{The construction for the proof of lemma~\ref{lem:hard}.}
   \label{fig:pinu}
\end{figure}

%Conveniently, lemma~\ref{lem:mij} shows that, if $\pi(p)=(w_i+w_j)/2$, then
%$p\in \tilde{E}_{w_i,w_j}$, and so the claim now reduces to showing that for each
The claim now reduces to showing that for each segment 
	$(w_i,w_j)$ of $\partial\CHS$, and for \emph{every} sufficiently large $\sigma$, 
there is $p\in C(\sigma)$ %with $\pi(p)=(w_i+w_j)/2$
with $p\in\Vor_{ij}$ %(i.e.~$p$ is closest to $w_i,w_j$). 
(i.e.~$\pi(p)=(w_i+w_j)/2$). 
Since this implies that $\Vor_{ij}$ is unbounded, 
	it means that the corresponding edge $(w_i,w_j)$ is in $B$ (the topological boundary of $G$).

The proof is by contradiction. 
Lemma~\ref{lem:Sn} uses Brouwer's fixed point theorem to show that, for every segment $(w_i,w_j)$ of $\mathcal{B}$, 
if there were no $p\in C(\sigma)$ closest to $w_i,w_j$, then %with $\pi(p)=(w_i+w_j)/2$, then 
the function $\nu_\sigma\circ\pi:C(\sigma)\rightarrow C(\sigma)$ 
(in fact a slightly different but related function) 
would have a point $q\in C(\sigma)$ such that
$\nu_\sigma(\pi(q))=-q$, that is, such that 
$q$ is ``behind" the segment $(w_i,w_j)\in\partial\CHS$ to which it is
closest ($q\in H^{-}_{ij}$). 
On the other hand, lemma~\ref{lem:contrad} shows that, for all sufficiently large circles $C$, no
point $q\in C(\sigma)$ can be closest to a segment
$(w_i,w_j)\in\partial\CHS$ it is \emph{behind} of, creating a contradiction. 

%
%As mentioned above, we are interested in identifying points in $C(\sigma)$
%that are also in some primal edge $\Vor_{ij}$. The following lemma will be used
%to show that, for sufficiently large $\sigma$, such a point $p$ can be
%characterized by satisfying $\pi(p)=(w_i+w_j)/2$. 
%
%
%
%\begin{lemma}\label{lem:mij}
%	If $\pi(p)=(w_i+w_j)/2$, with $p\in\mathbb{R}^2$,  $w_i,w_j\in W$, then
%$\D{w_i}{p} = \D{w_j}{p}$. 
%\end{lemma}
%
%The following technical lemma is the key in constructing a contradiction by showing that, 
%for sufficiently large circles $C(\sigma)$, no
%point $q\in C(\sigma)$ can be closest to some segment
%$(w_i,w_j)\in\partial\CHS$ that it is behind of ($q\notin H^{-}_{ij}$); 
%where, as before, the open half space $H^{-}_{ij}$ is chosen to be 
%the only of the two half spaces on either side of the supporting line $L_{ij}$ of $(w_i,w_j)$ 
%such that $H^{-}_{ij}\cap \Sites\neq\phi$. 

The next Lemma is used to create a contradiction, and relies on assumption~\ref{ass:BAA}. 
Lemma~\ref{lem:Sn} is the key lemma in this section, and is a simple application of Brouwer's fixed point theorem. 

\begin{lemma}\label{lem:contrad}
There is $\rho > 0$ such that, for any segment $(w_i,w_j)\in\mathcal{B}$ with supporting line $L_{ij}$, 
	 every $p\in H^{-}_{ij}$ with $\|p\| > \rho$ whose closest point in $L_{ij}$ belongs to $\overline{w_i w_j}$ is 
	closer to a site in $\Sites\setminus\{w_i,w_j\}$ than to $L_{ij}$.
\end{lemma}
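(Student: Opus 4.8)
The plan is to recognize that this lemma is essentially a uniform restatement of the bounded anisotropy assumption (Assumption~\ref{ass:BAA}): for each hull boundary segment I will choose an appropriate auxiliary site to play the role of the point $r$ in that assumption, extract the threshold it provides, and then take a maximum over the finitely many edges of $\mathcal{B}$ to obtain a single $\rho$.

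First I would fix a segment $(w_i,w_j)\in\mathcal{B}$ with supporting line $L_{ij}$. Because $(w_i,w_j)$ is an edge of $\CHS$, the line $L_{ij}$ supports the hull, so every site lies in the closed half-plane bounded by $L_{ij}$ on the sites' side. Since not all sites are collinear, at least one site does not lie on $L_{ij}$; fix such a site $s=s(i,j)$. Then $s\notin L_{ij}$ (in particular $s\ne w_i,w_j$, since $w_i,w_j\in L_{ij}$), and $s$ lies strictly on the sites' side of $L_{ij}$, i.e.\ $s\in H^{-}_{ij}$.

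Next I would apply Assumption~\ref{ass:BAA} with the two points taken to be $w_i,w_j$ (so their supporting line is $L_{ij}$) and with $r:=s$. As $s\notin L_{ij}$, the assumption yields a value $\mu_{ij}>0$ such that every $c$ lying on the same side of $L_{ij}$ as $s$ (that is, $c\in H^{-}_{ij}$), with $\|c\|>\mu_{ij}$ and whose closest point $m$ in $L_{ij}$ lies in $\overline{w_i w_j}$, satisfies $\D{s}{c}<\D{m}{c}$. I would then set $\rho:=\max\{\mu_{ij} : (w_i,w_j)\in\mathcal{B}\}$, which is finite because $\mathcal{B}$ has exactly $|W|$ edges. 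Finally, for any $(w_i,w_j)\in\mathcal{B}$ and any $p\in H^{-}_{ij}$ with $\|p\|>\rho\ge\mu_{ij}$ whose closest point $m$ in $L_{ij}$ lies in $\overline{w_i w_j}$, taking $c:=p$ above (legitimate since $p$ and $s(i,j)$ are both in $H^{-}_{ij}$) gives $\D{s(i,j)}{p}<\D{m}{p}$, i.e.\ $p$ is closer to the site $s(i,j)\in\Sites\setminus\{w_i,w_j\}$ than to $L_{ij}$, which is the claim.

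I do not expect a genuine obstacle here, since the lemma is designed precisely to package Assumption~\ref{ass:BAA} in the form needed in Section~\ref{sec:boundary}. The only points requiring any care are (i) producing a site strictly on the interior side of $L_{ij}$ — this is exactly where the ``not all collinear'' hypothesis is used, and it also sidesteps the case of collinear boundary sites on $L_{ij}$ — and (ii) passing from the edge-dependent thresholds $\mu_{ij}$ to a single $\rho$, for which the finiteness of $\mathcal{B}$ is enough. One should also note, recalling the remark following Assumption~\ref{ass:BAA}, that although $\rho$ depends on the choice of origin this is harmless, since the origin has already been fixed in the interior of $\CHS$ at the start of Section~\ref{sec:boundary}.
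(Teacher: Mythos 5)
Your proposal is correct and follows essentially the same route as the paper: pick, for each hull edge $(w_i,w_j)$, a site off $L_{ij}$ (which exists because not all sites are collinear and necessarily lies in $H^{-}_{ij}$ since $L_{ij}$ supports the hull), invoke assumption~\ref{ass:BAA} with that site as $r$ to get a per-edge threshold, and take the maximum over the finitely many edges of $\mathcal{B}$. Your explicit check that the chosen site lies on the same side of $L_{ij}$ as $p$ (so that the assumption applies) is a detail the paper leaves implicit, but otherwise the arguments coincide.
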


\begin{lemma}\label{lem:Sn}
	Every continuous function $F:\mathbb{S}^n\rightarrow\mathbb{S}^n$ that is not onto has a fixed point. 
\end{lemma}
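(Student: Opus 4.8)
The plan is to deduce this directly from Brouwer's fixed point theorem, using the classical fact that a sphere with one point deleted is homeomorphic to Euclidean space. Suppose $F:\mathbb{S}^n\to\mathbb{S}^n$ is continuous and not onto, so there is a point $p\in\mathbb{S}^n$ with $p\notin F(\mathbb{S}^n)$; in particular $F(\mathbb{S}^n)\subseteq\mathbb{S}^n\setminus\{p\}$.

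First I would recall that stereographic projection from $p$ furnishes a homeomorphism $h:\mathbb{S}^n\setminus\{p\}\to\mathbb{R}^n$. The set $K\equiv F(\mathbb{S}^n)$ is compact, being the continuous image of the compact space $\mathbb{S}^n$, so $h(K)$ is a compact, hence bounded, subset of $\mathbb{R}^n$. Fixing a closed Euclidean ball $D\subset\mathbb{R}^n$ with $h(K)\subseteq D$, the set $B\equiv h^{-1}(D)$ is a closed subset of $\mathbb{S}^n$ that is homeomorphic (via $h$) to the standard closed $n$-ball and satisfies $K\subseteq B$.

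Next I would observe that $F$ restricts to a continuous self-map of $B$: for every $x\in B$ one has $F(x)\in F(\mathbb{S}^n)=K\subseteq B$, so $F|_B:B\to B$ is continuous. Conjugating by $h$, the map $h\circ F|_B\circ h^{-1}:D\to D$ is a continuous self-map of a closed $n$-ball, and Brouwer's fixed point theorem gives it a fixed point; transporting back through $h$ yields a point $x_0\in B$ with $F(x_0)=x_0$, which is the desired fixed point of $F$. (Equivalently, by contraposition, a fixed-point-free $F$ must be onto.)

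There is essentially no serious obstacle here: the argument is routine once one notices that compactness of $F(\mathbb{S}^n)$ is exactly what lets us trap the image inside a genuine (topological) closed ball rather than merely inside the open cell $\mathbb{S}^n\setminus\{p\}$, where Brouwer's theorem would not apply. The only points worth a sentence are that $\mathbb{S}^n\setminus\{p\}\cong\mathbb{R}^n$ via the standard stereographic projection, and that Brouwer's theorem is invariant under homeomorphism since a conjugating homeomorphism carries fixed points to fixed points; both are immediate.
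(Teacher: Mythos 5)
Your proof is correct and follows essentially the same route as the paper: delete the missed point, transport to Euclidean space via stereographic projection, and invoke Brouwer's fixed point theorem on the (compact) image. In fact your version is slightly more careful than the paper's, which applies Brouwer directly to the compact set $C=(\gamma\circ F)(\mathbb{S}^n)$ without first enclosing it in a convex body or topological closed ball as you do; that enclosure step is needed for Brouwer's theorem to apply verbatim.
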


\begin{lemma}[$B\supseteq\mathcal{B}$]\label{lem:hard}
 To every segment  in the boundary of
$\CHS$ corresponds a boundary edge of $G$.
\end{lemma}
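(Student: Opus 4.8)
The plan is to argue by contradiction following the outline already laid out in the text preceding the statement. Fix a segment $(w_i,w_j)$ of $\partial\CHS$, and suppose that $\Vor_{ij}$ is \emph{bounded}; equivalently, for all sufficiently large $\sigma$ there is no point $p\in C(\sigma)$ with $p\in\Vor_{ij}$, i.e.\ no point of $C(\sigma)$ has $\pi(p)=(w_i+w_j)/2$. I will use this to construct a continuous self-map of $C(\sigma)\cong\mathbb{S}^1$ that fails to be onto, apply Lemma~\ref{lem:Sn} to produce a fixed point, and then show that this fixed point contradicts Lemma~\ref{lem:contrad}.

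First I would choose $\sigma$ large enough that Lemmas~\ref{lem:VW} and~\ref{lem:contrad} both apply on $C(\sigma)$, so that every point of $C(\sigma)$ is closest only to sites in $W$, and so that $\pi$ and $\nu_\sigma$ are defined and continuous (Property~\ref{prop:pi}(i) and convexity of $\CHS$). Consider the composition $\nu_\sigma\circ\pi:C(\sigma)\to C(\sigma)$. Under the assumption that $\Vor_{ij}$ is bounded, Property~\ref{prop:pi}(ii) tells us that $\pi(p)$ never lands on the midpoint $(w_i+w_j)/2$; hence $\nu_\sigma\circ\pi$ misses the point $\nu_\sigma\big((w_i+w_j)/2\big)\in C(\sigma)$, so it is not onto. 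By Lemma~\ref{lem:Sn} (with $n=1$) there is a fixed point $q\in C(\sigma)$, i.e.\ $\nu_\sigma(\pi(q))=q$. Now $q$ is closest to some boundary site, and in fact $\pi(q)$ lies on some edge $(w_k,w_{k\oplus 1})$ of $\partial\CHS$ (or at a vertex/midpoint, which is a boundary case handled by the linear interpolation). The key geometric observation is that $\nu_\sigma$ projects a point of the segment $(w_k,w_{k\oplus 1})$ \emph{outward}, into $H^{+}_{k,k\oplus 1}$ — the empty side — because the origin lies inside $\CHS$ (this is exactly the content of figure~\ref{fig:pinu}). So the fixed point $q=\nu_\sigma(\pi(q))$ lies in $H^{+}_{k,k\oplus 1}$. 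But $q$ being closest to the pair $\{w_k,w_{k\oplus 1}\}$ (equivalently, to $L_{k,k\oplus 1}$, with its closest point on $L_{k,k\oplus 1}$ lying in $\overline{w_k w_{k\oplus 1}}$ by Lemma~\ref{lem:midpoint}) and lying on the empty side is precisely the configuration Lemma~\ref{lem:contrad} forbids for $\|q\|=\sigma>\rho$: such a $q$ must be closer to some other site. This contradicts $q\in C(\sigma)$ being closest to $W$-sites $w_k,w_{k\oplus 1}$ only.

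I expect the main obstacle to be making the reduction to a genuine fixed-point statement fully rigorous — namely massaging $\nu_\sigma\circ\pi$ into the exact self-map of $\mathbb{S}^1$ whose fixed point carries the right "behind-the-segment" conclusion, and correctly tracking \emph{which} edge $(w_k,w_{k\oplus1})$ the fixed point is closest to versus the \emph{fixed} edge $(w_i,w_j)$ we started with. The text hints that "in fact a slightly different but related function" is used, so the honest version has to argue that the map can be perturbed/reparametrized so that not landing on the midpoint of \emph{the specific} segment $(w_i,w_j)$ forces the missed value, while the fixed point it produces is automatically behind whatever segment it is closest to. The orientation bookkeeping (clockwise order of $W$, the sign conventions $H^{+}$ vs $H^{-}$, and that $\nu_\sigma$ is outward precisely because the origin is interior to $\CHS$) is where the care is needed; once that is set up, Lemmas~\ref{lem:Sn} and~\ref{lem:contrad} close the argument immediately. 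Finally, since this holds for every segment of $\mathcal{B}$, every such segment corresponds to an unbounded $\Vor_{ij}$, hence to a boundary edge of $G$, giving $B\supseteq\mathcal{B}$.
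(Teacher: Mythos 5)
Your overall strategy is the paper's: make a continuous self-map of $C(\sigma)$ out of $\pi$ and $\nu_\sigma$, use the boundedness of $\Vor_{ij}$ to show it is not onto, invoke Lemma~\ref{lem:Sn} to get a fixed point, and kill that fixed point with Lemma~\ref{lem:contrad}. But the map you actually use, $\nu_\sigma\circ\pi$, does not work, and the failure is exactly at the step you flagged as needing care. A fixed point $q$ of $\nu_\sigma\circ\pi$ satisfies $q=\nu_\sigma(\pi(q))\in H^{+}_{k,k\oplus 1}$, i.e.\ $q$ lies on the \emph{empty} side of the segment containing $\pi(q)$. That configuration is not forbidden by anything: Lemma~\ref{lem:contrad} constrains points of large norm in $H^{-}_{ij}$ (the side containing the remaining sites), and indeed the unbounded part of a genuine boundary Voronoi edge $\Vor_{k,k\oplus 1}$ lives precisely in $H^{+}_{k,k\oplus 1}$ (cf.\ Lemma~\ref{lem:halfspace}). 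So $\nu_\sigma\circ\pi$ is expected to have fixed points even when the conclusion of the lemma holds, and your contradiction evaporates; you have applied Lemma~\ref{lem:contrad} on the wrong side of $L_{k,k\oplus 1}$.

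The missing ingredient is the antipodal map: the paper applies Lemma~\ref{lem:Sn} to $A\circ\nu_\sigma\circ\pi$ with $A(p)=-p$. This map is still continuous and still fails to be onto when no point of $C(\sigma)$ maps to $(w_i+w_j)/2$ (it misses $-\nu_\sigma((w_i+w_j)/2)$), but now its fixed point satisfies $\nu_\sigma(\pi(q))=-q$. Since the origin is interior to $\CHS$ and $\nu_\sigma(\pi(q))\in H^{+}_{kl}$, this forces $q=-\nu_\sigma(\pi(q))\in H^{-}_{kl}$, i.e.\ $q$ is \emph{behind} the segment $\overline{w_k w_l}$ it is closest to, which is exactly the situation Lemma~\ref{lem:contrad} (together with Lemma~\ref{lem:VW} and Property~\ref{prop:pi}) rules out for $\|q\|=\sigma>\rho$. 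With $A$ inserted, the rest of your argument (choice of $\sigma$, continuity, reading off $(w_k,w_l)$ from $\pi(q)$, and the final quantification over all segments of $\mathcal{B}$) goes through as you wrote it.
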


Finally, since we have shown that the topological boundary of the dual triangulation is the 
	same as the boundary of the convex hull of the sites, we can conclude that:
% if assumption~\ref{ass:BAA} holds, then the dual of an orphan-free Voronoi diagram has a simple, closed, convex boundary. 
	
\begin{corollary}\label{cor:boundary}
%If assumption~\ref{ass:BAA} holds, then 
The topological boundary of the dual of an orphan-free Voronoi diagram 
	is the boundary of the convex hull $\CHS$, and is therefore simple and closed. 
\end{corollary}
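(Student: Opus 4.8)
The plan is to assemble the corollary directly from the two inclusion lemmas already established in this section. First I would invoke Lemma~\ref{boundary_easy}, which gives $B\subseteq\mathcal{B}$: every topological boundary edge of $G$ (an edge incident to a single face, equivalently one whose dual primal Voronoi edge is unbounded) projects to a segment of $\partial\CHS$. Then I would invoke Lemma~\ref{lem:hard}, which supplies the reverse inclusion $\mathcal{B}\subseteq B$: every segment $(w_i,w_{i\oplus 1})$ of $\partial\CHS$ is realized by a boundary edge of $G$, namely the straight edge dual to the (unbounded) Voronoi edge $\Vor_{i,i\oplus 1}$. Taken together these give an equality $B=\mathcal{B}$ of sets of straight segments. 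Since the vertices in the topological boundary of $G$ are by definition exactly the endpoints of the edges in $B$, the vertex sets agree as well, and hence the topological boundary of $G$ coincides, as a one-dimensional subcomplex of the plane, with $\partial\CHS$.

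Next I would observe that $\mathcal{B}=\{(w_i,w_{i\oplus 1}):i=1,\dots,m\}$ is a simple closed polygonal chain. This is immediate from the convexity of $\CHS$ together with the fact that the sites $w_1,\dots,w_m$ are listed in clockwise order around $\CHS$: consecutive segments share exactly one endpoint, non-consecutive segments are disjoint, and the chain closes up on itself. Transporting this structure through the identification $B=\mathcal{B}$ shows that the topological boundary of $G$ is itself a simple closed polygonal chain, which is precisely the assertion of the corollary. I would also note, as a forward reference, that this equality is exactly the input needed by Section~\ref{sec:interior} to conclude that $G$ single-covers $\CHS$, although that conclusion is not part of the present corollary.

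There is essentially no obstacle internal to the corollary; all of the difficulty is absorbed into Lemma~\ref{lem:hard}, whose proof (deferred to Appendix~B) rests on the Brouwer-fixed-point argument applied to $\nu_\sigma\circ\pi:C(\sigma)\to C(\sigma)$ together with the bounded-anisotropy Lemma~\ref{lem:contrad}. The one point I would make explicit is that $B=\mathcal{B}$ is genuinely an equality of \emph{sets of segments}, obtained via the correspondence between dual edges of $G$ and unbounded primal Voronoi edges; once this is stated cleanly, simplicity and closedness of the topological boundary of $G$ follow with no further work.
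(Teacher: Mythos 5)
Your proposal is correct and follows exactly the paper's route: the corollary is stated as an immediate consequence of Lemma~\ref{boundary_easy} ($B\subseteq\mathcal{B}$) and Lemma~\ref{lem:hard} ($B\supseteq\mathcal{B}$), with simplicity and closedness inherited from the convex-hull boundary $\mathcal{B}$. Your explicit remark that the identification is an equality of sets of segments (and hence of their endpoint vertices) is a clean articulation of what the paper leaves implicit.
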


%
%Therefore,
%
%\begin{theorem}\label{thm:boundary}
%If the metric $Q$ has bounded ratio of eigenvalues, then the boundary edges of $G$ are the same as the boundary edges of the convex hull of $\Sites$.  \emph{$[B=\mathcal{B}]$} 
%\end{theorem}
%
%\begin{corollary}\label{col:simple-boundary}
%If the metric $Q$ has bounded ratio of eigenvalues, then the boundary edges of $G$ form %a simple polygonal chain (if all vertices are colinear), or
%a simple, closed, convex polygonal chain. % (if not all vertices are colinear). 
%\end{corollary}
%
%
%The ECB property is sufficient to show that no face in $G$ is degenerate, a fact that will be used in the following section. 
%
%\begin{lemma}\label{lem:degen}
%	$G$ has no degenerate (null area) elements. 
%\end{lemma}
%
%
%
%In theorem~\ref{th:main}, we show that, even if Voronoi vertices are incident to more than three Voronoi regions (equidistant to more than three sites), % (incident to more than three Voronoi regions), 
%every face in $G$ is a convex polygon, and therefore can be trivially triangulated (say in a fan arrangement). 
%The resulting triangulation has no degenerate elements, and all its triangles satisfy the ECB condition (using the same witness ellipsoid as the convex polygon from which they are triangulated). 

\subsection{Interior}\label{sec:interior}

This section concludes the proof of Theorem~\ref{th:main} by showing that, 
	if the topological boundary of $G$ is simple and closed, then $G$ must be embedded.
The main argument in the proof 
uses proposition~\ref{prop:ECB} and~\ref{cor:boundary}, %, corollary~\ref{col:simple-boundary}, 
as well as the theory of discrete one-forms on graphs, 
%results from~\cite{1form}, 
to show that there are no
``edge fold-overs" in $G$ % of $G$  
(edges whose two incident faces are on the same side of its supporting line), 
and uses this to conclude that the interior of $G$ is a single ``flat sheet", and therefore it is embedded. 
%As in %the previous 
%section~\ref{sec:boundary}, we assume that not all sites are colinear. 
%(the simpler colinear case was addressed in Appendix A). %section~\ref{sec:colinear}),
%We distinguish between the sites $W\subseteq \Sites$ that lie on the boundary of the convex hull, and the remaining, or
%\emph{interior sites} ($\Sites\setminus W$). 

The following definition, from~\cite{1form},  assumes that, for each edge $(s_i,s_j)\in E$ of $G$, 
we distinguish the two opposing half-edges $(s_i,s_j)$ and $(s_j,s_i)$.

\begin{definition}[Gortler et al.\ \cite{1form}]\label{def:1form}
A non-vanishing (discrete) one-form $\xi$  is an assignment of a real value
$\xi_{ij} \neq 0$ to each half edge $(s_i,s_j)$ in $G$, such that 
$\xi_{ji} = -\xi_{ij}$. 
\end{definition}

%Since $G$ has the same structure as $G$, 
We can construct a non-vanishing one-form
over $G$ as follows. 
Given some unit direction vector $n\in\mathbb{S}^1$
(in coordinates $n=\left[n_1,n_2\right]^t$), 
we assign a real
value $z(v) = n^t v$ to each vertex $v$ in $G$, and define 
$\xi_{ij} \equiv z(s_i) - z(s_j)$, which clearly satisfies 
$\xi_{ji} = -\xi_{ij}$. The one-form, denoted by $\xi^n$, 
is non-vanishing if, for all edges $(s_i,s_j)\in E$, 
it is $\xi_{ij} = n^t (s_i - s_j) \neq 0$, 
that is, % $\xi^n$ is non-vanishing 
if $n$ is not orthogonal to any edge. 
The set of edges has finite cardinality $|E| \le |\Sites| (|\Sites|-1)/2$,  
so \emph{almost all} directions $n\in\mathbb{S}^1$ generate a non-vanishing one-form $\xi^n$.

Since $G=(\Sites,E,F)$ is a planar graph with a well-defined face structure,
there is, for each face $f\in F$, a cyclically ordered set
$\partial f$ of half-edges around the face. 
Likewise, for each vertex $v\in \Sites$, the set $\delta v$ of cyclically ordered
(oriented) half-edges emanating from each vertex is well-defined. 

\begin{definition}[Gortler et al.\ \cite{1form}]
Given %$n\in\mathbb{S}^1$ and 
non-vanishing one-form $\xi$,
the index of vertex $v$ with respect to $\xi$ is \[\mathbf{ind}_{_{\xi}}(v) \equiv 1 - \mathbf{sc}_{_{\xi}}(v) / 2, \]
	where $\mathbf{sc}_{_{\xi}}(v)$ is the number of sign changes of $\xi$ 
	when visiting the half-edges of $\delta v$ in order. 
The index of face $f$ is \[\mathbf{ind}_{_{\xi}}(f) \equiv 1 - \mathbf{sc}_{_{\xi}}(f)/2\] where $\mathbf{sc}_{_{\xi}}(f)$ is the number
of sign changes of $\xi$ as one visits the half-edges of $\partial f$ in order. 
\end{definition}

Note that, by definition, it is always $\mathbf{ind}_{_{\xi^n}}(v) \le 1$. 
A discrete analog of the Poincar\'e-Hopf index theorem relates 
the two indices above:

\begin{theorem}[Gortler et al.\ \cite{1form}]\label{lem:ph}
	%If $G=(\Sites,E,F)$ is a simple planar graph then, 
For any non-vanishing one-form $\xi$, it is 
\[ \displaystyle{\sum_{v\in \Sites} \mathbf{ind}_{_{\xi}}(v) + \sum_{f\in F} \mathbf{ind}_{_{\xi}}(f) = 2} \]
\end{theorem}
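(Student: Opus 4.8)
The plan is to reduce the identity to Euler's formula for the connected planar graph $G=(\Sites,E,F)$ by a double-counting of sign changes over the \emph{corners} of $G$. Since $\mathbf{ind}_{\xi}(v)=1-\mathbf{sc}_{\xi}(v)/2$ and $\mathbf{ind}_{\xi}(f)=1-\mathbf{sc}_{\xi}(f)/2$, the left-hand side equals
\[ \sum_{v\in\Sites}\mathbf{ind}_{\xi}(v)+\sum_{f\in F}\mathbf{ind}_{\xi}(f) = |\Sites|+|F|-\tfrac12\left(\sum_{v\in\Sites}\mathbf{sc}_{\xi}(v)+\sum_{f\in F}\mathbf{sc}_{\xi}(f)\right), \]
so, using $|\Sites|-|E|+|F|=2$, it suffices to prove
\[ \sum_{v\in\Sites}\mathbf{sc}_{\xi}(v)+\sum_{f\in F}\mathbf{sc}_{\xi}(f)=2|E|. \]

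First I would attach to each half-edge $h=(s_i,s_j)$ its sign $\epsilon(h)\equiv\operatorname{sgn}(\xi_{ij})\in\{+1,-1\}$, which is well defined precisely because $\xi$ is non-vanishing, and which satisfies $\epsilon(\bar h)=-\epsilon(h)$ for the reversed half-edge $\bar h$. Next I would set up the corner combinatorics: a \emph{corner} of the embedding is an angular sector at a vertex $v$ delimited by two half-edges consecutive in the cyclic order $\delta v$; equivalently it is a spot where some facial walk $\partial f$ turns at $v$. The corners at $v$ are in bijection with the consecutive pairs of $\delta v$, hence number $\deg(v)$, and the corners of $f$ are in bijection with the consecutive pairs of $\partial f$; consequently the total number of corners equals $\sum_{v\in\Sites}\deg(v)=2|E|$, and every corner sits at exactly one vertex and inside exactly one face. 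Thus $\sum_{v}\mathbf{sc}_{\xi}(v)$ counts the corners whose two delimiting half-edges (oriented away from $v$) carry opposite signs, while $\sum_{f}\mathbf{sc}_{\xi}(f)$ counts the corners whose two half-edges consecutive along the facial walk carry opposite signs.

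The crux is a purely local comparison at a fixed corner $c$, at vertex $v$ inside face $f$: the two half-edges delimiting $c$ and pointing away from $v$ are exactly $h_{\mathrm{out}}$, the half-edge along which $\partial f$ leaves $v$, and $\bar h_{\mathrm{in}}$, the reversal of the half-edge $h_{\mathrm{in}}$ along which $\partial f$ arrives at $v$ (this is immediate for a triangular face and holds in general from the rotation-system description of the planar embedding). Hence the contribution of $c$ to $\sum_{v}\mathbf{sc}_{\xi}(v)$ is the indicator $[\epsilon(h_{\mathrm{out}})\neq\epsilon(\bar h_{\mathrm{in}})]=[\epsilon(h_{\mathrm{out}})=\epsilon(h_{\mathrm{in}})]$, whereas its contribution to $\sum_{f}\mathbf{sc}_{\xi}(f)$ is $[\epsilon(h_{\mathrm{in}})\neq\epsilon(h_{\mathrm{out}})]$; these two indicators sum to $1$ at every corner. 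Summing over all $2|E|$ corners yields the displayed identity, after which substitution and Euler's formula finish the proof. I would also separately check the degenerate corner at a degree-one vertex, where $h_{\mathrm{out}}$ and $\bar h_{\mathrm{in}}$ coincide, to confirm the local count is still $1$ there.

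I expect the only genuine obstacle to be pinning down the orientation conventions linking $\partial f$ to $\delta v$, so that a corner's delimiting half-edges are correctly identified as $\{h_{\mathrm{out}},\bar h_{\mathrm{in}}\}$ and, in particular, the reversal sign $\epsilon(\bar h_{\mathrm{in}})=-\epsilon(h_{\mathrm{in}})$ lands on the correct side of the count; once that is settled, the remainder is routine bookkeeping with $|\Sites|-|E|+|F|=2$. This is precisely the argument of Gortler et al.~\cite{1form}, reproduced here only for completeness.
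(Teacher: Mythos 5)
Your argument is correct, and it is worth noting that the paper itself gives no proof of this statement: it is imported verbatim from Gortler et al.~\cite{1form} (their Theorem 3.5), with only the remark that the unbounded face may be dropped from the sum because, by corollary~\ref{cor:boundary}, the boundary is simple and convex and hence that face has null index. Your corner double-count is essentially the argument in that reference, and the local step checks out: the corner at $v$ inside $f$ is delimited in $\delta v$ by $h_{\mathrm{out}}$ and $\bar h_{\mathrm{in}}$, so its contribution to $\sum_v \mathbf{sc}_{\xi}(v)$ is $[\epsilon(h_{\mathrm{in}})=\epsilon(h_{\mathrm{out}})]$ while its contribution to $\sum_f \mathbf{sc}_{\xi}(f)$ is $[\epsilon(h_{\mathrm{in}})\neq\epsilon(h_{\mathrm{out}})]$, and these sum to $1$ over each of the $2|E|$ corners. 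The one convention you should pin down is the status of the outer face: your use of Euler's formula $|\Sites|-|E|+|F|=2$ and your corner count both require $F$ to contain the unbounded face, so what you actually prove is the all-faces version of the identity. In the paper's setup $F$ explicitly excludes the outer face, so to recover the statement as written you must add the observation (which the paper makes, via corollary~\ref{cor:boundary}) that the outer face's boundary is a convex simple polygon and therefore contributes index zero, for any non-vanishing $\xi^n$. With that one sentence added, your proof is a complete, self-contained substitute for the external citation.
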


Note that this follows from Theorem 3.5 of~\cite{1form} because the unbounded, 
outside face, which is not in $G$, is assumed in this section to be closed and simple
(corollary~\ref{cor:boundary}), and therefore has null index. %if included in $G$. 
%where $F$ is the set of faces in $G$ (plus the unbounded face). 
Note that the machinery from~\cite{1form} to deal with degenerate cases
isn't needed here because vertices, by definition, cannot coincide ($\Sites$ is not a multiset). 
All  proofs in this section, except for that of theorem~\ref{th:main}, are  in Appendix C. 

The one-forms $\xi^n$ constructed above satisfy the following property:

\begin{lemma}\label{lem:non-negative}
	Given a non-vanishing one-form $\xi^n$, the sum of indices of interior vertices ($\Sites\setminus W$) of $G$ is non-negative. 
\end{lemma}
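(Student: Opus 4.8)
The plan is to exploit the empty circum-ball property (proposition~\ref{prop:ECB}) to show that no \emph{interior} vertex of $G$ can have negative index with respect to $\xi^n$, from which non-negativity of the sum follows trivially. Recall that $\mathbf{ind}_{\xi^n}(v) = 1 - \mathbf{sc}_{\xi^n}(v)/2$, so the index of $v$ is negative exactly when $\mathbf{sc}_{\xi^n}(v)\ge 4$, i.e.\ when the sign pattern of $\xi^n$ around the half-edges emanating from $v$ changes at least four times. The key geometric observation is that $\xi^n_{ij} = n^t(s_i-s_j)$, so $\xi^n_{ij}>0$ iff the neighbor $s_j$ lies in the open half-plane $\{x : n^t x < n^t s_i\}$ and $\xi^n_{ij}<0$ iff $s_j$ lies in the opposite open half-plane. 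Thus the sign of $\xi^n$ along a half-edge $(s_i,s_j)$ is determined by which side of the line through $s_i$ with normal $n$ the neighbor $s_j$ falls on.

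First I would fix an interior vertex $v=s_i$ and list its neighbors $s_{j_1},\dots,s_{j_d}$ in the cyclic (embedded) order given by $\delta v$. Because $G$ is a (triangulated) planar graph and $v$ is interior, consecutive neighbors $s_{j_k},s_{j_{k+1}}$ together with $v$ bound a face $\tau_k$, and every such face has an empty circum-ball $B(c_k;\rho_k)$ (convex, of the first kind) circumscribing $v, s_{j_k}, s_{j_{k+1}}$ by proposition~\ref{prop:ECB}. Next I would argue that four sign changes force a contradiction with emptiness: a sign change between the half-edges to $s_{j_k}$ and $s_{j_{k+1}}$ means these two neighbors lie on opposite sides of the line $\ell = \{x: n^t x = n^t s_i\}$ through $v$. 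If there are at least four sign changes as we go around, the line $\ell$ separates the cyclic neighbor sequence into at least four alternating arcs, which in particular means we can pick neighbors $a,b,c,e$ in cyclic order with $a,c$ strictly on one side of $\ell$ and $b,e$ strictly on the other side. Then the two faces (or fan-triangulated sub-faces) whose circum-balls I invoke will have centers forced to opposite sides of $\ell$ in a way that makes one of the circumscribing balls contain a neighboring site in its interior — contradicting the ECB property. The cleanest way to see this is the standard ``lifting'' / local-convexity argument: if $v$ is a local max or local min of the height function $z(\cdot)=n^t(\cdot)$ restricted to its link, $\mathbf{sc}=0$; otherwise $z$ restricted to the star of $v$ is, roughly, unimodal up and down, giving $\mathbf{sc}=2$; and $\mathbf{sc}\ge 4$ would require $v$ to sit strictly ``below'' the chord joining two of its neighbors while also strictly ``above'' another such chord, which for an interior vertex of a mesh with the empty-ball property is impossible — a site would then lie inside the witness ball of an incident triangle.

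The main obstacle I anticipate is making the last step fully rigorous: translating ``$\mathbf{sc}_{\xi^n}(v)\ge 4$'' into a configuration that provably violates the empty circum-ball property, since the balls are general convex balls of the first kind (not Euclidean disks), so I cannot use Euclidean circumcircle facts directly. The way around this is to use only convexity: if $s_{j_k},s_{j_{k+1}}$ are on opposite sides of $\ell$ while $v$ is strictly on one side, then the face $\tau_k$ straddles $\ell$, and I would show that in a ``wrinkle'' configuration (two straddling faces in the star of $v$ pointing the opposite way) one of the two convex witness balls must strictly contain a vertex of the other straddling face, because that vertex lies inside the convex hull of the four boundary points of the ball — contradicting proposition~\ref{prop:ECB}. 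Once we know every interior vertex has $\mathbf{ind}_{\xi^n}(v)\in\{0,1\}$ (equivalently $\mathbf{sc}\le 2$), the claimed inequality $\sum_{v\in \Sites\setminus W}\mathbf{ind}_{\xi^n}(v)\ge 0$ is immediate.
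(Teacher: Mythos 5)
Your proposal takes a fundamentally different route from the paper, and the route has a genuine gap at its critical step. The paper's proof of this lemma is a purely \emph{global} counting argument that never touches the empty circum-ball property: one first observes that every face index is $\le 0$ (an index-one face would force the cyclic inequality $n^t v_1 < \dots < n^t v_m < n^t v_1$), then uses Corollary~\ref{cor:boundary} (the topological boundary of $G$ is the convex hull boundary) to conclude that the boundary vertices contribute exactly $2$ to the index sum (only the topmost and bottommost hull vertices have index one, all others zero), and finally invokes the discrete Poincar\'e--Hopf identity of Theorem~\ref{lem:ph} to deduce that the interior vertices must contribute a non-negative amount. Your plan uses neither Corollary~\ref{cor:boundary} nor Theorem~\ref{lem:ph}; instead you try to prove the much stronger \emph{pointwise} claim that no interior vertex can have $\mathbf{sc}_{\xi^n}(v)\ge 4$, by arguing that such a ``wrinkle'' locally violates the ECB property of the faces in the star of $v$.

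That pointwise claim is exactly the step the paper is architected to avoid, and your writeup does not actually prove it: the sentences ``the two faces \dots will have centers forced to opposite sides of $\ell$ in a way that makes one of the circumscribing balls contain a neighboring site in its interior'' and ``which for an interior vertex of a mesh with the empty-ball property is impossible'' assert the conclusion rather than derive it. The paper's own logic (see the outline around Figure~\ref{fig:outline} and Lemmas~\ref{lem:index-1}, \ref{lem:index1}, \ref{lem:ef}) is that a fold-over creates a \emph{negative}-index vertex somewhere, which by the present lemma forces a \emph{positive}-index vertex (a local extremum, $\mathbf{sc}=0$) somewhere else, and it is that \emph{other} vertex whose star violates ECB via Lemma~\ref{lem:index1}; the contradiction with ECB is non-local. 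The discussion of Figure~\ref{fig:3dbreaks} makes explicit that folded stars can coexist with the ECB property in $\mathbb{R}^3$, which is a strong warning that a purely local ``saddle contradicts ECB'' argument is delicate at best; the standard paraboloid-lifting argument you allude to is specific to Euclidean circumdisks and does not transfer to general convex balls of the first kind whose centers are constrained to lie on prescribed Voronoi elements. There is also a latent circularity: reasoning about the link of $v$ as a polygon with $v$ in its interior, and about ``arcs'' of that polygon on either side of $\ell$, presupposes the embedding of the star that the section is in the process of establishing. If you completed your local argument it would render Lemmas~\ref{lem:index1} and~\ref{lem:ef} and the whole index machinery unnecessary, which is a further sign that the missing step is the actual hard content. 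To prove the lemma as stated, use the global route: non-positive face indices, boundary contribution equal to two via the convexity of the boundary, and Poincar\'e--Hopf.
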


The next two lemmas %, proven in~\ref{app:interior}, 
relate the presence of edge fold-overs and 
the ECB %empty circum-ellipse 
property (proposition~\ref{prop:ECB})  to the indices of vertices in $G$.

\begin{lemma}\label{lem:index-1}
If $G$ has an edge fold-over, then there is $n\in\mathbb{S}^1$ and non-vanishing one-form $\xi^n$ such
that $\mathbf{ind}_{_{\xi^n}}(v) < 0$ for some interior vertex $v\in \Sites\setminus W$. % has index  $\mathbf{ind}_{_{\xi^n}}(v) < 0$. 
%with respect to $\xi^n$. 
\end{lemma}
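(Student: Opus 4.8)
The plan is to take an edge fold-over $e=(s_a,s_b)$ in $G$ — that is, an edge whose two incident faces $f_1,f_2$ lie on the same side of the supporting line $L_{ab}$ of $e$ — and extract from it a direction $n\in\mathbb{S}^1$ for which the vertex index of one of the endpoints, $s_a$ or $s_b$, is strictly negative. The natural choice is to pick $n$ parallel (or nearly parallel) to the edge $e$, so that the height function $z(\cdot)=n^t(\cdot)$ is (nearly) constant along $e$ but the two opposite vertices of $f_1$ and $f_2$ both lie on the same $z$-side of $e$. Since the fold-over condition is an open condition (strict inequality: both opposite vertices strictly on the same side of $L_{ab}$), and the set of bad directions orthogonal to some edge of $G$ is finite, I can perturb $n$ slightly so that $\xi^n$ is non-vanishing and the relevant sign pattern is preserved.

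First I would set up notation: let $f_1$ have third vertex $s_c$ and $f_2$ have third vertex $s_d$, so that around the edge $e$ the four points $s_a,s_c,s_b,s_d$ appear, with $s_c$ and $s_d$ on the same side of $L_{ab}$. Choosing $n$ close to the direction of $s_b-s_a$ but tilted slightly so that, say, $z(s_b)>z(s_a)$ by an arbitrarily small amount, and so that $z(s_c),z(s_d)$ are both (say) $>z(s_a)$ — this is achievable because $s_c,s_d$ lie strictly on one side of $L_{ab}$, hence have a definite sign of $n_0^t(\cdot - s_a)$ for $n_0\perp e$, and a small tilt does not destroy this. Then I would walk around $\delta s_a$, the cyclically ordered half-edges out of $s_a$. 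The two half-edges $(s_a,s_c)$ and $(s_a,s_d)$ are separated in this cyclic order by the half-edge $(s_a,s_b)$ on one side (inside the wedge $f_1\cup e\cup f_2$) and by the rest of the link of $s_a$ on the other side. The key computation is to show $\mathbf{sc}_{\xi^n}(s_a)\ge 4$: the one-form value $\xi_{ab}=z(s_a)-z(s_b)<0$ sits between two values $\xi_{ac}=z(s_a)-z(s_c)<0$ and $\xi_{ad}=z(s_a)-z(s_d)<0$ of the same sign, contributing no sign change \emph{within} the wedge, but because $s_a$ is not the global $z$-minimum of its link (the fold-over forces its link to wrap around it in a way that produces extra oscillation) there must be at least two sign changes on each of the two arcs of $\delta s_a$ cut out by $(s_a,s_c)$ and $(s_a,s_d)$. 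More carefully: since $\mathbf{ind}_{\xi^n}(s_a)=1-\mathbf{sc}_{\xi^n}(s_a)/2$, getting $\mathbf{ind}<0$ is equivalent to $\mathbf{sc}\ge 4$; I'd argue this by noting that on a flat (non-folded) vertex the link values cross the level $z(s_a)$ exactly twice, whereas the fold-over inserts a third ``excursion'' of the link to the same side, forcing a fourth crossing. If the direct count at $s_a$ does not immediately give $\mathbf{sc}\ge 4$, the symmetric argument applies at $s_b$; the fold-over is symmetric in $a,b$ up to reversing $n$, so one of the two endpoints must work.

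I would then verify that the vertex whose index I made negative is an interior vertex, i.e.\ lies in $\Sites\setminus W$: a fold-over edge is necessarily an interior edge (both its incident faces exist), and at least one of its endpoints is not on the convex-hull boundary — indeed if both $s_a,s_b\in W$ then $e\in\mathcal{B}$ by Lemma~\ref{boundary_easy}/Corollary~\ref{cor:boundary}, contradicting that $e$ has two incident faces; and the endpoint contributing $\mathbf{sc}\ge 4$ has a ``wrapped'' link, which a boundary vertex (whose link is an open fan, not a closed cycle) cannot have. So that endpoint is interior, as required. The main obstacle I anticipate is the bookkeeping in the sign-change count: carefully formalizing ``the fold-over inserts an extra same-side excursion of the link, hence an extra pair of sign changes'' in terms of the cyclic sequence of half-edges of $\delta v$ and their $\xi^n$-signs — this needs a clean combinatorial lemma about the link of a folded vertex, and choosing the tilt of $n$ so that no $\xi$-value accidentally vanishes and the intended strict inequalities all hold simultaneously. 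Everything else (finiteness of bad directions, openness of the fold-over condition, interiority of the vertex) is routine.
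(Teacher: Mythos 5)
Your overall plan (take $n$ nearly orthogonal to the folded edge, tilt it slightly toward one endpoint, and count sign changes of $\xi^n$ around that endpoint) is the paper's plan, but you have chosen the perpendicular component of $n$ with the \emph{wrong sign}, and that choice breaks the local count. You orient $n_0\perp e$ so that $z(s_c),z(s_d)>z(s_a)$; after the tilt $z(s_b)>z(s_a)$ as well, so the three consecutive half-edges $(s_a,s_c),(s_a,s_b),(s_a,s_d)$ of $\delta s_a$ all carry one-form values of the same sign and contribute \emph{zero} sign changes. You are then forced to extract all the needed sign changes from the rest of the link, and the justification offered --- ``the fold-over forces the link to wrap around \dots there must be at least two sign changes on each of the two arcs'' --- is exactly the content of the lemma and is not proved; it is also false as stated: if every neighbour of $s_a$ happens to lie on the apex side of $L_{ab}$, then with your $n$ every neighbour satisfies $z(\cdot)>z(s_a)$, hence $\mathbf{sc}_{\xi^n}(s_a)=0$ and $\mathbf{ind}_{\xi^n}(s_a)=+1$, the opposite of what is required; and the fallback ``then the symmetric argument works at $s_b$'' is asserted, not argued. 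The paper orients $n$ \emph{away} from the side containing the two apexes, so that $z(v_1),z(v_2)<z(v)<z(w)$: the far endpoint of the folded edge pokes just barely above the level of $v$ while both apexes dip below it, and the three \emph{consecutive} half-edges then carry the alternating sign pattern $-,+,-$. That isolated excursion of opposite sign in the middle of the wedge is the single idea the lemma turns on, and your construction eliminates it. You flag this bookkeeping as your ``main obstacle''; it is not bookkeeping, it is the missing proof.

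Two smaller points. Your argument that the bad vertex is interior is incorrect: two hull vertices may be joined by an interior chord with two incident faces, so ``$s_a,s_b\in W$ implies $e\in\mathcal{B}$'' does not follow from Lemma~\ref{boundary_easy}, and the ``wrapped link'' criterion is not a proof. Also, your opening sentence asks for $n$ ``parallel to the edge $e$ so that $z$ is nearly constant along $e$''; these are contradictory ($z$ is constant along $e$ precisely when $n\perp e$), and while the rest of your text makes clear you mean the near-perpendicular choice, the slip is symptomatic of the sign confusion described above.
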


\begin{lemma}\label{lem:index1}
Given $n\in\mathbb{S}^1$ and non-vanishing one-form $\xi^n$, if $G$ has an interior vertex $v\in \Sites\setminus W$ with index
$\mathbf{ind}_{_{\xi^n}}(v)=1$, then there is a face $f$ of
$G$ that does not satisfy the empty circum-ball %(ECB) 
property (proposition~\ref{prop:ECB}). 
\end{lemma}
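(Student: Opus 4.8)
The plan is to exploit the fact that a vertex of index $1$ with respect to $\xi^n$ is a strict local extremum of the height function $z(\cdot)=n^t(\cdot)$ on the $1$-ring of $v$: index $1$ means $\mathbf{sc}_{_{\xi^n}}(v)=0$, i.e.\ all half-edges emanating from $v$ have the same sign of $\xi^n$, so either $z(s_i)>z(v)$ for every neighbour $s_i$ of $v$, or $z(s_i)<z(v)$ for every neighbour. By replacing $n$ with $-n$ if necessary (which preserves non-vanishing and only flips the roles), we may assume $v$ is a strict local minimum: $n^t s_i > n^t v$ for every $s_i$ adjacent to $v$ in $G$.

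First I would identify the face that will witness the ECB failure. Since $v$ is an interior vertex, it is surrounded by the faces of its $1$-ring, and the neighbours $s_{i_1},\dots,s_{i_k}$ of $v$ all lie in the open half-plane $\{x : n^t x > n^t v\}$; in particular, they lie strictly on one side of the line through $v$ orthogonal to $n$. I would then take any face $f$ of the $1$-ring of $v$, with vertices $v, s_{i_a}, s_{i_{a+1}}$ (after the fan-triangulation of Section~\ref{sec:dual}, faces are triangles, and $f$'s remaining two vertices are consecutive neighbours of $v$). By Proposition~\ref{prop:ECB}, $f$ has an empty circumscribing convex ball $B(c;\rho)$ with $v, s_{i_a}, s_{i_{a+1}}$ on its boundary and no site in its interior; here ``ball'' is of the \emph{first} kind, i.e.\ $B(c;\rho)=\{x : \D{x}{c}\le\rho\}$, which is convex, and ``empty'' means $\D{s}{c}>\rho$ for every site $s\notin\{v,s_{i_a},s_{i_{a+1}}\}$ and in fact $\D{v}{c}=\D{s_{i_a}}{c}=\D{s_{i_{a+1}}}{c}=\rho$.

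The core step is a ``rolling the ball'' argument: I claim that one can continuously move the center $c$ (keeping $v$ on the boundary of the corresponding convex ball, i.e.\ keeping $\D{v}{c}$ equal to the radius) in such a way that the ball grows on the side away from $v$ and eventually strictly engulfs one of the neighbours $s_{i_a}$ or $s_{i_{a+1}}$, contradicting emptiness. Concretely, since $\D{\cdot}{c}$ is strictly convex with minimum $0$ at $c$, the ball $B(v;\D{v}{c'})$ centered at a point $c'$ along the ray from $v$ through $c$, pushed outward, contains $B(v;\D{v}{c})$ on the far side of the tangent-type hyperplane at $v$ and in particular swallows a neighbourhood of the boundary arc near $s_{i_a}, s_{i_{a+1}}$ — using that these sites lie strictly in the half-plane $n^t x > n^t v$ on the ``far'' side. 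Making this precise is the main obstacle: one must control the geometry of convex (but possibly quite anisotropic) balls of the first kind, and show that moving the center appropriately enlarges the ball in the required region while still keeping $v$ on its boundary. I would isolate a clean sub-claim of the form: \emph{if $v, p, q$ lie on the boundary of a convex ball $B(c;\rho)$ of the first kind, and $p,q$ both lie strictly on the opposite side of a supporting line of $B(c;\rho)$ at $v$ from $c$, then there is $c'$ with $\D{v}{c'}$ equal to the new radius and $\D{p}{c'}<\D{v}{c'}$}, which, combined with the local-minimum property giving exactly such a supporting line at $v$ (orthogonal to $n$), finishes the contradiction. (Alternatively, one can phrase the whole argument in the dual/gradient picture where balls of the first kind become halfspaces, making the ``rolling'' manifestly a translation; I would fall back to this if the direct convexity argument gets unwieldy.) Either way the conclusion is that $f$ cannot have an empty circum-ball, proving the lemma.
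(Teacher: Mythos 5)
There is a genuine gap here, and it comes from targeting the wrong face. You try to derive the ECB violation from a face $f$ \emph{incident} to $v$ in its $1$-ring, but the local-extremum property of $v$ places no obstruction on such faces: three points on the boundary of a strictly convex ball, two of which lie strictly above the third with respect to some direction $n$, is a perfectly generic configuration (already in the Euclidean case, a triangle whose lowest vertex is a strict minimum of $n^t(\cdot)$ can certainly have an empty circumcircle). Your ``rolling the ball'' sub-claim makes this concrete: its hypothesis asks for $p,q$ on the boundary of $B(c;\rho)$ to lie strictly on the opposite side of a supporting line of $B(c;\rho)$ at $v$ from $c$ --- but points of $\partial B(c;\rho)$ belong to the ball, and by definition a supporting line at $v$ has the entire ball (including $c$, $p$, $q$) on one closed side, so the hypothesis is unsatisfiable. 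The step where you substitute the line $\{x: n^t x = n^t v\}$ for a supporting line of the circum-ball at $v$ is the error: that line is orthogonal to $n$, not tangent to the ball, and the ball may well bulge toward the neighbours. Since Proposition~\ref{prop:ECB} independently guarantees an empty circum-ball for \emph{every} dual face, including those in the star of $v$, any argument that ``breaks'' an incident face's ECB from the index condition alone would contradict a result the paper has already established.

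The paper's proof uses the index-$1$ condition in an essentially topological way that your plan is missing. Since all neighbours of $v$ lie strictly on one side of the line $l=\{x : n^t x = n^t v\}$, the faces incident to $v$ also lie on that side and therefore fail to cover any neighbourhood of $v$. A ray from $v$ in a direction $d$ with $n^t d<0$ misses the star of $v$ entirely; a parity count of face coverage along that ray (starting at $0$ outside $\CHS$, jumping to $1$ at the single crossing of the convex boundary, and changing by an even amount at each interior edge since there are no degenerate faces) shows the face count at $v$ is at least one even after deleting the star of $v$. Hence some face $f$ \emph{not incident} to $v$ contains $v$; writing $v$ as a strict convex combination of the vertices of $f$, which lie on the boundary of any strictly convex circumscribing ball of $f$, places the site $v$ in the interior of every such ball, so $f$ violates ECB. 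If you want to salvage your plan, this covering/parity step is the missing ingredient; the circum-balls of the $1$-ring faces themselves are not where the contradiction lives.
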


The above provides the necessary tools to prove the following key lemma. 
% that $G$ does not have any edge fold-overs. 

\begin{lemma}\label{lem:ef}
$G$ %of an orphan-free anisotropic Voronoi diagram in
%$\mathbb{R}^2$ 
has no edge fold-overs. 
%$G$ has no edge fold-overs.
\end{lemma}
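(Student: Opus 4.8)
## Proof proposal for Lemma~\ref{lem:ef}

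The plan is to argue by contradiction using the discrete Poincar\'e--Hopf machinery assembled above. Suppose $G$ has an edge fold-over. By Lemma~\ref{lem:index-1}, there is a direction $n\in\mathbb{S}^1$ and an associated non-vanishing one-form $\xi^n$ for which some interior vertex $v\in\Sites\setminus W$ has $\mathbf{ind}_{_{\xi^n}}(v)<0$. The strategy is to use this ``deficit'' together with the index theorem (Theorem~\ref{lem:ph}) and the boundary control (Corollary~\ref{cor:boundary}) to force some \emph{other} interior vertex to have index exactly $1$, and then invoke Lemma~\ref{lem:index1} to contradict the empty circum-ball property (Proposition~\ref{prop:ECB}).

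First I would pin down the arithmetic. Since $\xi^n$ comes from the linear height function $z(v)=n^t v$, indices of vertices are at most $1$ (noted after the index definition), and indices are half-integers (each $\mathbf{sc}$ is a non-negative even integer for a vertex link, since signs must return to their start, so $\mathbf{ind}=1-\mathbf{sc}/2\in\{1,0,-1,-2,\dots\}$). By Corollary~\ref{cor:boundary}, the outer face is simple and closed, so Theorem~\ref{lem:ph} applies with the stated total of $2$. I would next split the vertex sum into boundary and interior parts: $\sum_{w\in W}\mathbf{ind}_{_{\xi^n}}(w)+\sum_{v\in\Sites\setminus W}\mathbf{ind}_{_{\xi^n}}(v)+\sum_{f\in F}\mathbf{ind}_{_{\xi^n}}(f)=2$. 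The two ingredients I would combine are: (i) Lemma~\ref{lem:non-negative}, that the interior-vertex index sum is $\ge 0$; and (ii) a bound showing the boundary-vertex sum plus face sum is at most $2$, with equality forcing every face index to be its maximum. Concretely, since $\CHS$ is a convex polygon and $z$ is linear, the height function restricted to the boundary $W$ has exactly one ``up-run'' and one ``down-run'', so $\sum_{w\in W}\mathbf{ind}_{_{\xi^n}}(w)=2$ (each vertex link on the boundary contributes, and the standard count gives total $2$ for a convex boundary with a linear functional, in direct analogy with the smooth case on a disk). Faces have $\mathbf{ind}\le 1$ as well, and a convex face with a linear height function has exactly one sign change pattern with $\mathbf{sc}\ge 2$, hence $\mathbf{ind}_{_{\xi^n}}(f)\le 0$ for every face. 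Feeding these into the index identity: $2=\underbrace{\sum_W}_{=2}+\underbrace{\sum_{\Sites\setminus W}}_{\ge 0}+\underbrace{\sum_F}_{\le 0}$, which forces $\sum_{\Sites\setminus W}\mathbf{ind}+\sum_F\mathbf{ind}=0$ with $\sum_{\Sites\setminus W}\mathbf{ind}\ge 0$ and $\sum_F\mathbf{ind}\le 0$. But we assumed one interior vertex has index $\le -1$; since the interior-vertex sum is $\ge 0$, some other interior vertex must have strictly positive index, hence index exactly $1$ (the maximum, as indices are $\le 1$ and half-integer). Then Lemma~\ref{lem:index1} produces a face violating the ECB property, contradicting Proposition~\ref{prop:ECB}. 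Hence $G$ has no edge fold-overs.

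The main obstacle I anticipate is getting the boundary and face bookkeeping exactly right --- in particular, justifying cleanly that $\sum_{w\in W}\mathbf{ind}_{_{\xi^n}}(w)=2$ and that every face contributes a non-positive index, both of which rely on convexity of $\CHS$ (Corollary~\ref{cor:boundary}) and convexity of the faces (from the fan-triangulation discussion and the eventual embeddedness, though here one should use only what is available \emph{before} embeddedness is proved, i.e. that each triangulated piece $\tau_j$ is a genuine triangle with a well-defined link). One must be careful that Lemma~\ref{lem:non-negative} is exactly the statement that rescues the sign, and that the half-integrality argument legitimately upgrades ``positive'' to ``equals $1$''. A secondary subtlety is that $G$ may a priori have duplicated faces (noted at the start of Section~\ref{sec:dual}); I would note that duplicated faces only add non-positive terms to $\sum_F\mathbf{ind}$, so the inequality chain is unaffected. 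With those points handled, the contradiction closes and the lemma follows.
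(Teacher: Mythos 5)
Your proposal is correct and follows essentially the same route as the paper's proof: Lemma~\ref{lem:index-1} yields an interior vertex of negative index, Lemma~\ref{lem:non-negative} then forces some interior vertex to have index $1$, and Lemma~\ref{lem:index1} turns that into a violation of Proposition~\ref{prop:ECB}. The additional bookkeeping you do (boundary sum equals $2$, face indices nonpositive, the index identity) merely re-derives the content of Lemma~\ref{lem:non-negative}, which you could simply cite.
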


Finally, the absence of edge fold-overs, %in $G$ is sufficient to show that it is
together with a simple and closed boundary, 
is sufficient to show that $G$ is 
embedded.

\begin{lemma}\label{lem:interior}
If its (topological) boundary is simple and closed, 
	then the straight-line dual of an orphan-free Voronoi diagram, 
	with vertices at the sites, 
	is an embedded triangulation. 
\end{lemma}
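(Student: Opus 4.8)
The plan is to feed the facts already in hand into the standard degree argument for piecewise-linear maps. Write $\phi$ for the straight-line map that sends each abstract face of $G$ affinely onto the triangle spanned by the corresponding sites; proving the lemma amounts to showing $\phi$ is a homeomorphism of $|G|$ onto $\CHS$. Four ingredients are available: $G$ is connected and plane (Theorem~\ref{th:simpleplanar}); every face of $G$ has a circumscribing \emph{strictly} convex ball (Proposition~\ref{prop:ECB}), and since three collinear points cannot lie on the boundary of a strictly convex ball, every face is a non-degenerate triangle; the topological boundary of $G$ is $\partial\CHS$, a simple closed convex polygon (Corollary~\ref{cor:boundary}); and $G$ has no edge fold-overs (Lemma~\ref{lem:ef}).

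First I would record that $|G|$ --- the union of the closed bounded faces of the genuinely embedded curved graph $\tilde{G}$ --- is a closed topological disk with boundary circle $\mathcal{B}$: by Corollary~\ref{cor:boundary} the outer face of $\tilde{G}$ is bounded by the simple cycle $\mathcal{B}$, which encloses a Jordan region, and since $\tilde{G}$ is connected every bounded face lies in that region and every point of the region lies in a closed bounded face. Next comes the key step: \emph{since $G$ has no edge fold-overs, the straight triangles of all faces carry a common geometric orientation}, which (replacing $\phi$ by its mirror image if needed) I take to be counterclockwise. Indeed, fix on each face the orientation coming from the plane embedding $\tilde{G}$; for an interior edge $e=(s_i,s_j)$ shared by $f$ and $f'$, the two combinatorial boundary walks traverse $e$ in opposite directions, and an elementary check shows that the straight triangles of $f$ and $f'$ agree with their combinatorial orientations \emph{either both or neither} exactly when the two apices lie on opposite sides of the line through $e$, i.e.\ exactly when $e$ is not a fold-over edge. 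As the bounded-face adjacency graph is connected, all faces become oriented together.

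Now I would run the degree count. For $p\in\mathbb{R}^2$ off the image of every edge and vertex of $G$, let $d(p)$ be the number of faces whose open straight triangle contains $p$; because all faces are counterclockwise this equals the \emph{signed} count, so $d$ is locally constant and equals the winding number, about $p$, of the loop traced by $\phi$ on $\partial|G|$. By Corollary~\ref{cor:boundary} that loop is the convex polygon $\partial\CHS$, traced once, so the winding number is $1$ inside $\CHS$ and $0$ outside; hence $d\equiv 1$ on the interior of $\CHS$ (off the $1$-skeleton) and $d\equiv 0$ outside. Consequences: $\phi(|G|)=\CHS$ (which also gives the ``covers the convex hull'' clause of Theorem~\ref{th:main}); distinct open faces have disjoint images, so in particular no face occurs twice in $F$; and $\phi$ is a local homeomorphism on $\operatorname{int}|G|$ --- clear on the interior of a face, true at an interior edge point because its two faces lie on opposite sides of the edge, and true at an interior vertex $v$ because there the incident faces form a combinatorial fan that closes up, so their straight angles at $v$ sum to $2\pi m$ for an integer $m\ge 1$, and $m\ge 2$ would force $d\ge 2$ at points arbitrarily near $\phi(v)$, contradicting $d\equiv 1$. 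A proper local homeomorphism onto the simply connected set $\operatorname{int}\CHS$ is a homeomorphism, and combined with injectivity on $\partial|G|$ this shows $\phi$ is a homeomorphism of $|G|$ onto $\CHS$; so $G$ is an embedded triangulation covering $\CHS$, and since its faces are non-degenerate the faces of the un-triangulated dual are convex.

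I expect the orientation-consistency claim of the second paragraph to be where the argument lives: it is the unique place the no-fold-over conclusion (Lemma~\ref{lem:ef}) is consumed, and it is precisely what upgrades an a priori unsigned count of covering triangles to a winding number, after which the rest is bookkeeping. The minor obstacles are the two topological points that must be spelled out carefully --- that $|G|$ is a disk, and the closing-up/cone-angle computation at interior vertices --- since the ``covering map, hence homeomorphism'' conclusion rests on them.
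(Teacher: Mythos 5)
Your proof is correct and is essentially the paper's argument in degree-theoretic dress: the paper establishes that the face count is identically one by walking a generic line from infinity to $p$ (gaining $1$ at the unique crossing of the convex boundary, by Corollary~\ref{cor:boundary}, and $0$ at each interior edge crossing because the absence of fold-overs puts the two incident faces on opposite sides), which is exactly your winding-number computation, and both arguments then conclude that a single cover forbids overlapping faces and crossing edges. The additional machinery you deploy --- global orientation consistency, properness, covering-space theory, and the cone-angle check at interior vertices --- is sound but stands in for the paper's one-line observation that two straight edges crossing at a non-vertex would force a face count of at least two.
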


\section{Proof-of-concept implementation}\label{sec:implementation}

\begin{figure}[ht]
%\label{fig:img}
\centering
\subfloat[]{\includegraphics[height=2.6cm]{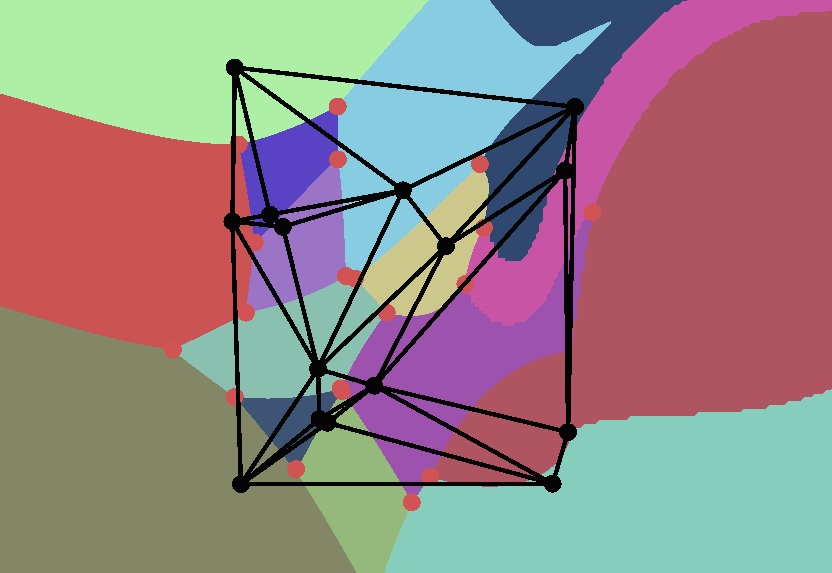}\label{fig:img_a}}
\subfloat[]{\includegraphics[height=2.6cm]{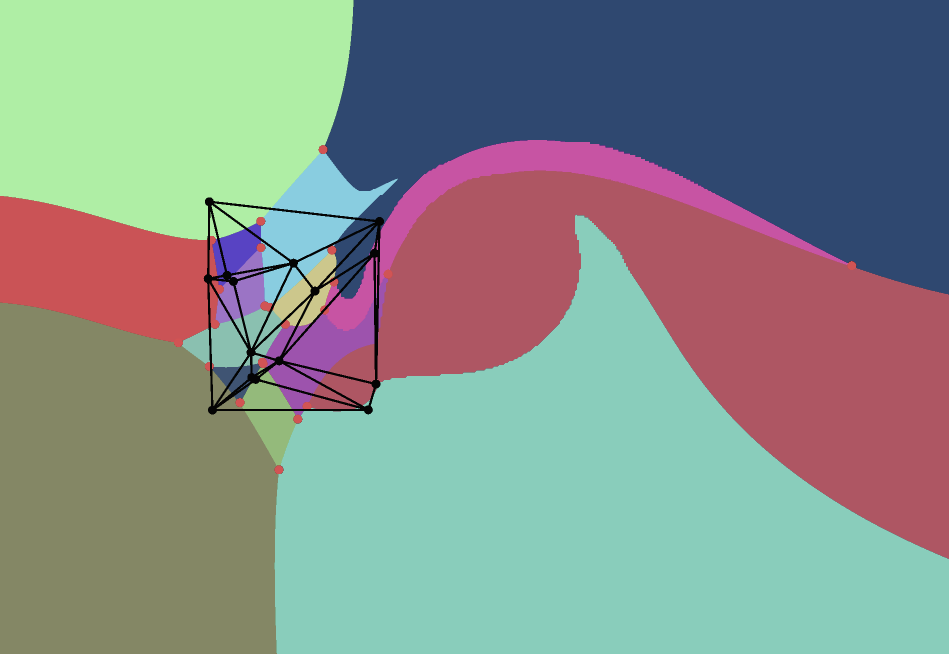}\label{fig:img_b}}
\subfloat[]{\includegraphics[height=2.6cm]{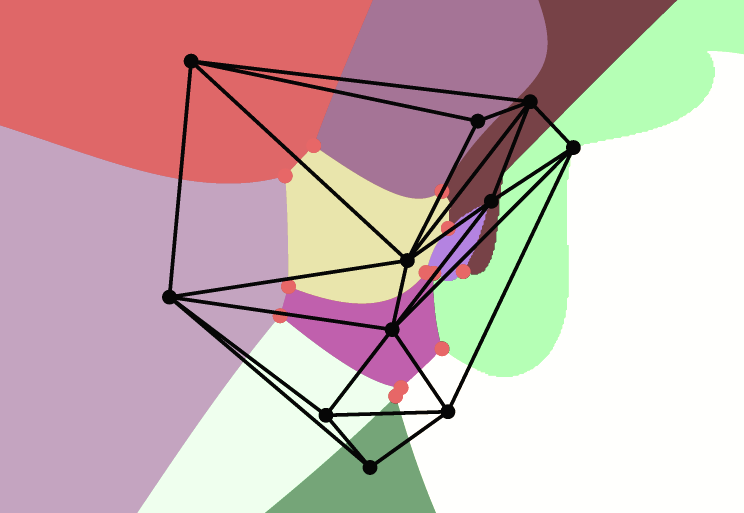}\label{fig:img_c}}
\subfloat[]{\includegraphics[height=2.6cm]{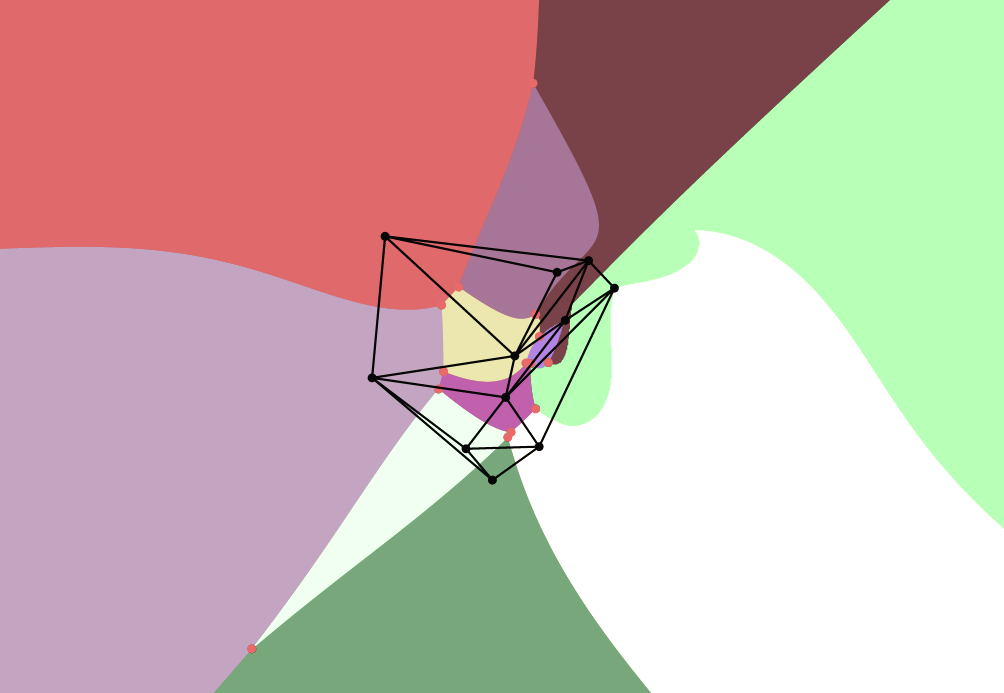}\label{fig:img_d}}\quad
\subfloat[]{\includegraphics[height=2.6cm]{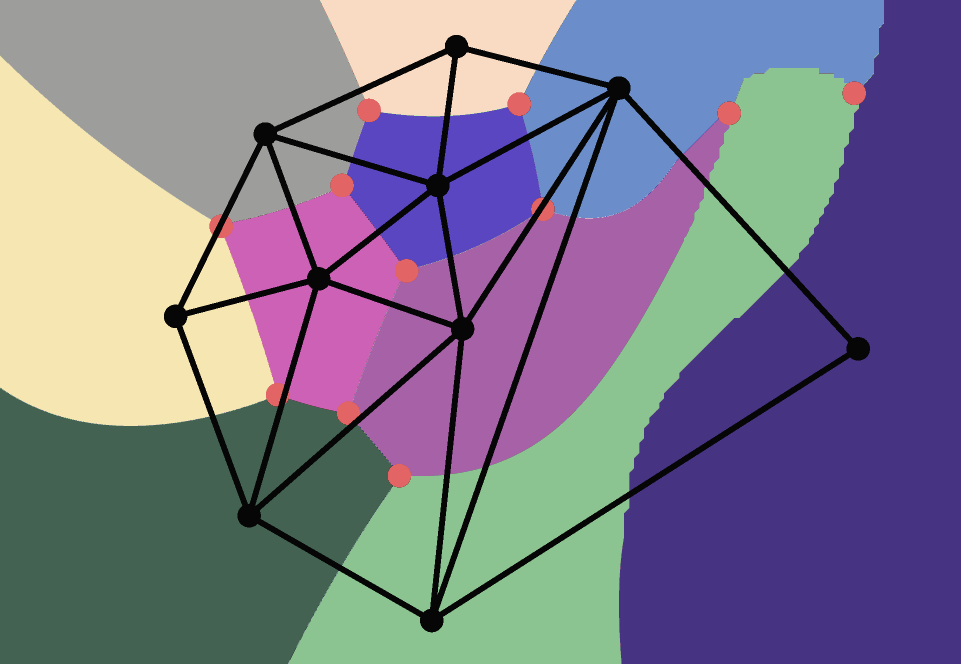}\label{fig:img_e}}
\subfloat[]{\includegraphics[height=2.6cm]{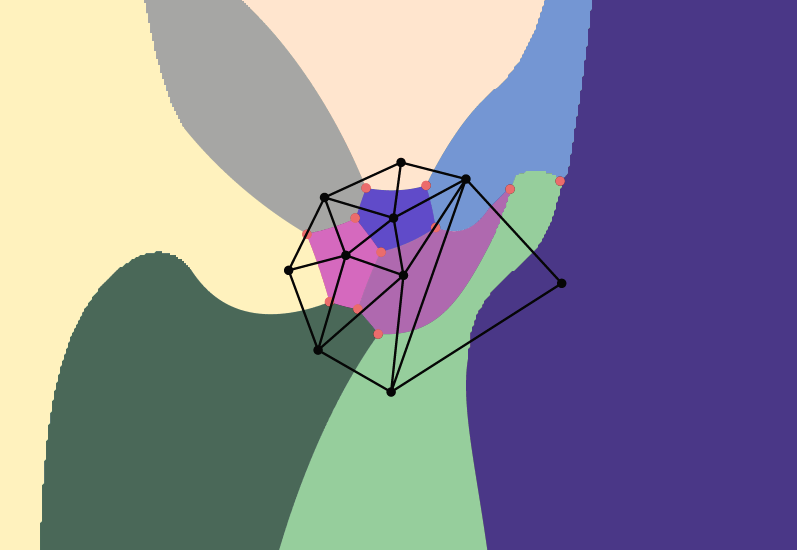}\label{fig:img_f}}%\quad\quad	
\subfloat[]{\includegraphics[height=2.6cm]{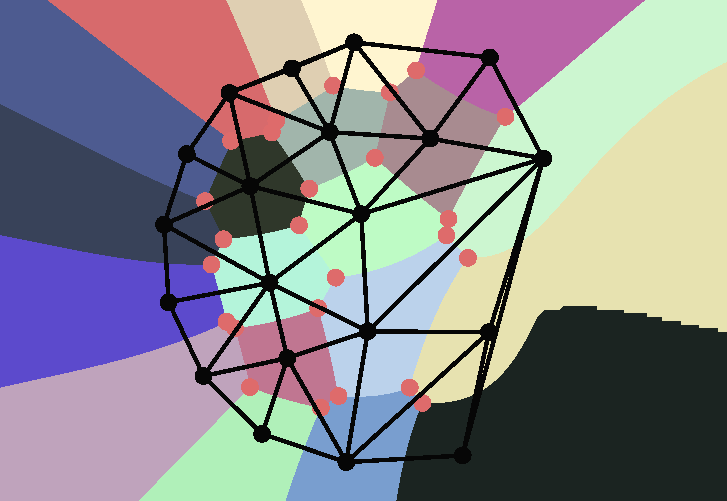}\label{fig:img_g}}
\subfloat[]{\includegraphics[height=2.6cm]{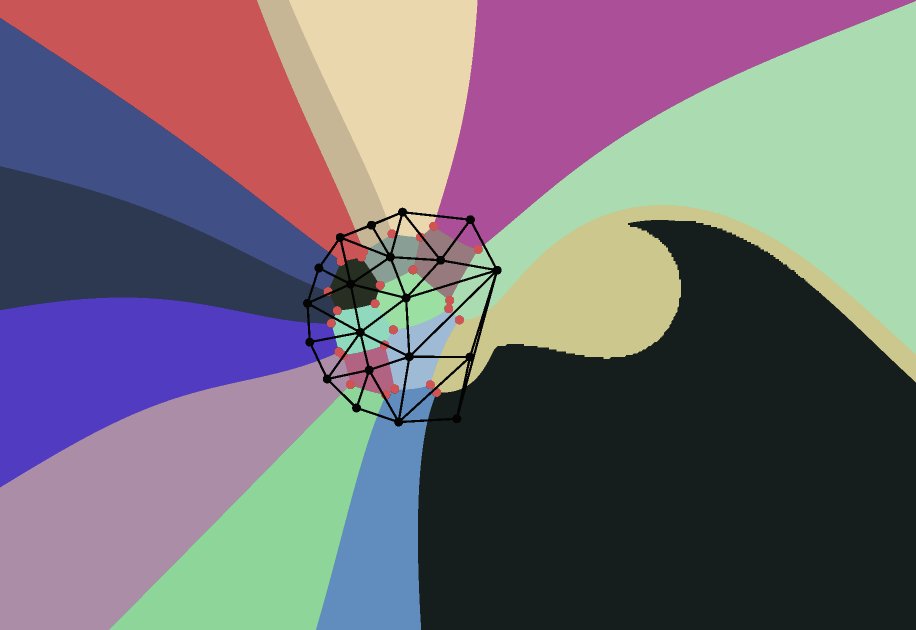}\label{fig:img_h}}
\caption{
Anisotropic Voronoi diagrams, and their duals generated by our
proof-of-concept implementation. 
Voronoi vertices are marked as red dots, while dual vertices (sites) and edges are drawn
in black.}
\label{fig:test}
\end{figure}

Though not aiming for an efficient implementation, 
we tested a simple proof-of-concept that constructs anisotropic Voronoi diagrams
(using a quadratic divergence $D_Q$ of the type discussed in section~\ref{sec:DQ}) 
%like the ones considered in this paper, 
and their duals
(figure~\ref{fig:test}). 
A closed-form metric, which %, as required by section~\ref{sec:setup}, 
has bounded ratio of eigenvalues 
(and therefore by lemma~\ref{lem:DQgamma} satisfies assumption~\ref{ass:BAA}), 
is discretized on a fine regular grid, and linearly interpolated inside grid elements, resulting in a
continuous metric. The sites are generated randomly (figures~\ref{fig:img_a}
and~\ref{fig:img_b}), or using a combination of random, and equispaced
points forming an (asymmetric) $\epsilon$-net~\cite{avd} (remaining figures). 

The primal diagram was obtained using front propagation from the sites
outwards, until fronts meet at Voronoi edges. 
The runtime is proportional to the grid size, since every grid-vertex is visited exactly six times (equal to their valence), 
	and so linear in the resolution of the sampled divergence $D_Q$. 

The implementation does not %, in general, 
guarantee the correctness of the diagram unless it \emph{is} orphan-free, and serves to verify the claims of the paper since well-behave-ness of the dual is predicated on that of the primal. 

The two main claims of the paper (that orphan-freedom is sufficient to ensure well-behavedeness of 
both the dual and the primal) are clearly illustrated in these examples. 
In all examples, %the primal is composed of connected regions, edges, and vertices (corollary~\ref{cor:VorI}), 
the dual covers the convex hull of the vertices
(corollary~\ref{cor:boundary}), is a
single cover, embedded with straight edges without edge crossings
(lemma~\ref{lem:interior}), 
and has no degenerate faces 
(since, by proposition~\ref{prop:ECB}, the vertices of a face lie on the boundary of a strictly convex ball). 
%lemma~\ref{lem:degen}). 
%In some cases, such as figure~\ref{fig:img_b}, a large portion of the
%Voronoi diagram had to be computed in order to find some relatively far-away
%Voronoi vertices, like the one in the far-right in the Fig.. Once a
%sufficiently large 
By focusing on the primal diagrams (second and fourth column), further claims in
the paper become apparent, namely that Voronoi regions 
(Voronoi elements of order one according to definition~\ref{def:VorI}) are simply connected (lemma~\ref{lem:regionSC}), 
and Voronoi edges (order two), and vertices (order three or higher) are connected (corollary~\ref{cor:VorI}). 
%Voronoi vertices are unique (corollary~\ref{uniqueVD}), 
%Voronoi edges are connected (corollary~\ref{cor:Pij}), 
%unbounded regions correspond to boundary dual vertices, and unbounded
%edges of the Voronoi diagram correspond to boundary dual edges.

\section{Conclusion and open problems}

We studied the properties of duals of orphan-free Voronoi diagrams with respect to divergences, for the
purposes of constructing triangulations on the plane. 
The main result (Theorems~\ref{th:main}) is that
the dual, with straight edges and vertices at the sites, is embedded
and covers the convex hull of the sites, mirroring similar results for
ordinary Voronoi diagrams and their duals.
Additionally, the primal is composed of connected elements (corollary~\ref{cor:VorI}). 

%A few, somewhat less important  properties are proven, including the fact
%that every primal region is simply connected, that elements of the primal are unique (corollary~\ref{uniqueVD}, 
%but perhaps most significantly
%that every face in the dual satisfies an \emph{empty circum-ellipse}
%property that has a direct parallel in the empty circum-circle property of
%ordinary diagrams, and is the basis for proving that it is embedded with
%straight edges. 

\begin{figure}[ht]
%\label{fig:img}
\centering
\includegraphics[width=5.6cm]{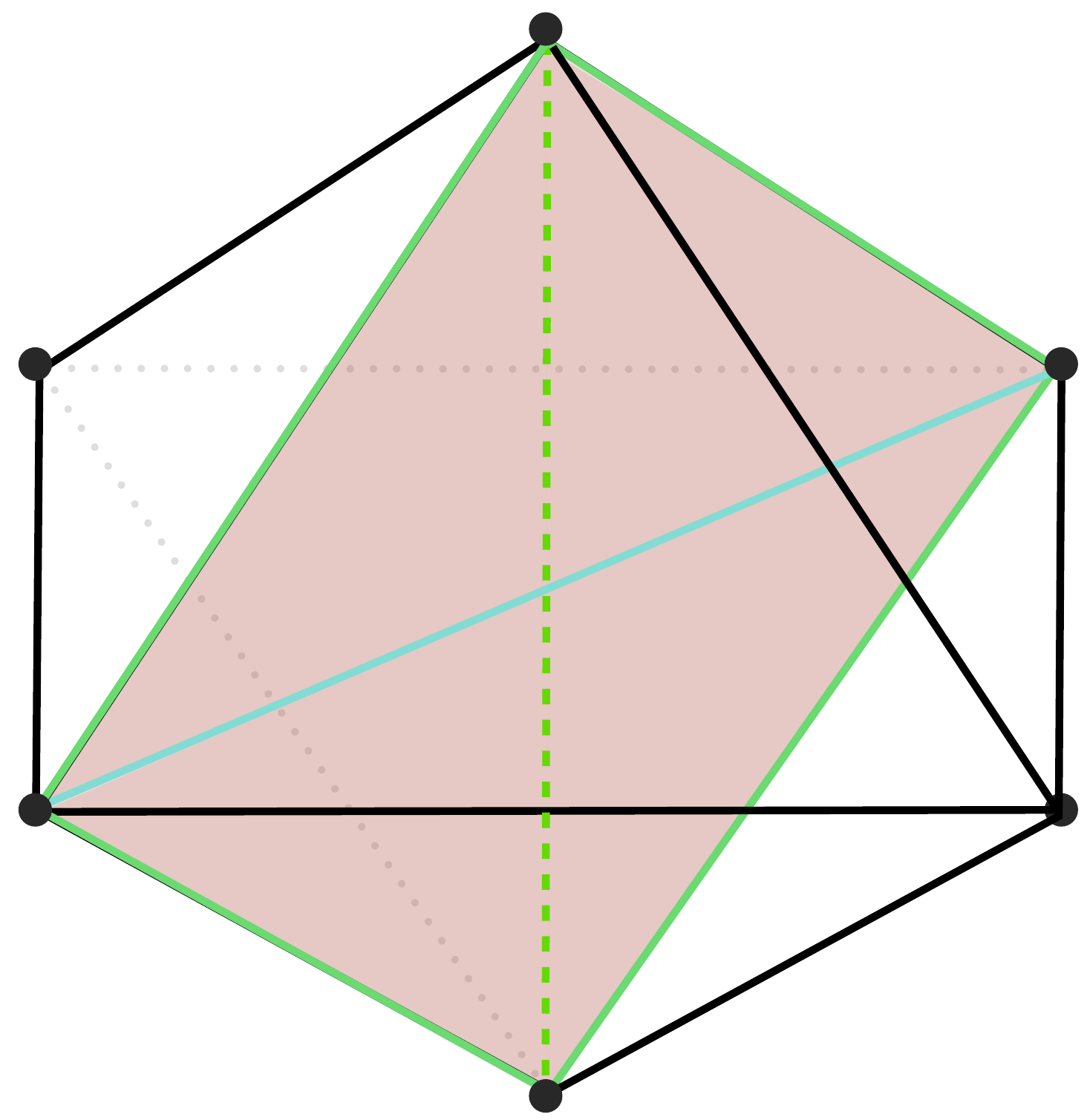}
\caption{
	The main proof of this paper does not work as is in higher dimensions. 
This arrangement of tetrahedra is not embedded: 
	the red tetrahedra has been ``inverted" 
	(the green dotted edge is \emph{behind} the solid blue edge), 
	``invading" the two front tetrahedra (closest to the viewer), 
	as well as the two back tetrahedra (farthest from the viewer). 
However, it does not violate the ECB condition (proposition~\ref{prop:ECB}). 
}
\label{fig:3dbreaks}
\end{figure}
Perhaps the most important outstanding question is whether these results
extend to higher dimensions. %, which is our subject of study in the immediate future. 
The proofs in Secs.~\ref{sec:boundary} and~\ref{sec:interior},
except for lemma~\ref{lem:regionSC}, can be trivially extended to n dimensions. 
Section~\ref{sec:boundary} has been written only for the two-dimensional case,
but a similar construction, and the same argument would work in higher
dimensions (lemma~\ref{lem:Sn} being a hint of this). 
It is the argument in section~\ref{sec:interior}, and described in figure~\ref{fig:outline}, that becomes problematic. 
While the ECB property is shown to be sufficient to prevent fold-overs in the
triangulation, it is not sufficient in 
higher dimensions. In particular, fixing the boundary to be simple and convex, 
there are simple arrangements of tetrahedra in $\mathbb{R}^3$ that contain 
face fold-overs but do not break the ECB property. 
In particular, the arrangement of tetrahedra of figure~\ref{fig:3dbreaks} is not embedded: 
	the red tetrahedra has been ``inverted" 
	(indicated by the green dotted edge being \emph{behind} the solid blue edge); 
	its interior overlaps that of the two front tetrahedra (closest to the viewer), 
	as well as the two back tetrahedra (those farthest from the viewer). 
However, this arrangement does not break the ECB condition (proposition~\ref{prop:ECB}, which holds in any dimension), 
	and therefore the same argument used in this work would not create a 
	contradiction in higher dimensions. 
%We plan to study this question next. %in the immediate future. %next the higher-dimensional case. 

\bibliographystyle{plain}
\bibliography{vddw8}

\section*{Appendix A: Bounded anisotropy condition}\label{app:gamma}

\noindent{\bf lemma~\ref{lem:DFgamma}} (Bounded anisotropy for Bregman divergences).
\emph{
If $F\in\mathcal{C}^2$ and there is $\gamma > 0$ such that the Hessian of $F$ has ratio of eigenvalues bounded by $\lambda_{\text{min}}/\lambda_{\text{max}}\ge \gamma$,
	then assumption~\ref{ass:BAA} holds. 
}\begin{proof}
Consider the situation described in figure~\ref{fig:gamma}, 
	in a coordinate system %centered at $m$ 
	with the y-axis along $L_{pq}$. 

Let $\rho\equiv D_F(m\parallel c) = F(m)-F(c)-\langle m-c,\nabla F(c)\rangle$. 
Because $F\in\mathcal{C}^2$ and the ball $B(c;\rho)$ it tangent to the y-axis at $m$, 
	it is 
\begin{eqnarray*}
	0 &=& \langle \hat{y}, \nabla_x \D{x}{c}\bigg|_{x=m}\rangle %\langle\nabla F(m) - \nabla F(c), \hat{y}\rangle = 0. 
	   = \langle \hat{y}, \nabla_x \left[ F(x) - F(c) - \langle x-c, \nabla F(c) \rangle \right]\bigg|_{x=m} \rangle \\
	   &=& \langle \hat{y}, \nabla F(m) - \nabla F(c) \rangle.
\end{eqnarray*}
Since $D_F(m\parallel c) - D_F(r\parallel c) = F(m)-F(r) - \langle m-r,\nabla F(c)\rangle$, 
we can obtain the value of $D_F(r\parallel c)$ by integration from $m$, first 
	along the y-axis from $(m_x,m_y)$ to $(m_x,r_y)$, 
	then along the x-axis from $(m_x,r_y)$ to $(r_x,r_y)$. 

Let $\delta_x\equiv r_x-m_x$ and $\delta_y\equiv r_y-m_y$, 
	and assume that $\delta_x> 0$ and $\delta_y\ge 0$ without loss of generality, 
	since $r\not\in L_{pq}$, $r$ is on the same side of $L_{pq}$ as $c$, 
	and we have freedom in choosing the sign of the axis. 

For assumption~\ref{ass:BAA} to hold it must be $D_F(r\parallel c) < D_F(m\parallel c)$. This holds whenever
\[
	\delta_x |\nabla F(m)-\nabla F(c)| > 
		\lambda_{\text{max}} \delta_x^2 / 2 + \lambda_{\text{max}}\delta_y^2 / 2 + \lambda_{\text{max}} \delta_x\delta_y
\]
or equivalently
\[
	\lambda_{\text{min}}\delta_x \|m - c\| > 
		\lambda_{\text{max}} \delta_x^2 / 2 + \lambda_{\text{max}}\delta_y^2 / 2 + \lambda_{\text{max}} \delta_x\delta_y
\]
which reduces to 
\[
	\|m-c\| > \gamma^{-1} \left( \delta_x^2/2 + \delta_y^2/2 + \delta_x\delta_y\right)/\delta_x
\]
and is always satisfied whenever $\|c\| \ge \max\{\|p\|,\|q\|\} + \gamma^{-1} \left( \delta_x^2/2 + \delta_y^2/2 + \delta_x\delta_y\right)/\delta_x$. 
Note that this bound is finite because $\gamma>0$ and $\delta_x>0$, $\delta_y\ge 0$. 
\end{proof}

%
%
%\begin{figure}[ht]
%\centering
%\subfloat[]{\includegraphics[height=3cm]{thetaa.png}\label{fig:thetaa}}\quad\quad\quad\quad
%\subfloat[]{\includegraphics[height=3cm]{thetab.png}\label{fig:thetab}}
%\label{fig:theta}
%\caption{Diagram for the proof of lemma~\ref{lem:DQgamma}.}
%\end{figure}

\noindent{\bf lemma~\ref{lem:DQgamma}} (Bounded anisotropy for quadratic divergences).
\emph{
If there is $\gamma > 0$ such that $Q$ has ratio of eigenvalues bounded by $\lambda_{\text{min}}/\lambda_{\text{max}}\ge \gamma$,
	then assumption~\ref{ass:BAA} holds. 
}\begin{proof}
The proof of this lemma can be reduced to that of lemma~\ref{lem:DFgamma}.
Given $c\in\mathbb{R}^2$, we let \[ F(\cdot)\equiv D^2_Q(\cdot\parallel c)=(\cdot-c)^t Q(c) (\cdot-c)/2, \]
whose Hessian is $H_F\equiv Q(c)$. Since $Q$ has eigenvalues bounded from below by $\gamma$, 
	the conditions of the proof of lemma~\ref{lem:DFgamma} hold. 
Note that this definition of $F(\cdot)$ is \emph{per choice of} $c$, 
	and therefore we are not defining a real Bregman divergence this way, but simply choosing 
	a different $F$ for each $c$ as to satisfy the conditions of the proof. 

\end{proof}

\begin{figure}[htbp]
   \centering
	\subfloat[]{\label{fig:norm.a}\includegraphics[width=2.3in]{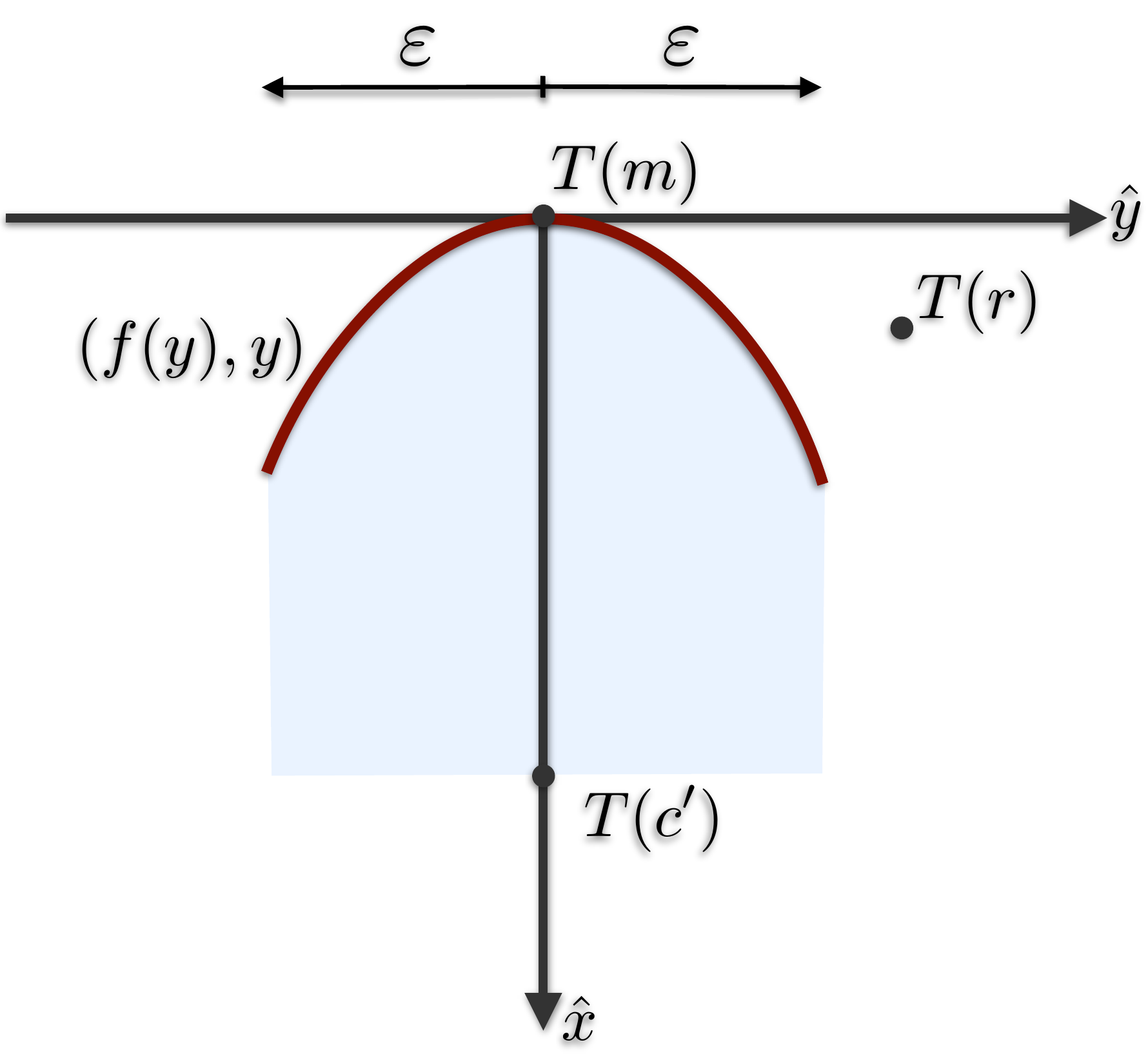}}\quad\quad
	\subfloat[]{\label{fig:norm.b}\includegraphics[width=2.5in]{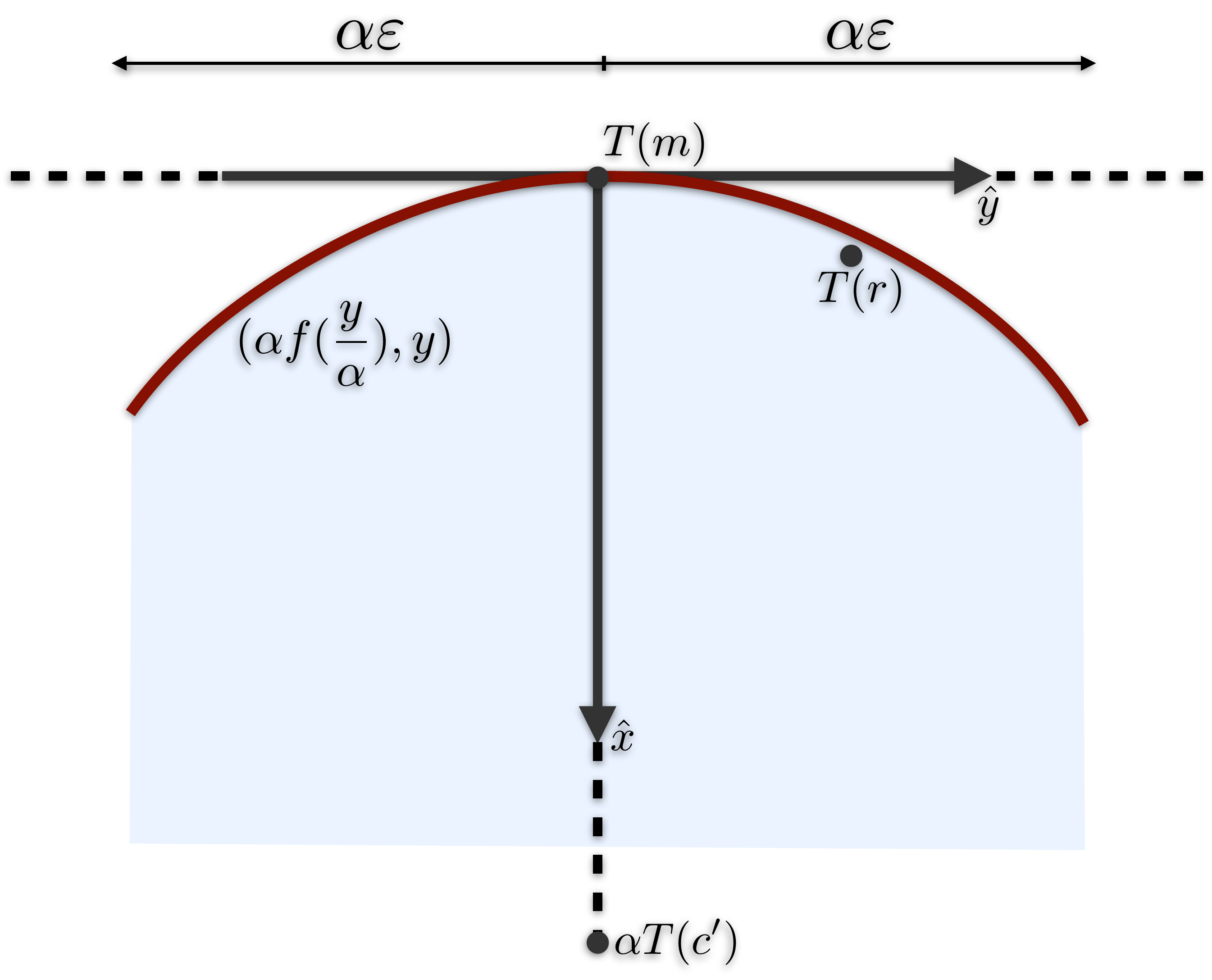}}
   \caption{Diagrams used in the proof of lemma~\ref{lem:Lpgamma}. 
   	By making $\alpha$ large enough, we can ensure that $T(r)$ falls in the blue shaded region, and 
		therefore $\|r-c\|_K < \|m-c\|_K$, where $T(c)\equiv \alpha T(c')$. 
   }
\end{figure}

\noindent{\bf Lemma~\ref{lem:Lpgamma}} (Bounded anisotropy for normed spaces)\emph{
	Distances derived from strictly convex $\mathcal{C}^1$ norms satisfy assumption~\ref{ass:BAA}.
%	The $L_p$ distances, with $1<p<\infty$, satisfy assumption~\ref{ass:BAA}. 
}
\begin{proof}
Let $\|\cdot\|_K$ be a strictly convex $\mathcal{C}^1$ norm, whose unit ball is the symmetric convex body $K$. 
Let $p,q\in\mathbb{R}^2$ with supporting line $L_{pq}$, and $r\notin L_{pq}$ be given. 
For any $c\notin L_{pq}$ with closest point $m\in\overline{pq}$ in $L_{pq}$, define $\pi_{m,c} \equiv (c-m) / \|c-m\|_K$. 
Defining $m$ to be the origin, let $T_{\pi_{m,c}}$ be a linear transformation that maps the $L_{pq}$ direction into the ${y}$-axis, 
	and $\pi_{m,c}$ into the $x$ axis. 
The fact that $c\notin L_{pq}$ implies that $T_{\pi_{m,c}}$ is non-singular. 
Choose the sign of the $y$-axis so that $\lambda_{max}(T_{\pi_{m,c}}) \ge \lambda_{min}(T_{\pi_{m,c}}) > 0$ 
	are the maximum and minimum eigenvalues of $T$, 	respectively. 

Consider the following statements:
\begin{enumerate}[i]
\item For all pairs $(m,c)$, there is a sufficiently large $\mu_{\pi_{m,c}} > 0$ such that whenever 
	$\|m-c\| > \mu_{\pi_{m,c}}$ then $\|r-c\|_K < \|m-c\|_K$. 
\item For all pairs $(m,c)$, there is a sufficiently large $\mu^T_{\pi_{m,c}} > 0$ such that whenever
	$\|T(m) - T(c)\| > \mu^T_{\pi_{m,c}}$ then $\|r-c\|_K < \|m-c\|_K$. 
\end{enumerate}
	
\vspace*{0.05in}\noindent{\bf [Reducing assumption~\ref{ass:BAA} to statement (i)]}. 	
%Assume that, for all pairs $(m,c)$, there is a sufficiently large $\mu_{\pi_{m,c}} > 0$ 
%	such that whenever $\|m-c\| > \mu_{\pi_{m,c}}$ then $\|r-c\|_K < \|m-c\|_K$. 
Given (i), and since both $\overline{pq}$ and $\partial K$ are compact, we can define:
	\[\mu \equiv \max\{ \|p\|, \|q\| \} + \displaystyle{\max_{\substack{ \|\pi_{m,c}\|_{_K} = 1 \\ m\in \overline{pq} }} \mu_{\pi_{m,c}}, }\]
from which it follows that whenever $\|c\| > \mu$, it holds
	\[ \|m - c\| \ge \|c\| - \|m\| \ge \|c\| - \max\{ \|p\|, \|q\| \} >  
			\displaystyle{\max_{\substack{ \|\pi_{\tilde{m},\tilde{c}}\|_K = 
			1 \\ \tilde{m}\in \overline{pq} }} \mu_{\pi_{\tilde{m},\tilde{c}}}} \ge \mu_{\pi_{m,c}}, \]
	and therefore $\|r-c\|_K < \|m-c\|_K$, thereby satisfying assumption~\ref{ass:BAA}. 
%We now show that, for all $c\notin L_{pq}$ whose closest point to $L_{pq}$ is $m\in\overline{pq}$, 
%	there is a sufficiently large $\mu_{\pi_{m,c}} > 0$ 
%	such that whenever $\|m-c\| > \mu_{\pi_{m,c}}$ then $\|r-c\|_K < \|m-c\|_K$. 

\vspace*{0.05in}\noindent{\bf [Reducing statement (i) to statement (ii)]}. 	
Assume (ii) is true and let $\mu_{\pi_{m,c}} \equiv \lambda_{max}(T^{-1}) \mu^T_{\pi_{m,c}}$. 
	Whenever $\|m-c\| > \mu_{\pi_{m,c}}$, it holds:
\begin{eqnarray*}
	\|T(m) - T(c)\| &=& \|T(m-c)\| \ge \lambda_{min}(T) \|m-c\| > \lambda_{min}(T)\mu_{\pi_{m,c}} \\
			&=& \lambda_{min}(T) \lambda_{max}(T^{-1}) \mu^T_{\pi_{m,c}} = \mu^T_{\pi_{m,c}},
\end{eqnarray*}
	and therefore by (ii) it is $\|r-c\|_K < \|m-c\|_K$.

\noindent{\bf [Proof of statement (ii)]}. 
Consider the situation depicted in figure~\ref{fig:norm.a}, which shows a portion of the plane transformed by $T$. 
Given $c'=m+\pi_{m,c}$, consider the set of points at distance $\|m-c'\|_K$ from $c'$ (red line). 
First note that, because we have temporarily chosen $m$ as the origin, then $m=T(m)$, and $T(m)+\alpha [T(c')-T(m)] = \alpha T(c')$. 
Because $\|\cdot\|_K$, there is an open interval $(-\varepsilon,\varepsilon)$ and 
	a function $f\in\mathcal{C}^1$ such that $(f(y),y)$, with $y\in(-\varepsilon,\varepsilon)$ 
	are the coordinates of the points (in $T$-space) at distance $\|m-c'\|_K$ from $c'$. 

Because $m$ is the point closest to $c'$ in $L_{pq}$, then, 
	in $T$-space, $(f(y),y)$ is tangent to the $y$-axis at $y=0$, 
	and therefore $f(0)=f'(0)=0$, from which it follows that
	\[ \displaystyle{\lim_{y\rightarrow 0} (f(y)-f(0)) / (y-0)} =   \displaystyle{\lim_{y\rightarrow 0} f(y)/y} = 0. \]

By a simple calculation, it is simpe to show that moving $T(c')$ further down along the $x$ axis to $\alpha T(c')$
	(figure~\ref{fig:norm.b}), 
	scales the red curve of figure~\ref{fig:norm.a} by a factor $\alpha$, so that it becomes 
	$(\alpha f(y\alpha), y)$ with $y\in(-\alpha\varepsilon, \alpha\varepsilon)$. 

Given $T(r)=(r^T_x,r^T_y)$ in coordinates, with $r^T_x > 0$ and $r^T_y \ge 0$, without loss of generality, 
	then, from the figure and the expression for the curve $(\alpha f(y\alpha), y)$, it is clear that it is 
	possible to choose $\alpha$ large enough so that $T(r)$ is below the curve $(\alpha f(y\alpha), y)$, 
	and therefore $r$ is closer (with respect to $\|\cdot\|_K$) to $\alpha T(c')$ then $T(m)$. %, as claimed in the statement (ii). 
By setting $\mu^T_{\pi_{m,c}} \equiv \alpha \|T(c')-T(m)\|$, statement (ii) follows. 

In particular, we simply choose $\alpha$ large enough so that
\begin{itemize}
	\item $[\alpha T(c')]_x > r^T_x$; 
	\item $\alpha T(c')$ is far enough from $T(m)$ so the line $\overline{[\alpha T(c')] T(r)}$ crosses the $y$-axis between 
		$(-\alpha\varepsilon, \alpha\varepsilon)$, which is clearly possible for sufficiently large $\alpha$; 
	\item if $r^T_x > 0$, then it is a simple calculation to show that 
		we can ensure that $T(r)$ is ``below" the curve $(\alpha f(y\alpha), y)$ as follows:
		1) choose a small enough $\delta$ such that $f(\delta)/\delta < r^T_x/r^T_y$, 
		which is always possible because $\lim_{y\rightarrow 0} f(y)/y = 0$, 
		and 2) enforcing $\alpha > r^T_y / \delta$. 
\end{itemize}

\end{proof}

\section*{Appendix B: dual triangulation (boundary)}\label{app:boundary}

Let $s_i,s_j\in\Sites$ be two sites, we denote by $H^{+}_{ij},H^{-}_{ij}$ 
	the two open half-spaces on either side of their supporting line $L_{ij}$. 
The set $\{H^{+}_{ij},H^{-}_{ij},L_{ij}\}$ is therefore a disjoint partition of $\mathbb{R}^2$.
Whenever the two sites we consider are on the boundary of $\CHS$, 
	they are denoted by $w_i,w_j\in W\subseteq\Sites$, 
	and we always choose $H^{+}_{ij}$ to be the ``empty" half space of the two 
	(such that $H^{+}_{ij} \cap \Sites =\phi$).

\begin{lemma}\label{lem:halfspace}
    Given a Voronoi edge $\Vor_{ij}$ 
    	corresponding to neighboring sites $s_i,s_j\in\Sites$, % in the Voronoi
%diagram, % with neighboring regions, 
if $\Vor_{ij}\cap  H^{+}_{ij}$ ($\Vor_{ij}\cap  H^{-}_{ij}$) is unbounded, then it is $ H^{+}_{ij}\cap \Sites=\phi$
 ($ H^{-}_{ij}\cap \Sites=\phi$), 
where $H^{+}_{ij},H^{-}_{ij}$ are open half spaces on either side of the
supporting line of $s_i,s_j$. 
%if the set of points closest to $s_i,s_j$ in the primal diagram
%$P$ is unbounded, then there
%is an open half space, on one side of the supporting line of $s_i,s_j$, that contains no sites.
%, if their associated DW diagram edge $\Vor_{ij}$
%is unbounded in the open half-plane $H_{ij}$, then $H_{ij}$ contains no
% sites. 
\end{lemma}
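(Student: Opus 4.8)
The plan is to argue by contradiction. Suppose $\Vor_{ij}\cap H^{+}_{ij}$ is unbounded but some site $s_k$ lies in $H^{+}_{ij}$; since $H^{+}_{ij}$ is an \emph{open} half-space, $s_k\notin L_{ij}$, so $s_i,s_j,s_k$ are not collinear. Two elementary observations drive everything. First, for every $c\in\Vor_{ij}$ the convex first-kind ball $B(c;\D{s_i}{c})$ has $s_i$ and $s_j$ on its boundary (they are equidistant to $c$ and $c\in\Vor_{ij}$), hence it contains the segment $\overline{s_i s_j}$, and no site lies in its interior (else $c$ would be strictly closer to that site than to $s_i,s_j$); in particular $\D{s_k}{c}\ge\D{s_i}{c}$. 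Second, if the foot $m$ of $c$ on $L_{ij}$ lies in $\overline{s_i s_j}$, then writing $m=\lambda s_i+(1-\lambda)s_j$ and using $\D{s_i}{c}=\D{s_j}{c}$ together with convexity of $\D{\cdot}{c}$ gives $\D{m}{c}\le\D{s_i}{c}$.

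First I would dispose of the ``easy'' points. Applying assumption~\ref{ass:BAA} with $p=s_i$, $q=s_j$, $r=s_k$ yields a radius $\mu$ such that every $c\in H^{+}_{ij}$ with $\|c\|>\mu$ whose foot $m$ on $L_{ij}$ lies in $\overline{s_i s_j}$ satisfies $\D{s_k}{c}<\D{m}{c}$; combined with the second observation this gives $\D{s_k}{c}<\D{s_i}{c}$, contradicting the first observation. Hence no point of $\Vor_{ij}\cap H^{+}_{ij}$ outside the ball of radius $\mu$ has its $L_{ij}$-foot inside $\overline{s_i s_j}$, so the unboundedness of $\Vor_{ij}\cap H^{+}_{ij}$ must be carried entirely by points whose feet on $L_{ij}$ lie strictly beyond $s_i$ or strictly beyond $s_j$.

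It remains to rule these out, which is the heart of the proof. I would take a sequence $c_n\in\Vor_{ij}\cap H^{+}_{ij}$ with $\|c_n\|\to\infty$ and, passing to a subsequence, with $c_n/\|c_n\|\to u\in\overline{H^{+}_{ij}}$, and split on $u$. If $u$ is not orthogonal to $L_{ij}$, the feet $m_n$ run to infinity along $L_{ij}$ on, say, the side beyond $s_i$; recalling from the proof of lemma~\ref{lem:val_ge2} that every point of $L_{ij}$ strictly beyond $s_i$ is strictly closer to $s_i$ than to $s_j$ (by strict convexity of $D$ in the first argument and $\D{r}{r}=0$), a further application of assumption~\ref{ass:BAA} — with $p,q$ placed on $L_{ij}$ beyond $s_i$ so that $m_n\in\overline{pq}$, and $r$ a fixed nearby off-line point witnessing that strict proximity — forces $c_n$ to be eventually strictly closer to $s_i$ than to $s_j$, contradicting $c_n\in\Vor_{ij}$. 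If instead $u$ is orthogonal to $L_{ij}$, the feet $m_n$ stay in a bounded interval of $L_{ij}$; enlarging $\overline{s_i s_j}$ to a fixed finite segment of $L_{ij}$ containing all the $m_n$ and re-running the second-paragraph estimate with this segment as $\overline{pq}$ reproduces the contradiction, the bound $\D{m_n}{c_n}\le\D{s_i}{c_n}$ persisting because $m_n$ converges into $\overline{s_i s_j}$ (or, if it sits just beyond $s_i$, because $c_n$ is then again strictly closer to $s_i$). In all cases we contradict $c\in\Vor_{ij}$, so $H^{+}_{ij}\cap\Sites=\phi$; the statement for $H^{-}_{ij}$ is identical by symmetry. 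I expect this third step — reconciling the direction in which $\Vor_{ij}$ escapes to infinity with the requirement in assumption~\ref{ass:BAA} that the foot of $c$ on the line lie in a \emph{prescribed bounded} segment — to be the main obstacle, since a single invocation of bounded anisotropy only controls points whose feet stay within a fixed segment.
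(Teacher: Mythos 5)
Your first two paragraphs reproduce the paper's argument correctly: the paper likewise takes one far point $p\in\Vor_{ij}\cap H^{+}_{ij}$, applies assumption~\ref{ass:BAA} with $r=s_k$, and chains $\D{s_k}{p}<\D{m}{p}\le\D{s_i}{p}=\D{s_j}{p}$ via convexity of $\D{\cdot}{p}$ along $\overline{s_i s_j}$, contradicting $p\in\Vor_{ij}$. The gap is in your third step, and it is a gap you did not need to open. The residual case you try to handle there is empty: for any $c$ equidistant to $s_i$ and $s_j$, the restriction of $\D{\cdot}{c}$ to $L_{ij}$ is a strictly convex one-variable function taking equal values at $s_i$ and $s_j$, so its minimizer --- the ``closest point $m$ in $L_{pq}$'' of assumption~\ref{ass:BAA}, which the paper consistently reads as the minimizer of $\D{\cdot}{c}$ over the line (see the proofs of lemmas~\ref{lem:DFgamma} and~\ref{lem:contrad}) --- lies strictly inside $\overline{s_i s_j}$. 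This is exactly the content and proof of lemma~\ref{lem:midpoint}. Hence \emph{every} point of $\Vor_{ij}\cap H^{+}_{ij}$ satisfies the foot condition, your second paragraph already bounds all such points by $\|c\|\le\mu$, and $\Vor_{ij}\cap H^{+}_{ij}$ is bounded, which is the desired contradiction. No limiting-direction analysis is required.

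The third paragraph cannot be retained as a fallback, because it contains genuine errors. In the non-orthogonal case the feet $m_n$ escape to infinity along $L_{ij}$, so no fixed segment $\overline{pq}$ contains them all; the threshold $\mu$ furnished by assumption~\ref{ass:BAA} depends on the triple $(p,q,r)$, so the assumption cannot be invoked uniformly along the sequence, and even a valid invocation would only compare $\D{r}{c_n}$ with $\D{m_n}{c_n}$ for your auxiliary off-line point $r$ --- it yields no comparison between $\D{s_i}{c_n}$ and $\D{s_j}{c_n}$, which is what you claim to extract. In the orthogonal case, $c_n/\|c_n\|\rightarrow u\perp L_{ij}$ does not force the feet to stay in a bounded interval (take $c_n$ with coordinates $(\sqrt{n},n)$ when $L_{ij}$ is the first coordinate axis). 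Replace the entire third step by the one-line strict-convexity observation above and the proof closes, matching the paper's.
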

\begin{proof}
%We show by contradiction that, for any site $w\in \Sites$, it cannot be $w\in
%H^{+}_{ij}$. 
Assume $w\in H^{+}_{ij}\cap \Sites$. 
%We make use of lemma~\ref{lem:tech}, with $s_i,s_j,L_{ij},\Vor_{ij}\cap H^{+}_{ij}$ taking the role of $v,w,l,S$, respectively. %, and conclude that, 
%for every $q\in H^{+}_{ij}$, 
Since $\Vor_{ij}\cap H^{+}_{ij}$ is unbounded, 
we choose $p\in \Vor_{ij}\cap H^{+}_{ij}$ of sufficiently large norm, 
	so that assumption~\ref{ass:BAA} %according to lemma~\ref{lem:tech}, 
implies that $\D{w}{p} < \D{(w_i+w_j)/2}{p}$. 
By the convexity of $\D{\cdot}{p}$, this means that $\D{w}{p} < \D{(s_i+s_j)/2}{p} <
\D{s_i}{p}=\D{s_j}{p}$. 
Since $p$ is closer to $w$ than to $s_i,s_j$, it is $p\notin \Vor_{ij}$, a
contradiction. 
%and therefore $q$ cannot be a site or else the closest
%sites to $p$ wouldn't be $s_i,s_j$. %$p\in \Vor_{ij}$
\end{proof}

\noindent{\bf Lemma~\ref{boundary_easy}}
\emph{	%If not all sites in $\Sites$ are colinear,  then 
 To every topological boundary edge of $G$ corresponds a segment in the boundary of $\CHS$.
}
%($B\subset\mathcal{B}$). 
%of a DW diagram $P$ are not colinear, then every boundary edge of
%its dual $G$ is a segment in the boundary of $\CHS$ (is of the form $\overline{w_i w_{i\oplus 1}}\in\mathcal{B}$). 
\begin{proof}
By the definition of $G$, 
to every boundary edge $(s_i,s_j)\in B$ corresponds a primal edge $\Vor_{ij}$
that is unbounded. 
%In turn, to this edge corresponds an edge in $\tilde{P}$: 
%an unbounded set $\Vor_{ij}$ of points closest to $s_i,s_j$.
%We show that $(s_i,s_j)\in\mathcal{B}$

Consider the two open half-planes $H^{+}_{ij}$ and $H^{-}_{ij}$ on either
side of the supporting line $L_{ij}$ of $s_i,s_j$. 
We split $\Vor_{ij}$ in three parts: $\Vor_{ij}\cap H^{+}_{ij}$, $\Vor_{ij}\cap L_{ij}$, and $\Vor_{ij}\cap H^{-}_{ij}$, 
	at least one which must be unbounded. 
	
Since, by lemma~\ref{lem:midpoint}, it is $\Vor_{ij}\cap L_{ij} \in \overline{s_i s_j}$ (and therefore bounded), 
	then it must be that either $\Vor_{ij}\cap H^{+}_{ij}$ or $\Vor_{ij}\cap H^{-}_{ij}$ are unbounded. 
By lemma~\ref{lem:halfspace}, they cannot both be, or else 
$H^{+}_{ij}\cap \Sites=\phi$ and $H^{-}_{ij}\cap \Sites=\phi$, and therefore all sites
would be in $L_{ij}$ (all colinear). 
Assume w.l.o.g.\  that $\Vor_{ij}\cap H^{+}_{ij}$ is unbounded. 

By lemma~\ref{lem:halfspace}, $\Vor_{ij}\cap H^{+}_{ij}$ unbounded implies 
$H^{+}_{ij}\cap \Sites=\phi$, and 
so $s_i,s_j$ must lie in the boundary of $\CHS$ 
($s_i,s_j\in W$ and $\overline{s_i s_j}\subseteq\partial\CHS$). 
%($s_i,s_j$ are vertices in the boundary of $\CHS$). 

It only remains to show that $s_i,s_j$ are consecutive in the sequence $(w_i :
i=1,\dots,m)$. 
We prove this by contradiction. 
If they were not, then since $\overline{s_i s_j}\subseteq\partial\CHS$, 
there must be some site $w\in \overline{s_i s_j}$, $w\neq s_i,s_j$. 
%Assume they are not consecutive. Since $\overline{s_i
%s_j}\subset\partial\CHS$, 
%and $s_i,s_j$ are not consequtive vertices in the boundary of
%$\CHS$, then, 
%by the definition of $\mathcal{B}$, 
%We can split $\partial \CHS$ in two pieces,
%$\overline{s_i s_j}$ and $\partial \CHS \setminus \overline{s_i s_j}$. 
%Since $s_i,s_j$ are not consecutive vertices of $\partial \CHS$
%there must be a site $w\in \overline{s_i s_j}$. 
However, %Lemma~\ref{lem:midpoint} shows that 
	this is not possible. 
To see this, simply pick some point $p\in \Vor_{ij}$, by definition closest to $s_i,s_j$;
 by the convexity of $\D{\cdot}{p}$, it must be $\D{w}{p} < \D{s_i}{p}=\D{s_j}{p}$, a contradiction. 

Since $s_i,s_j$ are consecutive vertices in $(w_i : i=1,\dots,m)$, then 
%$\partial\CHS$, then
$(s_i,s_j)\in\mathcal{B}$. 
\end{proof}

\begin{figure}
\centering
\subfloat[]{\includegraphics[height=3.0cm]{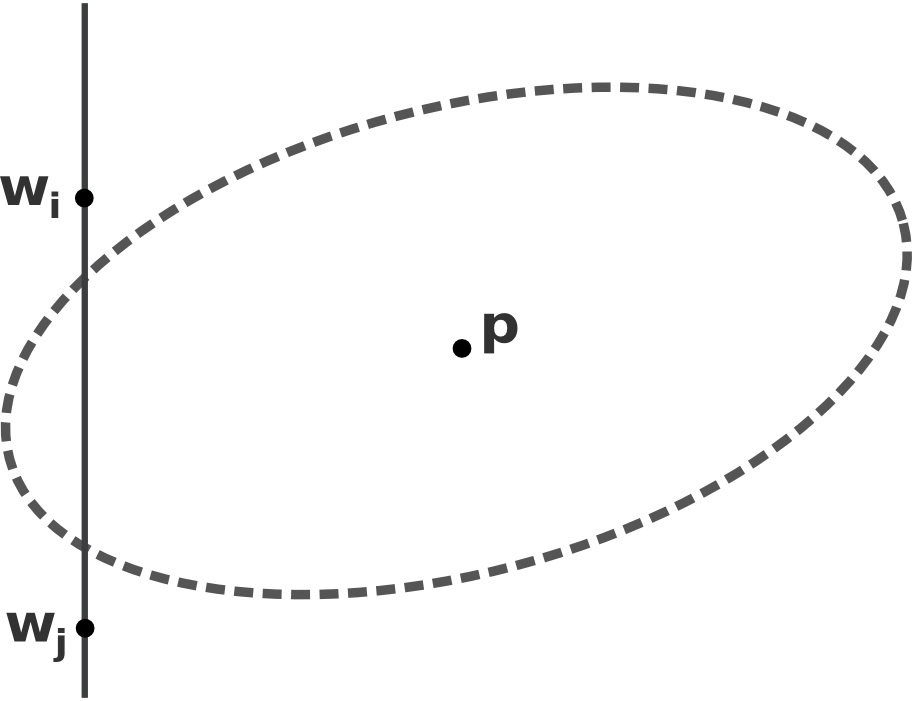}\label{fig:wijcap}}
\subfloat[]{\includegraphics[height=3.0cm]{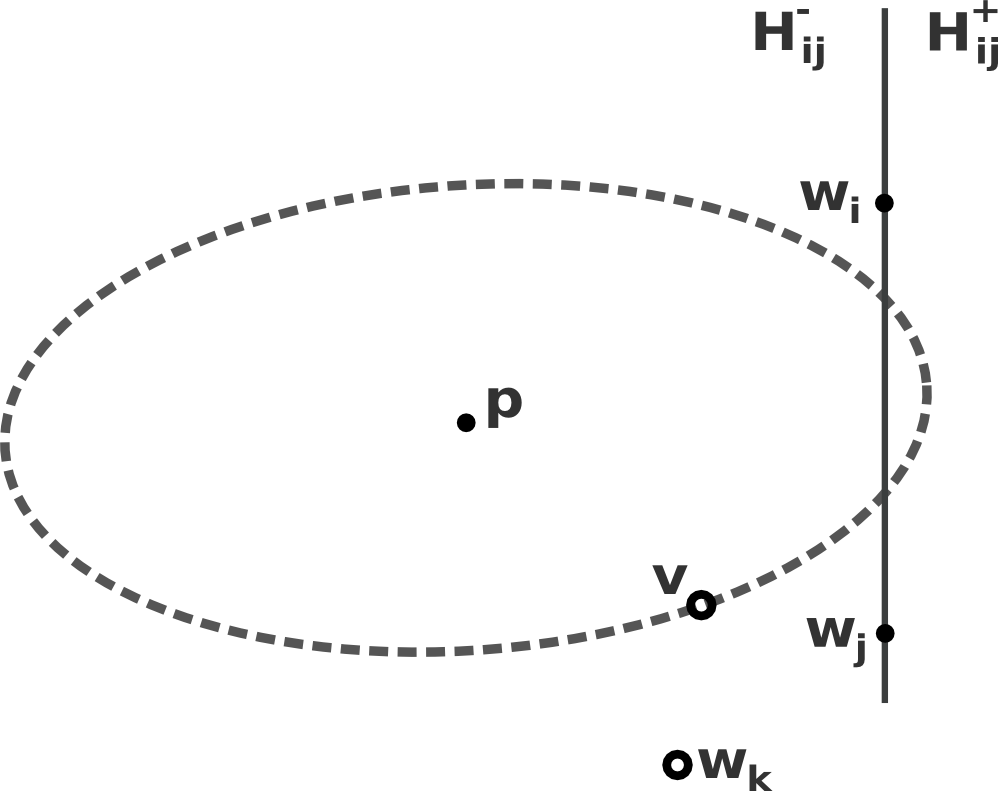}\label{fig:vH-}}
\quad\quad
\subfloat[]{\includegraphics[height=3.0cm]{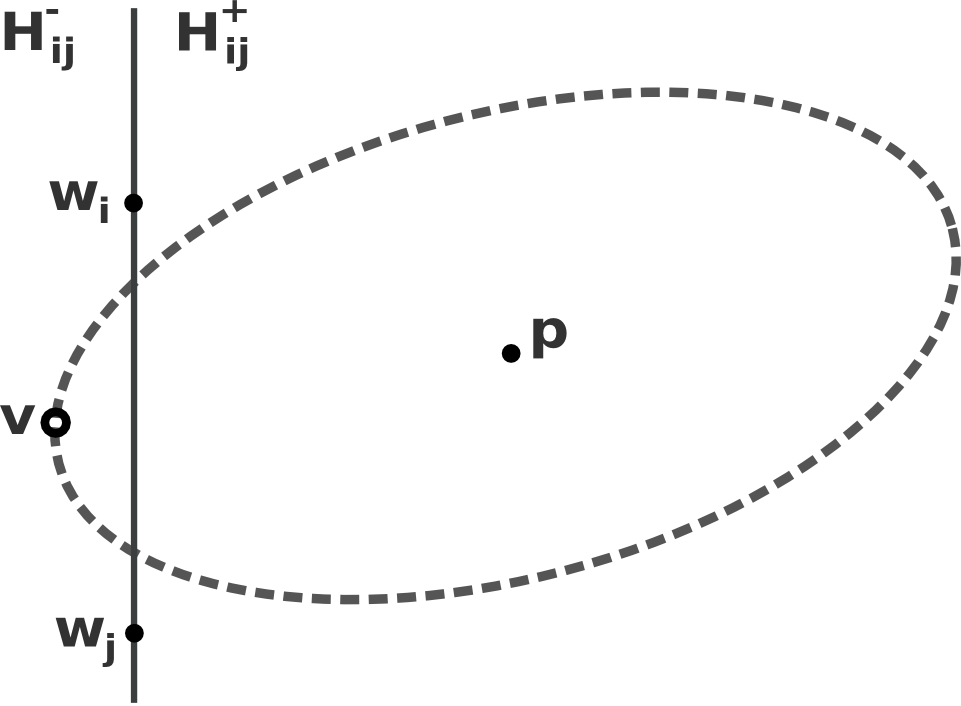}\label{fig:vH+}}
\label{fig:VW}
\caption{Diagram for the proof of lemma~\ref{lem:VW}.}
\end{figure}

\begin{lemma}\label{lem:VW}
	There is $\rho$ such that all $p\in\mathbb{R}^2$ with $\|p\| > \rho$  are closer to $W$ than to $\Sites\setminus W$. 
\end{lemma}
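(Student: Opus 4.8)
The plan is to argue by contradiction, converting a failure of the statement into a violation of the bounded anisotropy assumption (assumption~\ref{ass:BAA}) for points of large norm. If no such $\rho$ existed, there would be points $p$ of arbitrarily large norm that fail to be closer to $W$ than to $\Sites\setminus W$; for such a $p$ there is an interior site $s\in\Sites\setminus W$ with $\D{s}{p}\le\D{w}{p}$ for every hull vertex $w\in W$. I will show that, once $\|p\|$ is large, one can always exhibit a hull vertex $w$ with $\D{w}{p}<\D{s}{p}$, which is the desired contradiction. Since there are only finitely many interior sites, it suffices to handle each fixed $s$ separately and then take the maximum of the resulting radii.

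First I would record a purely combinatorial fact about $\Sites$ that does not involve $p$: because $s$ lies in the interior of $\CHS$, the continuous function $u\mapsto\max_{w\in W}\langle w-s,u\rangle$ is strictly positive on the compact set $\mathbb{S}^1$, hence bounded below by some $\kappa_s>0$; so for every unit direction $u$ there is a hull vertex $w$ with $\langle w-s,u\rangle\ge\kappa_s$. Now, given a far point $p\neq s$, set $u=(p-s)/\|p-s\|$, pick such a $w$, let $L$ be the line through $s$ orthogonal (in the natural metric) to $\overline{ps}$, and put $a=s+Rd$, $b=s-Rd$ for a unit vector $d$ spanning $L$ and a fixed $R>0$. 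Then $s$ is exactly the foot of the perpendicular from $p$ to $L$, so $s\in\overline{ab}$; moreover $\langle p-s,u\rangle=\|p-s\|>0$ and $\langle w-s,u\rangle\ge\kappa_s>0$ put $p$ and $w$ strictly on the same side of $L$, with $\mathrm{dist}(w,L)\ge\kappa_s$, so $w\notin L$. Applying assumption~\ref{ass:BAA} with $(p,q,r):=(a,b,w)$ and $c:=p$ then gives $\D{w}{p}<\D{s}{p}$ as soon as $\|p\|$ exceeds the threshold the assumption supplies for this configuration. Note the line $L$ is \emph{forced} to be orthogonal to $\overline{ps}$: this is exactly what keeps the perpendicular foot of $p$ at the bounded point $s$, so that the segment $\overline{ab}$ required by the assumption can be taken bounded.

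The remaining point — and the one I expect to be the main obstacle — is uniformity: assumption~\ref{ass:BAA} is a pointwise statement, yet the configuration $(a,b,w)$ above genuinely depends on the direction of $p$. I would argue that the configurations that arise, namely $(s+Rd,\,s-Rd,\,w)$ with $d\in\mathbb{S}^1$ and $w$ ranging over the finite set of hull vertices satisfying the closed margin condition $\mathrm{dist}(w,L)\ge\kappa_s$, form a compact family on which $w$ stays uniformly bounded away from the line; a compactness argument together with the continuity of $D$ should then bound the thresholds over this family. Taking the maximum of these bounds and of the finitely many per-site radii yields $\rho$. I would treat this uniformity step carefully, since it is precisely the place where the bounded anisotropy hypothesis is needed rather than mere convexity of balls of the first kind: convexity alone would only give $\D{s}{p}\le\max_k\D{w_k}{p}$ for the hull vertices whose triangle contains $s$, which does not prevent $s$ from being strictly closest.

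Finally I would note that the companion lemma~\ref{lem:contrad}, which is also part of this section, is obtained by the same mechanism but is easier: there the relevant line is the supporting line of an actual hull edge $(w_i,w_j)$ and the witness $r$ is any other hull vertex (one exists since not all sites are colinear, so $|W|\ge 3$), a single fixed configuration per edge, so no uniformity argument is required there.
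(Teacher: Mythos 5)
Your overall strategy --- fix an interior site $s$, exhibit for each far point $p$ a line through $s$ and a hull vertex $w$ on the same side of it as $p$, and invoke assumption~\ref{ass:BAA} to conclude $\D{w}{p}<\D{s}{p}$ --- is the same mechanism the paper uses, but there is a genuine gap in how you choose the line, and it is not cosmetic. Assumption~\ref{ass:BAA} requires that the \emph{closest} point $m$ of $c$ in $L_{pq}$ lie in $\overline{pq}$, and throughout the paper ``closest'' means closest in the sense of $D$: $m$ is the point where the ball of the first kind $B(c;\D{m}{c})$ is tangent to the line. This is how $m$ is used in the proofs of lemmas~\ref{lem:DFgamma}, \ref{lem:DQgamma} and~\ref{lem:Lpgamma}, and in the proof of lemma~\ref{lem:contrad}, where the step ``$p$ is closer to $m_p$ than to either $w_i,w_j$'' only works because $m_p$ minimizes $\D{\cdot}{p}$ over the line. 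You take $L$ Euclidean-orthogonal to $\overline{ps}$ and verify only that the \emph{Euclidean} foot, namely $s$ itself, lies in $\overline{ab}$. For an anisotropic divergence the $D$-closest point of $p$ on $L$ is not $s$: already for a constant anisotropic quadratic metric it is the $Q$-orthogonal projection, which sits at Euclidean distance proportional to $\|p-s\|$ from $s$ along $L$, so for any fixed $R$ it leaves $\overline{ab}$ once $p$ is far enough. The hypothesis of assumption~\ref{ass:BAA} then fails and no conclusion can be drawn. (Your implicit fallback $\D{w}{p}<\D{m}{p}\le\D{s}{p}$ would indeed suffice \emph{if} $m\in\overline{ab}$, but that containment is exactly what fails.)

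The paper sidesteps this by choosing the line adapted to $D$ rather than to the Euclidean structure: it takes $l_v$ to be a line through the interior site $v$ separating the convex ball $B(p;\D{v}{p})$, so that $v$ itself is the $D$-closest point of $p$ on $l_v$ and the segment hypothesis holds trivially for any $a,b$ straddling $v$. The price is that one must then locate a hull vertex on the correct side of $l_v$; the paper gets this from the hull edge $(w_i,w_{i\oplus 1})$ whose empty half-space contains $p$ --- since $v$ lies strictly in $H^{-}_{i,i\oplus 1}$ and $l_v$ passes through $v$, at least one of $w_i,w_{i\oplus 1}$ must lie on the same side of $l_v$ as the ball, hence as $p$. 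Your direction-margin selection $\langle w-s,u\rangle\ge\kappa_s$ certifies the correct side only for the Euclidean-orthogonal line, so it too would have to be redone for $l_v$. Two smaller points: your flagging of the uniformity of the threshold $\mu$ over a family of configurations as the delicate step is fair (the paper's own proof is not fully explicit there either), and your closing description of lemma~\ref{lem:contrad} matches how the paper actually proves it.
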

\begin{proof}

Let $W\subseteq S$ be the sites that lie in the boundary of the convex hull $\CHS$. 
We prove that there is a sufficiently large value $\rho$ such that all $p\in\mathbb{R}^2$ with $\|p\|>\rho$
	are strictly closer (in the sense of $D$) to $W$ than to the remaining sites $S\setminus W$. 

Pick any pair of consecutive sites $w_i,w_{i\oplus 1}$ along the boundary of $\CHS$, and any third site $v$ from $S\setminus W$. 
We first show that there are values $\rho_{w_i,w_{i\oplus 1},v}$ such that all $p\in\mathbb{R}^2$ with $\|p\|>\rho_{w_i,w_{i\oplus 1},v}$
	are strictly closer to $w_i,w_{i\oplus 1}$ than to $v$. 
By letting
	\[ \rho \equiv \displaystyle{\max_{\substack{w_i,w_{i\oplus 1} \in W \\ v \in S\setminus W}} 
		\rho_{w_i,w_{i\oplus 1},v}}, \]
	lemma 6.5 of~\ref{ass:BAA}. 
%Note that this is a simpler proof than that found in~\cite[lemma 6.5]{DBLP:journals/corr/abs-1102-3673}. 

For a triple $w_i,w_{i\oplus 1},v$ with $w_i,w_{i\oplus 1}\in W$ consecutive vertices in $\partial\CHS$, and $v\in S\setminus W$, 
	we consider the supporting line $l_{i,{i\oplus 1}}$ of $w_i,w_{i\oplus 1}$, 
		which divides space into two half-spaces $H^{+}_{i,{i\oplus 1}}$ and $H^{-}_{i,{i\oplus 1}}$, 
	where we pick $H^{-}_{i,{i\oplus 1}}$ so as to contain $v$. 
Note that $v\notin l_{i,{i\oplus 1}}$ or else it would be $v\in W$. 

First, let $\rho_{w_i,w_{i\oplus 1},v}$ be large enough so that all $p\in\mathbb{R}^2$ with $\|p\| > \rho$ are outside the convex hull $\CHS$. 
The divide the proof in three cases, 
%The proof of~\cite[lemma 6.5]{DBLP:journals/corr/abs-1102-3673} is split in three cases, 
depending on whether 
	point $p$ belongs to $l_{i,{i\oplus 1}}$, $H^{-}_{i,{i\oplus 1}}$, or $H^{+}_{i,{i\oplus 1}}$, respectively. 
Each case will result in a different constraint $\|p\| > \rho^{o}_{i,i\oplus 1, v}$, $\|p\| > \rho^{-}_{i,i\oplus 1, v}$, $\|p\| > \rho^{+}_{i,i\oplus 1, v}$, 
	\[ \rho_{i,i\oplus 1, v} \equiv \max\{\rho^{o}_{i,i\oplus 1, v},  \rho^{-}_{i,i\oplus 1, v},  \rho^{+}_{i,i\oplus 1, v} \}. \]

\vspace*{0.1in}\noindent{\bf [Case $p\in l_{i,{i\oplus 1}}$]}.
Since not all sites are colinear, then it is $|W| > 2$, and therefore the segment $\overline{w_{i\oplus 1} w_{i\oplus 2}}$ 
	is different from $\overline{w_i,w_{i\oplus 1}}$. 
If $p\in l_{i,{i\oplus 1}}$, 
	we can consider the ``next" segment $\overline{w_{i\oplus 1} w_{i\oplus 2}}$, 
	for which, since  $p$ is outside $\CHS$, it must hold 
	$p\in H^{+}_{w_{i\oplus 1} w_{i\oplus 2}, v}$, 
	and therefore we can simply let $\rho^{o}_{i,i\oplus 1, v} \equiv \max_{v\in S\setminus W} \rho^{+}_{i\oplus 1, i\oplus 2, v}$. 
Note that there is no circular dependency in this definition, since we can resolve it by simply 
	letting 
	\[ \rho^{o}_{i,i\oplus 1, v} \equiv \displaystyle{\max_{w_i\in W |||}  \rho^{+}_{i, i\oplus 1, v}}. \]

%Clearly, if
%p
%2
%l
%ij
%, then the fact that
%
%p
%(
%v
%)
%\
%l
%ij
%
%w
%i
%w
%j
%implies
%p
%2
%w
%i
%w
%j
%, which is precluded by
%the fact that
%p
%is outside
%CH
%(
%V
%).
%The first case is trivial and is covered in~\cite{DBLP:journals/corr/abs-1102-3673}. 

\vspace*{0.1in}\noindent{\bf [Case $p\in H^{-}_{i,{i\oplus 1}}$]}.
By assumption~\ref{ass:BAA}, there is $ \rho^{-}_{w_i,w_{i\oplus 1},v} > 0$ 
	such that all $p\in\mathbb{R}^2$ with $\|p\| > \rho^{-}_{w_i,w_{i\oplus 1},v}$ 
	are closer to $v$ than to $w_i$ or $w_{i\oplus 1}$.
%Let $\rho_{w_i,w_{i\oplus 1},v} > \mu_{w_i,w_{i\oplus 1},v}$. 
%The case $p\in H^{-}_{i,{i\oplus 1}}$ can be proven in the same way as in~\cite{DBLP:journals/corr/abs-1102-3673}, 
%	where instead of using~\cite[lemma 6.3]{DBLP:journals/corr/abs-1102-3673} we use assumption~\ref{ass:BAA}
%	(choose $\rho_{w_i,w_{i\oplus 1},v} > \mu$ where $\mu > 0$ is a sufficiently large value, 
%		chosen so as to satisfy the statement of assumption~\ref{ass:BAA}, 
%		with the substitutions $p=w_i$, $q=w_{i\oplus 1}$, $r=v$, and $c=p$). 

\vspace*{0.1in}\noindent{\bf [Case $p\in H^{+}_{i,{i\oplus 1}}$]}.
%The third case ($p\in H^{+}_{i,{i\oplus 1}}$) is only slightly more involved. 
Let $p\in H^{+}_{i,{i\oplus 1}}$, and consider the ball $B$ of the first kind centered at $p$ with radius $\D{v}{p}$. 
Since $B$ is convex, we can find a line $l_v$ passing through $v$ that ``separates" $B$, 
	that is, $B$ lies in the half-space $H^{+}_{l_v}$ associated to $l_v$. 
It follows that $v$ is the closest point to $p$ in the line $l_v$. 
Note that, because $v\in H^{-}_{i,{i\oplus 1}}$, and $l_v$ passes through $v$, 
	then it must be either $w_i \in H^{+}_{l_v}$ or $w_{i\oplus 1} \in H^{+}_{l_v}$ 
	(otherwise, if $w_i,w_{i\oplus 1}\in H^{-}_{l_v}$ then it would be $v\in H^{+}_{i,{i\oplus 1}}$, a contradiction). 
Without loss of generality, let $w_i$ be in $H^{+}_{l_v}$. 

Pick two point $a,b$ along $l_v$ such that $v$ is between $a$ and $b$. 
We are now ready to apply assumption~\ref{ass:BAA}, using the substitution
	$p=a$, $q=b$, $r=w_i$, and $c=p$, 
	from which it follows that there is a sufficiently large $\rho_{w_i,w{i\oplus 1},v} > 0$ 
		such that if $p\in H^{+}_{i,{i\oplus 1}}$ and $\|p\| > \rho_{w_i,w{i\oplus 1},v}$, 
		then $p$ is closer (in the sense of $D$) to $w_i$ than to $v$. 

\end{proof}

\noindent{\bf lemma~\ref{lem:contrad}}
\emph{
There is $\rho > 0$ such that, for any segment $(w_i,w_j)\in\mathcal{B}$ with supporting line $L_{ij}$, 
	 every $p\in H^{-}_{ij}$ with $\|p\| > \rho$ whose closest point in $L_{ij}$ belongs to $\overline{w_i w_j}$ is 
	closer to a site in $\Sites\setminus\{w_i,w_j\}$ than to $L_{ij}$. }
\begin{proof}
For each edge $(w_i,w_j)\in\mathcal{B}$ with supporting line $L_{ij}$, 
	pick a site $v\in\Sites\setminus\{w_i,w_j\}$ that isn't in $L_{ij}$ (which always exists since not all sites are colinear). 
By assumption~\ref{ass:BAA}, there is a sufficiently large $\rho_{ij}$ such that every point $p\in H^{-}_{ij}$ 
	whose closest point $m_p$ to $L_{ij}$ satisfies $m_p\in\overline{w_i w_j}$ is closer to $v$ than to $m_p$. 
Since $p\in H^{-}_{ij}$, then $p$ is closer to $m_p$ than to either $w_i,w_j$, and 
thus $p$ is closer to $v$ than to either $w_i,w_j$. 

Letting $\rho$ be the maximum of $\rho_{ij}$ over all edges  $(w_i,w_j)\in\mathcal{B}$ completes the proof. 
%Since for each edge $(w_i,w_j)\in\mathcal{B}$ there is $\rho_{ij}$ satisfying the condition of the lemma for that edge, 
%	we simply set $\rho$ to be the maximum of $\rho_{ij}$ over edges in $\mathcal{B}$. 
%For each choice of $(w_i,w_j)\in\mathcal{B}$, let $R_{ij}$ be the set of
%points $p\in H^{-}_{ij}$ whose closest point in $L_{ij}$ belongs to $\overline{w_i w_j}$. 
%Pick some $v\in H^{-}_{ij}\cap \Sites\subset \Sites\setminus\{w_i,w_j\}$, which always exists
%since not all sites are colinear. 
%We can use lemma~\ref{lem:tech}, 
%where $w_i,w_j,L_{ij},R_{ij}$ take the role of $v,w,l,R$, 
%to conclude that 
%there is a sufficiently large $\rho_{ij}$ such that all $p\in R_{ij}$ with
%$\|p\|>\rho$ are closer to $v$ than to $m_p$. 
%By defining $\rho$ as the maximum of $\rho_{ij}$ over all
%$(w_i,w_j)\in\mathcal{B}$, the lemma follows. 
\end{proof}

\noindent{\bf lemma~\ref{lem:Sn}}
\emph{
	Every continuous function $F:\mathbb{S}^n\rightarrow\mathbb{S}^n$ that is not onto has a fixed point. 
%\end{lemma}
}
\begin{proof}
	Assume $F$ misses $p\in\mathbb{S}^n$, and let
$\gamma:\mathbb{S}^n\setminus\{p\}\rightarrow D^n$ be a diffeomorphism
between the punctured sphere and the open unit disk. 
Since $\gamma\circ F$ is continuous and $\mathbb{S}^n$ is compact,
then the set $C = (\gamma\circ F) (\mathbb{S}^n)\subset D^n$ is compact.

The function $g:C\rightarrow C$ with $g = \gamma \circ F\circ\gamma^{-1}$ 
is continuous and therefore, by Brouwer's fixed point theorem~\cite{Milnor}, has a fixed point $x\in C$. 
The fact that $(\gamma \circ F\circ \gamma^{-1}) (x) = x$ implies $F(\gamma^{-1}(x)) = \gamma^{-1}(x)$ 
and thus $\gamma^{-1}(x)\in\mathbb{S}^n$ is a fixed point of $F$. 
\end{proof}

\noindent{\bf lemma~\ref{lem:hard}}
($B\supseteq\mathcal{B}$)
\emph{
To every segment  in the boundary of $\CHS$ corresponds a boundary edge of $G$.  
}
\begin{proof}  
%Assume w.l.o.g.\   that the origin in $\mathbb{R}^2$ 
Let $(w_i,w_j)\in\mathcal{B}$ be a segment in the boundary of $\CHS$, 
	as shown in figures~\ref{fig:pinu} and~\ref{fig:pi}. 
Pick a sufficiently large $\rho > \max_{v\in \Sites}\|v\|$ such that every $p$ with
$\|p\| > \rho$ is outside $\CHS$ and such that 
	Lemmas~\ref{lem:VW} and~\ref{lem:contrad}%, and assumption~\ref{ass:BAA} 
	hold. 
%, and recall that the origin in $\mathbb{R}^2\in\mathbf{int\ }{\CHS}$. 
For any $\sigma > \rho$, if $A:C(\sigma)\rightarrow C(\sigma)$ is the antipodal map $A(p) = -p$, 
then, by continuity of $\pi$ (property~\ref{prop:pi}.i) and %lemma~\ref{lem:piC0}, and 
by the continuity of $\nu_\sigma$, the function $A\circ\nu_\sigma\circ\pi:C(\sigma)\rightarrow C(\sigma)$ is
continuous. 
By %Lemmas~\ref{lem:mij} 
	lemma~\ref{lem:VW} and property~\ref{prop:pi}.ii, 
	if for some $p_{ij}\in C(\sigma)$ it is 
$\pi(p_{ij})= (w_i+w_j)/2$ with
$(w_i,w_j)\in\mathcal{B}$, then $p_{ij}$ is (strictly) closest to
$w_i,w_j$, and therefore belongs to the primal edge $\Vor_{ij}$, which implies
that $(w_i,w_j)\in B$. 

Showing that $\mathcal{B}\subseteq B$ now reduces to showing that for all 
$(w_i,w_j)\in\mathcal{B}$, for all $\sigma > \rho$, there is $p_{ij}\in
C(\sigma)$ such that
$\pi(p_{ij})= (w_i+w_j)/2$. 

Assume otherwise. %: that no $p\in C(\sigma)$ satisfies $\pi(p) = (w_i+w_j)/2$. 
The function 
$A\circ\nu_\sigma\circ\pi$ %:C(\sigma)\rightarrow C(\sigma)$ 
is not onto and therefore, 
by lemma~\ref{lem:Sn} (and using the fact that $C(\sigma)$ is isomorphic to $\mathcal{S}^1$), 
it must have a fixed point $q$. 

Since $(A\circ\nu_\sigma\circ\pi)(q) = q$ then $(\nu_\sigma\circ\pi)(q) = -q$. 
Since $\pi(q)$ is the closest point to $q$ in $\partial\CHS$, 
	there is a segment $(w_k,w_l)\in\mathcal{B}$ such that
$\pi(q)\in\overline{w_k w_l}$. Consider two open half spaces
$H^{+}_{kl}$ and $H^{-}_{kl}$ on either side of the supporting line of
$w_k,w_l$. Since not all sites are colinear, we can choose these half spaces so that 
$H^{+}_{kl}\cap \Sites=\phi$ and $H^{-}_{kl}\cap \Sites \neq \phi$. 
By the definition of $\nu_\sigma$, and recalling that the chosen origin of
$\mathbb{R}^2$ is in the interior $\mathbf{int\ }{\CHS}$ of the convex hull, 
it is $\nu_\sigma(\pi(q))\in H^{+}_{kl}$, and $q=-\nu_\sigma(\pi(q))\in H^{-}_{kl}$. 
To see this note that the outward-facing normal $n(\pi(q))$ is defined so that
$\pi(q)+n(\pi(q))\in H^{+}_{kl}$ %which, combined with the origin being in $\mathbf{int\ }{\CHS}$, 
and so $\nu_\sigma(\pi(q)) = \sigma\cdot n(\pi(q)) / \|n(\pi(q))\| \in  H^{+}_{kl}$.
On the other hand, since the origin is in
$\mathbf{int\ }{\CHS}$, the fact that $\nu_\sigma(\pi(q))\in H^{+}_{kl}$ 
implies $q=-\nu_\sigma(\pi(q))\in  H^{-}_{kl}$. 

Since $\rho$ was chosen sufficiently large for lemma~\ref{lem:contrad} to
hold, and $q\in  H^{-}_{kl}$, $q$ is closer to some site $v\in
\Sites\setminus\{w_k,w_l\}$ than to $\overline{w_k w_l}$. 
Since $v\in\CHS$, this contradicts the fact
that $\pi(q)\in\overline{w_k w_l}$ is the closest point to $q$ in
$\CHS$. %, and thus $\mathcal{B}\subseteq B$.
\end{proof}

\section*{Appendix C: dual triangulation (interior)}\label{app:interior}

\noindent{\bf lemma~\ref{lem:non-negative}}
\emph{
	Given a non-vanishing one-form $\xi^n$, the sum of indices of interior vertices ($\Sites\setminus W$) of $G$ is non-negative. 
%\end{lemma}
}
\begin{proof}
Given non-vanishing $\xi^n$, 
the index of a face $f$ is $\mathbf{ind}_{_{\xi^n}}(f) \le 0$. 
To see this, assume otherwise: a face with vertices $v_1,\dots,v_m$ around it, and index one satisfies, by the
definition of index and of $\xi^n$, 
$n^t v_1 < \dots < n^t v_m < n^t v_1$ 
(or $n^t v_1 > \dots > n^t v_m > n^t v_1$), a
contradiction. 

%always non-postive since, by the
%definition of $\xi^n$, the half-edges around a face undergo exactly two sign
%changes. 
%Additionally, 
Because, by corollary~\ref{cor:boundary}, the boundary edges of $G$ form a convex, 
	simple polygonal chain, then, given any non-vanishing $\xi^n$, all the boundary vertices have
index zero, except for the ``topmost" ($\underset{v\in \Sites}{\mathbf{argmax\ }} \xi^n(v)$) 
and ``bottommost" ($\underset{v\in \Sites}{\mathbf{argmin\ }} \xi^n(v)$) vertices, which have
index one (note that the topmost and bottommost vertices are unique because $n$ is chosen not to be orthogonal to any edge in the triangulation). 

Since face indices are non-positive, and the sum of indices of
boundary vertices is two then, by lemma~\ref{lem:ph},
the sum of indices of {interior} vertices must be non-negative. 
\end{proof}

%\begin{figure}
%\centering
%\includegraphics[height=5cm]{index-1.jpg}
%\label{fig:index-1}
%\caption{}
%\end{figure}
%
%\begin{figure}
%\centering
%\includegraphics[height=5cm]{index1.jpg}
%\label{fig:index1}
%\caption{}
%\end{figure}

\noindent{\bf lemma~\ref{lem:index-1}}
\emph{
If $G$ has an edge fold-over, then there is $n\in\mathbb{S}^1$ and non-vanishing one-form $\xi^n$ such
that $\mathbf{ind}_{_{\xi^n}}(v) < 0$ for some interior vertex $v\in \Sites\setminus W$. % has index  $\mathbf{ind}_{_{\xi^n}}(v) < 0$. 
%with respect to $\xi^n$. 
%\end{lemma}
}
\begin{proof}
If edge fold-over $e=(v,w)$ is a non-boundary edge, then at least one
of its incident vertices, say $v$ is an interior vertex $v\in
\Sites\setminus W$. % (figure~\ref{fig:index-1}). 

Consider the two faces $f_1,f_2$ incident to $e$, which, by definition of
edge fold-over, are on the same side of its
supporting line, and the two edges $e_1,e_2$ in $f_1,f_2$ respectively,
incident to $v$. 
Taking the half-line $h$ from $v$ towards $w$ as reference, consider 
the (open) set $L_i\subset\mathbb{S}^1$ of directions ranging from $h$ to
$e_i$. 
The set $L=L_1\cap L_2$ is not empty since, by proposition~\ref{prop:ECB}, %~\ref{lem:degen},
$f_1,f_2$ are not degenerate, and therefore neither $e_1,e_2$ are parallel
to $h$. $L$ is also uncountable, since it is a range of the form
\[ L = \{n\in\mathbb{S}^1 : 
			{n_{\perp}}^t {h} < 0 \wedge
			{n_{\perp}}^t {e}_1 > 0 \wedge
			{n_{\perp}}^t {e}_2 > 0\}\]
where ${h},{e}_i$ are the direction vectors of $h,e_i$, and
${n_\perp}$ is one of the two orthogonal directions to $n$, 
chosen to fit the definition. 

Because $L$ is not empty, and is uncountable, and because the set of edges $E$ is finite,
then there is always some direction $n\in L$ that is not orthogonal to any edge in
$E$. 
Pick any such $n$. 
We prove that the non-vanishing one-form $\xi^n$ is such that
$\mathbf{ind}_{_{\xi^n}}(v)<0$. 

The (cyclic) sequence of oriented half-edges {around} $v$ is, without loss of generality,
$\mathcal{S}=\left[(v,v_1);(v,w);(v,v_2);\dots\right]$, and therefore the values of the
one-form around $v$ are $[\xi^n(v_1)-\xi^n(v)$, $\xi^n(w)-\xi^n(v)$,
$\xi^n(v_2)-\xi^n(v)$, $\dots]$. 
By the definition of $n$, it is $\xi^n(v_1)<\xi^n(v)$, $\xi^n(w)>\xi^n(v)$,
and $\xi^n(v_2)<\xi^n(v)$, and therefore 
the number of sign changes in 
the subsequence
$\mathcal{S}'=[(v,v_1);(v,w);(v,v_2)]$ is four. 
Since the number of sign changes in the full sequence $\mathcal{S}$ cannot
be less
than that of its subsequence $\mathcal{S}'$, 
it is $\mathbf{sc}_{_{\xi^n}}(v)>4$ and therefore $\mathbf{ind}_{_{\xi^n}}(v)=1 - \mathbf{sc}_{_{\xi^n}}(v)/2 <0$. 
\end{proof}

\noindent{\bf lemma~\ref{lem:index1}}
\emph{
Given $n\in\mathbb{S}^1$ and non-vanishing one-form $\xi^n$, if $G$
has an interior vertex $v\in \Sites\setminus W$ with index
$\mathbf{ind}_{_{\xi^n}}(v)=1$, then there is a face $f$ of
$G$ that does not satisfy the empty circum-ball
property (proposition~\ref{prop:ECB}). 
%
%Given $n\in\mathbb{S}^1$, if $G$ has an interior vertex $v\in \Sites\setminus W$ with index
%$\mathbf{ind}_{_{\xi^n}}(v)=1$, then there is a face $f$ of
%$G$ that does not satisfy the empty circum-ellipse %(ECB) 
%property. 
%\end{lemma}
}
\begin{proof}
We must prove that there is a face $f$ all of whose circumscribing balls
% that circumscribe the vertices of $f$ 
contain some vertex in its interior. 

Consider the vertex $v\in \Sites\setminus W$ with
$\mathbf{ind}_{_{\xi^n}}(v)=1-\mathbf{sc}_{_{\xi^n}}(v)=1$, and thus
with $\mathbf{sc}_{_{\xi^n}}(v)=0$. % (figure~\ref{fig:index1}). 
If $\left[u_1,u_2,\dots,u_m\right]$ is the cyclic sequence of vertices neighboring
$v$, then 
$\mathbf{sc}_{_{\xi^n}}(v)=0$ implies either $\xi^n(u_i)>\xi^n(v)$, $i=1,\dots,m$, or $\xi^n(u_i)<\xi^n(v)$, $i=1,\dots,m$. 
Assume the former w.l.o.g. % that the former is true. 
The line $l=\{x\in\mathbb{R}^2 : n^t x = n^t v\}$, passing through
$v$, strictly separates $v$ from the convex hull of its neighbors. %$\CHS(\{u_i : i=1,dots,m\})$. 

Consider the mesh $G'$, with the same structure as $G$ but in which all the incident faces to $v$
are eliminated. 
We show that, in $G'$, the face count of $v$ (the number of faces in which $v$ lies) is at
least one. 
Since $l$ separates $v$ from its neighbors, it also separates all the faces
incident to $v$ from $v$ (except for $v$ itself, which lies on $l$). 
Pick any direction $d\in\mathcal{S}^1$ with $n^t{d} < 0$. 
%Therefore 
The half-line $h$ starting at $v$ with direction $d$ % with $n^t d < 0$
does not intersect any face in $G$ that is incident to $v$. 
Since there is only a finite number of edges and vertices, it is always
possible to choose $h$ not to contain any vertex other than $v$, %so that it does not pass through any vertex (besides $v$), 
and not to be parallel to any edge. 
Since $\CHS$ is bounded and $h$ isn't, there is
some point $x\in h$ outside $\CHS$, whose face count must be zero. 
Moving from $x$ toward $v$, $h$ crosses $\partial\CHS$ only once 
(since $\CHS$ is convex), incrementing the face count to one. 
Because every interior edge is incident to exactly two faces, 
every subsequent edge cross (which is transversal because $h$ is not
parallel to any edge) modifies the face count by either zero, two, or
minus two. Since the face count cannot be negative, and it is one at 
$h\cap \partial\CHS$, then it must be at least one at $v$. 
Since $G'$ does not contain any face incident to $v$, 
%we had removed all incident faces of $v$, 
this implies that there is
some face $f$ not incident to $v$ such that $v\in f$. % and $f$ is not incident to $v$. 

We prove that the face $f$ above cannot satisfy the ECB property. 
Since $v$ is in $f$ but is not incident to it, and $f$ is convex
then, by Carath\'eodory's theorem~\cite{matousek2002lectures}, $v$ can be written as a 
convex combination $v=\lambda_1 u_1+\lambda_2 u_2+\lambda_3 u_3$,
$\sum_{i=1}^3\lambda_i=1$, $\lambda_i\in(0,1)$ of vertices $u_1,u_2,u_3$
incident to $f$ 
(note that this is slightly more general than required since we have 
already made sure in the beginning of section~\ref{sec:dual} that $f$ is a triangle). 
Given a ball circumscribing the vertices incident to $f$, 
because it is strictly convex, and $u_1,u_2,u_3$ lie in its boundary, 
then any convex combination of them with $\lambda_i\in(0,1)$ must be in the
interior of the circumscribing ball, and therefore $f$ does not satisfy the ECB property. 
\end{proof}

\noindent{\bf lemma~\ref{lem:ef}}
\emph{
	%The straight-edge triangulated dual 
$G$ %of an orphan-free anisotropic Voronoi diagram in
%$\mathbb{R}^2$ 
has no edge fold-overs. }
%$G$ has no edge fold-overs.
\begin{proof}
%By theorem~\ref{th:ece}, every face in the dual  of an orphan-free %anisotropic Voronoi
%diagram satisfies the ECB property, and every triangle in its triangulation
%does as well. 
Assume $G$ has an edge fold-over. 
%In that case, 
By lemma~\ref{lem:index-1}, there is a non-vanishing one-form
$\xi^n$ such that some interior vertex $v\in \Sites\setminus W$ has $\mathbf{ind}_{_{\xi^n}}(v)<0$. 
Since, by lemma~\ref{lem:non-negative}, the sum of indices of interior vertices is
non-negative,
%and the index of a vertex is always less or equal to one,
then there must be
at least one interior vertex $u\in \Sites\setminus W$ with positive index
$\mathbf{ind}_{_{\xi^n}}(u)=1$. %(since $\mathbf{ind}_{_{\xi^n}}(u)\le 1$). 
In that case, by lemma~\ref{lem:index1}, there is a face of $G$ that does not satisfy the
ECB property, raising a contradiction. 
Therefore $G$ has no edge fold-overs. 
\end{proof}

\noindent{\bf lemma~\ref{lem:interior}}
\emph{
If its (topological) boundary is simple and closed, 
	then the straight-line dual of an orphan-free Voronoi diagram, 
	with vertices at the sites, 
	is an embedded triangulation. 
%The dual of an orphan-free anisotropic Voronoi diagram in $\mathbb{R}^2$, with vertices incident to no more than three regions, 
%%$\mathbb{R}^2$ %, with vertices at the sites, 
%is embedded with vertices at the sites and straight edges. %, and covers the convex hull $\CHS$ of the sites. %its vertices. 
}
\begin{proof}
%If all sites are colinear, then the theorem follows trivially from the
%definition of the boundary $\mathcal{B}$ of $\CHS$, and 
%from lemma~\ref{lem:colinear-boundary}, 
%we show that the same claim holds if not all sites are colinear. 

%then lemma~\ref{lem:ef} guarantees that $G$
%has no edge fold-overs. 
Given a point $p\in\mathbf{int\ }{\CHS}$ in
the interior of the convex hull of $\Sites$, we show that its \emph{face count}
(the number of straight-edge faces that contain it) is one. 
Consider a line $l$ passing through $x$ that does not pass through any vertex of
$G$, and is not parallel to any (straight) edge. 
It is always possible to find such a line since the set of vertices and edges is finite. 
Because the line is unbounded and $\CHS$ is bounded, there is a
point $x\in l$ that is outside $\CHS$. At this point clearly the
face count is zero. 
Moving from $x$ toward $p$, $l$ crosses the boundary of $\CHS$
(and therefore, by corollary~\ref{cor:boundary}, the boundary of $G$) 
only once, since it is a simple convex polygonal chain, incrementing the face
count by one. At every edge crossing (which is transversal by the choice of
line), the face count remains one since, by lemma~\ref{lem:ef} there are no
edge fold-overs, and thus every non-boundary edge is incident to two faces
that lie on either side of its supporting plane. Therefore the face count at
$p$ must be one. 
%WHAT IF $x$ lies on an edge/vertex?!

Since every point inside $\CHS$ is covered once by faces in
$G$,
and the boundaries of $G$ and $\CHS$ coincide, then
$G$ is a single-cover
of $\CHS$. 
Because two straight edges that cross at a non-vertex always generate points with
face count higher than one, then the edges of $G$ can only meet at vertices, and
therefore $G$ is embedded. 
\end{proof}

\end{document}